\newif\ifdraft\draftfalse
\appto\UrlBreaks{\do\-}
\newcommand{\rtext}[1]{\makebox[0em][r]{#1}}
\newcommand{\boxedrel}[1]{\hfill \fbox{$#1$}}
\newcommand{\spacer}{\vspace{0ex}}
\newcommand{\nextline}{\\[1ex]} 
\newcommand{\gap}{\qquad}
\newcommand{\hcaption}[1]{\vspace{2.5ex}\hrule\vspace{-2.5ex}\caption{#1}}
\newcommand{\mygray}[1]{{\color{gray}#1}}
\newcommand{\GV}{GV}
\newcommand{\GGV}{Gradual GV}
\newcommand{\UGV}{Uni GV}
\newcommand{\GGVe}{GGV${}_{\!e}$}
\newcommand{\GGVi}{GGV${}_{\!i}$}
\newcommand{\plingk}{\texttt{!}}
\newcommand{\queryk}{\texttt{?}}
\newcommand{\pl}[1]{\plingk{#1}.\,}
\newcommand{\qu}[1]{\queryk{#1}.\,}
\newcommand{\Endpl}{\End_\plingk}
\newcommand{\Endqu}{\End_\queryk}
\newcommand{\unit}{\unitk}
\newcommand{\PAR}{\mid}
\newcommand{\reduces}{\longrightarrow}
\newcommand{\nreduce}{\:\not\!\!\reduces}
\newcommand{\subs}[2]{[{#1}/{#2}]}
\newcommand{\context}[2][E]{{#1}[{#2}]}
\newcommand{\contextE}[1]{E[#1]}
\newcommand{\contextF}[1]{F[#1]}
\newcommand{\subj}{\meta{subj}}
\newcommand{\agree}{\meta{agree}}
\newcommand{\un}{\meta{un}}
\newcommand{\lin}{\meta{lin}}
\newcommand{\dual}[1]{\overline{#1}}
\newcommand\Return{\ensuremath{\meta{return}}}
\newcommand\DYN{\mathord{\star}}
\newcommand\DC{\mathord{\text{\textcircled{$\star$}}}} 
\renewcommand\DC{\mathord{\text{\textcircled{$\star$}\rule{0pt}{0.5\baselineskip}}}} 
\newcommand{\meta}[1]{\mathsf{#1}}
\newcommand{\keyword}[1]{\mathsf{#1}}
\newcommand{\boolk}{\keyword{bool}}
\newcommand{\intk}{\keyword{int}}
\newcommand{\stringk}{\keyword{string}}
\newcommand{\CCk}{\keyword{CC}}
\newcommand{\URLk}{\keyword{URL}}
\newcommand{\unitk}{\keyword{unit}} 
\newcommand{\sendk}{\keyword{send}}
\newcommand{\recvk}{\keyword{receive}}
\newcommand{\newk}{\keyword{new}}
\newcommand{\forkk}{\keyword{fork}}
\newcommand{\closek}{\keyword{close}}
\newcommand{\waitk}{\keyword{wait}}
\newcommand{\letk}{\keyword{let}}
\newcommand{\ink}{\keyword{in}}
\newcommand{\selectk}{\keyword{select}}
\newcommand{\casek}{\keyword{case}}
\newcommand{\ofk}{\keyword{of}}
\newcommand{\End}{\keyword{end}}
\newcommand{\blamek}{\keyword{blame}}
\newcommand{\new}{\newk}
\newcommand{\fork}[1]{\forkk\,{#1}}
\newcommand{\send}[2]{\sendk\,{#1}\,{#2}}
\newcommand{\recv}[1]{\recvk\,{#1}}
\newcommand{\close}[1]{\closek\,{#1}}
\newcommand{\wait}[1]{\waitk\,{#1}}
\newcommand{\letin}[3]{\letk\,{#1}={#2}\,\ink\,{#3}}
\newcommand{\select}[2]{\selectk{\,{#1}\,{#2}}}
\newcommand{\casefst}[1]{\casek\,{#1}\,}
\newcommand{\casesnd}[1]{\ofk\,{#1}}
\newcommand{\case}[2]{\casek\,{#1}\,\ofk\,{#2}}
\newcommand{\casebranch}[3]{#1 \colon #2.#3}
\newcommand{\blamegc}[2]{\blamek\:#1\:#2}
\newcommand{\blameboth}[3]{\blamek\:#1\:#2\:#3}
\newcommand{\proc}[1]{\langle{#1}\rangle}
\newcommand{\br}[1]{\{#1\}}
\newcommand{\grmeq}{\;::= \;}
\newcommand{\grmor}{\:\mid\:}
\lstdefinelanguage{gst}{
  language=haskell,
  basicstyle=\sffamily\small,
  extendedchars=true,
  breaklines=true,
  morekeywords={new,send,receive,skip,select,dualof,int,string,bool,unit,fork,close,wait},
  tabsize=8,
  literate=
    {end!}{{$\textbf{end}_{!}$}}4 
    {end?}{{$\textbf{end}_{?}$}}4 
    {oplus}{$\oplus$}1 
    {otimes}{$\otimes$}1 
    {forall}{$\forall$}1
    {alpha}{$\alpha$}1 
    {beta}{$\beta$}1 
    {DYN}{$\DYN$}1
    {DC}{$\DC$}1
    {=>}{$\CastInc$}2 
    {=9=>}{$\CastNine$}2 
    {lambda}{$\lambda$}1 
    {->}{$\rightarrow$}2
    {-o}{$\multimap$}1
}
\newcounter{labelcounter}
\newcommand{\CastInc}{\stepcounter{labelcounter}\Cast{\ell_{\arabic{labelcounter}}}}
\newcommand{\CastNine}{\Cast{\ell_9}}
\newcommand\cgNothing\cdot
\newcommand\tcBase{B}
\newcommand\tcLolli\multimap
\newcommand\tcFun\to
\newcommand{\tcpair}{\times} 
\newcommand\tBase\tcBase
\newcommand\dom{\mathsf{dom}}
\newcommand{\cast}{\Rightarrow}
\newcommand{\Cast}[1]{\stackrel{#1}{\cast}}
\newcommand\lockedbl[1]{\keyword{locked}\:#1}
\newcommand{\wrap}[3][{}]{{#2}:{#3}\Cast{#1}\DYN}
\newcommand{\flv}[1]{\meta{flv}(#1)}
\newcommand{\fn}[1]{\meta{fn}(#1)}
\newcommand{\name}{z}
\newcommand\Storea{a}
\newcommand\Storeb{b}
\newcommand\Chanc{c}
\newcommand\Chand{d}
\newcommand\Multm{m}
\newcommand\Multn{n}
\newcommand{\Multsub}{\Sub}
\newcommand{\VarsX}{X}  
\newcommand{\ground}[1]{\textbf{#1}}
\newcommand\GroundS{\ground{S}}
\newcommand\GroundR{\ground{R}}
\newcommand\GroundT{\ground{T}}
\newcommand\GroundU{\ground{U}}
\newcommand\Sub{\mathrel{<:}}
\newcommand\super[2]{#1^{:>}(#2)}
\newcommand\notSub{\mathrel{\not\!\Sub}}
\newcommand\SubGV{\Sub}
\newcommand\SafeForP{\ensuremath{\keyword{safe\ for}}} 
\newcommand\SafeFor{\ensuremath{\ \SafeForP\ }}
\newcommand{\Subn}{\sqsubseteq}
\newcommand\Embed[1]{\lceil#1\rceil}
\newcommand{\citet}{\cite}
\newcommand{\citep}{\cite}
\newtheorem{theorem}{Theorem}
\newtheorem{proposition}{Proposition}
\newtheorem{lemma}{Lemma}
\newtheorem{corollary}{Corollary}
\newcommand{\ottnt}[1]{\mathit{#1}}
\newcommand{\ottmv}[1]{\mathit{#1}}
\newcommand{\ottsym}[1]{#1}
\newcommand\Assert{\textbf{assert}\;}
\newcommand{\Error}{\textbf{error}}
\newcommand\asgn{\mathrel{:=}} 
\newcommand\WC{\_}             
\renewcommand\Return{\textbf{return}\;}
\LetLtxMacro{\oldElse}{\Else}
\renewcommand\Else{\oldElse\;} 
\newcommand\tyexp{CheckExpr}
\newcommand\remove{\meta{rm}}
\newcommand\set[1]{\{#1\}}
\newcommand\join{\vee}
\newcommand\bigjoin{\bigvee}
\newcommand\ciarrow{\rightsquigarrow}
\newcommand\vdashG{\vdash_{\!e}} 
\newcommand\vdashC{\vdash_{\!i}} 
\newcommand\vdashS{\vdash_{\!\mathrm{GV}}} 
\newcommand\matchingsym{\mathrel{\triangleright}}
\newcommand\matching[2]{#1 \matchingsym #2}
\newcommand\Erase[1]{|#1|}
\title{Gradual Session Types}
\author[Igarashi et al.]{ATSUSHI IGARASHI \\
  Kyoto University, Japan \\
  PETER THIEMANN \\
  University of Freiburg, Germany \\
  YUYA TSUDA \\
  Kyoto University, Japan \\
  VASCO T.\ VASCONCELOS \\
  LASIGE, Faculty of Sciences, University of Lisbon, Portugal \\
  PHILIP WADLER \\
  University of Edinburgh, Scotland \\
  \email{igarashi@kuis.kyoto-u.ac.jp,
    thiemann@informatik.uni-freiburg.de,
    tsuda@fos.kuis.kyoto-u.ac.jp,
  vv@di.fc.ul.pt,
  wadler@inf.ed.ac.uk}
}
\begin{document}

\maketitle

\begin{abstract}
Session types are a rich type discipline, based on linear types, that
lifts the sort of safety claims that come with type systems to
communications.  However, web-based applications and microservices
are often written in a mix of languages, with type disciplines in a
spectrum between static and dynamic typing.  Gradual session types
address this mixed setting by providing a framework which grants
seamless transition between statically typed handling of sessions and
any required degree of dynamic typing.

We propose \GGV{} as a gradually typed extension of the functional
session type system GV.  Following a standard framework of gradual
typing, \GGV{} consists of an external language, which relaxes the
type system of GV using dynamic types, and an internal language with
casts, for which operational semantics is given, and a cast-insertion
translation from the former to the latter.  We demonstrate type and
communication safety as well as blame safety, thus extending previous
results to functional languages with session-based communication.  The
interplay of linearity and dynamic types requires a novel approach to
specifying the dynamics of the language.
\end{abstract}

\section{Introduction}

It was the best of types, it was the worst of types.

A survey of the top-20 programming languages to learn for open source
projects\footnote{\url{https://www.ubuntupit.com/top-20-most-popular-programming-languages-to-learn-for-your-open-source-project/}
accessed in April 2019}
lists eight dynamically-typed languages (JavaScript, Python, Ruby, R,
PHP, Perl, Scheme, Erlang) and states that developer salaries for
these languages are among the highest in the industry.
The survey also suggests to learn languages with 
elaborate static type systems like Rust, Scala, and Haskell, with developers
earning even higher salaries. These languages derive their
expressiveness from advanced type system features like 
linearity; uniqueness; effects; dependent types as embodied in
research languages like Agda \citep{Norell-2009}, Coq \cite{the_coq_development_team_2019_2554024}, and Idris \citep{Brady-2013}; and session types as
in Links \citep{Lindley-and-Morris-2016}.
This data indicates two opposing trends in current industrial
practice, one asking for dynamically-typed
programming and another asking for expressive statically-typed
programming.

Gradually-typed languages reconcile these two trends. They permit one
to assemble programs with some components written in a
statically-typed language and some in a dynamically-typed
language. Gradually-typed languages have been widely explored in both
theory and practice, beginning with contracts in Racket
\cite{Findler-Felleisen-2002} and their interfacing with
TypedRacket \cite{DBLP:conf/popl/Tobin-HochstadtF08} and then
popularized by Siek and others
\citet{Siek-Taha-2006,DBLP:conf/ecoop/SiekT07,Siek-et-al-2015-criteria}. They
are geared towards safely interconnecting dynamically-typed parts with
statically-typed parts of a program by ensuring that type mismatches
only occur in the dynamically-typed parts \cite{Wadler-Findler-2009}.

Dynamics in  C\#  \citep{Bierman-et-al-2010},
Microsoft's TypeScript\footnote{\url{https://www.typescriptlang.org/}
  accessed in April 2019} \cite{Bierman-et-al-2014}, Google's Dart
\cite{Dart-2014,DBLP:journals/scp/ErnstMSS17}, and Facebook's Hack \cite{Verlaguet-2013} and Flow
\cite{DBLP:journals/pacmpl/ChaudhuriVGRL17} are industrial systems
inspired by gradual typing, but focusing on enhancing programmer
productivity and bug finding rather than containing type mismatches. Systems
such as Racket \cite{Findler-Felleisen-2002} and Reticulated Python
\cite{DBLP:conf/popl/VitousekSS17} rely on contracts or similar 
constructs to ensure that dynamically-typed values adhere to
statically-typed constraints when values pass from one world to the
other.

At first blush, one might consider gradual types as largely a response
to the former trend: they provide a way for developers using
dynamically-typed languages to evolve their code toward
statically-typed languages that are deemed easier to maintain. But on
second thought, one might consider gradual types as even more helpful
in light of the latter trend.  Suitably generalized, gradual typing
can mediate between simple type systems and type systems that feature
dependent types, effect types, or session types, for example.  Gradual
typing in this sense can help in evolving 
software development toward languages with more precise type systems.

Hence, an important line of research is to extend gradual typing so
that it not only relates dynamically-typed and statically-typed
languages, but also relates less-precisely-typed and more-precisely-typed
languages.  There is already some research on doing so for
dependent types~\cite{Ou-et-al-2004,Flanagan-2006,Greenberg-et-al-2010,DBLP:conf/popl/LehmannT17}, effect
types~\cite{Banados-et-al-2014}, typestate \cite{Wolf-et-al-2011}, and
several others which we review in the section on
related work.  This paper presents the first system
that extends gradual typing to session types.

Session types were introduced by Honda \shortcite{Honda-1993}, drawing
on Milner's $\pi$-calculus \cite{Milner-et-al-1992} and Girard's
linear logic \cite{Girard-1987}, and further developed by many others
\cite{Honda-et-al-1998,Yoshida-Vasconcelos-2007}.  Gay and Hole \shortcite{Gay-Hole-2005}
introduced subtyping for session types, and
session types were embedded into a functional language with linear
types, similar to the one used in this paper, by Gay and Vasconcelos
\shortcite{Gay-Vasconcelos-2010}. Caires, Pfenning, Toninho, and
Wadler introduced propositions-as-types interpretations of session
types in linear logic
\cite{Caires-Pfenning-2010,Caires-et-al-2014,Wadler-2012,Wadler-2014}.
One important line of research is multiparty session types
\cite{Honda-et-al-2008,Honda-et-al-2016} but we confine our attention
here to dyadic session types.

Session types have been adapted to a variety of languages,
either statically or dynamically checked, and
using either libraries or additions to the toolchain;
implementations include C, Erlang, Go, Haskell, Java, Python, Rust, and Scala.
New languages incorporating session types include 
C0~\cite{Willsey-et-al-2016}, Links~\cite{Cooper-et-al-2007}, SePi~\cite{franco.vasconcelos:sepi}, SILL~\cite{Pfenning-Griffith-2015}, and Singularity~\citep{Fahndrich-et-al-2006}.
Industrial uses of session types include:
Red Hat's support of the Scribble specification language \citep{Yoshida-et-al-2013},
which has been used as a common interface for several systems based on
session types; Estafet's use of session types to manage
microservices\footnote{\url{http://estafet.com/scribble/} Accessed in April 2019}; and the Ocean Observatories Initiative's use
of dynamically-checked session types in Python~\cite{DBLP:journals/fmsd/DemangeonHHNY15}.
Session types inspired an entire line of research on
what has come to be called behavioural types, the subject of
EU COST action BETTY, a recent Shonan meeting,
and a recent Dagstuhl seminar.

Here is a simple session type encoding of a protocol to purchase
an online video:
\[
S_{\textit{video}} =
\pl{\stringk}\qu{\intk}
  \oplus\br{\mathit{buy}: \pl{\CCk}\qu{\URLk}\Endqu, ~\mathit{quit}: \Endpl}.
\]
It describes a channel endpoint along which a client sends the name
of a video as a string, receives its cost as an integer, and then
selects either to buy the video, in which case one sends a credit card
number, receives a URL from which the video may be downloaded, and
waits for an indication that the channel has been closed, or selects
to quit and closes the channel.  There is a dual session type for
server at the other end of the channel, where $!$ (write) is swapped
with $?$ (read), $\oplus$ (select from a choice) is swapped with
$\with$ (offer a choice), and $\End_!$ (close a channel) is swapped
with $\End_?$ (wait for a channel to close).

Session types are necessarily linear.  Let $x$ be bound to a string
and let $c$ be bound to a channel endpoint
of type $S_{\textit{video}}$.  Performing
\[
\letk~d=\send{x}{c}~\ink~\ldots
\]
binds $d$ to a channel endpoint of type $R$, where $S_{\textit{video}} = {!\stringk.R}$.
To avoid sending a string to the same channel twice, it is essential that
$c$ must be bound to the only reference to the channel endpoint before the
operation, and for similar reasons $d$ must be bound to the only reference
to the channel endpoint after.  Such restrictions can easily be enforced
in a statically-typed language with an affine type
discipline. Linearity is required to guarantee that channels are not
abandoned before they are closed.

But how is one to ensure linearity in a dynamically-typed language?
Following Tov and Pucella \shortcite{Tov-Pucella-2010}, we require that each dynamically-typed
reference to a channel endpoint is equipped with a lock. That
reference is locked after the channel is used once to ensure it cannot be
used again.  To ensure that each channel is appropriately terminated,
with either a wait or a close operation, garbage collection flags
an error if a dynamically-typed reference to a channel becomes
inaccessible.

Our system is the first to integrate static and dynamic session types
via gradual typing.
It preserves the safety properties of statically-typed sessions,
namely progress (for expressions), preservation, and absence of run-time errors.
The latter includes session fidelity: every send is
matched with a receive, every select is matched with an offer,
and every wait is matched with close.  Many, but not all, systems
with session types support recursive session types, and many,
but not all, systems with session types ensure deadlock freedom;
we leave such developments for future work.

Previous systems that perform dynamic monitoring on session types
include the work on Scribble \cite{Yoshida-et-al-2013} which applies
the ideas developed for distributed monitoring of protocols to
multiparty session types
\cite{Bocchi-et-al-2013-conference,Bocchi-et-al-2013}.
Gommerstadt and others \shortcite{DBLP:conf/popl/JiaGP16} consider dynamic
monitoring of higher-order session typed processes in the presence of unreliable
communication and malicious communication partners. Their focus is on
assigning blame correctly in this setting. 
The same authors \shortcite{DBLP:conf/esop/GommerstadtJP18}
develop a theory of contracts that translate into processes that serve
as proxies between the original communication partners. Proxies ensure
adherence to the session protocol with dynamic tests.
A similar proxy-based monitoring scheme was also proposed by one of
the authors \cite{DBLP:conf/tgc/Thiemann14} where gradual typing was
restricted to the transmitted values.
Melgratti and Padovani \shortcite{DBLP:journals/pacmpl/MelgrattiP17}
propose a contract system that mediates between (simply-typed)
sessions and contract-refined sessions. Enforcement is done with an
inline monitor.

In contrast to these approaches, our work applies to the mediation
between dynamically-typed and statically-typed code and it relies on gradual principles
that enable a pay-as-you-go approach: a protocol is checked
statically as much as possible, dynamic checks are only employed if
they cannot be avoided; full gradualization including the
communication channel; no forced introduction
of proxies that may affect efficiency. 

We give our system a compact formulation along the lines of the
blame calculus \cite{Wadler-Findler-2009}, based on the notion
of a \emph{cast} to mediate interactions between more-precisely typed
(e.g., statically typed) and less-precisely typed (e.g., dynamically typed)
components of a program.  We define the four subtyping relations
exhibited by the blame calculus,
ordinary, positive, negative, and naive, and
show the corresponding results, including a tangram theorem
relating the four forms of subtyping and blame safety.
A corollary of our results is that
in any interaction between more-precisely typed and less-precisely
typed components of a program, any cast error is due to the
less-precisely typed component.

Our paper makes the following contributions.
\begin{itemize}
\item Section~\ref{sec:motivation}
  provides an overview of the novel techniques in our work,
  and how we dynamically enforce linearity and session types. 
\item Section~\ref{sec:language}
  describes a complete formal calculus, including syntax of both an external language, in which programs are written and run-time checking is implicit, and an internal language, in which programs are executed after run-time checking in the form of casts is made explicit;
  typing rules of the two languages; reduction rules for the internal language; 
  cast-insertion translation from the external to the internal language;
  and embedding of a
  dynamically typed language with channel-based communication
  into our calculus.
\item Section~\ref{sec:results}
  presents standard results for our calculus, including
  progress (for expressions) and preservation, session fidelity,
  the tangram theorem, blame safety, conservativity of the external language typing over fully static typing, and type preservation of the cast insertion translation.  We also discuss the gradual guarantee property for the external language.  It turns out that it fails to hold---we will analyze counterexamples
  and discuss why.
\end{itemize}
Section~\ref{sec:related} describes related work and
Section~\ref{sec:conclusions} concludes.

Compared to the previous paper
\cite{DBLP:journals/pacmpl/IgarashiTVW17}, we extend the development
with the external language, the cast-insertion translation, a type
checking algorithm, proofs of their properties, and analysis of the
failure of the gradual guarantee, as well as more detailed proofs for
the earlier results.  These extensions make gradual session types
accessible for the programmer, who works in the external language.

\section{Motivation}
\label{sec:motivation}

Sy and Rob collaborate on a project whose design is based on
microservices.
Sy is a strong advocate of static typing and relies on an
implementation language that supports session types out of the
box. Rob, on the other hand, is a strong advocate of dynamically
typed languages. 
One of the credos of microservice architectures
is that the implementation of a service endpoint is language-agnostic,
which means it can be implemented in any programming language
whatsoever as long as it adheres to its protocol.
However, Sy does not want to compromise
the strong guarantees (e.g., type safety, session fidelity) of the statically
typed code by communicating with Rob's client. Rob is also
keen on having strong guarantees, but does not mind if they are
enforced at run time. 
Here is the story how they can collaborate safely using
\GGV \footnote{GV is our name for the functional session type calculus
  of Gay and Vasconcelos \shortcite{Gay-Vasconcelos-2010}, which is
  the statically-typed baseline for our gradual system.}, our
proposal for a gradually typed functional language with synchronous
binary session types.

\subsection{A Compute Service}
\label{sec:compute-service}

The compute service is a simplified version of one of the
protocols in Sy and Rob's project. The service involves two peers, a
server and a client, connected via a communication link.  The server
runs a protocol that first offers a choice of two arithmetic
operations, negation or addition, then reads one or two numbers
depending on the operation, outputs the result of applying the
selected operation to its operand(s), and finally closes the
connection. The client chooses an operation by sending the server a
label, which is either \textit{neg} or \textit{add} indicating the
choice of negation or addition, respectively. In session-type
notation, the server's view of the compute protocol reads as follows.
\begin{eqnarray*}
  \text{\lstinline{Compute}} &=
                           \& \{ \mathit{neg}:{?\intk}.{!\intk}.\End_!
                     ,~ \mathit{add}: {?\intk}.{?\intk}.{!\intk}.\End_! \}
\end{eqnarray*}
Sy chooses to implement the server in the language GV that is inspired
by previous work \cite{Gay-Vasconcelos-2010} and that we will describe
formally in Section~\ref{sec:language}.
\begin{lstlisting}
computeServer : Compute -> unit
computeServer c =
  case c of {
    neg: c. let v1,c = receive c in
            let c = send (-v1) c in
            close c;
    add: c. let v1,c = receive c in
            let v2,c = receive c in
            let c = send (v1+v2) c in
            close c
  }
\end{lstlisting}
The parameter \lstinline{c} of type \lstinline{Compute} is the server's
endpoint of the communication link to the client (when unambiguous, we
often just say endpoint or channel). The \lstinline{case c of ...}
expression receives the client's choice on channel \lstinline{c} in
the form of a label \textit{neg} or \textit{add} and branches
accordingly.  The notation ``\lstinline{c. }'' in each branch
(re-)binds the variable \lstinline{c} to the channel in the state
after the transmission has happened. The type of \lstinline{c} is
updated to the session type corresponding to the respective branch in
the \lstinline{Compute} type.  The \lstinline{receive c} operation
receives a value on channel \lstinline{c} and returns a pair of the
received value and the depleted channel with a correspondingly
depleted session type. Analogously, the \lstinline{send v c} operation
sends value \lstinline{v} on channel \lstinline{c} and returns the
depleted channel. The final \lstinline{close c} disconnects the
communication link by closing the channel.

\subsection{The View from the Client Side}
\label{sec:view-from-client}

A client of the \lstinline{Compute} protocol communicates on a channel
with the protocol \lstinline{ComputeD}\relax defined below.  This protocol
is \textit{dual} to \lstinline{Compute}:
sending and receiving operations are swapped.
\[
  \text{\lstinline{ComputeD}} =
                      \oplus \{ \mathit{neg}:{!\intk}.{?\intk}.\End_?
                     , \mathit{add}: {!\intk}.{!\intk}.{?\intk}.\End_? \}
\]
A client of the compute service may always select the same operation
and then proceed linearly according the corresponding branch. Such a client can use 
a simpler supertype of \lstinline{ComputeD} with a unary internal choice. For example, a client that only ever asks for negation can implement \lstinline{ComputeDneg}. 
\[
  \text{\lstinline{ComputeDneg}} =
                      \oplus \{ \mathit{neg}:{!\intk}.{?\intk}.\End_? \}  
\]
Here is Sy's implementation of a typed client for \lstinline{ComputeDneg}.
\begin{lstlisting}
negationClient : int -> ComputeDneg -> int
negationClient v c =
  let c = select neg c in
  let c = send v c in
  let y,c = receive c in
  let _ = wait c in
  y
\end{lstlisting}
There are two new operations in the client code.
The \lstinline{select neg c} operation selects the \textit{neg} branch in the protocol by
sending the \textit{neg} label to the server. It returns a channel to
run the selected branch of the protocol with type ${!\intk}.{?\intk}.\End_?$.
The \lstinline{wait c} operation matches the \lstinline{close}
operation on the server and disconnects the client. 

\subsection{A Unityped Server}
\label{sec:going-unityped}

To test some new features, Rob also implements the \lstinline{Compute}
protocol, but does so in the unityped language \UGV, which
is safe but does not impose a static
typing discipline.
Here is Rob's implementation of the server.
\begin{lstlisting}
-- unityped
dynServer c = 
  case c of {
    neg: c. serveOp 1 (lambdax.-x) c;
    add: c. serveOp 2 (lambdax.lambday.x+y) c
  }

serveOp n op c =
  if n==0 then
    close (send op c)
  else
    let v,c = receive c in
    serveOp (n-1) (op v) c
\end{lstlisting}

The main function \lstinline{dynServer} takes a channel \lstinline{c}
on which it receives the client's selection. It delegates to an
auxiliary function \lstinline{serveOp} that takes the arity of a
function, the function itself, and the channel end on which to receive
the arguments and to send the result. The \lstinline{serveOp} function
counts down the number of remaining function applications in the first
argument, accumulates partial function applications in the second
argument, and propagates the channel end in the third argument.

It is easy to see that the \lstinline{dynServer} function implements
the \lstinline{Compute} protocol.
Rob chose this style of implementation because it is amenable to
experimentation with protocol extensions: the function \lstinline{dynServer} is trivially extensible to new
operations and types by adding new lines to the \lstinline{case}
dispatch. 

\subsection{The Gradual Way}
\label{sec:gradual-way}

How can we embed Rob's server with  other program fragments in the typed
language (e.g., Sy's client) while retaining as many typing guarantees
as possible?

One answer would be to use a dependently typed system that can describe
the type of the \lstinline{serveOp} function adequately. In an
extension of a recently proposed system
\cite{DBLP:conf/fossacs/ToninhoY18} with iteration on natural numbers
and large elimination, we might write that code as follows.
\begin{lstlisting}
  Op : nat -> Type
  Op 0 = int
  Op (n+1) = int -> Op n

  Ch : nat -> Session
  Ch 0 = !int.end!
  Ch (n+1) = ?int.Ch n
  
  serveOpDep : (n : nat) (op : Op n) (c : Ch n) -> unit
  serveOpDep 0 op c = close (send op c)
  serveOpDep (n+1) op c = let v,c = receive c in
                          serveOpDep n (op v) c
\end{lstlisting}
However, we are not aware of a fully developed theory of a session-type
system that would be able to process this definition.

An alternative that is immediately available is to resort to gradual
typing. For this particular program it will insert casts to make the
program type check, but all those casts are semantically guaranteed to
succeed because it would have a dependent type.
To this end, we rewrite the function \lstinline{dynServer} in
a gradually typed external language analogous to the gradually typed
lambda calculus GTLC \cite{Siek-et-al-2015-criteria}, but extended
with GV's communication operations.

In our example, the rewrite to the external language boils
down to providing suitable type signatures for \lstinline{dynServer}
and \lstinline{serveOp}:
\begin{lstlisting}
dynServer : Compute -> unit
serveOp : int -> DYN -> DC -> unit
\end{lstlisting}
The first argument \lstinline{n} of \lstinline{dynServer} is consistently
handled as an integer, so its type is \lstinline{int}.
The second argument \lstinline{op} is invoked with values of type
\lstinline{int -> int -> int}, \lstinline{int -> int}, and
\lstinline{int}: these types are subsumed to
the dynamic type $\DYN$.  Similarly to other gradual type systems,
an expression of type $\DYN$ can be used in any context, e.g.,
addition, function application, or even communication, 
and any value can be passed where $\DYN$ is expected.  
The third argument \lstinline{c} is invoked with channels of different types:
\lstinline{?int.?int.!int.end!}, \lstinline{?int.!int.end!}, and
\lstinline{!int.end!}. These types are subsumed to a type that is novel to this work, the \emph{dynamic session type}, $\DC$, a \emph{linear  
  type} which subsumes all session types.  
It is important to see that the channel \lstinline{c} is handled
linearly in functions \lstinline{dynServer} and
\lstinline{serveOp}. For that reason, the role and handling of the linear
dynamic session type with respect to the set of session types is
analogous to the role and handling of $\DYN$ with respect to general
types, as shown in earlier work \cite{DBLP:conf/sfp/FennellT12,DBLP:conf/tgc/Thiemann14}.
Aside from the type annotation, the code remains exactly the same as in the unityped case.

The external language comes with a
translation into a blame calculus with explicit casts. This
translation inserts just the casts that are necessary to make typing of the code go
through.
Here is the output of this translation (suffix \lstinline|Cast| is appended
to the names of the functions to distinguish different versions):
\begin{lstlisting}
dynServerCast : Compute -> unit
dynServerCast c =
  case c of {
    neg: c. serveOpCast 1 ((lambdax.-x) : int -> int => DYN) 
                        (c : ?int.!int.end! => DC);
    add: c. serveOpCast 2 ((lambdax.lambday.x+y) : int -> int -> int => DYN)
                        (c : ?int.?int.!int.end! => DC)
  }

serveOpCast : int -> DYN -> DC -> unit
serveOpCast n op c =
  if n==0 then
    close ((send op (c : DC => !DYN.DC)) : DC => end!)
  else
    let v,c = receive (c : DC => ?DYN.DC) in
    serveOpCast (n-1) ((op : DYN => DYN -> DYN) v) c
\end{lstlisting}
Casts of the form $e : T_1 \Cast{p} T_2$---meaning that $e$ of type $T_1$ is cast to $T_2$---are inserted where values are converted from/to $\DYN$ or $\DC$, similarly
to the translation from GTLC.  The \emph{blame labels} $\ell_1$, $\ell_2$, \ldots{} (ranged over by $p$ and $q$) on the arrow identify casts, when they fail.
The resulting casts in \lstinline{dynServerCast} and \lstinline{serveOpCast}
look fairly involved, but we should keep in mind that
the programmer does \emph{not} have to write them as they result
from the translation.
In practice, blame labels may contain information on program locations to help identify how a program fails.  For example, if Rob made the following mistake in writing his \lstinline{dynServer}
\begin{lstlisting}
  neg: c. serveOp 2 (lambdax.-x) c;
     -- The first argument to serveOp should be 1!
\end{lstlisting}
then a call to \lstinline{negationClient} would fail after the server
receives the first integer from the client.  More specifically, the failure would
identify the cast labeled $\ell_7$ failed because a channel endpoint
whose session type is ${!\intk}.\End_!$ had been flown from
$\ell_2$.

\subsection{Dynamic Linearity}
\label{sec:dynamic-linearity}

The refined criteria for gradual typing
\citet{Siek-et-al-2015-criteria} postulate that a gradual type system
should come with a full embedding of a unityped calculus.  This
embedding (which we indicate by ceiling brackets $\Embed{\dots}$)
extends the embedding given for the simply-typed lambda calculus
\citet{Wadler-Findler-2009} to handle the operations on sessions (see
Figure~\ref{fig:embedding-unityped} for its definition).

For example, (the unityped version of) the \lstinline{dynServer} as written by Rob is compiled
and embedded into the gradually typed language as a value
\lstinline{dynServer : DYN}. To directly incorporate Rob's code,
the gradual type checker enables Sy to write a function
\lstinline{callDynServer} that accepts a channel of type
\lstinline{Compute} and returns a value of type \lstinline{unit}, but internally just calls \lstinline{dynServer}.
\begin{lstlisting}
callDynServer : Compute -> unit
callDynServer c =
  dynServer c
\end{lstlisting}
The gradual type checker translates the definition of
\lstinline{callDynServer} by inserting the appropriate casts: it casts
the embedded \lstinline{dynServer} (of type $\DYN$) to the function
type \lstinline|DYN -> DYN|, it casts the channel argument to this
function to \lstinline{DYN}, and it casts the result to
\lstinline{unit}.
\begin{lstlisting}
callDynServer : Compute -> unit
callDynServer c =
  ((dynServer : DYN => DYN -> DYN) (c : Compute => DYN)) : DYN => unit
\end{lstlisting}
The casts inserted in this code make Sy's expectations completely obvious:
\lstinline{dynServer} must be a function and it is expected to use
\lstinline{c} as a channel of type \lstinline{Compute}. Any
misuse will allocate blame to the respective cast in
\lstinline{dynServer}.  

One kind of misuse that we have not discussed, yet, is compromising
linearity: Sy has no guarantee that Rob's code does not accidentally
duplicate or drop the communication channel. Both actions can lead to
protocol violations, which should be detected at run time. \GGV{}
takes care of linearity by factoring the cast
\lstinline{(c : Compute =9=>  DYN)} through the dynamic session type $\DC$:
\begin{lstlisting}
  ((c : Compute =9=> DC) : DC =9=> DYN)
\end{lstlisting}
The first part is a cast among linear (session) types and it can be
handled as outlined in Section~\ref{sec:gradual-way}.
The second part is a cast from a \emph{linear type} (which could be a session type, a linear function type,
or a linear product) to the \emph{unrestricted dynamic type} $\DYN$. 

A cast from a linear type to unrestricted $\DYN$ is a novelty of
{\GGV}.  Operationally, the cast introduces an indirection through a
store: it takes a linear value as an argument, allocates a new cell in
the store, moves the linear value along with a representation of its
type into the cell, and returns a handle $\Storea$ to the cell as an
unrestricted value of type $\DYN$.  {\GGV} represents the cell by a
process and creates handles by introducing an appropriate binder so
that a process of the form $\context{v \colon \DC \Cast{p} \DYN}$
reduces to
$(\nu\Storea)(\context{a} \PAR \Storea \mapsto v \colon \DC \Cast{p} \DYN) $.
Here, $(\nu\Storea)P$ represents the scope of a fresh reference to a linear value and the process $\Storea \mapsto v \colon \DC \Cast{p} \DYN$ represents
the cell storing $v$ at $a$.
Linear
use of this cell is controlled at run time using ideas for run-time monitoring of affine types
\cite{Tov-Pucella-2010,Padovani-2017}.

Any access to a cell comes in the guise of a cast $\Storea : \DYN
\Cast{p} T$ from $\DYN$ to
another type applied to a handle $\Storea$. 
If the first access to the cell is a cast from $\DYN$ to a
linear type consistent with the type representation stored in the cell, then
the cast returns the linear value and empties the cell.
Any subsequent access to the same cell results in a linearity violation which allocates blame to
the label on the cast from $\DYN$.
If the first cast attempts to convert to an inconsistent type, then
blame is allocated to that cast.
In addition, there is a garbage collection rule that fires when the
handle of a full cell is no longer reachable from any process. It
allocates blame to the context of the cast to $\DYN$ because that cast
violated the linearity protocol by dismissing the handle.

\subsection{End-to-end Dynamicity}
\label{sec:end-end-dynamicity}

The examples so far tacitly assume that channels are created with a
fully specified session type that provides a ``ground truth'' for the protocol on this channel.  Later
on, channels may be cast to $\DC$ and on to $\DYN$, but essentially they adhere to
the ground truth established at their creation.

Unfortunately, this view cannot be upheld in a calculus that is able
to embed a unityped language like \UGV. When writing \lstinline{new} in a
unityped program to create a channel, Rob (hopefully) has some session type in mind, but
it is not manifest in the code. 

In the typed setting, \lstinline{new} returns a linear pair of session
endpoints of type $S \tcpair_\lin \dual S$ where $S$ is the server
session type and $\dual S$ its dual client counterpart (cf.\ the
\lstinline{Compute} and \lstinline{ComputeD} types in
Sections~\ref{sec:compute-service} and~\ref{sec:view-from-client}).
When embedding the unityped \lstinline{new}, the session type $S$ is
unknown. Hence, the embedding needs to create
a channel without an inherent ground truth
session type. It does so by assigning both channel ends type
$\DC$ and casting it to $\DYN$ as in
\lstinline{new}$ \colon \DC \tcpair_\lin \DC \Cast{} \DYN $.
To make this work, the dynamic session type $\DC$ is considered \emph{self-dual}, that is $\dual\DC = \DC$.  
{\GGV} offers \emph{no static guarantees} for
either end of such a channel.

To see what run-time guarantees {\GGV} can offer for a
channel of unknown session type, let's consider the embedding of the dynamic send and receive
operations that may be applied to it.
The embedded send operation takes two arguments of type
$\DYN$, for the value and the channel, and returns the updated
channel wrapped in type $\DYN$.  The embedded receive operation takes
a wrapped channel of type $\DYN$ and
returns a ($\DYN$-wrapped) pair of the received value and the updated channel.
\begin{eqnarray*}
  \Embed{\send ef} &=&
                     (\send{\Embed{e}}{(\Embed{f} \colon \DYN \Cast{p} {!{\DYN}.\DC})})
                     \colon \DC \Cast{} \DYN
\\ 
  \Embed{\recv e} &=&
                    (\recv{(\Embed{e} \colon \DYN \Cast{q} {?{\DYN}.\DC})})
        \colon \DYN \times_\lin \DC \Cast{} \DYN
\end{eqnarray*}
(Here, $p$ and $q$ are metavariables ranging over blame labels.)
Now consider running the following unityped program with entry point  \lstinline{main}.
\begin{lstlisting}[numbers=left,escapechar=|]
client cc = 
  let v,cc = receive cc in wait cc
server cs = 
  let cs = send 42 cs in close cs
main =
  let cs,cc = new in|\label{line:new}|
  let _ = fork (client cc) in
  server cs
\end{lstlisting}
After a few computation steps, it reaches a configuration where the
client and the server have reduced to
$(\nu cc,cs)(\text{\lstinline{client}} \mid \text{\lstinline{server}})$ where
\begin{align*}
  \text{\lstinline{client}}  &= \proc{\context{
  (\recv{(cc \colon \DC \Cast{q} {?{\DYN}.\DC})})
        \colon \DYN \times_\lin \DC \Cast{} \DYN}}
  \\
  \text{\lstinline{server}} &= \proc{\context[F]{
(\send{ (42 : \intk\Cast{}\DYN)}{
  (cs \colon \DC \Cast{p} {!{\DYN}.\DC})})
                     \colon \DC \Cast{} \DYN}}
\end{align*}
for some contexts $E$ and $F$.
The channel ends $cc : \DC$ and $cs:\DC$ are the
two ends of the channel created in line~\ref{line:new}. Fortunately, the two
processes use the channel consistently as the cast target ${?{\DYN}.\DC}$ on one end
 is dual to the cast target ${!{\DYN}.\DC}$ at the
other end. Hence, {\GGV} has a reduction that drops the casts at both
ends in this situation, and retypes the ends to $cc : {?{\DYN}.\DC}$
and $cs : {!{\DYN}.\DC}$, respectively.
\[
  \proc{\context{
  (\recv{cc})
        \colon \DYN \times_\lin \DC \Cast{} \DYN}}
~~\mid~~
\proc{\context[F]{
(\send{ (42 : \intk\Cast{}\DYN)}{
  cs})
                     \colon \DC \Cast{} \DYN}}
\]
Implementing this reduction requires communication between the two
processes to check the cast targets for consistency. While our formal
presentation abstracts over this implementation issue, we observe that
a single \emph{asynchronous} message exchange is sufficient: Each
cast first sends its target type and then receives the target type of
the cast at the other end. Then both processes check locally whether the target
types are duals of one another. If they are, then both processes continue; otherwise they allocate blame. As 
both ends perform the same comparison, the outcome is the same
in both processes.

\section{GV and Gradual GV}
\label{sec:language}

\subsection{\GV}
\label{sec:gv}

We begin by discussing a language \GV{} with session types but without
gradual types.  The language is inspired by both the Gay and Vasconcelos'
functional session type calculus \cite{Gay-Vasconcelos-2010} and
Wadler's `good variant' of the language
\cite{Wadler-2012,Wadler-2014}.  A main difference from the former
is the introduction of communication primitives and session types to
close a session explicitly.  Unlike the latter,
 types are ``stratified'' into two levels---sessions types are just a subgrammer of types---and deadlock freedom is not guaranteed.

\subsubsection{Types and subtyping}

\begin{figure}[t]
  \begin{align*}
    &\text{Multiplicities}&
    \Multm,\Multn \grmeq& \lin \grmor \un
    \\
    & \text{Types}&
    T,U \grmeq& \unitk
    \grmor S 
    \grmor T \rightarrow_\Multm U 
    \grmor T \tcpair_\Multm U
    \\
    & \text{Session types}&
    S,R \grmeq& \pl{T}S
    \grmor \qu{T}S
    \grmor \oplus\br{l_i\colon S_i}_{i\in I}
    \grmor \with\br{l_i\colon S_i}_{i\in I}
    \grmor \Endpl
    \grmor \Endqu
  \end{align*}
  \spacer

  Duality \hfill \fbox{$\dual{S}=R$}
  \begin{align*}
    \dual{\pl{T}S} &= \qu{T}\dual{S} &
    \dual{\oplus\br{l_i\colon S_i}_{i \in I}} &= \with\br{l_i\colon \dual S_i}_{i \in I} &
    \dual\Endpl &= \Endqu \\
    \dual{\qu{T}S} &= \pl{T}\dual S &
    \dual{\with\br{l_i\colon S_i}}_{i \in I} &= \oplus\br{l_i\colon \dual S_i}_{i \in I} &
    \dual\Endqu &= \Endpl
  \end{align*}
  \spacer

  Multiplicity ordering \hfill \fbox{$m \Multsub n$}
  \begin{gather*}
      \un \Multsub \un
    \gap
      \un \Multsub \lin
    \gap
      \lin \Multsub \lin
  \end{gather*}
  \spacer

  Multiplicity of a type \hfill \fbox{$m(T) \quad \super{n}{T}$}
  \begin{gather*}
      \un(\unit)
    \gap
      \lin(S)
    \gap
      m(T \tcpair_m U)
    \gap
      m(T \to_m U)
    \gap
      \frac
      {m(T) \quad m \Multsub n}
      {\super{n}{T}}
  \end{gather*}
  \spacer

  Subtyping\hfill \fbox{$T \SubGV U$}
  \begin{gather*}
      \unit \SubGV \unit
    \gap
      \frac
      {T' \SubGV T \quad U \SubGV U' \quad \Multm \Multsub \Multn}
      {T \to_\Multm U \SubGV T' \to_\Multn U'}
    \gap
      \frac
      {T \SubGV T' \quad U \SubGV U' \quad \Multm \Multsub \Multn}
      {T \tcpair_\Multm U \SubGV T' \tcpair_\Multn U'}
    \nextline
      \frac
      {T' \SubGV T \quad S \SubGV S' }
      {\pl{T}S \SubGV \pl{T'}S'}
    \gap
      \frac
      {T \SubGV T' \quad S \SubGV S'}
      {\qu{T}S \SubGV  \qu{T'}S'}
    \gap
      \frac
      {J \subseteq I \quad (S_j \SubGV R_j)_{j \in J}}
      {\oplus\br{l_i \colon S_i}_{i\in I} \SubGV \oplus\br{l_j \colon R_j}_{j\in J}}
    \nextline
      \frac
      {I \subseteq J \quad (S_i \SubGV R_i)_{i \in I}}
      {\with\br{l_i \colon S_i}_{i\in I} \SubGV \with\br{l_j \colon R_j}_{i\in J}}
    \gap
      \Endpl \SubGV \Endpl
    \gap
      \Endqu \SubGV \Endqu
  \end{gather*}

  \hcaption{Types and subtyping in \GV.}
  \label{fig:gv-types}
\end{figure}

\begin{figure}[t]

  \begin{align*}
    & \text{Names} &
    \name \grmeq & x
    \grmor c
    \\
    & \text{Expressions} &
    e,f \grmeq & \name
    \grmor ()
    \grmor \lambda_m x.e
    \grmor e\,f 
    \grmor (e,f)_m
    \grmor \letin{x,y} e f
    \grmor \fork{e}
    \\ 
    &&
    \grmor & \new 
    \grmor \send{e}{f}
    \grmor \recv{e}
    \grmor \select{l}{e}
    \grmor \case{e}{\br{\casebranch{l_i}{x_i}{e_i}}_{i \in I}}
    \\ 
    &&
    \grmor & \close{e}
    \grmor \wait{e}
    \\
    & \text{Processes} &
    P,Q \grmeq & \proc e
    \grmor (P \PAR Q)
    \grmor (\nu \Chanc,\Chand)P
    \\
    & \text{Type environments} &
    \Gamma,\Delta \grmeq& \cdot
    \grmor \Gamma, \name \colon T
  \end{align*}
  \spacer

  Environment splitting \hfill \fbox{$\Gamma = \Gamma_1 \circ \Gamma_2$}
  \begin{gather*}
      \cdot = \cdot \circ \cdot
    \quad
      \frac
      {\Gamma = \Gamma_1 \circ \Gamma_2 \quad \un(T)}
      {\Gamma, \name \colon T = (\Gamma_1,\name\colon T) \circ (\Gamma_2,\name\colon T)}
    \quad
    \frac
      {\Gamma = \Gamma_1 \circ \Gamma_2 \quad \lin(T)}
      {\Gamma, \name \colon T = (\Gamma_1,\name\colon T) \circ \Gamma_2}
    \quad
      \frac
      {\Gamma = \Gamma_1 \circ \Gamma_2 \quad \lin(T)}
      {\Gamma, \name \colon T = \Gamma_1 \circ (\Gamma_2,\name\colon T)}
  \end{gather*}
  \spacer

  Typing expressions \hfill \fbox{$\Gamma \vdash e: T$}
  \begin{gather*}
      \frac
      {\un(\Gamma)}
      {\Gamma,\name \colon T \vdash \name: T}
    \gap 
      \frac
      {\un(\Gamma)}
      {\Gamma \vdash (): \unit}
    \gap
      \frac
      {\Gamma, x \colon T \vdash e : U \quad \super{\Multm}{\Gamma}}
      {\Gamma \vdash \lambda_\Multm x.e : T \to_\Multm U}
    \gap
      \frac
      {\Gamma \vdash e : T \to_\Multm U \quad \Delta \vdash f : T}
      {\Gamma \circ  \Delta \vdash e\,f : U}
    \nextline
      \frac
      {\Gamma \vdash e : T \quad \Delta \vdash f : U \quad \super{\Multm}{T} \quad \super{\Multm}{U}}
      {\Gamma\circ \Delta \vdash (e,f)_\Multm : T \times_\Multm U}
    \gap
      \frac
      {\Gamma \vdash e : T_1 \times_\Multm T_2 \quad \Delta, x\colon T_1, y\colon T_2 \vdash f: U}
      {\Gamma \circ \Delta \vdash \letin{x,y}{e}{f} : U}
    \nextline
      \frac
      {\Gamma \vdash e : \unit}
      {\Gamma \vdash \fork{e} : \unit}
    \gap
      \frac
      {\un(\Gamma)}
      {\Gamma \vdash \new : S\times_\lin \dual{S}}
    \gap
      \frac
      {\Gamma\vdash e : T \quad \Delta\vdash f : \pl{T}S}
      {\Gamma \circ \Delta \vdash \send{e}{f} : S}
    \gap
      \frac
      {\Gamma\vdash e : \qu{T}S}
      {\Gamma \vdash \recv{e} :  T \tcpair_\lin S}
    \nextline
      \frac
      {\Gamma \vdash e : \oplus\{l_i\colon S_i\}_{i\in I} \quad j\in I}
      {\Gamma \vdash \select {l_j} e : S_j}
    \gap
      \frac
      {\Gamma \vdash e : \&\{l_i\colon S_i\}_{i\in I} \quad (\Delta,x_i:S_i \vdash e_i : T) _{i\in I}}
      {\Gamma \circ \Delta \vdash \case{e}{\br{\casebranch{l_i}{x_i}{ e_i}}_{i\in I}} : T}
    \nextline
      \frac
      {\Gamma \vdash e : \Endpl}
      {\Gamma \vdash \close{e} :  \unit}
    \gap
      \frac
      {\Gamma \vdash e : \Endqu}
      {\Gamma \vdash \wait{e} :  \unit}
    \gap
      \frac
      {\Gamma \vdash e : T \quad T \Sub U}
      {\Gamma \vdash e : U}
  \end{gather*}
  \spacer

  Typing processes \hfill \fbox{$\Gamma \vdash P$}
  \begin{gather*}
    \frac
      {\Gamma \vdash e : T \quad \un (T)}
      {\Gamma \vdash \proc{e}}
    \gap
      \frac
      {\Gamma \vdash P \quad \Delta \vdash Q}
      {\Gamma \circ \Delta \vdash P \PAR Q}
    \gap
      \frac
      {\Gamma, \Chanc \colon S, \Chand \colon \dual{S} \vdash P}
      {\Gamma \vdash (\nu \Chanc,\Chand)P}
  \end{gather*}

  \hcaption{Expressions, processes, and typing in GV.}
  \label{fig:gv-typing}
\end{figure}

\newcommand{\grmorevalctx}{\,\mid}

\begin{figure}[t]
  \begin{align*}
    & \text{Values} &
    v,w \grmeq & ()
    \grmor \lambda_m x.e
    \grmor (v,w)_m
    \grmor \Chanc
    \\
    & \text{Eval contexts} &
    E,F \grmeq& [\,]
    \grmorevalctx E\,e
    \grmorevalctx v\,E 
    \grmorevalctx (E,e)_\Multm
    \grmorevalctx (v,E)_\Multm
    \grmorevalctx \letin{x,y}{E}{e}
    \grmorevalctx \send{E}{e}
    \grmorevalctx \send{v}{E}
    \\ &&
    \grmor & \recv{E}
    \grmor \select{l}{E }
    \grmor \case{E}{\br{\casebranch{l_i}{x_i}{ e_i}}_{i \in I}}
    \grmor \close{E}
    \grmor \wait{E}
  \end{align*}
  \spacer

Expression reduction \hfill \fbox{$e \reduces f$}
  \begin{align*}
    (\lambda_\Multm x.e)v &\reduces e\subs{v}{x} \\
    \letin{x,y}{(v,w)_\Multm}{e} &\reduces e\subs{v}{x}\subs{w}{y}
  \end{align*}
  \spacer

Structural congruence \hfill \fbox{$P \equiv Q$} 
  \begin{align*}
    P \PAR Q &\equiv Q \PAR P
    &
    P \PAR (Q \PAR P') &\equiv (P \PAR Q) \PAR P'
    &
      P \PAR \proc{()} &\equiv P
    &  (\nu \Chanc,\Chand)P &\equiv (\nu \Chand,\Chanc)P
  \end{align*}
  \[
  \begin{array}{r@{\;}ll}
      ((\nu \Chanc,\Chand)P) \PAR Q &\equiv (\nu \Chanc,\Chand)(P \PAR Q)
      &
      \text{if $\{c,d\}\cap \fn{Q} = \emptyset$}
      \\[1ex]
      (\nu \Chanc,\Chand)(\nu \Chanc',\Chand')P &\equiv (\nu \Chanc',\Chand')(\nu \Chanc,\Chand)P
      &
           \text{if $\{c,d\} \cap \{c',d'\} = \emptyset$}
  \end{array}
  \]
  \spacer

Process reduction  \hfill \fbox{$P \reduces Q$}
  \begin{align*}
    \proc{\contextE{\fork e}}
      &\reduces \proc{\contextE{()}} \PAR \proc{e} \\
    \proc{\contextE{\newk}}
      &\reduces (\nu \Chanc,\Chand)\proc{\contextE{(\Chanc,\Chand)_\lin}} \\
    (\nu \Chanc,\Chand)(\proc{\contextE{\send{v}{\Chanc}}} \PAR \proc{\contextF{\recv{\Chand}}})
      &\reduces (\nu \Chanc,\Chand)(\proc{\contextE{\Chanc}} \PAR \proc{\contextF{(v,\Chand)_\lin}}) \\
    (\nu \Chanc,\Chand) (\proc{\contextE{\select{l_j}{\Chanc}}} \PAR
      \proc{\contextF{\case{\Chand}{\br{\casebranch{l_i}{x_i}{e_i}}_{i \in I}}}})
      &\reduces (\nu \Chanc,\Chand)(\proc{\contextE{\Chanc}} \PAR \proc{\contextF{e_j[\Chand/x_j]}})
      &\text{if $j \in I$} \\
    (\nu \Chanc,\Chand)(\proc{\contextE{\close{\Chanc}}} \PAR \proc{\contextF{\wait{\Chand}}})
      &\reduces \proc{\contextE{()}} \PAR \proc{\contextF{()}}
  \end{align*}
  \begin{gather*}
    \frac{P \reduces P'}{P\PAR Q \reduces P'\PAR Q}
    \quad
    \frac{P \reduces Q}{(\nu \Chanc,\Chand)P \reduces (\nu \Chanc,\Chand)Q}
    \quad
    \frac{P' \equiv P \quad P \reduces Q \quad Q \equiv Q'}{P' \reduces Q'}
    \quad
    \frac{e \reduces f}{\proc{\contextE{e}} \reduces \proc{\contextE{f}}}
  \end{gather*}

 \hcaption{Reduction in \GV.}
 \label{fig:gv-reductions}
\end{figure}

Figure~\ref{fig:gv-types} summarises types of \GV.
Let $\Multm,\Multn$ range over \emph{multiplicities} for types whose
use is either unrestricted, $\un$, or must be linear, $\lin$.  

Let $T,U$ range over types, which include: unit type, $\unit$;
unrestricted and linear function types, $T \to_\Multm U$;
unrestricted and linear product types, $T \times_\Multm U$;
and session types.  One might also wish to include booleans
or base types, but we omit these as they can be dealt with analogously to $\unit$.

Let $l$ range over labels used for selection and case choices.
Let $S,R$ range over session types that describe communication
protocols for channel endpoints, which include: send $\pl{T}S$, to
send a value of type $T$ and then behave as $S$; receive $\qu{T}S$, to
receive a value of type $T$ and then behave as $S$, select
$\oplus\br{l_i \colon S_i}_{i \in I}$, to send one of the labels $l_i$
and then behave as $S_i$; case $\with\br{l_i \colon S_i}_{i \in I}$ to receive
any of the labels $l_i$ and then behave as $S_i$; close $\Endpl$, to
close a channel endpoint; and wait $\Endqu$, to wait for the other end
of the channel to close.
In $\oplus\br{l_i \colon S_i}_{i \in I}$ and
 $\with\br{l_i \colon S_i}_{i \in I}$, the label set must be non-empty.
We will call the session type that describes
the behaviour after send, receive, select, or case the \emph{residual}.

We define the usual notion of the dual of a
session type $S$, written as $\dual{S}$.  Send is dual to receive,
select is dual to case, and close is dual to wait.
Duality is an involution, so that $\dual{\dual{S}} = S$.

Multiplicities are ordered by $\un \Sub \lin$, indicating that an
unrestricted value may be used where a linear value is expected, but
not conversely.  The unit type is unrestricted, session types are
linear, while function types $T \to_\Multm U$ and product types
$T \times_\Multm U$ are unrestricted or linear depending on the
multiplicity $\Multm$ that decorates the type constructor.
To ensure that linear objects are used exactly once our type system
imposes the invariant that unrestricted data structures do not contain
linear data structures. As an example, type
$\unitk \tcpair_\un \Endpl$ cannot be introduced in any derivation.
We also write $\super{n}{T}$ if $m(T)$ holds for some $m$ such that
$m \Sub n$, thus $\super{\un}{T}$ holds only if $\un(T)$, while
$\super{\lin}{T}$ holds if either $\lin(T)$ or $\un(T)$, and hence
holds for any type.

We define subtyping as usual for functional-program like systems \cite{Gay-Vasconcelos-2010}.  Function types are 
contravariant in their domain, covariant in their range, and covariant in their multiplicity,
and send types are contravariant in the value sent and
covariant in the residual session type.  All other types
and session types are covariant in all components.
Width subtyping resembles record subtyping for select,
and variant subtyping for case.  That is, on an endpoint
where one may select among labels with an index in $I$
one may instead select among labels with indexes in $J$,
so long as $J \subseteq I$, while on an endpoint where
one must be able to receive any label with an index in $I$
one may instead receive any label with an index in $J$,
so long as $I \subseteq J$. 
(Beware that the subtyping on endpoints is exactly the reverse for process-calculus like systems, such as Wadler's CP \cite{Wadler-2012,Wadler-2014}!)

Subtyping is reflexive, transitive, and antisymmetric.
Duality inverts subtyping, in that $S \Sub R$ if and
only if $\dual{R} \Sub \dual{S}$.

\subsubsection{Expressions, processes, and typing}

Expressions, processes, and typing for \GV\
are summarised in Figure~\ref{fig:gv-typing}.
We let $x,y$ range over variables,
$c,d$ range over channel endpoints,
and $z$ range over names, which are
either variables or channel endpoints.

We let $e,f$ range over expressions, which include names, unit value,
function abstraction and application, pair creation and destruction,
fork a process, create a new pair of channel endpoints, send, receive,
select, case, close, and wait.  Function abstraction and pair creation
are labelled with the multiplicity of the value created.  We sometimes
abbreviate expressions of the form $(\lambda_\lin x.e)f$ to
$\letin xef$, as usual.  A \GV{} program is always given as an
expression, but as it executes it may fork new processes.

We let $P,Q$ range over processes, which
include expressions, parallel composition,
and a binder that introduces a pair of channel endpoints.
The initial process will consist of a single expression,
corresponding to a given \GV{} program.

The \emph{bindings} in the language are as follows:
variable $x$ is bound in subexpression $e$ of $\lambda_\Multm x.e$,
variables $x,y$ are bound in subexpression $f$ of $\letin{x,y} e f$,
variables $x_i$ are are bound in subexpressions $e_i$ of
$\case{e}{\br{\casebranch{l_i}{x_i}{ e_i}}_{i \in I}}$,
channel endpoints $\Chanc,\Chand$ are bound in subprocess $P$ of
$(\nu \Chanc,\Chand)P$.
We assume that $\Chanc$ and $\Chand$ in $(\nu \Chanc,\Chand)P$ are different.
The notions of free and bound names/variables as
well that substitution 
are defined accordingly.
The set of the free names in $P$ is denoted by $\fn{P}$.
We follow Barendregt's variable convention, whereby all names
in binding occurrences in any mathematical context are pairwise
distinct and distinct from the free names~\citep{barendregt:lambda-calculus}.

We let $\Gamma,\Delta$ range over environments, which are used for
typing. An environment consists of zero or more associations of
names with types.  Environment splitting $\Gamma = \Gamma_1 \circ
\Gamma_2$ is standard.  It breaks an environment $\Gamma$ for an
expression or process into environments $\Gamma_1$ and $\Gamma_2$ for
its components; a name of unrestricted type may be used in both
environments, while a name of linear type must be used in one
environment or the other but not both.
We write $m(\Gamma)$ if $m(T)$ holds for each $T$ in $\Gamma$,
and similarly for $\super{m}{\Gamma}$.

Write $\Gamma \vdash e : T$ if under environment $\Gamma$ expression $e$ has type $T$.
The typing rules for expressions are standard.
In the rules for names, unit, and $\new$
the remaining environment must be unrestricted,
to enforce the invariant that linear variables are used exactly once.
A function abstraction that is unrestricted must have
only unrestricted variables bound in its closure, and a pair that is
unrestricted may only contain components that are unrestricted.
Thus, it is never possible to construct a pair of type, e.g., $S \tcpair_\un T$,
which contains a linear type $S$ under the unrestricted pair type constructor $\tcpair_\un$, even though such a type is syntactically allowed for simplicity. 
The rules for send, receive, select, case, close, and wait
match the corresponding session types.
For example, the following type judgment
\[
  o \colon \intk, c \colon !\intk.\Endpl \vdash \close{(\send{o}{c})} : \unitk
\]
can be derived.
The typing system supports
subsumption: if $e$ has type $T$ and $T$ is a subtype of $U$ then
$e$ also has type $U$.

Write $\Gamma \vdash P$ if under environment $\Gamma$ process $P$ is
well typed.  The typing rules for processes are also standard.  If
expression $e$ has unrestricted type $T$ then process $\proc{e}$ is well-typed.  If
processes $P$ and $Q$ are well-typed, then so is process $P \PAR Q$,
where the environment of the latter can be split to yield the
environments for the former.  And if process $P$ is well-typed under
an environment that includes channel endpoints $c$ and $d$ with session
types $S$ and $\dual{S}$, then process $(\nu c,d)P$ is well-typed
under the same environment without $c$ and $d$.

\subsubsection{Reduction}

Values, evaluation contexts, reduction for expressions, structural
congruence, and reduction for processes for \GV{} are summarised in
Figure~\ref{fig:gv-reductions}.

Let $v,w$ range over values, which include
unit, function abstractions, pairs of values, and channel endpoints.
Let $E,F$ range over evaluation contexts,
which are standard.

Write $e \reduces f$ to indicate that expression $e$ reduces to expression $f$.
Reduction is standard, consisting of beta reduction for functions
and pairs.

Write $P \equiv Q$ for structural congruence of processes.
It is standard, with composition being commutative and
associative. A process returning the unit is the
identity of parallel composition, so $P \PAR \proc{()} \equiv P$.
The order in which the
endpoints are written in a $\nu$-binder is irrelevant. Distinct prefixes commute,
and satisfy scope extrusion.
The Barendregt convention ensures that $c,d$ are not free in $Q$ in the rule for scope extrusion.  Similarly for the rule to swap prefixes.

Write $P \reduces Q$ if process $P$ reduces to process $Q$.
Evaluating $\fork{e}$ returns $()$ and creates a new process $\proc{e}$.
Evaluating $\newk$ introduces a new binder $(\nu c,d)$ and returns
a pair $(c,d)_\lin$ of channel endpoints.  Evaluating $\send{v}{c}$
on one endpoint of a channel and $\recv{d}$ on the other,
causes the send to return $c$ and the receive to return $(v,d)_\lin$.
Similarly for select on one endpoint of a channel and case on the other,
or close on one endpoint of a channel and wait on the other.

Process reduction is a congruence with regard to parallel composition
and binding for channel endpoints, it is closed under structural
congruence, and supports expression reduction under evaluation
contexts.

\subsection{\GGV}
\label{sec:ggv}

We now introduce \GGV.  Following standard frameworks of gradual typing~\cite{Siek-Taha-2006,Siek-et-al-2015-criteria},
\GGV{} consists of two sublanguages: an external language
\GGVe, in which source programs are written, and an internal language
\GGVi, to which \GGVe{} is elaborated by cast-inserting translation to
make necessary run-time checks explicit.  The operational semantics of
a program is given as reduction of processes in \GGVi.  We first
introduce \GGVi{} by outlining its differences to \GV{}
(Sections~\ref{sec:GGV-types}--\ref{sec:GGVi-reduction}).  Next, we
introduce the syntax of \GGVe, which has only expressions, because it
is the language in which source programs are written, its type system,
and cast-inserting translation from \GGVe{} to \GGVi{}
(Sections~\ref{sec:GGVe}--\ref{sec:cast-insertion}).  Finally, we discuss how an untyped variant
of \GV{} can be embedded into \GGVi{} (Section~\ref{sec:embedding}).

\subsubsection{Types and subtyping}
\label{sec:GGV-types}

\begin{figure}[t]
  \begin{align*}
    & \text{Types}&
    T,U &\grmeq \mygray{\unitk
    \grmor S 
    \grmor T \rightarrow_\Multm U 
    \grmor T \tcpair_\Multm U}
    \grmor \DYN 
    \\
    & \text{Session types}&
    S,R &\grmeq \mygray{\pl{T}S
    \grmor \qu{T}S
    \grmor \oplus\br{l_i\colon S_i}_{i\in I}
    \grmor \with\br{l_i\colon S_i}_{i\in I}
    \grmor \Endpl
    \grmor \Endqu}
    \grmor \DC
    \\
    &\text{Ground types}&
    \GroundT,\GroundU &\grmeq \unit
    \grmor \DC
    \grmor \DYN \to_\Multm \DYN 
    \grmor \DYN \tcpair_\Multm \DYN
    \\
    &\text{Ground session types}&
    \GroundS,\GroundR &\grmeq \pl{\DYN}\DC
    \grmor \qu{\DYN}\DC
    \grmor \oplus\br{l_i \colon \DC}_{i \in I}
    \grmor \with\br{l_i \colon \DC}_{i \in I}
    \grmor \Endpl
    \grmor \Endqu
  \end{align*}
  \spacer

  \begin{minipage}[t]{0.45\textwidth}
  Duality \hfill \fbox{$\dual{S}=R$}
  \[
  \dual{\DC} = \DC
  \]
  \end{minipage}
  \hfill
  \begin{minipage}[t]{0.45\textwidth}
  Multiplicity of a type \hfill \fbox{$m(T)$}
  \[
  \un(\DYN) \gap \lin(\DC)
  \]
  \end{minipage}

  \spacer



Subtyping \hfill \fbox{$T \Sub U$}
  \begin{gather*}
    \DYN \Sub \DYN
    \gap
    \DC \Sub \DC
  \end{gather*}
  \spacer

  Consistent subtyping \hfill \fbox{$T \lesssim U$}
  \begin{gather*}
      \unit \lesssim \unit
    \gap
      \frac
      {T' \lesssim T \quad U \lesssim U' \quad \Multm \Multsub \Multn}
      {T \to_\Multm U \lesssim T' \to_\Multn U'}
    \gap
      \frac
      {T \lesssim T' \quad U \lesssim U' \quad \Multm \Multsub \Multn}
      {T \tcpair_\Multm U \lesssim T' \tcpair_\Multn U'}
    \gap
      \DYN \lesssim T
    \gap
      T \lesssim \DYN 
    \nextline
      \frac
      {T' \lesssim T \quad S \lesssim S' }
      {\pl{T}S \lesssim \pl{T'}S'}
    \gap
      \frac
      {T \lesssim T' \quad S \lesssim S'}
      {\qu{T}S \lesssim  \qu{T'}S'}
    \gap
      \frac
      {J \subseteq I \quad (S_j \lesssim S_j')_{j \in J}}
      {\oplus\br{l_i \colon S_i}_{i\in I} \lesssim \oplus\br{l_j \colon S'_j}_{j\in J}}
    \nextline
      \frac
      {I \subseteq J \quad (S_i \lesssim S_i')_{i\in I}}
      {\with\{l_i\colon S_i\}_{i\in I} \lesssim \with\{l_j\colon S'_j\}_{j\in J}}
    \gap
      \Endpl \lesssim \Endpl
    \gap
      \Endqu \lesssim \Endqu
    \gap 
      \DC \lesssim S 
    \gap 
      S \lesssim \DC
  \end{gather*}

  \hcaption{Types and subtyping in \GGV.}
  \label{fig:ggv-types}
\end{figure}

Following the usual approach to gradual types, we extend the grammar
of types with a \emph{dynamic} type (sometimes also called the
\emph{unknown} type), written $\DYN$.  Similarly, we extend session
types with the dynamic session type, written $\DC$.  The extended
grammar of types is given in Figure~\ref{fig:ggv-types}, where types
carried over from Figure~\ref{fig:gv-types} are typeset in gray.

As before, we let $T$, $U$ range over types and $S$, $R$ range over
session types.  We also distinguish a subset of types which we call
\emph{ground types}, ranged over by $\GroundT,\GroundU$, and a subset
of session types which we call \emph{ground session types}, ranged
over by $\GroundS,\GroundR$, consisting of all the type constructors
applied only to arguments which are either the dynamic type or the
dynamic session type, as appropriate.

We define $\DC$ to be self-dual: $\dual{\DC} = \DC$.
We define the multiplicity of the new types by setting
$\DYN$ to be $\un$ and $\DC$ to be $\lin$.  The remaining
definitions of multiplicity of types carries over
unchanged from Figure~\ref{fig:gv-types}.  Type $\DYN$
is labelled unrestricted although (as we will see below) it corresponds
to all possible types, both unrestricted and linear, and therefore
we will need to take special care when handling values of type $\DYN$
that correspond to values of a linear type.

Consistent subtyping is defined over types of \GGV{} also in
Figure~\ref{fig:ggv-types}.  It is identical to the
definition of subtyping from Figure~\ref{fig:gv-types},
with each occurrence of $\Sub$ replaced by $\lesssim$,
and with the addition of four rules for the new types
\begin{gather*}
  \DYN \lesssim T  \gap
  T \lesssim \DYN  \gap
  \DC \lesssim S   \gap
  S \lesssim \DC\ .
\end{gather*}
For example, we have
(a) $\oplus\br{l_1: \pl{\DYN}\DC, l_2: \qu{\DYN}\DC} \lesssim \oplus\br{l_1: \DC}$
and (b) $\with\br{l_1: \DC} \lesssim \with\br{l_1: \pl{\DYN}\DC, l_2: \qu{\DYN}\DC}$.
Consistent subtyping is reflexive, but neither symmetric nor transitive.
As with subtyping, we have $\dual{S} \lesssim R$ iff $\dual{R} \lesssim S$.
In \GGV, we will be permitted to attempt to cast a value of type
$T$ to a value of type $U$ exactly when $T \lesssim U$. A cast
may fail at run time: while a cast using (a)
will not fail, a cast using (b) may fail because an expression of
type $\with\br{l_1: \DC}$ may evaluate to a value of type, say,
$\with\br{l_1: \Endpl}$.

Two types are consistent, written $T \sim U$, if $T \lesssim U$ and
$U \lesssim T$.  Consistency is reflexive and symmetric but not
transitive.  The standard example of the failure of transitivity is
that for any function type we have $T \to_\Multm U \sim \DYN$ and for
any product type we have $\DYN \sim T' \times_\Multn U'$, but
$T \to_\Multm U \not\sim T' \times_\Multn U'$.
In the setting of session types one has for example $\qu TS \sim \DC$
and $\DC \sim \Endpl$, but $\qu TS \not\sim \Endpl$.

Subtyping $T \Sub U$ for \GGV\ essentially
carries over from \GV. Its
definition is exactly as in Figure~\ref{fig:gv-types},
with the addition of two rules that ensure subtyping is reflexive
for the dynamic type and the dynamic session type. In contrast to consistent subtyping, subtyping
$T \Sub U$ guarantees that we may always treat a value of the first type as if it belongs
to the second type without casting.

\subsubsection{Expressions, processes, and typing of \GGVi}
\label{sec:GGVi}

\begin{figure}[t]

  \begin{align*}
    & \text{Blame labels} &
    p, q & 
    \\
    & \text{References} &
    \Storea, \Storeb & 
    \\
    & \text{Names} &
    \name \grmeq & \cdots
    \grmor \Storea
    \\
    & \text{Expressions} &
    e,f \grmeq & \cdots
    \grmor e \colon T \Cast{p} U
    \\
    & \text{Processes} &
    P,Q \grmeq & \cdots
    \grmor (\nu a)P
    \grmor \Storea \mapsto w : \GroundT \Cast{p} \DYN
    \grmor \Storea \mapsto \lockedbl{p}
    \grmor \blameboth{p}{q}{\VarsX}
    \grmor \blamegc{p}{\VarsX}
  \end{align*}
  \spacer

  Typing expressions \hfill \fbox{$\Gamma \vdash e: T$}
  \begin{gather*}
      \frac
      {\Gamma \vdash e : T \quad T \lesssim U}
      {\Gamma \vdash (e \colon T \Cast{p} U) : U}
  \end{gather*}
  \spacer

  Typing processes \hfill \fbox{$\Gamma \vdash P$}
  \begin{gather*}
    \frac
      {\Gamma, \Storea \colon \DYN \vdash P}
      {\Gamma \vdash (\nu\Storea)P}
    \gap
    \frac
    {\Gamma \vdash \Storea \colon \DYN \quad
      \Delta \vdash w : \GroundT \quad \lin(\GroundT)}
      {\Gamma \circ \Delta \vdash \Storea \mapsto w : \GroundT \Cast p \DYN}
    \gap
    \frac
      {\Gamma \vdash \Storea \colon \DYN}
      {\Gamma \vdash \Storea \mapsto \lockedbl{p}}
    \nextline
      \frac
      {\flv{\Gamma} = \VarsX}
      {\Gamma \vdash \blameboth{p}{q}{\VarsX}}
    \gap
      \frac
      {\flv{\Gamma} = \VarsX}
      {\Gamma \vdash \blamegc{p}{\VarsX}}
  \end{gather*}

  \hcaption{Expressions, processes, and typing in \GGVi.}
  \label{fig:ggv-typing}
\end{figure}

Expressions, processes, and type rules of \GGVi{}
are summarised in Figure~\ref{fig:ggv-typing}.
The expressions of \GGVi{} are those of \GV, plus an additional
form for casts.  A cast is written
\begin{gather}\label{eq:1}
e : T \Cast{p} U
\end{gather}
where $e$ is an expression of type $T$, and $p,q$ range over blame
labels such as \(\ell_1, \ell_2, \ldots\).  For example, the following
term
\[
  \textit{SOC} = \lambda_\un o. \lambda_\un c. \close{((\send{o}{(c \colon \DC \Cast{\ell_1} !\DYN. \DC)}) \colon \DC \Cast{\ell_2} \Endpl)},
\]
which represents a simplified version of \lstinline{serveOpCast} in
Section~\ref{sec:motivation}, can be given type
$\DYN \to_\un \DC \to_\un \unitk$.

Blame labels carry a polarity, which is either positive or
negative. The complement operation, $\dual{p}$, takes a positive label
into a negative one and vice versa; complement is an involution, so
that $\dual{\dual{p}} = p$. By convention, we assume that all blame
labels in a source program are positive, but negative blame labels may
arise during evaluation of casts at a function type or a send type.  A
cast raises \emph{positive blame} if the fault lies with the
expression \emph{contained} in the cast (for instance, because it
returns an integer where a character is expected), while it raises
\emph{negative blame} if the fault lies with the context
\emph{containing} the cast (for instance, because it passes an
argument or sends a value that is an integer where a character is
expected).

In a valid cast $e : T \Cast{p} U$, the type $T$ must be a consistent subtype of
$U$ ($T \lesssim U$), the type of the entire expression.  If a cast in
a program fails, it evaluates to $\blameboth{p}{q} X$ or
$\blamegc{p} X$ (which, as we see later, are treated as processes)
where the blame label $p$ and $q$ indicate the root cause of the
failure (we will explain $X$ shortly).  If the cast in~(\ref{eq:1})
fails, it means that the value returned by $e$ has type $T$, but not
type $U$.  For example, let $e = 4711 : \intk \Cast {q} \DYN$,
$T= \DYN$, and $U=\boolk$. As $\DYN \lesssim \boolk$, the resulting
expression $(4711 : \intk \Cast {q} \DYN) : \DYN \Cast p \boolk$ is
well-typed.  However, at run time it raises blame by reducing to
$\blameboth {\dual q} p \emptyset$, which flags the error that $\intk$ is
not a subtype of $\boolk$: that is, $\intk \not\Sub \boolk$.

Blame is indicated by processes of the form
\begin{gather*}
  \blameboth{p}{q}{X}  \gap\mbox{or}\gap  \blamegc{p}{X}
\end{gather*}
where $p$ and $q$ are blame labels, and $X$ is a set of variables
of linear type.  As we will see, most instances that yield blame
involve two casts, hence the form with two blame labels, although
blame can arise for a single cast, hence the form with one blame
label.  The set $X$ records all linear variables in scope when
blame is raised, and is used to maintain the invariant that as a
program executes each variable of linear type appears linearly
(only once, or once in each branch of a case).
Discarding linear variables when raising blame would break the
invariant.  Blame corresponds to raising an exception, and
the list of linear variables corresponds to cleaning up after
linear resources when raising an exception (for instance, closing
an open file or channel).  In the typing rules, the notation
$\flv{\Gamma}$ refers the set of free variables of linear
type that appear in $\Gamma$.  We also write $\flv{E}$ and
$\flv{v}$ for the free linear variables appearing in an
evaluation context $E$ or a value $v$.  In a running program,
only free linear variables are channel endpoints, so $\flv{E}$
and $\flv{v}$ can be defined without type information.

The processes of \GGVi{} are those of \GV, plus three additional
forms for references to linear values (as well as blame, described above).  Recall that a value of type
$\DYN$ may contain a linear value, in which case
dynamic checking must ensure that it is used exactly once.
The mechanism for doing so is to allocate a reference to a linear value.
We let $\Storea,\Storeb$ range over references.
A reference is of
type $\DYN$, and contains a value $w$ of ground type
$\GroundT$, where $\GroundT$ is linear (either
$\DYN \to_\lin \DYN$ or $\DYN \times_\lin \DYN$ or
the dynamic session type $\DC$). References are
allocated by the binding form $(\nu \Storea)P$, and the value contained in store $\Storea$ is indicated by a process which is either of the form
\begin{gather*}
  \Storea \mapsto w : \GroundT \Cast{p} \DYN
  \gap\mbox{or}\gap
  \Storea \mapsto \lockedbl{p}
\end{gather*}
where $w$ is a value of type $\GroundT$ and $p$ is a blame label.
Bindings for references initially take the first form, but change to
the second form after the reference has been accessed once;
any subsequent attempt to access the reference a second time will
cause an error.

\subsubsection{Reduction}
\label{sec:GGVi-reduction}

\newcommand{\grmorvalues}{\,\mid}

\begin{figure}[t]

  \[
  \begin{array}{lllll}
    \mbox{Values}&
    v,w &\grmeq \cdots
    \grmorvalues v \colon \GroundT \Cast{p} \DYN     \grmorvalues v \colon \GroundS \Cast{p} \DC
      \grmorvalues v \colon T \to_\Multm U \Cast{p} T' \to_\Multn U'
      \grmorvalues v \colon S \Cast{p} R
      \grmorvalues \Storea
    \\
    && \qquad \mbox{where }\un (\GroundT), S\ne \DC,R\ne\DC
    \\
    \mbox{Eval contexts}&
    E,F &\grmeq \cdots
                    \grmor E \colon T \Cast{p} U 
  \end{array}
  \]
  \spacer

  Expression reduction \boxedrel{e \reduces f}
  \[
  \begin{array}{rcll}
    v \colon \DYN \Cast{p} \DYN &\reduces& v \hspace{6cm}
    \\
    v \colon \DC \Cast{p} \DC &\reduces &v \\
    v \colon \unitk \Cast{p} \unitk &\reduces & v \\
    (v \colon T \to_\Multm U \Cast{p} T' \to_\Multn U')\,w &\reduces&
      (v\, (w \colon T' \Cast{\dual p} T)) \colon U \Cast{p} U' \\
    (v,w)_\Multm \colon T \tcpair_\Multm U \Cast{p} T' \tcpair_\Multn U' &\reduces&
      (v \colon T \Cast{p} T', w \colon U \Cast{p} U')_\Multn \\
    \send{v}{(w \colon \pl{T}S \Cast{p} \pl{T'}S')} &\reduces&
      (\send{(v \colon T' \Cast{\dual p} T)}{w}) \colon S \Cast{p} S' \\
    \recv (w \colon \qu{T}S \Cast{p} \qu{T'}S') &\reduces&
       (\recv w) \colon T \tcpair_\lin S \Cast{p} T' \tcpair_\lin S' \\
    \multicolumn{3}{l}{
    \select{l_k}{(w \colon \oplus\br{l_i\colon R_i}_{i\in I} \Cast{p}
    \oplus\br{l_j\colon S_j}_{j\in J})}} \\ & \reduces &
       (\select{l_k}{w}) \colon R_k \Cast{p} S_k &
       \rtext{if $k \in J$, $J \subseteq I$} \\
    \multicolumn{3}{l}{
    \casefst{(w \colon \with\br{l_i\colon R_i}_{i\in I} \Cast{p}
    \with\br{l_i\colon S_i}_{i\in J})}
    \casesnd{\br{\casebranch{l_j}{x_j}{e_j}}_{j \in J}}
    }
    \\
    \multicolumn{4}{r}{
    \reduces \quad \casefst{w}
    \casesnd{\br{\casebranch{l_i}{x_i}{\letin{x_i}{(x_i \colon R_i \Cast{p} S_i)}{e_i}}}_{i \in I}}
    \qquad \text{if $I \subseteq J$}}
\\
    \close{(v \colon \Endpl \Cast{p} \Endpl)} &\reduces& \close{v} \\
    \wait{(v \colon \Endqu \Cast{p} \Endqu)} &\reduces& \wait{v} \\
    v \colon T \Cast{p} \DYN &\reduces&
      (v \colon T \Cast{p} \GroundT) \colon \GroundT \Cast{p} \DYN&
      \rtext{if $T \neq \DYN$, $T \neq \GroundT$, $T \sim \GroundT$} \\
    v \colon \DYN \Cast{p} T &\reduces&
      (v \colon \DYN \Cast{p} \GroundT) \colon \GroundT \Cast{p} T&
      \rtext{if $T \neq \DYN$, $T \neq \GroundT$, $T \sim \GroundT$} \\
    v \colon S \Cast{p} \DC &\reduces&
      (v \colon S \Cast{p} \GroundS) \colon \GroundS \Cast{p} \DC &
      \rtext{if $S \ne \DC$, $S \ne \GroundS$, $S \sim \GroundS$} \\
    v \colon \DC \Cast{p} S &\reduces&
      (v \colon \DC \Cast{p} \GroundS ) \colon \GroundS \Cast{p} S &
      \rtext{if $S \ne \DC$, $S \ne \GroundS$, $S \sim \GroundS$}
  \end{array}
  \]

 \hcaption{Reduction in \GGVi, expressions.}
 \label{fig:ggv-reductions-expressions}
\end{figure}

\begin{figure}[t]

Structural congruence \hfill \fbox{$P \equiv Q$} 
\[
  \begin{array}{r@{\;}ll}
    ((\nu \Storea)P) \PAR Q &\equiv (\nu \Storea)(P \PAR Q)&
           \text{if $\{c,d\} \cap \fn{Q} = \emptyset$}
  \\[1ex]
    (\nu \Storea)(\nu \Storeb)P &\equiv (\nu \Storeb)(\nu \Storea)P&
           \text{if $a \neq b$}
  \\[1ex]
    (\nu c,d)(\nu \Storea)P &\equiv (\nu \Storea)(\nu c,d)P       &
           \text{if $a \neq c$ and $a \neq d$}
  \end{array}
\]
  \spacer

  Process reduction  \hfill \fbox{$P \reduces Q$}
  \[
  \begin{array}{r@{\;}c@{\;}lr}
    \proc{ \contextE{v \colon \GroundT \Cast{p} \DYN} } &\reduces&
            (\nu \Storea)(\proc{\context{\Storea}} \PAR
    \Storea \mapsto \wrap[p]{v}{\GroundT}) \\
     &&& \rtext{if $\lin(\GroundT)$ and $E \neq \contextF{[\,]\colon \DYN \Cast{q} \GroundU}$}
    \\
    \proc{ \contextE{\Storea \colon \DYN \Cast{q} \GroundU} }
      \PAR \Storea \mapsto \wrap[p]{v}{\GroundT} &\reduces& \multicolumn{2}{@{}l}{
                \proc{ \contextE{(\wrap[p]{v}{\GroundT}) \colon \DYN \Cast{q} \GroundU} }
                \PAR \Storea\mapsto\lockedbl{p} }
    \\
    \proc{ \contextE{\Storea \colon \DYN \Cast{q} \GroundU} }
      \PAR \Storea \mapsto \lockedbl{p} &\reduces & \multicolumn{2}{@{}l}{
                \blameboth{\dual{p}}{q}{(\flv{E})}
                \PAR \Storea \mapsto \lockedbl{p}} \\
    (\nu \Storea) ( \Storea \mapsto \lockedbl{p} ) &\reduces&
      \proc{()} \\
    (\nu \Storea) (\Storea \mapsto \wrap[p]{w}{\GroundT}) &\reduces&
      \blamegc{\dual{p}}{(\flv{w})} &
\\
    \proc{\contextE{(v \colon \GroundT \Cast{p} \DYN) \colon \DYN \Cast{q} \GroundU}} &\reduces&
       \proc{\contextE{v}} &
       \rtext{if $\GroundT \Sub \GroundU$} \\
    \proc{\contextE{(v \colon \GroundT \Cast{p} \DYN) \colon \DYN \Cast{q} \GroundU}} &\reduces&
       \blameboth{\dual{p}}{q}{(\flv{E} \cup \flv{v})} &
       \rtext{if $\GroundT \not\Sub \GroundU$} \\
    \proc{ \contextE{(v \colon \GroundS \Cast{p} \DC)  \colon \DC \Cast{q} \GroundR} } &\reduces&
      \proc{ \contextE{v} } &
      \rtext{if $\GroundS \Sub  \GroundR$} \\
    \proc{ \contextE{(v \colon \GroundS \Cast{p} \DC)  \colon \DC \Cast{q} \GroundR} } &\reduces&
      \blameboth{\dual{p}}{q}{(\flv{E} \cup \flv{v})} &
      \text{if $\GroundS \not\Sub \GroundR$} \\
    (\nu c,d)( \proc{ \contextE{\Chanc \colon \DC \Cast{p} \GroundS} }
               \PAR \proc{ \contextF{\Chand \colon \DC \Cast{q} {\GroundR}} }) &\reduces&
      (\nu c,d)( \proc{ \contextE{\Chanc} } \PAR \proc{ \contextF{\Chand } })
      &\rtext{if $\dual{\GroundS} \Sub {\GroundR}$}
      \\
    (\nu c,d)( \proc{ \contextE{\Chanc \colon \DC \Cast{p} \GroundS} }
               \PAR \proc{ \contextF{\Chand \colon \DC \Cast{q} \GroundR} }) &\reduces&
      \blameboth{p}{q}{(\flv{E} \cup \flv{F} \cup \br{c, d})}&
      \text{if $\dual{\GroundS} \notSub \GroundR$}
  \end{array}
  \]
  \begin{gather*}
        \frac{P \reduces Q}{(\nu \Storea)P \reduces (\nu \Storea)Q}
  \end{gather*}
 \hcaption{Reduction in \GGVi, processes.}
 \label{fig:ggv-reductions-processes}
\end{figure}

Values, evaluation contexts, reductions for expressions, structural
congruence, and reductions for processes for \GGVi{} are summarised in
Figures~\ref{fig:ggv-reductions-expressions}
and~\ref{fig:ggv-reductions-processes}.

The values of \GGVi{} are those of \GV, plus five additional forms.  Values of dynamic type 
have the form either $v \colon \GroundT \Cast{p} \DYN$ as in other blame calculi, if $\GroundT$ is
unrestricted, or $\Storea$, which is a reference to a linear value, if the dynamic type wraps a linear
value. Additionally, there are values of dynamic session type which take the form
$ v \colon \GroundS \Cast{p} \DC $.

Following standard practice for blame
calculus, we take a cast of a value between function types
to be a value, and for similar reasons a cast from a session type
to a session type is a value unless one end of the cast is the dynamic session type:
\begin{gather*}
v \colon T \to_\Multm U \Cast{p} T' \to_\Multn U'
\gap\mbox{or}\gap
v \colon S \Cast{p} R
\end{gather*}
where 
$S, R \neq \DC$.

Additional reductions for expressions appear in
Figure~\ref{fig:ggv-reductions-expressions}.
Typical of blame calculus is the reduction for a
cast between function types, often called the \emph{wrap} rule:
\begin{equation*}
(v \colon T \to_\Multm U \Cast{p} T' \to_\Multn U') \, w
\reduces
(v \, (w \colon T' \Cast{\dual{p}} T)) \colon U \Cast{p} U'
\end{equation*}
The cast on the function decomposes into two casts, one on the domain
and one on the range.  The fact that subtyping (and consistent
subtyping) for function types is contravariant on the domain and
covariant on the range is reflected in the fact that the cast on the
domain is from $T'$ to $T$ and complements the blame label $\dual{p}$,
while the cast on the range is form $U$ to $U'$ and leaves the blame
label $p$ unchanged.  Casts for products follow a similar pattern,
though covariant on all components.

Reductions on session types follow the pattern of the reduction for a cast between send types:
\begin{equation*}
\send{v}{(w \colon \pl{T}S \Cast{p} \pl{T'}S')}
\reduces
(\send{(v \colon T' \Cast{\dual p} T)}{w}) \colon S \Cast{p} S'
\end{equation*}
The cast on the send decomposes into two casts, one on the value sent
and one on the residual session type.  The fact that subtyping (and
consistent subtyping) for send types is contravariant on the value
sent and covariant on the residual session type is reflected in the
fact that the cast on the value sent is from $T'$ to $T$ and
complements the blame label $\dual{p}$, while the cast on the residual
session type is from $S$ to $S'$ and leaves the blame label $p$
unchanged.  The casts for the remaining session types follow a similar
pattern, though covariant on all components.

Also typical of blame calculus, casts to the dynamic type
factor through a ground type,
\begin{gather*}
v \colon T \Cast{p} \DYN
\reduces
(v \colon T \Cast{p} \GroundT) \colon \GroundT \Cast{p} \DYN 
\end{gather*}
when $T \neq \DYN$, $T \neq \GroundT$, and $T \sim \GroundT$.
This factoring is unique because for every type $T$ such that $T \neq \DYN$
there is a unique ground type $\GroundT$ such that
$T \sim \GroundT$.  The additional condition $T \neq \GroundT$
ensures that the factoring is non-trivial and that reduction does not enter a loop.  Casts from
the dynamic type, and casts to and from the dynamic session type
are handled analogously.

Additional structural congruences and reductions for processes appear in
Figure~\ref{fig:ggv-reductions-processes}.
Like bindings for channel endpoints, bindings for references to linear values satisfy scope extrusion and reduction is a congruence with respect to them.

\begin{sloppypar}
The first five reduction rules for processes deal with 
references to linear values, ensuring that a value cast from a linear type to $\DYN$ is accessed exactly once.  As the only values of the dynamic type
are casts from a ground type, expressions of interest take the form
\begin{gather*}
v \colon \GroundT \Cast{p} \DYN
\end{gather*}
where $v$ is a value and $\GroundT$ is a linear ground type.  The
first rule introduces a reference, represented as a separate process
of the form $\Storea \mapsto v \colon \GroundT \Cast{p} \DYN$.
The context restriction $E \neq \contextF{[\,]\colon \DYN \Cast{q} \GroundU}$ ensures that a reference is only
introduced if the value is not immediately accessed; without
the restriction this rule would apply to a process of the form
$\proc{\contextE{(v \colon \GroundT \Cast{p} \DYN) \colon \DYN \Cast{q} \GroundU}}$,
to which the sixth or seventh rule should be applied.
Any attempt to
access the linear reference $\Storea$ must take the form
\begin{gather*}
\contextE{\Storea \colon \DYN \Cast{q} \GroundU}
\end{gather*}
where $E$ is an evaluation context and $\GroundU$ is a ground type
that may or may not be linear.  The second rule implements the first
access to a linear value by copying the value $v$ in place of the
reference $a$, and updating the reference process to
$\Storea \mapsto \lockedbl{p}$, indicating that the linear reference
has been accessed once.  The third rule implements any subsequent
attempt to access a linear value, which allocates blame to the two
casts involved, negative blame $\dual{p}$ from $\lockedbl{p}$, which
was a cast $v \colon \GroundT \Cast{p} \DYN$ before the first access,
and positive blame $q$ for the cast to access $a$, indicating that in
both cases blame is allocated to the side of the cast of type $\DYN$.
The blame term also contains $\flv{E}$, the set of free linear
variables that appear in the context $E$, which as mentioned earlier
is required to maintain the invariant on linear variables; all
occurrences of blame contain corresponding sets of linear variables,
which we will not mention further.  The final two rules indicate what
happens when all processes containing the reference finish execution.
If the linear reference is locked then it was accessed once, and the
reference may be deallocated as usual.  If the reference is not locked
then it was never accessed, and blame should be allocated to the
context of the original cast, which discarded the value rather than
using it linearly.  In practice, these rules would be implemented as
part of garbage collection.
\end{sloppypar}

The remaining six rules come in three pairs.
Typical of blame calculus is the first pair,
often called the \emph{collapse} and \emph{collide} rules:
\[
\begin{array}{rclr}
  \proc{\contextE{(v \colon \GroundT \Cast{p} \DYN) \colon \DYN \Cast{q} \GroundU}} &\reduces&
   \proc{\contextE{v}}&
   \text{if $\GroundT \Sub \GroundU$} \\
\proc{\contextE{(v \colon \GroundT \Cast{p} \DYN) \colon \DYN \Cast{q} \GroundU}} &\reduces&
   \blameboth{\dual{p}}{q}{(\flv{E} \cup \flv{v})}&
   \text{if $\GroundT \not\Sub \GroundU$}
\end{array}
\]
If the source type is a subtype of the target type, the casts collapse to the original value.
Types are preserved by subsumption: since $v$ has type $\GroundT$
and $\GroundT \Sub \GroundU$ then $v$ also has type $\GroundU$.
Conversely, if the source type is not a subtype of the target type, then the casts are
in collision and reduce to blame.  Blame is allocated to both of the casts involved, negative
blame $\dual{p}$ for the inner cast and positive blame $q$ for
the outer cast, indicating that in both cases blame is allocated
to the side of the cast of type $\DYN$.
Our choice to allocate blame to both casts
differs from the usual formulation of blame calculus, which only allocates blame to the
outer cast.
Allocating blame to only the outer cast is convenient if
one wishes to implement blame calculus by erasure to a dynamically
typed language, where injection of a value to the dynamic type is
represented by the value itself, that is, the erasure of $v \colon \GroundT \Cast{p} \DYN$ is
just taken to be the erasure of $v$ itself. However, this asymmetric
implementation is less appropriate in our situation.
For session types, a symmetric formulation is more appropriate, as we
will see shortly when we look at the interaction between casts and communication.

The next pair of rules transpose collapse and collide from types to session types.
The final pair of rules adapt collapse and collide to the case of communication
between two channel endpoints.  Here is the adapted collapse rule.
\begin{align*}
(\nu c,d)( \proc{ \contextE{\Chanc \colon \DC \Cast{p} \GroundS} }
           \PAR \proc{ \contextF{\Chand \colon \DC \Cast{q} {\GroundR}} }) &\reduces
  (\nu c,d)( \proc{ \contextE{\Chanc} } \PAR \proc{ \contextF{\Chand } })
  &\text{if $\dual{\GroundS} \Sub {\GroundR}$}
\end{align*}
The condition on this rule is symmetric, since $\dual{\GroundS} \Sub \GroundR$
if and only if $\dual{\GroundR} \Sub \GroundS$.
On the left-hand side of this rule $c,d$ both have session type $\DC$,
while on the right-hand side of the rule $c,d$ have session types $\GroundS,\dual{\GroundS}$
or $\dual{\GroundR},\GroundR$.
Again, types are preserved by subsumption, since if $c,d$ have session types
$\GroundS,\dual{\GroundS}$ and $\dual{\GroundS} \Sub \GroundR$ then
$c,d$ also have session types $\GroundS,\GroundR$, and similarly if
$c,d$ have session types $\dual{\GroundR},\GroundR$.
Analogously, the last rule adapts collide.

An alternative design might replace the final pair of rules
by a structural congruence that slides a cast from one endpoint
of a channel to the other:
\begin{align*}
(\nu c,d)( E[c \colon S \Cast{p} R] \mid F[d] ) &\equiv
(\nu c,d)( E[c] \mid F[d \colon \dual{R} \Cast{\dual{p}} \dual{S}] ).
\end{align*}
Setting $S$ to $\DC$ and $R$ to $\GroundS$, this congruence
can reduce the third collapse rule 
(on channel endpoints) to the second collapse rule
(on a nested pair of casts on session types).
However, even with this congruence the two collide rules are not
quite equivalent.  Our chosen formulation, though slightly longer,
is more symmetric and easier to implement.

Now we show a few examples of reduction, in which we abbreviate a nested cast
$(e \colon T_1 \Cast{p} T_2) \colon T_2 \Cast{q} T_3$ to
$e \colon T_1 \Cast{p} T_2 \Cast{q} T_3$ and use a sequential
composition $e_1; e_2$ with obvious typing and reduction rules.
First recall the term
$$ \textit{SOC} = \lambda_\un o. \lambda_\un c. \close{(\send{o}{(c
    \colon \DC \Cast{\ell_1} !\DYN. \DC)} \colon \DC \Cast{\ell_2}
  \Endpl)}$$ introduced above.  Given a channel endpoint
$d \colon !\intk.\Endpl$, the term
\[
  \textit{SOC}\; (42 \colon \intk \Cast{\ell_3} \DYN)\; (d \colon !\intk.\Endpl \Cast{\ell_4} \DC)
\]
reduces as follows:
\begin{align*}
  & \textit{SOC}\; (42 \colon \intk \Cast{\ell_3} \DYN)\; (d \colon !\intk.\Endpl \Cast{\ell_4} \DC) \\ \reduces 
  & (\lambda_\un c. \close{((\send{(42 \colon \intk \Cast{\ell_3} \DYN)}{(c \colon \DC \Cast{\ell_1} !\DYN. \DC)}) \colon \DC \Cast{\ell_2} \Endpl)})\; (d \colon !\intk.\Endpl \Cast{\ell_4} \DC) \\ \reduces 
  & (\lambda_\un c. \close{((\send{(42 \colon \intk \Cast{\ell_3} \DYN)}{(c \colon \DC \Cast{\ell_1} !\DYN. \DC)}) \colon \DC \Cast{\ell_2} \Endpl)})\\ & \hspace*{7.9cm} (d \colon !\intk.\Endpl \Cast{\ell_4} !\DYN.\DC \Cast{\ell_4} \DC) \\ \reduces
  & \close{((\send{(42 \colon \intk \Cast{\ell_3} \DYN)}{(d \colon !\intk.\Endpl \Cast{\ell_4} !\DYN.\DC \Cast{\ell_4} \DC \Cast{\ell_1} !\DYN. \DC)}) \colon \DC \Cast{\ell_2} \Endpl)}  \\ \reduces
  & \close{((\send{(42 \colon \intk \Cast{\ell_3} \DYN)}{(d \colon !\intk.\Endpl \Cast{\ell_4} !\DYN.\DC)}) \colon \DC \Cast{\ell_2} \Endpl)}  \\ \reduces
  & \close{((\send{(42 \colon \intk \Cast{\ell_3} \DYN \Cast{\dual \ell_4} \intk)}{d}) \colon \Endpl \Cast{\ell_4} \DC \Cast{\ell_2} \Endpl)} \\ \reduces
  & \close{((\send{42}{d}) \colon \Endpl \Cast{\ell_4} \DC \Cast{\ell_2} \Endpl)}.
\end{align*}
Thus, the process $$(\nu d, e)(\proc{\textit{SOC}\; (42 \colon \intk \Cast{\ell_3} \DYN)\; (d \colon !\intk.\Endpl \Cast{\ell_4} \DC)} \PAR \proc{\letin{x,y}{\recv e}{\wait{y}}})$$
reduces as follows:
\begin{align*}
  & (\nu d, e)(\proc{\textit{SOC}\; (42 \colon \intk \Cast{\ell_3} \DYN)\; (d \colon !\intk.\Endpl \Cast{\ell_4} \DC)} \PAR \proc{\letin{x,y}{\recv e}{\wait{y}}}) \\ \reduces^+
& (\nu d, e)(\proc{\close{((\send{42}{d}) \colon \Endpl \Cast{\ell_4} \DC \Cast{\ell_2} \Endpl)}} \PAR \proc{\letin{x,y}{\recv e}{\wait{y}}}) \\ \reduces
& (\nu d, e)(\proc{\close{(d \colon \Endpl \Cast{\ell_4} \DC \Cast{\ell_2} \Endpl)}} \PAR \proc{\letin{x,y}{(42, e)_\lin}{\wait{y}}}) \\ \reduces^+
& (\nu d, e)(\proc{\close{d}} \PAR \proc{\wait{e}}) \\ \reduces
& (\nu d, e)(\proc{()} \PAR \proc{()}).
\end{align*}
However, if $d$ is given type $!\intk.!\intk.\Endpl$, then
$\textit{SOC}\; (42 \colon \intk \Cast{\ell_3} \DYN)\; (d \colon !\intk.!\intk.\Endpl \Cast{\ell_4} \DC)$
is well typed but reduces to 
\[
  \close{((\send{42}{d}) \colon !\intk.\Endpl \Cast{\ell_4} \DC \Cast{\ell_2} \Endpl)}.
\]
Thus, the process
\[
  (\nu d, e)(\proc{\textit{SOC}\; (42 \colon \intk \Cast{\ell_3} \DYN)\; (d \colon !\intk.!\intk.\Endpl \Cast{\ell_4} \DC)} \PAR \proc{\letin{x,y}{\recv e}{\ldots}})
\]
reduces to
\[
 (\nu d, e)(\proc{\close{(d \colon !\intk.\Endpl \Cast{\ell_4} \DC \Cast{\ell_2} \Endpl)}} \PAR \proc{\letin{x,y}{(42, e)_\lin}{\ldots}})
\]
and then to
\[
 (\nu d, e)(\blameboth{\dual \ell_4}{\ell_2}{\{d\}} \PAR \proc{\letin{x,y}{(42, e)_\lin}{\ldots}}).
\]

We also show an example of dynamic linearity checking.  The function \textit{foo} below takes
an argument of type $\DYN$, cast it to $\Endpl$, and closes it:
\[
  \textit{foo} = \lambda_\un x.\close{(x \colon \DYN \Cast{\ell} \Endpl)}.
\]
Consider an application of \textit{foo} to a channel endpoint $c$ of type $\Endpl$.
It reduces as follows:
\begin{align*}
  & \proc{\textit{foo}\; (c \colon \Endpl \Cast{\ell'} \DYN)} \\ \reduces &
  \proc{\textit{foo}\; (c \colon \Endpl \Cast{\ell'} \DC \Cast{\ell'} \DYN)} \\ \reduces &
  (\nu a)(\proc{\textit{foo}\; a} \PAR a \mapsto c \colon \Endpl \Cast{\ell'} \DC \Cast{\ell'} \DYN)\\ \reduces &
  (\nu a)(\proc{\close{(a \colon \DYN \Cast{\ell} \Endpl)}} \PAR a \mapsto c \colon \Endpl \Cast{\ell'} \DC \Cast{\ell'} \DYN)\\ \reduces &
  (\nu a)(\proc{\close{(c \colon \Endpl \Cast{\ell'} \DC \Cast{\ell'} \DYN \Cast{\ell} \Endpl)}} \PAR a \mapsto \lockedbl{\ell'})\\ \reduces^+ &
  (\nu a)(\proc{\close{c}} \PAR a \mapsto \lockedbl{\ell'}) \\ \reduces &
  \proc{\close{c}}
\end{align*}
If the channel endpoint is passed to a function that uses the argument more than once,
blame will be raised.  Let \textit{bar} be
\(
  \lambda_\un x.\close{(x \colon \DYN \Cast{\ell} \Endpl)}; \close{(x \colon \DYN \Cast{\ell} \Endpl)}
\)
and observe that \(\textit{bar}\; (c \colon \Endpl \Cast{\ell'} \DYN)\)
reduces as follows:
\begin{align*}
  & \proc{\textit{bar}\; (c \colon \Endpl \Cast{\ell'} \DYN)} \\ \reduces &
  \proc{\textit{bar}\; (c \colon \Endpl \Cast{\ell'} \DC \Cast{\ell'} \DYN)} \\ \reduces &
  (\nu a)(\proc{\textit{bar}\; a} \PAR a \mapsto c \colon \Endpl \Cast{\ell'} \DC \Cast{\ell'} \DYN)\\ \reduces &
  (\nu a)(\proc{\close{(a \colon \DYN \Cast{\ell} \Endpl)}; \close{(a \colon \DYN \Cast{\ell} \Endpl)}} \PAR a \mapsto c \colon \Endpl \Cast{\ell'} \DC \Cast{\ell'} \DYN)\\ \reduces &
  (\nu a)(\proc{\close{(c \colon \Endpl \Cast{\ell'} \DC \Cast{\ell'} \DYN \Cast{\ell} \Endpl)}; \close{(a \colon \DYN \Cast{\ell} \Endpl)}} \PAR a \mapsto \lockedbl{\ell'})\\ \reduces^+ &
  (\nu a)(\proc{\close{c}; \close{(a \colon \DYN \Cast{\ell} \Endpl)}} \PAR a \mapsto \lockedbl{\ell'})
\end{align*}
Then, parallel composition with a process wating at the other end $d$ of the endpoint $c$ will raise blame as follows:
\begin{align*}
  & (\nu c, d)(\proc{\textit{bar}\; (c \colon \Endpl \Cast{\ell'} \DYN)} \PAR \proc{\wait{d}}) \\ \reduces^+ &
  (\nu c,d)(\nu a)(\proc{\close{c}; \close{(a \colon \DYN \Cast{\ell} \Endpl)}} \PAR a \mapsto \lockedbl{\ell'} \PAR \proc{\wait{d}}) ) \\ \reduces &
  (\nu c,d)(\nu a)(\proc{\close{(a \colon \DYN \Cast{\ell} \Endpl)}} \PAR a \mapsto \lockedbl{\ell'} \PAR \proc{()} ) \\ \reduces &
  (\nu c,d)(\nu a)(\blameboth{\dual \ell'}{\ell}{\emptyset} \PAR a \mapsto \lockedbl{\ell'}).
\end{align*}

\subsubsection{External language \GGVe}
\label{sec:GGVe}



\newcommand{\SplitEmpty}{\infax{
     \cdot   \ottsym{=}    \cdot   \circ   \cdot  
}}

\newcommand{\SplitUn}{\infrule{
    \Gamma  \ottsym{=}   \Gamma_{{\mathrm{1}}}  \circ  \Gamma_{{\mathrm{2}}}  \andalso
     \un   \ottsym{(}  \ottnt{T}  \ottsym{)}
  }{
    \Gamma  \ottsym{,}  \ottnt{z}  \ottsym{:}  \ottnt{T}  \ottsym{=}   \ottsym{(}  \Gamma_{{\mathrm{1}}}  \ottsym{,}  \ottnt{z}  \ottsym{:}  \ottnt{T}  \ottsym{)}  \circ  \ottsym{(}  \Gamma_{{\mathrm{2}}}  \ottsym{,}  \ottnt{z}  \ottsym{:}  \ottnt{T}  \ottsym{)} 
}}

\newcommand{\SplitLinL}{\infrule{
    \Gamma  \ottsym{=}   \Gamma_{{\mathrm{1}}}  \circ  \Gamma_{{\mathrm{2}}}  \andalso
     \lin   \ottsym{(}  \ottnt{T}  \ottsym{)}
  }{
    \Gamma  \ottsym{,}  \ottnt{z}  \ottsym{:}  \ottnt{T}  \ottsym{=}   \ottsym{(}  \Gamma_{{\mathrm{1}}}  \ottsym{,}  \ottnt{z}  \ottsym{:}  \ottnt{T}  \ottsym{)}  \circ  \Gamma_{{\mathrm{2}}} 
}}

\newcommand{\SplitLinR}{\infrule{
    \Gamma  \ottsym{=}   \Gamma_{{\mathrm{1}}}  \circ  \Gamma_{{\mathrm{2}}}  \andalso
     \lin   \ottsym{(}  \ottnt{T}  \ottsym{)}
  }{
    \Gamma  \ottsym{,}  \ottnt{z}  \ottsym{:}  \ottnt{T}  \ottsym{=}   \Gamma_{{\mathrm{1}}}  \circ  \ottsym{(}  \Gamma_{{\mathrm{2}}}  \ottsym{,}  \ottnt{z}  \ottsym{:}  \ottnt{T}  \ottsym{)} 
}}



\newcommand{\TyName}{\infrule[T-Name]{
     \un   \ottsym{(}  \Gamma  \ottsym{)}
  }{
    \Gamma  \ottsym{,}  \ottnt{z}  \ottsym{:}  \ottnt{T}  \vdash  \ottnt{z}  \ottsym{:}  \ottnt{T}
}}

\newcommand{\TyUnit}{\infrule[T-Unit]{
     \un   \ottsym{(}  \Gamma  \ottsym{)}
  }{
    \Gamma  \vdash  \ottsym{()}  \ottsym{:}   \unit 
}}

\newcommand{\TyAbs}{\infrule[T-Abs]{
    \Gamma  \ottsym{,}  \ottmv{x}  \ottsym{:}  \ottnt{T}  \vdash  \mathbb{e}  \ottsym{:}  \ottnt{U} \andalso
     \super{  \ottnt{m}  }{  \Gamma  } 
  }{
    \Gamma  \vdash   \lambda  _ \ottnt{m}   \ottmv{x} {:} \ottnt{T} .\,  \mathbb{e}   \ottsym{:}   \ottnt{T}   \rightarrow _{ \ottnt{m} }  \ottnt{U} 
}}

\newcommand{\TyApp}{\infrule[T-App]{
    \Gamma  \vdash  \mathbb{e}  \ottsym{:}  \ottnt{T_{{\mathrm{1}}}} \andalso
    \Delta  \vdash  \mathbb{f}  \ottsym{:}  \ottnt{T_{{\mathrm{2}}}} \andalso
     \matching{  \ottnt{T_{{\mathrm{1}}}}  }{   \ottnt{T_{{\mathrm{11}}}}   \rightarrow _{ \ottnt{m} }  \ottnt{T_{{\mathrm{12}}}}   }  \andalso
    \ottnt{T_{{\mathrm{2}}}}  \sim  \ottnt{T_{{\mathrm{11}}}}
  }{
     \Gamma  \circ  \Delta   \vdash  \mathbb{e} \, \mathbb{f}  \ottsym{:}  \ottnt{T_{{\mathrm{12}}}}
}}

\newcommand{\TyPairCons}{\infrule[T-PairCons]{
    \Gamma  \vdash  \mathbb{e}  \ottsym{:}  \ottnt{T} \andalso
    \Delta  \vdash  \mathbb{f}  \ottsym{:}  \ottnt{U} \andalso
     \super{  \ottnt{m}  }{  \ottnt{T}  }  \andalso
     \super{  \ottnt{m}  }{  \ottnt{U}  } 
  }{
     \Gamma  \circ  \Delta   \vdash   (  \mathbb{e} ,\,  \mathbb{f}  )_ \ottnt{m}   \ottsym{:}   \ottnt{T}   \times _{ \ottnt{m} }  \ottnt{U} 
}}

\newcommand{\TyPairDest}{\infrule[T-PairDest]{
    \Gamma  \vdash  \mathbb{e}  \ottsym{:}  \ottnt{T} \andalso
     \matching{  \ottnt{T}  }{   \ottnt{T_{{\mathrm{1}}}}   \times _{ \ottnt{m} }  \ottnt{T_{{\mathrm{2}}}}   }  \andalso
    \Delta  \ottsym{,}  \ottmv{x}  \ottsym{:}  \ottnt{T_{{\mathrm{1}}}}  \ottsym{,}  \ottmv{y}  \ottsym{:}  \ottnt{T_{{\mathrm{2}}}}  \vdash  \mathbb{f}  \ottsym{:}  \ottnt{U}
  }{
     \Gamma  \circ  \Delta   \vdash   \letin{  \ottmv{x} ,  \ottmv{y}  }{ \mathbb{e} }{ \mathbb{f} }   \ottsym{:}  \ottnt{U}
}}

\newcommand{\TyFork}{\infrule[T-Fork]{
    \Gamma  \vdash  \mathbb{e}  \ottsym{:}  \ottnt{T} \andalso
    \ottnt{T}  \sim   \unit 
  }{
    \Gamma  \vdash   \fork{ \mathbb{e} }   \ottsym{:}   \unit 
}}

\newcommand{\TyNew}{\infrule[T-New]{
     \un   \ottsym{(}  \Gamma  \ottsym{)}
  }{
    \Gamma  \vdash   \newk\, \ottnt{S}   \ottsym{:}   \ottnt{S}   \times _{  \lin  }   \dual{  \ottnt{S}  }  
}}

\newcommand{\TySend}{\infrule[T-Send]{
    \Gamma  \vdash  \mathbb{e}_{{\mathrm{1}}}  \ottsym{:}  \ottnt{T_{{\mathrm{1}}}} \andalso
    \Delta  \vdash  \mathbb{e}_{{\mathrm{2}}}  \ottsym{:}  \ottnt{T_{{\mathrm{2}}}} \andalso
     \matching{  \ottnt{T_{{\mathrm{2}}}}  }{   \pl{ \ottnt{T_{{\mathrm{3}}}} }  \ottnt{S}   }  \andalso
    \ottnt{T_{{\mathrm{1}}}}  \sim  \ottnt{T_{{\mathrm{3}}}}
  }{
     \Gamma  \circ  \Delta   \vdash   \send{ \mathbb{e}_{{\mathrm{1}}} }{ \mathbb{e}_{{\mathrm{2}}} }   \ottsym{:}  \ottnt{S}
}}

\newcommand{\TyReceive}{\infrule[T-Receive]{
    \Gamma  \vdash  \mathbb{e}  \ottsym{:}  \ottnt{T_{{\mathrm{1}}}} \andalso
     \matching{  \ottnt{T_{{\mathrm{1}}}}  }{   \qu{ \ottnt{T_{{\mathrm{2}}}} }  \ottnt{S}   } 
  }{
    \Gamma  \vdash   \recv{ \mathbb{e} }   \ottsym{:}   \ottnt{T_{{\mathrm{2}}}}   \times _{  \lin  }  \ottnt{S} 
}}

\newcommand{\TySelect}{\infrule[T-Select]{
    \Gamma  \vdash  \mathbb{e}  \ottsym{:}  \ottnt{T} \andalso
     \matching{  \ottnt{T}  }{   \oplus    \br{  \ottmv{l_{\ottmv{j}}}  :  \ottnt{S_{\ottmv{j}}}  }    } 
  }{
    \Gamma  \vdash   \select{ \ottmv{l_{\ottmv{j}}} }{ \mathbb{e} }   \ottsym{:}  \ottnt{S_{\ottmv{j}}}
}}

\newcommand{\TyCase}{\infrule[T-Case]{
    \Gamma  \vdash  \mathbb{e}  \ottsym{:}  \ottnt{T} \andalso
    \ottnt{T}  \sim   \&    \br{  \ottmv{l_{\ottmv{i}}}  :  \ottnt{S_{\ottmv{i}}}  }_{  \ottmv{i}  \in  \ottnt{I}  }   \andalso
     ( \Delta  \ottsym{,}  \ottmv{x_{\ottmv{i}}}  \ottsym{:}  \ottnt{S_{\ottmv{i}}}  \vdash  \mathbb{e}_{\ottmv{i}}  \ottsym{:}  \ottnt{U} )_{  \ottmv{i}  \in  \ottnt{I}  } 
  }{
     \Gamma  \circ  \Delta   \vdash   \case{ \mathbb{e} }{  \br{  \ottmv{l_{\ottmv{i}}} :  \ottmv{x_{\ottmv{i}}} .\,  \mathbb{e}_{\ottmv{i}}   }_{  \ottmv{i}  \in  \ottnt{I}  }  }   \ottsym{:}  \ottnt{U}
}}

\newcommand{\TyClose}{\infrule[T-Close]{
    \Gamma  \vdash  \mathbb{e}  \ottsym{:}  \ottnt{T} \andalso
    \ottnt{T}  \sim   \Endqu 
  }{
    \Gamma  \vdash   \close{ \mathbb{e} }   \ottsym{:}   \unit 
}}

\newcommand{\TyWait}{\infrule[T-Wait]{
    \Gamma  \vdash  \mathbb{e}  \ottsym{:}  \ottnt{T} \andalso
    \ottnt{T}  \sim   \Endpl 
  }{
    \Gamma  \vdash   \wait{ \mathbb{e} }   \ottsym{:}   \unit 
}}

\newcommand{\TySub}{\infrule[T-Sub]{
    \Gamma  \vdash  \mathbb{e}  \ottsym{:}  \ottnt{T} \andalso
    \ottnt{T}  \Sub  \ottnt{U}
  }{
    \Gamma  \vdash  \mathbb{e}  \ottsym{:}  \ottnt{U}
}}






\newcommand{\TyProg}{\infrule{  
      \vdash  \mathbb{e}  \ottsym{:}  \ottnt{T} \andalso
     \un   \ottsym{(}  \ottnt{T}  \ottsym{)}
  }{
     \mathbb{e} \; \textbf{prog} 
}}


\newcommand{\algTyApp}{\infrule[TA-App]{
    \Gamma  \vdash  \mathbb{e}_{{\mathrm{1}}}  \ottsym{:}  \ottnt{T_{{\mathrm{1}}}} \andalso
    \Delta  \vdash  \mathbb{e}_{{\mathrm{2}}}  \ottsym{:}  \ottnt{T_{{\mathrm{2}}}} \andalso
     \matching{  \ottnt{T_{{\mathrm{1}}}}  }{   \ottnt{T_{{\mathrm{11}}}}   \rightarrow _{ \ottnt{m} }  \ottnt{T_{{\mathrm{12}}}}   }  \andalso
    \ottnt{T_{{\mathrm{2}}}}  \lesssim  \ottnt{T_{{\mathrm{11}}}}
  }{
     \Gamma  \circ  \Delta   \vdash  \mathbb{e}_{{\mathrm{1}}} \, \mathbb{e}_{{\mathrm{2}}}  \ottsym{:}  \ottnt{T_{{\mathrm{12}}}}
}}

\newcommand{\algTySend}{\infrule[TA-Send]{
    \Gamma  \vdash  \mathbb{e}_{{\mathrm{1}}}  \ottsym{:}  \ottnt{T_{{\mathrm{1}}}} \andalso
    \Delta  \vdash  \mathbb{e}_{{\mathrm{2}}}  \ottsym{:}  \ottnt{T_{{\mathrm{2}}}} \andalso
     \matching{  \ottnt{T_{{\mathrm{2}}}}  }{   \pl{ \ottnt{T_{{\mathrm{3}}}} }  \ottnt{S}   }  \andalso
    \ottnt{T_{{\mathrm{1}}}}  \lesssim  \ottnt{T_{{\mathrm{3}}}}
  }{
     \Gamma  \circ  \Delta   \vdash   \send{ \mathbb{e}_{{\mathrm{1}}} }{ \mathbb{e}_{{\mathrm{2}}} }   \ottsym{:}  \ottnt{S}
}}

\newcommand{\algTyCase}{\infrule[TA-Case]{
    \Gamma  \vdash  \mathbb{e}  \ottsym{:}  \ottnt{T} \andalso
     \matching{  \ottnt{T}  }{   \&    \br{  \ottmv{l_{\ottmv{j}}}  :  \ottnt{R_{\ottmv{j}}}  }_{  \ottmv{j}  \in  \ottnt{J}  }    }  \andalso
     ( \Delta  \ottsym{,}  \ottmv{x_{\ottmv{j}}}  \ottsym{:}  \ottnt{R_{\ottmv{j}}}  \vdash  \mathbb{e}_{\ottmv{j}}  \ottsym{:}  \ottnt{U_{\ottmv{j}}} )_{  \ottmv{j}  \in  \ottnt{J}  }  \andalso
    U = \bigjoin \set{U_j}_{j\in J}
  }{
     \Gamma  \circ  \Delta   \vdash   \case{ \mathbb{e} }{  \br{  \ottmv{l_{\ottmv{j}}} :  \ottmv{x_{\ottmv{j}}} .\,  \mathbb{e}_{\ottmv{j}}   }_{  \ottmv{j}  \in  \ottnt{J}  }  }   \ottsym{:}  \ottnt{U}
}}



\newcommand{\CIName}{\infrule[CI-Name]{
     \un   \ottsym{(}  \Gamma  \ottsym{)}
  }{
    \Gamma  \ottsym{,}  \ottnt{z}  \ottsym{:}  \ottnt{T}  \vdash  \ottnt{z}  \ciarrow  \ottnt{z}  \ottsym{:}  \ottnt{T}
}}

\newcommand{\CIUnit}{\infrule[CI-Unit]{
     \un   \ottsym{(}  \Gamma  \ottsym{)}
  }{
    \Gamma  \vdash  \ottsym{()}  \ciarrow  \ottsym{()}  \ottsym{:}   \unit 
}}

\newcommand{\CIAbs}{\infrule[CI-Abs]{
    \Gamma  \ottsym{,}  \ottmv{x}  \ottsym{:}  \ottnt{T}  \vdash  \mathbb{e}  \ciarrow  \ottnt{f}  \ottsym{:}  \ottnt{U} \andalso
     \super{  \ottnt{m}  }{  \Gamma  } 
  }{
    \Gamma  \vdash   \lambda  _ \ottnt{m}   \ottmv{x} {:} \ottnt{T} .\,  \mathbb{e}   \ciarrow   \lambda  _ \ottnt{m}   \ottmv{x} .\,  \ottnt{f}   \ottsym{:}   \ottnt{T}   \rightarrow _{ \ottnt{m} }  \ottnt{U} 
}}

\newcommand{\CIApp}{\infrule[CI-App]{
    \Gamma  \vdash  \mathbb{e}  \ciarrow  e  \ottsym{:}  \ottnt{T_{{\mathrm{1}}}} \andalso
    \Delta  \vdash  \mathbb{f}  \ciarrow  \ottnt{f}  \ottsym{:}  \ottnt{T_{{\mathrm{2}}}} \andalso
     \matching{  \ottnt{T_{{\mathrm{1}}}}  }{   \ottnt{T_{{\mathrm{11}}}}   \rightarrow _{ \ottnt{m} }  \ottnt{T_{{\mathrm{12}}}}   }  \andalso
    \ottnt{T_{{\mathrm{2}}}}  \lesssim  \ottnt{T_{{\mathrm{11}}}}
  }{
     \Gamma  \circ  \Delta   \vdash  \mathbb{e} \, \mathbb{f}  \ciarrow  \ottsym{(}   e  :  \ottnt{T_{{\mathrm{1}}}}  \Cast{ \ottnt{p} }_?   \ottnt{T_{{\mathrm{11}}}}   \rightarrow _{ \ottnt{m} }  \ottnt{T_{{\mathrm{12}}}}    \ottsym{)} \, \ottsym{(}   \ottnt{f}  :  \ottnt{T_{{\mathrm{2}}}}  \Cast{ \ottnt{p} }_?  \ottnt{T_{{\mathrm{11}}}}   \ottsym{)}  \ottsym{:}  \ottnt{T_{{\mathrm{12}}}}
}}

\newcommand{\CIPairCons}{\infrule[CI-PairCons]{
    \Gamma  \vdash  \mathbb{e}  \ciarrow  e  \ottsym{:}  \ottnt{T} \andalso
    \Delta  \vdash  \mathbb{f}  \ciarrow  \ottnt{f}  \ottsym{:}  \ottnt{U} \andalso
     \super{  \ottnt{m}  }{  \ottnt{T}  }  \andalso
     \super{  \ottnt{m}  }{  \ottnt{U}  } 
  }{
     \Gamma  \circ  \Delta   \vdash   (  \mathbb{e} ,\,  \mathbb{f}  )_ \ottnt{m}   \ciarrow   (  e ,\,  \ottnt{f}  )_ \ottnt{m}   \ottsym{:}   \ottnt{T}   \times _{ \ottnt{m} }  \ottnt{U} 
}}

\newcommand{\CIPairDest}{\infrule[CI-PairDest]{
    \Gamma  \vdash  \mathbb{e}  \ciarrow  e  \ottsym{:}  \ottnt{T} \andalso
     \matching{  \ottnt{T}  }{   \ottnt{T_{{\mathrm{1}}}}   \times _{ \ottnt{m} }  \ottnt{T_{{\mathrm{2}}}}   }  \andalso
    \Delta  \ottsym{,}  \ottmv{x}  \ottsym{:}  \ottnt{T_{{\mathrm{1}}}}  \ottsym{,}  \ottmv{y}  \ottsym{:}  \ottnt{T_{{\mathrm{2}}}}  \vdash  \mathbb{f}  \ciarrow  \ottnt{f}  \ottsym{:}  \ottnt{U}
  }{
     \Gamma  \circ  \Delta   \vdash   \letin{  \ottmv{x} ,  \ottmv{y}  }{ \mathbb{e} }{ \mathbb{f} }   \ciarrow   \letin{  \ottmv{x} ,  \ottmv{y}  }{  \ottsym{(}   e  :  \ottnt{T}  \Cast{ \ottnt{p} }_?   \ottnt{T_{{\mathrm{1}}}}   \times _{ \ottnt{m} }  \ottnt{T_{{\mathrm{2}}}}    \ottsym{)}  }{  \ottnt{f}  }   \ottsym{:}  \ottnt{U}
}}

\newcommand{\CIFork}{\infrule[CI-Fork]{
    \Gamma  \vdash  \mathbb{e}  \ciarrow  e  \ottsym{:}  \ottnt{T} \andalso
    \ottnt{T}  \sim   \unit 
  }{
    \Gamma  \vdash   \fork{ \mathbb{e} }   \ciarrow   \fork{ \ottsym{(}   e  :  \ottnt{T}  \Cast{ \ottnt{p} }_?   \unit    \ottsym{)} }   \ottsym{:}   \unit 
}}

\newcommand{\CINew}{\infrule[CI-New]{
     \un   \ottsym{(}  \Gamma  \ottsym{)}
  }{
    \Gamma  \vdash   \newk\, \ottnt{S}   \ciarrow   \newk   \ottsym{:}   \ottnt{S}   \times _{  \lin  }   \dual{  \ottnt{S}  }  
}}

\newcommand{\CISend}{\infrule[CI-Send]{
    \Gamma  \vdash  \mathbb{e}  \ciarrow  e  \ottsym{:}  \ottnt{T_{{\mathrm{1}}}} \andalso
    \Delta  \vdash  \mathbb{f}  \ciarrow  \ottnt{f}  \ottsym{:}  \ottnt{T_{{\mathrm{2}}}} \andalso
     \matching{  \ottnt{T_{{\mathrm{2}}}}  }{   \pl{ \ottnt{T_{{\mathrm{3}}}} }  \ottnt{S}   }  \andalso
    \ottnt{T_{{\mathrm{1}}}}  \lesssim  \ottnt{T_{{\mathrm{3}}}}
  }{
     \Gamma  \circ  \Delta   \vdash   \send{ \mathbb{e} }{ \mathbb{f} }   \ciarrow   \send{  \ottsym{(}   e  :  \ottnt{T_{{\mathrm{1}}}}  \Cast{ \ottnt{p} }_?  \ottnt{T_{{\mathrm{3}}}}   \ottsym{)}  }{  \ottsym{(}   \ottnt{f}  :  \ottnt{T_{{\mathrm{2}}}}  \Cast{ \ottnt{p} }_?   \pl{ \ottnt{T_{{\mathrm{3}}}} }  \ottnt{S}    \ottsym{)}  }   \ottsym{:}  \ottnt{S}
}}

\newcommand{\CIReceive}{\infrule[CI-Receive]{
    \Gamma  \vdash  \mathbb{e}  \ciarrow  e  \ottsym{:}  \ottnt{T_{{\mathrm{1}}}} \andalso
     \matching{  \ottnt{T_{{\mathrm{1}}}}  }{   \qu{ \ottnt{T_{{\mathrm{2}}}} }  \ottnt{S}   } 
  }{
    \Gamma  \vdash   \recv{ \mathbb{e} }   \ciarrow   \recv{ \ottsym{(}   e  :  \ottnt{T_{{\mathrm{1}}}}  \Cast{ \ottnt{p} }_?   \qu{ \ottnt{T_{{\mathrm{2}}}} }  \ottnt{S}    \ottsym{)} }   \ottsym{:}   \ottnt{T_{{\mathrm{2}}}}   \times _{  \lin  }  \ottnt{S} 
}}

\newcommand{\CISelect}{\infrule[CI-Select]{
    \Gamma  \vdash  \mathbb{e}  \ciarrow  e  \ottsym{:}  \ottnt{T} \andalso
     \matching{  \ottnt{T}  }{   \oplus    \br{  \ottmv{l_{\ottmv{j}}}  :  \ottnt{S_{\ottmv{j}}}  }    } 
  }{
    \Gamma  \vdash   \select{ \ottmv{l_{\ottmv{j}}} }{ \mathbb{e} }   \ciarrow   \select{  \ottmv{l_{\ottmv{j}}}  }{  \ottsym{(}   e  :  \ottnt{T}  \Cast{ \ottnt{p} }_?   \oplus    \br{  \ottmv{l_{\ottmv{j}}}  :  \ottnt{S_{\ottmv{j}}}  }     \ottsym{)}  }   \ottsym{:}  \ottnt{S_{\ottmv{j}}}
}}

\newcommand{\CICase}{\infrule[CI-Case]{
    \Gamma  \vdash  \mathbb{e}  \ciarrow  e  \ottsym{:}  \ottnt{T} \andalso
     \matching{  \ottnt{T}  }{   \&    \br{  \ottmv{l_{\ottmv{j}}}  :  \ottnt{R_{\ottmv{j}}}  }_{  \ottmv{j}  \in  \ottnt{J}  }    }  \andalso
     ( \Delta  \ottsym{,}  \ottmv{x_{\ottmv{j}}}  \ottsym{:}  \ottnt{R_{\ottmv{j}}}  \vdash  \mathbb{e}_{\ottmv{j}}  \ciarrow  \ottnt{f_{\ottmv{j}}}  \ottsym{:}  \ottnt{U_{\ottmv{j}}} )_{  \ottmv{j}  \in  \ottnt{J}  }  \andalso
    U = \bigjoin \set{U_j}_{j\in J}
  }{
     \Gamma  \circ  \Delta   \vdash   \case{ \mathbb{e} }{  \br{  \ottmv{l_{\ottmv{j}}} :  \ottmv{x_{\ottmv{j}}} .\,  \mathbb{e}_{\ottmv{j}}   }_{  \ottmv{j}  \in  \ottnt{J}  }  }   \ciarrow   \case{ \ottsym{(}   e  :  \ottnt{T}  \Cast{ \ottnt{p} }_?   \&    \br{  \ottmv{l_{\ottmv{j}}}  :  \ottnt{R_{\ottmv{j}}}  }_{  \ottmv{j}  \in  \ottnt{J}  }     \ottsym{)} }{  \br{  \ottmv{l_{\ottmv{j}}} :  \ottmv{x_{\ottmv{j}}} .\,   \ottnt{f_{\ottmv{j}}}  :  \ottnt{U_{\ottmv{j}}}  \Cast{ \ottnt{p} }_?  \ottnt{U}    }_{  \ottmv{j}  \in  \ottnt{J}  }  }   \ottsym{:}  \ottnt{U}
}}

\newcommand{\CIClose}{\infrule[CI-Close]{
    \Gamma  \vdash  \mathbb{e}  \ciarrow  e  \ottsym{:}  \ottnt{T} \andalso
    \ottnt{T}  \sim   \Endpl 
  }{
    \Gamma  \vdash   \close{ \mathbb{e} }   \ciarrow   \close{ \ottsym{(}   e  :  \ottnt{T}  \Cast{ \ottnt{p} }_?   \Endpl    \ottsym{)} }   \ottsym{:}   \unit 
}}

\newcommand{\CIWait}{\infrule[CI-Wait]{
    \Gamma  \vdash  \mathbb{e}  \ciarrow  e  \ottsym{:}  \ottnt{T} \andalso
    \ottnt{T}  \sim   \Endqu 
  }{
    \Gamma  \vdash   \wait{ \mathbb{e} }   \ciarrow   \wait{ \ottsym{(}   e  :  \ottnt{T}  \Cast{ \ottnt{p} }_?   \Endqu    \ottsym{)} }   \ottsym{:}   \unit 
}}






\begin{figure}[t]
  \begin{align*}
%
    &\text{Expressions} &
    \mathbb{e}, \mathbb{f} \grmeq & \ottnt{z}
    \grmor \ottsym{(}    \ottsym{)}
    \grmor  \lambda  _ \ottnt{m}   \ottmv{x} {:} \ottnt{T} .\,  \mathbb{e} 
    \grmor \mathbb{e} \, \mathbb{f}
    \grmor  (  \mathbb{e} ,\,  \mathbb{f}  )_ \ottnt{m} 
    \grmor  \letin{  \ottmv{x} ,  \ottmv{y}  }{ \mathbb{e} }{ \mathbb{f} } 
    \grmor  \fork{ \mathbb{e} } 
    \\ &&
    \grmor &  \newk\, \ottnt{S} 
    \grmor  \send{ \mathbb{e} }{ \mathbb{f} } 
    \grmor  \recv{ \mathbb{e} } 
    \grmor  \select{ \ottmv{l} }{ \mathbb{e} } 
    \grmor  \case{ \mathbb{e} }{  \br{  \ottmv{l_{\ottmv{i}}} :  \ottmv{x_{\ottmv{i}}} .\,  \mathbb{e}_{\ottmv{i}}   }_{  \ottmv{i}  \in  \ottnt{I}  }  } 
    \\ &&
    \grmor &  \close{ \mathbb{e} } 
    \grmor  \wait{ \mathbb{e} } 
    %
%
  \end{align*}
  \hcaption{ Expressions in \GGVe{}.}
  \label{fig:surface-syntax}
\end{figure}

Having defined the internal language, we introduce the external
language \GGVe, in which source programs are written.  The syntax of
expressions of \GGVe{} is presented in Figure~\ref{fig:surface-syntax}.
For ease of typechecking, variable declarations in functions and channel
endpoint creations are explicitly typed.
There are no processes in \GGVe: a program is a well-typed closed
expression and it is translated to a \GGVi{} expression before it runs.

\begin{figure}[t]
  Matching \hfill \fbox{$T \matchingsym U$}
  \[
\begin{array}{l@{\;}l@{\qquad}l@{\;}l@{\qquad}l@{\;}l}
   \ottnt{T}   \rightarrow _{ \ottnt{m} }  \ottnt{U}  &\matchingsym  \ottnt{T}   \rightarrow _{ \ottnt{m} }  \ottnt{U}  &
  \DYN &\matchingsym   \DYN    \rightarrow _{  \lin  }   \DYN  
  \\
   \ottnt{T}   \times _{ \ottnt{m} }  \ottnt{U}  &\matchingsym  \ottnt{T}   \times _{ \ottnt{m} }  \ottnt{U}  &
  \DYN &\matchingsym   \DYN    \times _{  \lin  }   \DYN  
  \\
   \pl{ \ottnt{T} }  \ottnt{S}  &\matchingsym  \pl{ \ottnt{T} }  \ottnt{S}  &
  \DYN &\matchingsym  \pl{  \DYN  }   \DC   &
  \DC &\matchingsym  \pl{  \DYN  }   \DC  
  \\
   \qu{ \ottnt{T} }  \ottnt{S}  &\matchingsym  \qu{ \ottnt{T} }  \ottnt{S}  &
  \DYN &\matchingsym  \qu{  \DYN  }   \DC   &
  \DC &\matchingsym  \qu{  \DYN  }   \DC   \\
  \\
   \oplus    \br{  \ottmv{l_{\ottmv{i}}}  :  \ottnt{S_{\ottmv{i}}}  }_{  \ottmv{i}  \in  \ottnt{I}  }   &\matchingsym  \oplus    \br{  \ottmv{l_{\ottmv{j}}}  :  \ottnt{S_{\ottmv{j}}}  }   &
  \DYN &\matchingsym  \oplus    \br{  \ottmv{l_{\ottmv{j}}}  :   \DC   }   &
  \DC &\matchingsym  \oplus    \br{  \ottmv{l_{\ottmv{j}}}  :   \DC   }   \\
  & \qquad (j \in I)
  \\
   \&    \br{  \ottmv{l_{\ottmv{i}}}  :  \ottnt{S_{\ottmv{i}}}  }_{  \ottmv{i}  \in  \ottnt{I}  }   &\matchingsym  \&    \br{  \ottmv{l_{\ottmv{i}}}  :  \ottnt{S_{\ottmv{i}}}  }_{  \ottmv{i}  \in  \ottnt{I}  }   \cup  \br{  \ottmv{l_{\ottmv{j}}}  :  \ottnt{S_{\ottmv{j}}}  }_{  \ottmv{j}  \in  \ottnt{J}  }  &
  \DYN &\matchingsym  \&    \br{  \ottmv{l_{\ottmv{j}}}  :   \DC   }_{  \ottmv{j}  \in  \ottnt{J}  }   &
  \DC &\matchingsym  \&    \br{  \ottmv{l_{\ottmv{j}}}  :   \DC   }_{  \ottmv{j}  \in  \ottnt{J}  }   \\
  & \qquad (J \cap I = \emptyset)
\end{array}
\]
  \hcaption{Matching.}
  \label{fig:matching}
\end{figure}

\begin{figure}[t]
  Multiplicity join and meet \hfill\fbox{$\ottnt{m}  \vee  \ottnt{n}$ \quad $\ottnt{m}  \wedge  \ottnt{n}$}
  \begin{gather*}
   \un   \vee   \un  = \un \gap
   \un   \vee   \lin  = \lin \gap
   \lin   \vee   \un  = \lin \gap
   \lin   \vee   \lin  = \lin \\
   \un   \wedge   \un  = \un \gap
   \un   \wedge   \lin  = \un \gap
   \lin   \wedge   \un  = \un \gap
   \lin   \wedge   \lin  = \lin
  \end{gather*}
  Type join \hfill\fbox{$\ottnt{T}  \vee  \ottnt{U}$}
  \begin{align*}
     \unit   \vee   \unit  &= \unit \\
    \ottsym{(}   \ottnt{T}   \rightarrow _{ \ottnt{m} }  \ottnt{U}   \ottsym{)}  \vee  \ottsym{(}   \ottnt{T'}   \rightarrow _{ \ottnt{n} }  \ottnt{U'}   \ottsym{)} &=  \ottsym{(}  \ottnt{T}  \wedge  \ottnt{T'}  \ottsym{)}   \rightarrow _{  \ottnt{m}  \vee  \ottnt{n}  }  \ottsym{(}  \ottnt{U}  \vee  \ottnt{U'}  \ottsym{)}  \\
    \ottsym{(}   \ottnt{T}   \times _{ \ottnt{m} }  \ottnt{U}   \ottsym{)}  \vee  \ottsym{(}   \ottnt{T'}   \times _{ \ottnt{n} }  \ottnt{U'}   \ottsym{)} &=  \ottsym{(}  \ottnt{T}  \vee  \ottnt{T'}  \ottsym{)}   \times _{  \ottnt{m}  \vee  \ottnt{n}  }  \ottsym{(}  \ottnt{U}  \vee  \ottnt{U'}  \ottsym{)}  \\
     \pl{ \ottnt{T} }  \ottnt{S}   \vee   \pl{ \ottnt{T'} }  \ottnt{S'}  &=  \pl{ \ottsym{(}  \ottnt{T}  \wedge  \ottnt{T'}  \ottsym{)} }  \ottsym{(}  \ottnt{S}  \vee  \ottnt{S'}  \ottsym{)}  \\
     \qu{ \ottnt{T} }  \ottnt{S}   \vee   \qu{ \ottnt{T'} }  \ottnt{S'}  &=  \qu{ \ottsym{(}  \ottnt{T}  \vee  \ottnt{T'}  \ottsym{)} }  \ottsym{(}  \ottnt{S}  \vee  \ottnt{S'}  \ottsym{)}  \\
     \oplus    \br{  \ottmv{l_{\ottmv{i}}}  :  \ottnt{S_{\ottmv{i}}}  }_{  \ottmv{i}  \in  \ottnt{I}  }    \vee   \oplus    \br{  \ottmv{l_{\ottmv{j}}}  :  \ottnt{R_{\ottmv{j}}}  }_{  \ottmv{j}  \in  \ottnt{J}  }  
    &= \oplus \br{\ottmv{l_{\ottmv{i}}}: \ottnt{S'_{\ottmv{i}}} \mid \ottnt{S'_{\ottmv{i}}} = \ottnt{S_{\ottmv{i}}}  \vee  \ottnt{R_{\ottmv{i}}} \text{ is defined and }  \ottmv{i}  \in   \ottnt{I}  \cap  \ottnt{J}  } \\
     \&    \br{  \ottmv{l_{\ottmv{i}}}  :  \ottnt{S_{\ottmv{i}}}  }_{  \ottmv{i}  \in  \ottnt{I}  }    \vee   \&    \br{  \ottmv{l_{\ottmv{j}}}  :  \ottnt{R_{\ottmv{j}}}  }_{  \ottmv{j}  \in  \ottnt{J}  }  
    &= \&  \br{  \ottmv{l_{\ottmv{i}}}  :  \ottnt{S_{\ottmv{i}}}  }_{  \ottmv{i}  \in   \ottnt{I}  \setminus  \ottnt{J}   } 
       \cup  \br{  \ottmv{l_{\ottmv{k}}}  :  \ottnt{S_{\ottmv{k}}}  \vee  \ottnt{R_{\ottmv{k}}}  }_{  \ottmv{k}  \in   \ottnt{I}  \cap  \ottnt{J}   } 
       \cup  \br{  \ottmv{l_{\ottmv{j}}}  :  \ottnt{R_{\ottmv{j}}}  }_{  \ottmv{j}  \in   \ottnt{J}  \setminus  \ottnt{I}   }  \\
     \Endpl   \vee   \Endpl  &= \Endpl \\
     \Endqu   \vee   \Endqu  &= \Endqu \\
     \DYN   \vee  \ottnt{T} &= T \\
    \ottnt{T}  \vee   \DYN  &= T \\
     \DC   \vee  \ottnt{S} &= S \\
    \ottnt{S}  \vee   \DC  &= S
  \end{align*}
  Type meet \hfill\fbox{$\ottnt{T}  \wedge  \ottnt{U}$}
  \begin{align*}
     \unit   \wedge   \unit  &= \unit \\
    \ottsym{(}   \ottnt{T}   \rightarrow _{ \ottnt{m} }  \ottnt{U}   \ottsym{)}  \wedge  \ottsym{(}   \ottnt{T'}   \rightarrow _{ \ottnt{n} }  \ottnt{U'}   \ottsym{)} &=  \ottsym{(}  \ottnt{T}  \vee  \ottnt{T'}  \ottsym{)}   \rightarrow _{  \ottnt{m}  \wedge  \ottnt{n}  }  \ottsym{(}  \ottnt{U}  \wedge  \ottnt{U'}  \ottsym{)}  \\
    \ottsym{(}   \ottnt{T}   \times _{ \ottnt{m} }  \ottnt{U}   \ottsym{)}  \wedge  \ottsym{(}   \ottnt{T'}   \times _{ \ottnt{n} }  \ottnt{U'}   \ottsym{)} &=  \ottsym{(}  \ottnt{T}  \wedge  \ottnt{T'}  \ottsym{)}   \times _{  \ottnt{m}  \wedge  \ottnt{n}  }  \ottsym{(}  \ottnt{U}  \wedge  \ottnt{U'}  \ottsym{)}  \\
     \pl{ \ottnt{T} }  \ottnt{S}   \wedge   \pl{ \ottnt{T'} }  \ottnt{S'}  &=  \pl{ \ottsym{(}  \ottnt{T}  \vee  \ottnt{T'}  \ottsym{)} }  \ottsym{(}  \ottnt{S}  \wedge  \ottnt{S'}  \ottsym{)}  \\
     \qu{ \ottnt{T} }  \ottnt{S}   \wedge   \qu{ \ottnt{T'} }  \ottnt{S'}  &=  \qu{ \ottsym{(}  \ottnt{T}  \wedge  \ottnt{T'}  \ottsym{)} }  \ottsym{(}  \ottnt{S}  \wedge  \ottnt{S'}  \ottsym{)}  \\
     \oplus    \br{  \ottmv{l_{\ottmv{i}}}  :  \ottnt{S_{\ottmv{i}}}  }_{  \ottmv{i}  \in  \ottnt{I}  }    \wedge   \oplus    \br{  \ottmv{l_{\ottmv{j}}}  :  \ottnt{R_{\ottmv{j}}}  }_{  \ottmv{j}  \in  \ottnt{J}  }  
    &= \oplus  \br{  \ottmv{l_{\ottmv{i}}}  :  \ottnt{S_{\ottmv{i}}}  }_{  \ottmv{i}  \in   \ottnt{I}  \setminus  \ottnt{J}   } 
       \cup  \br{  \ottmv{l_{\ottmv{k}}}  :  \ottnt{S_{\ottmv{k}}}  \wedge  \ottnt{R_{\ottmv{k}}}  }_{  \ottmv{k}  \in   \ottnt{I}  \cap  \ottnt{J}   } 
       \cup  \br{  \ottmv{l_{\ottmv{j}}}  :  \ottnt{R_{\ottmv{j}}}  }_{  \ottmv{j}  \in   \ottnt{J}  \setminus  \ottnt{I}   }  \\
     \&    \br{  \ottmv{l_{\ottmv{i}}}  :  \ottnt{S_{\ottmv{i}}}  }_{  \ottmv{i}  \in  \ottnt{I}  }    \wedge   \&    \br{  \ottmv{l_{\ottmv{j}}}  :  \ottnt{R_{\ottmv{j}}}  }_{  \ottmv{j}  \in  \ottnt{J}  }  
    &=  \&  \br{\ottmv{l_{\ottmv{i}}}: \ottnt{S'_{\ottmv{i}}} \mid \ottnt{S'_{\ottmv{i}}} = \ottnt{S_{\ottmv{i}}}  \wedge  \ottnt{R_{\ottmv{i}}} \text{ is defined and }  \ottmv{i}  \in   \ottnt{I}  \cap  \ottnt{J}  } \\     \Endpl   \wedge   \Endpl  &= \Endpl \\
     \Endqu   \wedge   \Endqu  &= \Endqu \\
     \DYN   \wedge  \ottnt{T} &= T \\
    \ottnt{T}  \wedge   \DYN  &= T \\
     \DC   \wedge  \ottnt{S} &= S \\
    \ottnt{S}  \wedge   \DC  &= S
  \end{align*}
  \hcaption{Join and meet of types.}
  \label{fig:join-meet}
\end{figure}

\begin{figure}[t]
  Typing expressions\hfill\fbox{$\Gamma  \vdash  \mathbb{e}  \ottsym{:}  \ottnt{T}$}
  \begin{center}
  \TyName \gap
  \TyUnit \gap
  \TyAbs \\
  \algTyApp \\
  \TyPairCons \\
  \TyPairDest \\
  \TyFork \gap
  \TyNew \\
  \algTySend \\
  \TyReceive \gap
  \TySelect \\
  \algTyCase \\
  \TyClose \gap
  \TyWait
  \end{center}
  Typing programs\hfill\fbox{$ \mathbb{e} \; \textbf{prog} $}
  \begin{center}
    \TyProg
  \end{center}
  \hcaption{Expression typing in \GGVe.}
  \label{fig:surface-typing}
\end{figure}



The type system of \GGVe{} adheres to standard practice for gradually
typed languages
\cite{Siek-et-al-2015-criteria,DBLP:conf/popl/CiminiS16}, but requires
a few adaptations to cater for features not covered in previous work.
We first introduce a few auxiliary definitions used in typing rules.
Figure~\ref{fig:matching} defines the matching relation
$T \triangleright U$~\cite{DBLP:conf/popl/CiminiS16}.  Roughly
speaking, $T \triangleright U$ means that $T$ can be used, after
necessary run-time checking, as $U$.  The second and third columns declare
that, if $T$ is $\DYN$ or $\DC$, then it can be used as any type or
session type, respectively.  Otherwise, the matching relation extracts
substructure, i.e., the domain type, the codomain type, the
first-element type, and so on, from $T$.  So, we have neither
$\DYN \triangleright \unit$ nor $\DC \triangleright \Endpl$ or
$\DC \triangleright \Endqu$.

Matching for the internal and external choice types is slightly
involved as it has to cater for subtyping. Matching for internal choice
is invoked in the type rule for an expression $\select{l}{\mathbb{e}}$. Thanks to
subtyping, the type of $\mathbb{e}$ can be any internal choice with a branch
for label $l$. Hence, matching only asks for the presence of this
single label and extracts its residual.

Dually, matching for external choice is invoked in the rule for a
{$\case{\mathbb{e}} \dots$} expression. Again due to subtyping, the $\casek$
expression can check more labels than provided by the type of
$\mathbb{e}$. Hence, matching allows extra branches to be checked with
arbitrary residual types ($l_j : S_j$ in the definition) while
extracting the residual types for all branches provided by $\mathbb{e}$.

Obtaining the result type of a $\casek$ expression from the types of
its branches requires a join operation $T \vee U$ that ensures that
its result is (in a certain sense) a supertype of both $T$ and $U$.
Figure~\ref{fig:join-meet} contains the definitions of join and its
companion meet, which is needed in contravariant positions of the
type.  Both operations are partial: join or meet is undefined for
cases other than those listed in Figure~\ref{fig:join-meet}.

Join of two $\oplus$-types can be obtained by taking the
joins of the types associated with common labels.  Note that labels
where the joins $S_i \vee R_i$ do not exist will be dropped.  On the
other hand, the label set of the join of two $\&$-types is the union
of the two label sets from the input.  For the common labels in
$I \cap J$, the joins $S_k \vee R_k$ must exist.
Join or meet is undefined if the resulting type
is $\oplus\br{}$ or $\&\br{}$ (with the empty set of labels)
as they are ill-formed types.

Without the last four clauses, which deal with $\DYN$ and $\DC$, the
definitions of the join and meet coincide with those for ordinary
subtyping.  This is motivated by the static embedding property of the
Criteria for Gradual Typing~\cite{Siek-et-al-2015-criteria}, which
requires the typability of a \GGVe{} term without $\DYN$ (or $\DC$ in
our case) is the same as the typability under the \GV{} typing rules.
There are a few choices for the join (and meet) of $\DYN$ and other
types and we choose $\DYN \vee T$ to be $T$ for any $T$ because, as we
prove later, our join then corresponds to the least upper bound with
respect to negative subtyping~\cite{Wadler-Findler-2009}, which is
formally defined later, and we can construct a typechecking algorithm
that produces a minimal type with respect to the negative subtyping.
(The least upper bound with respect to positive subtyping is not a
good choice because $\intk \vee \boolk = \DYN$ holds, invalidating the
static embedding property.)

Typing rules are presented in Figure~\ref{fig:surface-typing}.  The
matching relation is used in elimination rules.  To obtain a
syntax-directed inference system, the subsumption is merged into
function application, sending, $\selectk$, and $\casek$. Moreover, subtyping is
replaced with consistent subtyping.
The type of the whole $\casek$ expression is
obtained by joining the types of the branches.  Finally, the judgment
$\mathbb{e}\ \mathbf{prog}$ means that $\mathbb{e}$ is a Gradual GV
program, which is a closed, well-typed \GGVe{} expression of
unrestricted type. Cast insertion discussed below translates a program to a \GGVi{}
expression $e$, which runs as a process $\proc e$.
For example, we can derive
\begin{align*}
&\vdash \lambda_\un o : \DYN. \lambda_\un c : \DC. \close{(\send{o}{c})} : \DYN \to_\un \DC \to_\un \unitk.
\end{align*}

We also develop a typechecking algorithm for \GGVe{} by following
the standard approach \cite{DBLP:journals/toplas/KobayashiPT99,walker:substructural-type-systems}.
We
define an algorithm $ \textsc{\tyexp}( \Gamma , \mathbb{e} ) $, which
takes a type environment $\Gamma$ and an expression $\mathbb{e}$ and
returns a type $T$ of $\mathbb{e}$ and the set $X$ of linear variables
in $\mathbb{e}$.  We avoid nondeterminism involved in environment
splitting by introducing $X$, which is used to check whether
subexpressions do not use the same (linear) variable more than once.
We present the algorithm in full and prove its correctness in
Appendix~\ref{sec:typech-algor-extern}.  In particular, the algorithm
is shown to compute, for given $\Gamma$ and $\mathbb e$, a minimal
type with respect to negative subtyping (if a typing
exists).

\subsubsection{Cast-inserting translation}
\label{sec:cast-insertion}

A well-typed \GGVe{} expression is translated to a \GGVi{} expression
by dropping type annotations and inserting casts.
Figure~\ref{fig:CI-expr} presents cast insertion.  The judgment
$\Gamma \vdash \mathbb{e} \ciarrow f : T$ means that ``under type
environment $\Gamma$, a \GGVe{} expression $\mathbb{e}$ is translated
to a \GGVi{} expression $f$ at type $T$.''  Most rules are
straightforward: casts are inserted where the matching or consistent
subtyping is used.  In each rule, blame label $p$ is supposed to be
fresh and positive. The notation $f : T \Cast{p}_? U$ is used to avoid
inserting unnecessary casts.
\[
  f : T \Cast{p}_? U  = \begin{cases}
  f                 & \text{if $T \Sub U $} \\
  f : T \Cast{p} U & \text{otherwise}
\end{cases}
\]
Thanks to this optimisation, we can show that a program that does not use \(\DYN\)
or \(\DC\) is translated to a cast-free \GGVi{} expression,
whose behaviour obviously coincides with GV.

\begin{figure}[tbp]
  Cast insertion \hfill \fbox{$\Gamma  \vdash  \mathbb{e}  \ciarrow  \ottnt{f}  \ottsym{:}  \ottnt{T}$}

  \CIName \gap
  \CIUnit \gap
  \CIAbs \nextline
  \CIApp \nextline
  \CIPairCons \nextline
  \CIPairDest \nextline
  \CIFork \gap
  \CINew \nextline
  \CISend \gap
  \CIReceive \nextline
  \CISelect \nextline
  \CICase \nextline
  \CIClose \gap
  \CIWait \nextline

  \hcaption{Cast insertion.}
  \label{fig:CI-expr}
\end{figure}




For example, we can derive
\begin{align*}
&\vdash \lambda_\un o : \DYN. \lambda_\un c : \DC. \close{(\send{o}{c})}
 \\ & \qquad \ciarrow
 \lambda_\un o. \lambda_\un c. \close{((\send{o}{(c \colon \DC \Cast{p} !\DYN.\DC)}) \colon \DC \Cast{q} \Endpl)}
: \DYN \to_\un \DC \to_\un \unitk
\end{align*}
for some $p$ and $q$.

\subsubsection{Embedding}
\label{sec:embedding}

\begin{figure}[t]
  \begin{align*}
    \Embed{x} &= x
    \\
    \Embed{()} &= () \colon \unitk \Cast{} \DYN
    \\
    \Embed{\lambda x.e} &=
      (\lambda_\un x. \Embed{e})
        \colon \DYN \to_\un \DYN \Cast{} \DYN
    \\
    \Embed{(e,f)} &=
      (\Embed{e}, \Embed{f})_\un
        \colon \DYN \tcpair_\un \DYN \Cast{} \DYN
    \\
    \Embed{e\,f} &=
      (\Embed{e} \colon \DYN \Cast{} \DYN \to_\lin \DYN) \, \Embed{f}
    \\
    \Embed{\letin{x,y}{e}{f}} &=
      \letin{x,y}{(\Embed{e} \colon \DYN \Cast{} \DYN \times_\lin \DYN)}{\Embed{f}}
    \\
    \Embed{\fork e} &=
      (\fork {(\Embed e \colon \DYN  \Cast{} \unitk)}) \colon \unitk \Cast{} \DYN
    \\
    \Embed{\newk} &=
      \newk \colon \DC \times_\lin \DC \Cast{} \DYN
    \\
    \Embed{\send ef} &=
      (\send{\Embed{e}}{(\Embed{f} \colon \DYN \Cast{} {!{\DYN}.\DC})})
        \colon \DC \Cast{} \DYN
    \\
    \Embed{\recv e} &=
      (\recv{(\Embed{e} \colon \DYN \Cast{} {?{\DYN}.\DC})})
        \colon \DYN \times_\lin \DC \Cast{} \DYN
    \\
    \Embed{\select l e} &=
      (\select l {(\Embed{e} \colon \DYN \Cast{} \oplus\{l \colon \DC\})})
        \colon \DC \Cast{} \DYN
    \\
    \Embed{\case{e}{\{ \casebranch{l_i}{x_i}{e_i} \}_{i\in I}}} &=
      \case{(\Embed{e} \colon \DYN \Cast{} \with\{l_i\colon \DC\}_{i\in I} )}
           {\{\casebranch{l_i}{y_i}{\letin{x_i}{(y_i : \DC \Cast{} \DYN)}{\Embed{e_i}}} \}_{i\in I}}
    \\
    \Embed{\close e} &=
      (\close{(\Embed{e} \colon \DYN \Cast{} \End_!)}) \colon \unitk \Cast{} \DYN
    \\
    \Embed{\wait e} &=
      (\wait{(\Embed{e} \colon \DYN \Cast{} \End_?)}) \colon \unitk \Cast{} \DYN
  \end{align*}
  \caption{Embedding of the unityped calculus.}
  \label{fig:embedding-unityped}
\end{figure}

One desideratum for a gradual typing system---if it is equipped with dynamic typing---is that it is possible to
embed an untyped (or rather, unityped) language within it
\cite{Siek-et-al-2015-criteria}.  An embedding of an untyped variant
of \GV\ into \GGVi{} is given in Figure~\ref{fig:embedding-unityped}.
Blame labels are omitted; each cast should receive a unique blame
label.  The untyped variant has the same syntax as the expressions of
\GV, but every expression has type $\DYN$ and multiplicities are implicitly assumed to be \(\un\).  The embedding extends that
of \citep{Wadler-Findler-2009} for the untyped lambda calculus into
the blame calculus.

\section{Results}
\label{sec:results}

We study some of the basic properties~\cite{Siek-et-al-2015-criteria}
of \GGV{} in this section.  They include (1) type safety of \GGVi{} and
(2) blame safety of \GGVi, (3) conservative typing of \GGVe{} over
\GV{}, and (4) the gradual guarantee for \GGVe{}.  Since \GGVi{} do
not guarantee deadlock freedom, type safety is stated as the
combination of preservation and absence of run-time errors, rather
than progress.  We show that (1)--(3) hold with their proof sketches.
For (4), we show that \GGVe{} does \emph{not} satisfy the gradual
guarantee.

\subsection{Preservation and Absence of Run-Time Errors for \GGVi}

We show preservation and absence of run-time errors for \GGVi.  The
basic structure of the proof follows Gay and Vasconcelos
\shortcite{Gay-Vasconcelos-2010}.  In proofs, we often use inversion
properties for the typing relation, such as ``if
$\Gamma \vdash x \colon T$, then $\Gamma = \Gamma', x \colon S$ for
some $S$ and $\Gamma'$ such that $S \Sub T$ and $\un(\Gamma')$,''
 without even stating.  They are easy (but tedious) to state and prove because the only rule that makes typing rules not syntax-directed is \rn{T-Sub} (see, for example, \cite{Pierce2002-tpl} for details).  Similarly, we omit inversion for subtyping, which is syntax-directed.

\begin{lemma}[Weakening]
  \label{lem:weakening}
  If $\Gamma \vdash e\colon T$ and $\un(U)$, then
  $\Gamma,x\colon U \vdash e\colon T$.
\end{lemma}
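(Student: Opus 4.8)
The plan is to proceed by induction on the derivation of $\Gamma \vdash e : T$. Throughout, we invoke Barendregt's variable convention to assume that $x$ is fresh, i.e.\ it neither occurs in $\Gamma$ nor is bound in $e$, so that $\Gamma, x : U$ is a well-formed environment and no clash arises when $x$ is permuted past binders; the order of bindings in an environment is immaterial, so such permutations are harmless. We also use freely that $\un(U)$ implies $\super{m}{U}$ for every multiplicity $m$, which discharges the multiplicity side conditions appearing in several rules.

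For the leaf rules \textsc{T-Name}, \textsc{T-Unit}, and \textsc{T-New}, the premise requires $\un(\Gamma)$; since $\un(U)$ as well, we have $\un(\Gamma, x : U)$, and the same rule applies with the extended environment. For the abstraction rule \textsc{T-Abs} and the other binder rules \textsc{T-PairDest} and \textsc{T-Case}, we permute $x : U$ past the newly bound variable(s), apply the induction hypothesis to the premise(s), and re-apply the rule; the side condition $\super{m}{\Gamma}$ of \textsc{T-Abs} becomes $\super{m}{\Gamma, x : U}$, which holds because $\super{m}{U}$ holds for all $m$. For \textsc{T-Sub} and for the cast rule of \GGVi{}, we simply apply the induction hypothesis to the subderivation and re-apply the rule, since the subtyping (resp.\ consistent subtyping) premise and the conclusion type are untouched.

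The cases deserving attention are those whose conclusion uses environment splitting $\Gamma = \Gamma_1 \circ \Gamma_2$: application, pair construction, pair destruction, send, and case. Here we use the splitting rule for unrestricted types to derive $\Gamma, x : U = (\Gamma_1, x : U) \circ (\Gamma_2, x : U)$ from the fact $\un(U)$, apply the induction hypothesis to each of the two (or, for \textsc{T-Case}, each of the $I$-many) subderivations, and conclude by the same rule. There is no genuine obstacle: the lemma is a routine structural induction, and the only subtlety is distributing the new unrestricted binding to \emph{both} halves of every environment split rather than to one side — which is exactly what the unrestricted-splitting rule allows — together with the implicit reordering of environments past binders justified by the variable convention.
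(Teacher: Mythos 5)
Your proof is correct and follows exactly the paper's approach: the paper proves weakening by the same straightforward induction on the derivation of $\Gamma \vdash e : T$ (it merely omits the case details you spell out, such as pushing the unrestricted binding into both halves of every environment split and handling the $\un(\Gamma)$ leaf rules).
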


\begin{proof}
  By induction on $\Gamma \vdash e\colon T$.
\end{proof}

\begin{lemma}[Strengthening]
  \label{lem:strengthening}
  If   $\Gamma,x\colon U \vdash e\colon T$ and
  $x$ does not occur free in $e$, then
  $\Gamma \vdash e\colon T$.
\end{lemma}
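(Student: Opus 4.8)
The plan is to prove the statement by induction on the derivation of $\Gamma, x\colon U \vdash e\colon T$, with a case analysis on the last typing rule used. Throughout I would rely on the inversion principles for typing (and for environment splitting) alluded to above, on the admissibility of exchange for type environments --- so that the entry $x\colon U$ may be treated as the rightmost one whenever convenient --- and on Barendregt's convention, which lets me assume $x$ is distinct from any variable bound by the rule being analysed. Note that the hypothesis does not constrain the multiplicity of $U$: the statement is in fact vacuous when $\lin(U)$, since a linear binding cannot be threaded down to any axiom while $x$ is not free in $e$, but the induction does not need to exploit this.

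For the axioms --- the rule for a name, for $()$, and for $\newk$ --- the argument is immediate. In the name case $e = z$ we have $x \neq z$ because $x$ is not free in $z$, so inversion decomposes the environment as $\Gamma'\!,\, x\colon U$ with $\un(\Gamma'\!,\, x\colon U)$; hence $\un(\Gamma)$ (and incidentally $\un(U)$), and $\Gamma \vdash z\colon T$ follows by re-applying the name axiom, composed with subsumption if the inverted type was a proper subtype of $T$. For $()$ and $\newk$ the whole environment is unrestricted, so deleting $x\colon U$ preserves the side condition $\un(-)$ and the same rule reapplies.

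For the rules with a single typing premise whose subject is a subterm of $e$ (or $e$ itself) that again does not contain $x$ free --- subsumption, the \GGVi{} cast rule, abstraction, fork, receive, select, close, and wait --- I would apply the induction hypothesis to that premise and reassemble the derivation. The only things to check are that a side condition of the shape $\super{m}{\Gamma, x\colon U}$, as in the abstraction rule, entails $\super{m}{\Gamma}$ (immediate from the definition of $\super{m}{-}$), and that the variable bound by a $\lambda$ is, by the variable convention, different from $x$, so $x$ stays non-free in the premise's subject.

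The rules that need real care, and the step I expect to be the main though still routine obstacle, are those involving environment splitting: application, pair construction, pair elimination, sending, and $\casek$. Writing the split environment of the premises as $\Gamma_1 \circ \Gamma_2 = \Gamma, x\colon U$, I would case-split on the multiplicity of $U$. If $\un(U)$, inversion of the splitting relation forces $\Gamma_i = \Gamma_i'\!,\, x\colon U$ for both $i$, with $\Gamma = \Gamma_1' \circ \Gamma_2'$; since $x$ is free in neither immediate subterm, the induction hypothesis on each premise yields derivations over $\Gamma_1'$ and $\Gamma_2'$, which recombine via the same rule to give $\Gamma \vdash e\colon T$. If $\lin(U)$, inversion forces $x\colon U$ into exactly one of $\Gamma_1, \Gamma_2$; in that (symmetric) sub-case the induction hypothesis removes it from the corresponding premise, while the other premise already sits over a subenvironment of $\Gamma$, and the two recombine as required. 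The pattern-binding variants --- pair elimination and $\casek$ --- go through identically, using the variable convention to keep the pattern variables apart from $x$.
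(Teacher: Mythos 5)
Your proposal is correct and follows exactly the route the paper takes: the paper's proof is simply ``by induction on $\Gamma, x\colon U \vdash e\colon T$,'' and your case analysis (axioms, single-premise rules, and the environment-splitting rules with the $\un(U)$/$\lin(U)$ split) is the standard elaboration of that induction. Nothing in your argument deviates from or adds to what the paper intends, so no further comparison is needed.
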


\begin{proof}
  By induction on $\Gamma, x\colon U \vdash e\colon T$.
\end{proof}

\begin{lemma}[Preservation for $\equiv$]
  \label{lem:preservation-equiv}
  If $P \equiv Q$, then $\Gamma\vdash P$ if and only if $\Gamma\vdash Q$.
\end{lemma}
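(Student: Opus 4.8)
The plan is to prove Lemma~\ref{lem:preservation-equiv} by induction on the derivation of $P \equiv Q$, handling each axiom of structural congruence and then the closure rules (reflexivity, symmetry, transitivity, and congruence under the process constructors $\PAR$, $(\nu c, d)(-)$, and $(\nu \Storea)(-)$). Since structural congruence is symmetric, it suffices for each axiom to prove one direction ($\Gamma \vdash P$ implies $\Gamma \vdash Q$) and appeal to symmetry for the converse; the transitivity and congruence cases follow immediately from the induction hypotheses.

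For the individual axioms, the key tool is inversion on the process typing rules (which, as the paper notes, are essentially syntax-directed for processes), together with the algebraic properties of environment splitting $\Gamma = \Gamma_1 \circ \Gamma_2$. For commutativity $P \PAR Q \equiv Q \PAR P$, I would invert the typing of $P \PAR Q$ to get $\Gamma = \Gamma_1 \circ \Gamma_2$ with $\Gamma_1 \vdash P$ and $\Gamma_2 \vdash Q$, then observe that $\circ$ is commutative, so $\Gamma = \Gamma_2 \circ \Gamma_1$, which gives $\Gamma \vdash Q \PAR P$. Associativity is analogous but needs a small lemma that $\circ$ is associative (which is standard and can be proved separately or folded in). For $P \PAR \proc{()} \equiv P$: from $\Gamma \vdash P \PAR \proc{()}$ we get $\Gamma = \Gamma_1 \circ \Gamma_2$ with $\Gamma_1 \vdash P$ and $\Gamma_2 \vdash \proc{()}$; inverting the latter forces $\Gamma_2 \vdash () : \unitk$ with $\un(\unitk)$, hence $\un(\Gamma_2)$, so $\Gamma_1 \circ \Gamma_2 = \Gamma_1$ when all of $\Gamma_2$ is unrestricted—here I would use the fact that splitting off an all-unrestricted context leaves the context unchanged—and conclude $\Gamma \vdash P$. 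The converse uses Weakening (Lemma~\ref{lem:weakening}) is not even needed; one just splits $\Gamma$ as $\Gamma \circ \Gamma_{\un}$ where $\Gamma_{\un}$ is the unrestricted fragment, though care is needed with linear entries. For the $\nu$-swap and scope-extrusion axioms (for both channel binders and reference binders), inversion peels off the binder introducing $c \colon S, d \colon \dual S$ (resp.\ $\Storea \colon \DYN$), and one reorders the extended environment; the Barendregt convention guarantees the freshly bound names do not clash, and the side conditions ($\{c,d\} \cap \fn{Q} = \emptyset$, $a \neq b$, etc.) ensure the binders can legitimately be moved past $Q$—here one may invoke Strengthening (Lemma~\ref{lem:strengthening}) to justify that $Q$'s typing does not mention the bound names.

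The main obstacle I expect is the bookkeeping around environment splitting in the scope-extrusion cases, specifically $((\nu c,d)P) \PAR Q \equiv (\nu c,d)(P \PAR Q)$ and its reference-binder analogue. Going left to right: from $\Gamma \vdash ((\nu c,d)P) \PAR Q$ we get $\Gamma = \Gamma_1 \circ \Gamma_2$ with $\Gamma_1 \vdash (\nu c,d)P$ and $\Gamma_2 \vdash Q$; inverting gives $\Gamma_1, c \colon S, d \colon \dual S \vdash P$. I then need $\Gamma, c \colon S, d \colon \dual S = (\Gamma_1, c \colon S, d \colon \dual S) \circ \Gamma_2$, which holds because $c, d$ are linear, so the split rule that sends a linear entry entirely to the left applies; but I must also know $c, d \notin \dom(\Gamma_2)$, which follows from $\{c,d\} \cap \fn{Q} = \emptyset$ and the fact that a well-typed process only uses names in its environment (a minor auxiliary fact worth stating). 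The right-to-left direction is symmetric. The reference-binder case is the same with $\Storea \colon \DYN$, except $\DYN$ is unrestricted, so the relevant split rule duplicates $\Storea$; one must then use the side condition to discard the spurious copy via Strengthening. None of this is deep, but it is the place where a careless argument would go wrong, so I would state the auxiliary facts about $\circ$ (commutativity, associativity, behaviour on all-unrestricted contexts) and about free names of well-typed processes explicitly before grinding through the cases.
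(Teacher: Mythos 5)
Your proposal is correct and matches the paper's own (much terser) proof: the paper also proceeds by induction on $P \equiv Q$, invoking Weakening (Lemma~\ref{lem:weakening}), Strengthening (Lemma~\ref{lem:strengthening}), and standard properties of environment splitting for the scope-extrusion rules, which is exactly the toolkit you elaborate case by case. Your extra bookkeeping (commutativity/associativity of $\circ$, the all-unrestricted split for $P \PAR \proc{()}$, and the Barendregt-convention argument that the bound names avoid $\Gamma_2$) is just the detail the paper leaves implicit.
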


\begin{proof}
  By induction on $P \equiv Q$.  Use Lemmas~\ref{lem:weakening},
  \ref{lem:strengthening}, and 
  basic properties of context
  splitting~\cite{vasconcelos:fundamental-sessions,walker:substructural-type-systems}
  for the scope extrusion rules.
\end{proof}

\begin{lemma}
  \label{lem:circ-un}
  If $\Gamma = \Gamma_1 \circ \Gamma_2$ and  $\un(\Gamma_1)$, then
  $\Gamma = \Gamma_{2}$.
\end{lemma}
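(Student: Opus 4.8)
The plan is to proceed by induction on the derivation of $\Gamma = \Gamma_1 \circ \Gamma_2$, casing on the last rule of environment splitting used. The base case is the rule $\cdot = \cdot \circ \cdot$, where $\Gamma = \cdot = \Gamma_2$ and there is nothing to prove.

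For the inductive step, three rules can conclude the derivation. In the first, an unrestricted entry is appended to both sides: $\Gamma, z \colon T = (\Gamma_1, z \colon T) \circ (\Gamma_2, z \colon T)$ from $\Gamma = \Gamma_1 \circ \Gamma_2$ with $\un(T)$. From $\un(\Gamma_1, z \colon T)$ we get $\un(\Gamma_1)$, so the induction hypothesis gives $\Gamma = \Gamma_2$ and hence $\Gamma, z \colon T = \Gamma_2, z \colon T$. In the third rule, a linear entry is routed to the right component only: $\Gamma, z \colon T = \Gamma_1 \circ (\Gamma_2, z \colon T)$ from $\Gamma = \Gamma_1 \circ \Gamma_2$ with $\lin(T)$. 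Here the left component is literally $\Gamma_1$, so the assumption $\un(\Gamma_1)$ feeds directly into the induction hypothesis, yielding $\Gamma = \Gamma_2$ and therefore $\Gamma, z \colon T = \Gamma_2, z \colon T$. The remaining rule routes a linear entry $z \colon T$ to the \emph{left} component, so its left component is $\Gamma_1, z \colon T$ with $\lin(T)$; this case is vacuous, because $\un(\Gamma_1, z \colon T)$ would require $\un(T)$, and a type has at most one multiplicity, so $\un(T)$ and $\lin(T)$ cannot both hold.

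I expect no genuine obstacle here. The only point that needs a word of justification is the exclusivity of $\un$ and $\lin$ on a fixed type, which rules out the third case; this is immediate from the definition of the multiplicity predicate, since each type (constructor) determines its multiplicity uniquely.
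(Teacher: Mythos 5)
Your proof is correct and matches the paper's argument, which is simply stated as ``by induction on $\Gamma = \Gamma_1 \circ \Gamma_2$''; your case analysis (including the observation that the rule placing a linear entry in the left component is vacuous under $\un(\Gamma_1)$, since each type has exactly one multiplicity) is exactly how that induction goes.
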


\begin{proof}
By induction on   $\Gamma = \Gamma_1 \circ \Gamma_2$.
\end{proof}

\begin{lemma}\label{lem:unT-unGamma}
  If $\Gamma \vdash v : T$ and $\un(T)$, then $\un(\Gamma)$.
\end{lemma}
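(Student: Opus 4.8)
The plan is to prove the statement by induction on the derivation of $\Gamma \vdash v : T$, with a case analysis on the last rule used. Since $v$ is a value, I would first note that the only rules that can conclude such a judgement are the introduction rules \textsc{T-Unit}, \textsc{T-Name}, \textsc{T-Abs}, \textsc{T-PairCons}, the cast rule of \GGVi, and \textsc{T-Sub}: every elimination rule and every rule for $\fork{\cdot}$, $\newk$, $\send{\cdot}{\cdot}$, $\recv{\cdot}$, $\close{\cdot}$, \dots{} produces an expression that is not a value. The two base cases are immediate: \textsc{T-Unit} carries $\un(\Gamma)$ as a premise, and \textsc{T-Name} types $v = z$ under an environment $\Gamma', z : T$ with premise $\un(\Gamma')$, so since the concluded type is exactly $T$ and $\un(T)$ holds by assumption, the whole environment is unrestricted.

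For \textsc{T-Abs} we have $v = \lambda_\Multm x.e$ and $T = T_1 \to_\Multm T_2$ with premise $\super{\Multm}{\Gamma}$; from $\un(T_1 \to_\Multm T_2)$ we get $\Multm = \un$, and $\super{\un}{\Gamma}$ is exactly $\un(\Gamma)$. For \textsc{T-PairCons} we have $v = (v_1, v_2)_\Multm$, $\Gamma = \Gamma_1 \circ \Gamma_2$, $\Gamma_i \vdash v_i : T_i$, $\super{\Multm}{T_i}$, and $T = T_1 \times_\Multm T_2$; again $\un(T)$ forces $\Multm = \un$, hence $\un(T_1)$ and $\un(T_2)$, so the induction hypothesis applied to the value subterms $v_1,v_2$ gives $\un(\Gamma_1)$ and $\un(\Gamma_2)$, and Lemma~\ref{lem:circ-un} then yields $\Gamma = \Gamma_2$, which is unrestricted. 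The cast case is the one place where the syntactic restriction on values is essential: a value of the form $v' : T' \Cast{p} U'$ must be $v' : \GroundT \Cast{p} \DYN$ with $\un(\GroundT)$, or $v' : \GroundS \Cast{p} \DC$, or $v' : (T_1 \to_\Multm U_1) \Cast{p} (T_1' \to_\Multn U_1')$, or $v' : S \Cast{p} R$ with $S,R \ne \DC$. In the first case the premise is $\Gamma \vdash v' : \GroundT$ with $\un(\GroundT)$, so the induction hypothesis applies directly; in the third, $\un(T_1' \to_\Multn U_1')$ forces $\Multn = \un$, and the premise $(T_1 \to_\Multm U_1) \lesssim (T_1' \to_\Multn U_1')$ forces $\Multm \Sub \Multn$, hence $\Multm = \un$ and $\un(T_1 \to_\Multm U_1)$, so the induction hypothesis applies again; the second and fourth cases are vacuous, since $\DC$ and any session type $R$ are linear and contradict $\un(T)$.

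For the \textsc{T-Sub} case, $\Gamma \vdash v : T$ comes from $\Gamma \vdash v : S$ with $S \Sub T$, and I would discharge it using a small auxiliary observation, obtained by the (syntax-directed) inversion of subtyping: if $S \Sub T$ and $\un(T)$, then $\un(S)$. Indeed the unrestricted types are exactly $\unitk$, $\DYN$, $T_1 \to_\un T_2$, and $T_1 \times_\un T_2$; the only subtype of $\unitk$ is $\unitk$ and the only subtype of $\DYN$ is $\DYN$, while any subtype of $T_1' \to_\un T_2'$ (resp.\ $T_1' \times_\un T_2'$) is a function (resp.\ product) type whose multiplicity $\Multm$ satisfies $\Multm \Sub \un$, i.e.\ $\Multm = \un$. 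Hence $\un(S)$, and the induction hypothesis gives $\un(\Gamma)$. There is no deep obstacle here; the three points that need care are exactly: discharging \textsc{T-Sub} via this ``subtyping preserves unrestrictedness'' observation, appealing to the value grammar to exclude a cast into $\DYN$ whose payload has a linear type, and using Lemma~\ref{lem:circ-un} in the pair case.
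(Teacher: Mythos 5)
Your proposal is correct and follows essentially the same route as the paper, which simply performs a case analysis on the last rule used to derive $\Gamma \vdash v : T$ (using the value grammar to restrict the applicable rules); your write-up just spells out the cases, including the subsumption case via the observation that subtyping preserves unrestrictedness and the pair case via Lemma~\ref{lem:circ-un}. No gaps.
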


\begin{proof}
  By case analysis on the last rule used to derive $\Gamma \vdash v : T$.
\end{proof}

\begin{lemma}[Substitution]
\label{lem:substitution}
  If $\Gamma_1\vdash v: U$ and $\Gamma_2, x: U \vdash e : T$ and $\Gamma =
  \Gamma_1 \circ \Gamma_2$, then $\Gamma \vdash e \subs vx : T$.
\end{lemma}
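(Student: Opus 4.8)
The plan is to proceed by induction on the derivation of $\Gamma_2, x\colon U \vdash e\colon T$, with a case split on the last rule used. Throughout I will use the Barendregt variable convention to assume that every variable bound in $e$ (the argument of a $\lambda$, a $\letk$, or a $\casek$ branch) is distinct from $x$, from the free variables of $v$, and from the domains of $\Gamma_1$ and $\Gamma_2$; this makes substitution commute with the term constructors and lets me reorder environments freely. I will also appeal to two standard facts about context splitting~\cite{walker:substructural-type-systems,vasconcelos:fundamental-sessions}: splitting is commutative and associative where defined, and an unrestricted environment is idempotent under splitting (if $\un(\Delta)$ then $\Delta = \Delta \circ \Delta$); and the \emph{inversion} of splitting on an environment with $x$ at the end, namely that $\Gamma_2, x\colon U = \Delta_1 \circ \Delta_2$ implies either $\un(U)$ with $\Delta_i = \Delta_i', x\colon U$ and $\Gamma_2 = \Delta_1' \circ \Delta_2'$, or $\lin(U)$ with $x\colon U$ occurring in exactly one $\Delta_i$ over a splitting of $\Gamma_2$ into that $\Delta_i$ minus $x$ and the other $\Delta_j$. (Multiplicity is unique, so the two cases are exhaustive and disjoint.)

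The leaf and unary cases are direct. For \rn{T-Unit} and \rn{T-New} the side condition forces $\un(\Gamma_2, x\colon U)$, hence $\un(U)$, so Lemma~\ref{lem:unT-unGamma} gives $\un(\Gamma_1)$ and therefore $\un(\Gamma)$, and the same rule re-applies; substitution acts trivially on $()$ and $\newk$. In \rn{T-Name} I distinguish $e = x$ from $e = z \ne x$. If $e = x$, then $T = U$, $\un(\Gamma_2)$, and $x\subs{v}{x} = v$; by commutativity of splitting and Lemma~\ref{lem:circ-un}, $\un(\Gamma_2)$ gives $\Gamma = \Gamma_1$, so $\Gamma_1 \vdash v\colon U$ is precisely the goal. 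If $e = z \ne x$, then $x\colon U$ lies in the unrestricted part of the \rn{T-Name} environment, so $\un(U)$, hence $\un(\Gamma_1)$ by Lemma~\ref{lem:unT-unGamma} and $\Gamma = \Gamma_2$ by Lemma~\ref{lem:circ-un}; since $z\subs{v}{x} = z$ and $\Gamma_2 \vdash z\colon T$ was already derived, we are done. The case \rn{T-Sub} is just another inductive case: apply the induction hypothesis to the premise and re-apply \rn{T-Sub}. For \rn{T-Fork}, \rn{T-Receive}, \rn{T-Select}, \rn{T-Close}, and \rn{T-Wait} the premise is typed under the whole environment, so the induction hypothesis applies immediately and the rule re-applies. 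For \rn{T-Abs} I apply the induction hypothesis to the body (moving the bound variable out of the way), re-apply the rule, and additionally check that $\super{\Multm}{\Gamma}$ is preserved: this is trivial when $\Multm = \lin$, and when $\Multm = \un$ it follows because $\super{\un}{\Gamma_2, x\colon U}$ gives $\un(U)$, hence $\un(\Gamma_1)$ by Lemma~\ref{lem:unT-unGamma}, hence $\un(\Gamma)$.

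The cases whose conclusion splits the environment---\rn{T-App}, \rn{T-PairCons}, \rn{T-PairDest}, \rn{T-Send}, and \rn{T-Case}---use the splitting-inversion fact above, after renaming any bound variables away from $x$. If $\un(U)$, then $x\colon U$ appears in every premise that mentions $x$; by Lemma~\ref{lem:unT-unGamma} we have $\un(\Gamma_1)$, hence $\Gamma_1 = \Gamma_1 \circ \Gamma_1$, so I may hand a copy of $\Gamma_1 \vdash v\colon U$ to the induction hypothesis for each such premise, and associativity and commutativity of splitting reassemble the results (the duplicated copies of $\Gamma_1$ collapsing) into $\Gamma = \Gamma_1 \circ \Gamma_2$. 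If $\lin(U)$, then $x\colon U$ appears in exactly one premise; I invoke the induction hypothesis on that premise with $\Gamma_1$, leave the other premise unchanged (it does not mention $x$, so substitution is vacuous there), and again reassemble by associativity of splitting; the remaining sub-case is symmetric. In every instance the original rule re-applies to the substituted subterms.

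The main obstacle is precisely this last family of cases: one must split on whether $U$ is unrestricted or linear, duplicate (or not) the environment $\Gamma_1$ accordingly---justified by Lemma~\ref{lem:unT-unGamma} together with idempotence of unrestricted splitting---and verify that the environments produced by the induction hypotheses recombine exactly to $\Gamma$. Everything else is a mechanical application of the induction hypothesis followed by re-applying the rule of the case, modulo the bookkeeping that the variable convention makes painless.
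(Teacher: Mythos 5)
Your proof is correct and follows essentially the same route as the paper's: induction on the typing derivation, with the environment-splitting rules (application, pairs, send, case) handled by a sub-case split on $\un(U)$ versus $\lin(U)$, duplicating the value's unrestricted environment via Lemma~\ref{lem:unT-unGamma} and recombining with the splitting properties and Lemma~\ref{lem:circ-un}. The only cosmetic difference is in the variable case $e = z \neq x$, where the paper invokes strengthening (Lemma~\ref{lem:strengthening}) while you inline the same reasoning through inversion and Lemma~\ref{lem:circ-un}.
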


\begin{proof}
  By induction on $\Gamma_2, x: U \vdash e : T$ with case analysis on
  the last derivation rule used.  We show main cases below.

  \begin{description}
  \item[Case] (variables):
    If $e=x$ and $T=U$ and $\un(\Gamma_2)$, then we have, by Lemma~\ref{lem:circ-un},
    $\Gamma = \Gamma_1$, finishing the case.
    If $e=y \neq x$, then Lemma~\ref{lem:strengthening} finishes the case.


  \item[Case] (applications):
    We have $e = e_1\,e_2$ and $\Gamma_{11} \vdash e_1 : T_2 \to_\Multm T$ and
    $\Gamma_{12} \vdash e_2 : T_2$ and $\Gamma, x:U = \Gamma_{11} \circ \Gamma_{12}$.
    We have two subcases depending on whether $\un(U)$ or not.
    \begin{description}
    \item[Subcase] $\un(U)$: We have $\Gamma_{11} = \Gamma_{11}', x:U$ and $\Gamma_{12} = \Gamma_{12}', x:U$ and $\Gamma = \Gamma_{11}' \circ \Gamma_{12}'$.
    The induction hypothesis give us
    $\Gamma_{11}' \circ \Gamma_2 \vdash e_1 \subs vx : T_2 \to_\Multm
    T$ and $\Gamma_{12}' \circ \Gamma_2 \vdash e_2 \subs vx : T_2$.
    By Lemma~\ref{lem:unT-unGamma}, we have $\un(\Gamma_2)$.
    The typing rule for applications shows
    $(\Gamma_{11}' \circ \Gamma_2) \circ (\Gamma_{12}' \circ \Gamma_2)
    \vdash (e_1\,e_2)\subs vx : T$.
    Lemma~\ref{lem:circ-un} finishes the subcase.
  \item[Subcase] $\lin(U)$: either (1) $\Gamma_{11} = \Gamma_{11}'$
    and $\Gamma_{12} = \Gamma_{12}', x:U$ and
    $\Gamma = \Gamma_{11}' \circ \Gamma_{12}'$,
    in which case we have
    $\Gamma_{12}' \vdash e_2 \subs vx : T_1$ by the induction
    hypothesis and also $e_1 \subs vx = e_1$ and the typing rule for
    applications finishes;
    or (2)
    $\Gamma_{11} = \Gamma_{11}', x:U$ and $\Gamma_{12} = \Gamma_{12}'$
    and $\Gamma = \Gamma_{11}' \circ \Gamma_{12}'$, in which case
    the conclusion is similarly proved.
    \end{description}
  \end{description}
\end{proof}

The following two lemmas are adapted from earlier work~\cite{Gay-Vasconcelos-2010}.

\begin{lemma}[Sub-derivation introduction]
  \label{lem:derivation-intro}
  If $\mathcal D$ is a derivation of $\Gamma \vdash \context{e} : T$,
  then there exist $\Gamma_1$, $\Gamma_2$ and $U$ such that
  $\Gamma = \Gamma_1\circ\Gamma_2$ and $\mathcal D$ has a sub-derivation
  $\mathcal D'$ concluding $\Gamma_2 \vdash e : U$ and the
  position of $\mathcal D'$ in $\mathcal D$ corresponds to the
  position of the hole in $E$.
\end{lemma}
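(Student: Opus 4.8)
The statement is a standard "sub-derivation introduction" (or replacement) lemma: from a derivation of a typing judgment for a term that decomposes through an evaluation context $E$, we can isolate the sub-derivation for the plugged term $e$, with the ambient environment split appropriately. I would prove this by structural induction on the evaluation context $E$ (equivalently, on the derivation $\mathcal{D}$, peeling off one typing rule at a time that matches the outermost context former of $E$).

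\textbf{Base case.} If $E = [\,]$, then $\context{e} = e$, and we simply take $\Gamma_1 = \cdot$, $\Gamma_2 = \Gamma$, $U = T$, $\mathcal{D}' = \mathcal{D}$. The split $\Gamma = \cdot \circ \Gamma$ holds because $\un(\cdot)$ and every name can be placed in $\Gamma_2$, and the position of $\mathcal{D}'$ is the whole derivation, matching the hole at the root.

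\textbf{Inductive cases.} For each evaluation-context former, I inspect the last rule of $\mathcal{D}$. The subtlety is that \rn{T-Sub} is not syntax-directed, so $\mathcal{D}$ may end in zero or more applications of \rn{T-Sub} before the "real" rule matching the context former; I first strip those (they only weaken the conclusion type $T$ via $T' \Sub T$, and they do not touch the environment), then proceed. Take the representative case $E = E'\,f$, so $\context{e} = \context[E']{e}\,f$. Inversion of the application rule gives $\Gamma = \Gamma_a \circ \Gamma_b$ with $\Gamma_a \vdash \context[E']{e} : T'' \to_\Multm T'$ and $\Gamma_b \vdash f : T''$. Apply the induction hypothesis to the sub-derivation of $\Gamma_a \vdash \context[E']{e} : T'' \to_\Multm T'$: it yields $\Gamma_a = \Gamma_1 \circ \Gamma_2'$ and a sub-derivation of $\Gamma_2' \vdash e : U$ placed at the hole of $E'$. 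Then set $\Gamma_2 = \Gamma_2'$ and observe $\Gamma = \Gamma_1 \circ \Gamma_2' \circ \Gamma_b$; using associativity and commutativity of $\circ$ (the basic splitting properties cited for Lemma~\ref{lem:preservation-equiv}) this regroups as $\Gamma = (\Gamma_1 \circ \Gamma_b) \circ \Gamma_2'$, so taking the ambient part to be $\Gamma_1 \circ \Gamma_b$ completes the case; the position of $\mathcal{D}'$ inside $\mathcal{D}$ corresponds to the hole of $E = E'\,f$ because it corresponds to the hole of $E'$ inside the left premise. The cases $v\,E$, $(E,f)_\Multm$, $(v,E)_\Multm$, $\letin{x,y}{E}{f}$, $\send{E}{f}$, $\send{v}{E}$, $\recv{E}$, $\select{l}{E}$, $\case{E}{\dots}$, $\close{E}$, $\wait{E}$, and (for \GGVi) $E : T \Cast{p} U$ are all handled the same way: invert the matching typing rule, recurse on the premise typing the sub-context, and reassemble the environment split by associativity/commutativity, absorbing into $\Gamma_1$ whatever environments type the other, non-hole premises. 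For the binding contexts ($\letin{x,y}{E}{f}$ has no binder in the hole position, so no issue; neither does $\case{E}{\dots}$), no variable-capture bookkeeping is needed since the hole of an evaluation context never sits under a binder.

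\textbf{Main obstacle.} The only real care is the interplay with \rn{T-Sub} and with environment splitting: I need that stripping leading \rn{T-Sub}s leaves the environment untouched (immediate from the rule), and that the various regroupings of $\circ$ are legitimate — which follows from the standard commutativity/associativity properties of context splitting already invoked earlier in the paper. Everything else is routine inversion, one case per evaluation-context former. I would write it out for $E'\,f$ and $\send{v}{E}$ (the covariant and the contravariant-premise shapes) and state that the remaining cases are analogous.
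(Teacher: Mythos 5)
Your proof takes the same route as the paper, whose entire proof is ``by induction on $E$''; your expansion (strip any trailing \textsc{T-Sub} uses, invert the typing rule matching the outermost context former, apply the induction hypothesis to the premise typing the sub-context, and regroup the split using associativity/commutativity of $\circ$) is exactly the intended argument. One small repair is needed in the base case: under the environment-splitting rules, an unrestricted binding must appear in \emph{both} components of a split, so $\Gamma = \cdot \circ \Gamma$ is derivable only when $\Gamma$ is purely linear; instead take $\Gamma_1$ to be the unrestricted portion of $\Gamma$ (and $\Gamma_2 = \Gamma$, $U = T$), which is a legitimate split and leaves the rest of your argument unchanged.
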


\begin{proof}
  By induction on $E$.
\end{proof}

\begin{lemma}[Sub-derivation elimination]
  \label{lem:derivation-elim}
  $\Gamma \vdash E[f] : T$ holds, if
  \begin{itemize}
    \item $\mathcal D$ is a derivation of
      $\Gamma_1\circ\Gamma_2 \vdash \context{e} : T$,
    \item $\mathcal D'$ is a sub-derivation of $\mathcal D$
      concluding $\Gamma_2 \vdash e : U$,
    \item the position of $\mathcal D'$ in $\mathcal D$ corresponds
      to the position of the hole in $E$,
    \item $\Gamma_3 \vdash f : U$, and
    \item $\Gamma =\Gamma_1\circ\Gamma_3$.
  \end{itemize}
\end{lemma}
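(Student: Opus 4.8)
The plan is to prove this by induction on the evaluation context $E$, in lockstep with the structure of the derivation $\mathcal D$, exactly mirroring the induction used for Lemma~\ref{lem:derivation-intro} (Sub-derivation introduction). The base case is $E = [\,]$: here $\context{e} = e$, the sub-derivation $\mathcal D'$ coincides with $\mathcal D$ (up to uses of \rn{T-Sub} at the root), so $\Gamma_2 \vdash e : U$ with $\Gamma_1$ being a purely unrestricted remainder, hence $\Gamma = \Gamma_1 \circ \Gamma_3 = \Gamma_3$ by Lemma~\ref{lem:circ-un}, and $\Gamma_3 \vdash f : U$ gives the result directly (appealing to \rn{T-Sub} to recover any subsumption steps peeled off at the root of $\mathcal D$).

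For the inductive step, $E$ has the form $\context[E']{\cdot}$ wrapped in one of the context constructors — $E'\,e'$, $v\,E'$, $(E',e')_\Multm$, $(v,E')_\Multm$, $\letin{x,y}{E'}{e'}$, $\send{E'}{e'}$, $\send{v}{E'}$, $\recv{E'}$, $\select{l}{E'}$, $\case{E'}{\dots}$, $\close{E'}$, $\wait{E'}$, or (in \GGVi) $E' \colon T \Cast{p} U$. In each case I first invert the last rule of $\mathcal D$ (modulo a final run of \rn{T-Sub}, which we handle using \rn{T-Sub} at the end as usual, since subtyping is syntax-directed) to expose a premise $\Gamma_2' \vdash \context[E']{e} : T'$ together with side premises typing the other immediate subterms, and a split of $\Gamma_1 \circ \Gamma_2$ into the environments feeding these premises. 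By Barendregt's convention the $\mathcal D'$ picked out by the hole lies inside the sub-derivation of this premise, so the induction hypothesis on $E'$ yields $\Gamma'' \vdash \context[E']{f} : T'$, where $\Gamma''$ is obtained from $\Gamma_2'$ by replacing its $\Gamma_2$-part with $\Gamma_3$. I then reassemble the original typing rule with this modified premise in place of the old one, re-using the untouched side premises for the sibling subterms; the associativity and commutativity of environment splitting (the ``basic properties of context splitting'' cited for Lemma~\ref{lem:preservation-equiv}) let me recombine the split environments so that the conclusion reads $\Gamma_1 \circ \Gamma_3 \vdash \context{f} : T$, as required, and a final \rn{T-Sub} restores the outer type if needed.

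The main obstacle is purely bookkeeping: managing the environment splits so that the $\Gamma_1$/$\Gamma_2$ decomposition threads correctly through a rule with two typed premises (application, pair construction, send, let-destructuring, case). Concretely, when $\mathcal D$ ends in, say, \rn{T-App} with premises over $\Delta_1$ and $\Delta_2$ and $\Gamma_1 \circ \Gamma_2 = \Delta_1 \circ \Delta_2$, I must argue that $\Gamma_2$ (the linear resources consumed by $e$, hence by $E$'s hole) lies entirely within one of $\Delta_1, \Delta_2$ — whichever contains the hole — and that after swapping in $\Gamma_3$ the two new environments still split to $\Gamma_1 \circ \Gamma_3$. This follows from the standard splitting lemmas but needs care because $\Gamma_3$ may differ from $\Gamma_2$ on its linear part while agreeing on the unrestricted part. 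Since this is the precise dual of the reasoning already carried out in Lemma~\ref{lem:derivation-intro}, I expect no genuine difficulty, only a careful case-by-case verification, which I would relegate to ``by induction on $E$, using basic properties of context splitting,'' matching the proof style of the surrounding lemmas.
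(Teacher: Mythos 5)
Your proposal matches the paper's proof, which is simply "by induction on $E$": the base case via Lemma~\ref{lem:circ-un} and the inductive step by inverting the last rule, applying the induction hypothesis to the sub-context, and reassembling with the standard context-splitting properties is exactly the intended argument. No gaps; your write-up is just a more explicit rendering of the same induction.
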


\begin{proof}
  By induction on $E$.
\end{proof}

\begin{lemma}\label{lem:flv}
  If $\Gamma \vdash e : T$, then $\flv{\Gamma} = \flv{e}$.
\end{lemma}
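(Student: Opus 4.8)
The plan is to argue by induction on the derivation of $\Gamma \vdash e : T$, after first isolating two routine facts about how $\flv{\cdot}$ interacts with the structure of typing environments. Recall that $\flv{\Gamma}$ collects the names declared in $\Gamma$ whose type is linear, while $\flv{e}$ collects the free names of $e$ of linear type; as the paper notes, in a running program these are exactly the channel endpoints and references, but in general the linearity information is the one recorded by the typing derivation. The two facts are: (i) if $z \notin \dom(\Gamma)$, then $\flv{\Gamma, z : T} = \flv{\Gamma}$ when $\un(T)$, and $\flv{\Gamma, z : T} = \flv{\Gamma} \cup \{z\}$ when $\lin(T)$; and (ii) $\flv{\Gamma_1 \circ \Gamma_2} = \flv{\Gamma_1} \cup \flv{\Gamma_2}$, which follows by a short induction on the derivation of $\Gamma = \Gamma_1 \circ \Gamma_2$ (a linear name lands in exactly one of $\Gamma_1, \Gamma_2$; an unrestricted one may be duplicated but adds nothing to either $\flv{}$). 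I also use Barendregt's convention, so that variables bound by $\lambda$, by $\letk$, or in a $\casek$ branch are fresh for the ambient environment and hence absent from $\flv{\Gamma}$.

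Then I would go through the typing rules of \GGVi. For the axioms \rn{T-Name}, \rn{T-Unit} and \rn{T-New}, the side condition $\un(\Gamma)$ on the residual environment empties its linear part; thus for \rn{T-Name} both sides equal $\{z\}$ when $\lin(T)$ and $\emptyset$ otherwise, and for \rn{T-Unit}, \rn{T-New} both sides are $\emptyset$ (the expression $\newk$ has no free variables). For the rules with a single premise on the same environment and the same free variables---\rn{T-Fork}, \rn{T-Select}, \rn{T-Receive}, \rn{T-Close}, \rn{T-Wait}, \rn{T-Sub}, and the cast rule $\Gamma \vdash (e : T \Cast{p} U) : U$---the induction hypothesis transfers directly, using $\flv{\recv e} = \flv{e}$, $\flv{(e : T \Cast{p} U)} = \flv{e}$, and so on. For the rules that split the environment but bind nothing---\rn{T-App}, \rn{T-PairCons}, \rn{T-Send}---fact (ii) and the induction hypothesis give $\flv{\Gamma \circ \Delta} = \flv{\Gamma} \cup \flv{\Delta} = \flv{e} \cup \flv{f}$, which is precisely the set of free linear variables of the composite expression. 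For \rn{T-Abs}, fact (i) and freshness of $x$ give $\flv{\Gamma} = \flv{\Gamma, x : T} \setminus \{x\}$, which by the induction hypothesis equals $\flv{e} \setminus \{x\} = \flv{\lambda_\Multm x.e}$.

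The cases that combine splitting with binding are \rn{T-PairDest} and \rn{T-Case}. In \rn{T-PairDest} the continuation $f$ is typed under $\Delta, x : T_1, y : T_2$, so by the induction hypothesis together with fact (i), $\flv{f}$ is $\flv{\Delta}$ augmented with whichever of $x, y$ has linear type; since $x$ and $y$ are fresh, hence not in $\flv{\Delta}$, deleting them leaves exactly $\flv{\Delta}$, and combining with $\flv{\Gamma} = \flv{e}$ through fact (ii) yields the claim. In \rn{T-Case} (the \GGVi{} form, with one index set $I$ shared by the type and the branches) each branch body $e_i$ is typed under $\Delta, x_i : S_i$ with $S_i$ a session type, hence $\lin(S_i)$; thus $\flv{e_i} = \flv{\Delta} \cup \{x_i\}$ and $\flv{e_i} \setminus \{x_i\} = \flv{\Delta}$ for every $i$, so the free linear variables of the $\casek$ expression are $\flv{e} \cup \bigcup_{i \in I} (\flv{e_i} \setminus \{x_i\}) = \flv{e} \cup \flv{\Delta} = \flv{\Gamma \circ \Delta}$, as required. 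The only points needing care are the bookkeeping of bound-variable removal against environment extension in the binding cases---handled by freshness together with fact (i)---and, a notational rather than mathematical matter, being explicit that $\flv{e}$ is read relative to the linearity information supplied by the derivation; no individual step presents a genuine obstacle.
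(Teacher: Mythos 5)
Your proof is correct and follows the same route as the paper, which simply states ``Easy induction on $\Gamma \vdash e : T$''; your auxiliary facts about $\flv{\cdot}$ under environment extension and splitting, together with Barendregt's convention for the binding cases, are exactly the details that induction implicitly relies on.
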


\begin{proof}
  Easy induction on $\Gamma \vdash e : T$.
\end{proof}

\begin{theorem}[Preservation for expressions]
\label{thm:sr-exp}
  If $e \reduces f$ and $\Gamma \vdash e:T$, then
    $\Gamma \vdash f:T$.
\end{theorem}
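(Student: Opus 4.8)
The plan is to prove the statement by a case analysis on the rule used to derive $e \reduces f$; expression reduction of \GGVi{} (Figure~\ref{fig:ggv-reductions-expressions}) is a flat axiom system with no recursive premises, so this exhausts all cases and the argument is self-contained --- no process-level reduction or structural congruence is involved. Throughout I would freely use the inversion properties announced just before Lemma~\ref{lem:weakening}: since the only rule that is not syntax-directed is \textsc{T-Sub}, a derivation of $\Gamma \vdash e : T$ for a compound $e$ can be read back to typings of its immediate subterms together with a residual $T' \Sub T$ relating the ``natural'' type $T'$ of $e$ to the ascribed type $T$; likewise inversion on $\Sub$ and on $\lesssim$ is available, as both are syntax-directed, and $\Sub$ is transitive.

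The two $\beta$-like rules, $(\lambda_\Multm x.e)v \reduces e\subs vx$ and $\letin{x,y}{(v,w)_\Multm}{e} \reduces e\subs vx\subs wy$, go exactly as in \GV: invert the typing of the redex to split $\Gamma = \Gamma_1 \circ \Gamma_2$ with $\Gamma_1$ typing the value(s) and $\Gamma_2$ extended by the bound variable(s) typing $e$, then apply the Substitution Lemma (Lemma~\ref{lem:substitution}) once or twice and re-apply \textsc{T-Sub}. The three degenerate cast rules ($v : \DYN \Cast p \DYN \reduces v$, and likewise for $\DC$ and $\unitk$) are immediate, since inverting the cast typing already gives $\Gamma \vdash v$ at the target type. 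For the four ``factoring'' rules --- a cast to or from $\DYN$ decomposed through the unique ground type $\GroundT$ with $T \sim \GroundT$, and the two session analogues with $\GroundS$ --- I would use exactly the stated uniqueness of $\GroundT$: from $T \sim \GroundT$ one gets both $T \lesssim \GroundT$ and $\GroundT \lesssim \DYN$ (the latter being one of the four extra clauses of $\lesssim$), so the two nested casts in the reduct are each well formed and their composition again has the original target type; the other three directions are symmetric.

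The bulk of the work is the genuinely structural cast rules: the wrap rule for a function cast applied to an argument, and casts pushed through pairs, $\sendk$, $\recvk$, $\selectk$, $\casek$, $\closek$, and $\waitk$. All follow a single pattern. Invert the typing of the redex to expose a consistent-subtyping premise on the cast --- e.g.\ $T \to_\Multm U \lesssim T' \to_\Multn U'$ for wrap, $\pl{T}S \lesssim \pl{T'}S'$ for $\sendk$, $\oplus\br{l_i:R_i}_{i\in I}\lesssim\oplus\br{l_j:S_j}_{j\in J}$ for $\selectk$, and so on --- then invert \emph{that} relation (using syntax-directedness of $\lesssim$) to obtain the component relations: $T' \lesssim T$, $U \lesssim U'$, $\Multm \Multsub \Multn$ in the arrow case; $T' \lesssim T$ and $S \lesssim S'$ in the send case; $J \subseteq I$ and $R_j \lesssim S_j$ (for $j\in J$) in the select case; and the dual inclusion for $\casek$. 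Then reassemble a typing for the reduct. The contravariant slots come out right automatically: the domain (respectively sent-value) cast in the reduct goes from $T'$ to $T$ with complemented label $\dual p$, which is precisely the direction licensed by $T' \lesssim T$, while the covariant residual cast goes from $U$ (or $S$) to $U'$ (or $S'$) with label $p$ unchanged, licensed by $U \lesssim U'$ (or $S \lesssim S'$). Finally re-apply the \textsc{T-Sub} residual from the outer inversion, using transitivity of $\Sub$. The environment from the redex carries over unchanged, since a cast does not split its environment.

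I expect the fiddliest cases to be the $\casek$ and $\selectk$ cast rules, because of the index-set bookkeeping --- this is the one place where off-by-one mistakes are easy. For $\selectk$, with $w$ of source type $\oplus\br{l_i:R_i}_{i\in I}$ cast to $\oplus\br{l_j:S_j}_{j\in J}$ and $k \in J \subseteq I$, one must chase a chain $k \in H \subseteq J \subseteq I$ (where $H$ is produced by the outer subsumption step) to conclude that $\select{l_k}{w}$ is well typed at $R_k$ and that $R_k \lesssim S_k$, so that the residual cast $\Cast p$ to $S_k$ is admissible and $S_k \Sub T$ follows by transitivity. For $\casek$, with the scrutinee cast from $\&\br{l_i:R_i}_{i\in I}$ to $\&\br{l_i:S_i}_{i\in J}$ and $I \subseteq J$, the reduct rebinds each branch variable: inverting gives $R_i \lesssim S_i$ for $i \in I$; \textsc{T-Case} on the uncast $w$ now asks only for branches indexed by $I$, which is exactly what the reduct provides; each rebinding $\letin{x_i}{(x_i : R_i \Cast p S_i)}{e_i}$ (desugared to a linear application) type-checks because $x_i : R_i$ casts to $S_i$ and $e_i$, originally typed under $x_i : S_i$, is used under that same binding; and the continuation environment $\Delta$ shared by the branches, together with the disjoint scrutinee environment, is reused verbatim. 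None of this is deep, but it is the part where the proof needs to be written out with care.
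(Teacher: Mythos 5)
Your proposal is correct and follows essentially the same route as the paper: the paper's own proof is just a two-line sketch ("rule induction on $e \reduces f$, using the substitution lemma and inversion of the typing relation"), and your case analysis over the flat expression-reduction axioms, with Lemma~\ref{lem:substitution} for the $\beta$/$\letk$ cases and inversion of typing, subtyping, and consistent subtyping for the cast cases, is exactly that argument spelled out. The details you fill in (factoring through the unique ground type, the contravariant/covariant bookkeeping in wrap and $\sendk$, and the index-set chasing for $\selectk$/$\casek$) are the right ones.
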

\begin{proof}
  By rule induction on the first hypothesis.
  For $\beta$-reduction and $\letk$ we use the substitution lemma
  (Lemma~\ref{lem:substitution}) and inversion of the typing relation.
\end{proof}

\begin{theorem}[Preservation for processes]
\label{thm:sr-process}
   If $P \reduces Q$ and $\Gamma \vdash P$, then
    $\Gamma \vdash Q$.
\end{theorem}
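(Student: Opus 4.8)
The plan is to prove Theorem~\ref{thm:sr-process} by rule induction on the derivation of $P \reduces Q$, in each case inverting the typing derivation $\Gamma \vdash P$, re-typing the contractum, and re-assembling. The main tools are: Preservation for expressions (Theorem~\ref{thm:sr-exp}) for the rule $\proc{\contextE{e}} \reduces \proc{\contextE{f}}$ with $e \reduces f$; Preservation for $\equiv$ (Lemma~\ref{lem:preservation-equiv}) for the closure under structural congruence; the Substitution Lemma (Lemma~\ref{lem:substitution}) for the select/case communication step, which substitutes an endpoint into a branch; Sub-derivation introduction and elimination (Lemmas~\ref{lem:derivation-intro} and~\ref{lem:derivation-elim}) to isolate a redex inside an evaluation context and to plug in its contractum at a possibly-subsumed type; Weakening, Strengthening, and Lemma~\ref{lem:circ-un} (Lemmas~\ref{lem:weakening} and~\ref{lem:strengthening}) to massage environments; and Lemma~\ref{lem:flv}, together with a routine extension to processes (giving $\flv{\Gamma} = \flv{P}$ whenever $\Gamma \vdash P$, by induction on process typing), to re-establish the side condition $\flv{\Gamma} = X$ attached to every blame term $\blameboth{p}{q}{X}$ and $\blamegc{p}{X}$ that reduction produces. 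I will also use, without comment, inversion of the typing relation (syntax-directed modulo the subsumption rule) and of subtyping, and basic properties of context splitting.

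The cases inherited from \GV{}---$\forkk$, $\newk$, the three communication rules (send/receive, select/case, close/wait), and the three closure rules (congruence for $\PAR$ and for $\nu$-binders, closure under $\equiv$, expression reduction under a context)---proceed exactly as in Gay and Vasconcelos. For $\newk$ one introduces a fresh $(\nu c,d)$ and extends the environment with $c\colon S, d\colon\dual{S}$ before re-typing $(c,d)_\lin$; for the communication rules one uses inversion of the session typing rules, the invariant that the two endpoints carry dual session types, and inversion of subtyping to line up the residuals, adding the Substitution Lemma in the select/case case; for expression-under-context one combines Lemmas~\ref{lem:derivation-intro} and~\ref{lem:derivation-elim} with Theorem~\ref{thm:sr-exp}; the $\equiv$-closure case is Lemma~\ref{lem:preservation-equiv}; and the new congruence rule for $(\nu\Storea)P$ is immediate from the induction hypothesis after inverting the typing of $(\nu\Storea)P$.

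The genuinely new work is the \GGVi{} process rules of Figure~\ref{fig:ggv-reductions-processes}. For reference introduction, $\proc{\contextE{v\colon\GroundT\Cast{p}\DYN}} \reduces (\nu\Storea)(\proc{\contextE{\Storea}} \PAR \Storea\mapsto\wrap[p]{v}{\GroundT})$: invert the cast, which must land at $\DYN$, to obtain $\Gamma_2 \vdash v\colon\GroundT$ for the relevant sub-environment; weaken all ambient judgments with the fresh $\Storea\colon\DYN$; then split $\Gamma,\Storea\colon\DYN$ so the cell is typed by its rule and $\proc{\contextE{\Storea}}$ is re-typed via Lemma~\ref{lem:derivation-elim} using $\vdash\Storea\colon\DYN$ at the hole. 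The first-access rule is another use of Lemmas~\ref{lem:derivation-intro}/\ref{lem:derivation-elim} (the wrapped value has exactly type $\DYN$, so it slots into the hole), with the cell retyped at $\Storea\mapsto\lockedbl{p}$. For the subsequent-access rule, the deallocation rule $(\nu\Storea)(\Storea\mapsto\lockedbl{p}) \reduces \proc{()}$, and the garbage-collection rule $(\nu\Storea)(\Storea\mapsto\wrap[p]{w}{\GroundT}) \reduces \blamegc{\dual{p}}{\flv{w}}$, the only content is the environment bookkeeping: inversion shows the ambient environment is unrestricted apart from the linear contents, so $\flv{\Gamma}=X$ for the blame term follows from Lemma~\ref{lem:flv}. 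The collapse/collide pairs for types and for session types are routine: collapse preserves the type by subsumption (from $\GroundT\Sub\GroundU$, resp.\ $\GroundS\Sub\GroundR$), and collide reduces to a blame term whose linear-variable set is exactly $\flv{\Gamma}$ by Lemma~\ref{lem:flv}. For the channel collapse rule, inversion first forces both endpoints to carry session type $\DC$ on the left (the only subtype of $\DC$ is $\DC$); on the right I re-type the configuration with endpoint types $\GroundS,\dual{\GroundS}$---equivalently $\dual{\GroundR},\GroundR$, both choices working since $\dual{\GroundS}\Sub\GroundR$ and subtyping is transitive---using subsumption and Lemma~\ref{lem:derivation-elim} in each process; the channel collide rule adds to this the $\flv$ computation for the blame term.

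The delicate point, where I expect to spend the most care, is this free-linear-variable accounting in the blame-producing rules, and in particular the channel collide case, where the $\nu$-bound endpoints $c,d$ carry the linear type $\DC$ and must be tracked correctly as they enter the linear-variable annotation $X$ of the resulting blame term; re-establishing $\flv{\Gamma} = X$ there relies essentially on the process-level version of Lemma~\ref{lem:flv} and on the $\DC$-inversion observation above. Everything else is a routine, if lengthy, adaptation of the Gay--Vasconcelos subject-reduction argument.
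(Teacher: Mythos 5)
Your proposal is correct and follows essentially the same route as the paper's own proof: rule induction on $P \reduces Q$, using sub-derivation introduction and elimination (Lemmas~\ref{lem:derivation-intro} and~\ref{lem:derivation-elim}) for the context-manipulating rules, Lemma~\ref{lem:flv} for the blame-producing rules, Lemma~\ref{lem:preservation-equiv} for closure under $\equiv$, and Theorem~\ref{thm:sr-exp} for closure under evaluation contexts. Your additional detail (the process-level $\flv$ bookkeeping and the $\DC$-inversion for the channel collapse/collide rules) is a faithful elaboration of what the paper leaves implicit rather than a different argument.
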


\begin{proof}
  By rule induction on the first hypothesis, using basic
  properties of context
  splitting~\cite{vasconcelos:fundamental-sessions,walker:substructural-type-systems}
  and weakening (Lemma~\ref{lem:weakening}). Rules that make use of
  context use subderivation introduction
  (Lemma~\ref{lem:derivation-intro}) to build the derivation for the
  hypothesis, and subderivation elimination
  (Lemma~\ref{lem:derivation-elim}) to build the derivation for the
  conclusion.
  Rules for reduction to blame use Lemma~\ref{lem:flv}.
  Reduction underneath parallel composition and scope restriction
  follow by induction.
  The rule for $\equiv$ uses Lemma~\ref{lem:preservation-equiv}.
  Closure under evaluation contexts uses Theorem~\ref{thm:sr-exp}.
\end{proof}

\begin{lemma}[Ground types, subtyping, and consistent subtyping]
  \label{lem:gtypes}
  \begin{enumerate}
  \item If $T \ne \DYN$,  there is a unique ground type $\GroundT$
    such that $\GroundT \sim T$.
  \item If $S \ne \DC$, there is a unique ground session type
    $\GroundS$ such that $\GroundS \sim S$.
  \item\label{item:consistentsubtype} $\GroundT \lesssim \GroundU$ iff
    $\GroundT \Sub \GroundU$.
  \item\label{item:consistentsubsession} $\GroundS \lesssim \GroundR$
    iff $\GroundS \Sub \GroundR$.
  \end{enumerate}
\end{lemma}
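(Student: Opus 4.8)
The plan is to prove a preliminary containment and then dispatch the four items by case analysis backed by inversion on consistent subtyping. First I would record that $T \Sub U$ implies $T \lesssim U$: by rule induction on $\Sub$, every structural rule of $\Sub$ is literally a rule of $\lesssim$ with $\Sub$ replaced by $\lesssim$, while the two base rules $\DYN \Sub \DYN$ and $\DC \Sub \DC$ are instances of $\DYN \lesssim T$ and $S \lesssim \DC$ (take $T = \DYN$, $S = \DC$; recall $\DC$ is a session type). Specialising this to ground (session) types already gives the ``if'' directions of items~3 and~4. I will also freely use that both $\Sub$ and $\lesssim$ are reflexive, which the paper records.

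For existence in items~1 and~2 I would split on the outermost shape. For item~1: if $T = \unit$ take $\GroundT = \unit$; if $T$ is any session type (including $\DC$) take $\GroundT = \DC$, since $\DC \lesssim S$ and $S \lesssim \DC$ hold for every session type; if $T = T_1 \to_\Multm T_2$ take $\GroundT = \DYN \to_\Multm \DYN$, and if $T = T_1 \times_\Multm T_2$ take $\GroundT = \DYN \times_\Multm \DYN$ --- in each case $\GroundT \sim T$ in both directions is witnessed by one use of the relevant structural rule with $\DYN \lesssim T_i$ and $T_i \lesssim \DYN$ in the leaves and $\Multm \Multsub \Multm$ for the multiplicity. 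For item~2 the matching ground session type keeps the same outermost constructor and, crucially for $\oplus$ and $\with$, the same label set: e.g.\ for $S = \oplus\br{l_i \colon S_i}_{i\in I}$ take $\GroundS = \oplus\br{l_i \colon \DC}_{i\in I}$, with both directions of consistency given by the select rule at $I \subseteq I$; and $\Endpl$, $\Endqu$ are their own ground session types.

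Uniqueness in items~1 and~2, and the ``only if'' directions of items~3 and~4, all follow from one inversion observation: $\DYN$ is not a ground type and $\DC$ is not a ground session type, so the ``dynamic'' rules $\DYN \lesssim T$ / $T \lesssim \DYN$ (resp.\ $\DC \lesssim S$ / $S \lesssim \DC$) can never conclude a judgement both of whose sides lie in the relevant class of ground types, since their conclusions carry $\DYN$ (resp.\ $\DC$) on one side. Hence inverting $G \lesssim G'$ with $G, G'$ ground (session) types leaves only the structural rules, forcing $G$ and $G'$ to share an outermost constructor; for arrows and products this yields $\Multm \Multsub \Multn$ on the multiplicities, and since the ground types provide exactly one arrow (resp.\ product) per multiplicity, antisymmetry of $\Multsub$ pins the multiplicity down; for $\oplus$ (resp.\ $\with$) it yields $J \subseteq I$ (resp.\ $I \subseteq J$) on the label sets. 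For items~1 and~2, running this inversion on $\GroundT \sim T$ (resp.\ $\GroundS \sim S$) --- which quantifies in both directions --- makes the inclusions symmetric, hence equalities, so the ground (session) type is uniquely determined by the shape, multiplicity and label set of $T$ (resp.\ $S$); here the hypothesis $S \neq \DC$ in item~2 matters, because $\DC$ is consistent with every session type yet is not a ground session type. For items~3 and~4, once inversion has fixed the common constructor, the leaf premises of the structural $\lesssim$-rule are exactly $\DYN \lesssim \DYN$ and $\DC \lesssim \DC$ (which give $\DYN \Sub \DYN$, $\DC \Sub \DC$ by reflexivity) together with the multiplicity inequalities and set inclusions, which are unchanged; replaying the corresponding $\Sub$-rule yields $\GroundT \Sub \GroundU$ (resp.\ $\GroundS \Sub \GroundR$).

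I expect the main obstacle to be purely bookkeeping: stating and checking the inversion lemma for $\lesssim$ precisely --- for each possible outermost constructor of the left-hand type, which rules could have derived the judgement --- and keeping the variance of the two choice constructors straight, since $\oplus$ and $\with$ impose set inclusions in opposite directions, so that the two-directional quantification in consistency is what forces their label sets to coincide. None of this is deep, but it must be carried out carefully; with the inversion lemma in hand, items~1--4 are immediate consequences of the case analyses above.
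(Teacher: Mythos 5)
Your proposal is correct and takes essentially the same route as the paper, whose proof of this lemma is literally just ``by case analysis'' on the types involved --- you have merely spelled out the inversion details that the paper leaves implicit. One small nit: for item~3 your blanket claim that the dynamic rules cannot relate two ground types overlooks that $\DC$ is itself a ground type, so $\DC \lesssim \DC$ is concluded by $\DC \lesssim S$ (not by a structural rule); this corner is harmless since $\DC \Sub \DC$ holds, and the rest of your case analysis goes through unchanged.
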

\begin{proof}
  \begin{enumerate}
  \item By case analysis on $T$.
  \item By case analysis on $S$.
  \item By case analysis on $\GroundT$ and $\GroundU$.
  \item By case analysis on $\GroundS$ and $\GroundR$.
  \end{enumerate}
\end{proof}

\begin{lemma}[Canonical forms]
  \label{lem:canonical-forms}
  Suppose that $\Gamma \vdash v : T$ where $\Gamma$ contains
  session types and $\DYN$, only.
  \begin{enumerate}
  \item If\/ $T = \DYN$, then either $v = w \colon \GroundT \Cast{p}
    \DYN$ with $\un (\GroundT)$
    or $v=\Storea$.
   \item If\/ $T = S$, then either\/ $v = \Chanc$ or\/
     $v = w \colon \GroundS \Cast p \DC$ and\/ $S = \DC$ or\/ 
     $v = w \colon R_1 \Cast{p} R_2$ with $R_2 \Sub S$.
  \item If\/ $T =  \unitk$, then\/ $v = ()$.
  \item If\/ $T =  U_1 \to_\Multm U_2$, then either\/ $v = \lambda_\Multn x.e$
    with\/ $\Multn \Sub \Multm$ or\/
    $v = w\colon T_1 \to_{\Multn_1} T_2 \Cast{p} U'_1 \to_{\Multn_2} U'_2$
    with\/ $\Multn_2 \Multsub \Multm$ and $U_1 \Sub U'_1$ and $U'_2 \Sub U_2$.
  \item If\/ $T = T_1 \tcpair_\Multm T_2$, then\/ $v = (w_1,w_2)_\Multn$
    with $\Multn \Multsub \Multm$.
  \end{enumerate}
\end{lemma}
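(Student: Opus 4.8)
## Proof proposal for the Canonical Forms Lemma

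The plan is to prove this by induction on the typing derivation $\Gamma \vdash v : T$, with a case analysis on the last rule used. Since $v$ is a value, only a limited set of rules can conclude the judgment: the introduction rules for unit, abstraction, and pair; the rule for channel endpoints (names); the rule for references $\Storea$; the cast-typing rule; and the subsumption rule \rn{T-Sub}. The hypothesis that $\Gamma$ contains only session types and $\DYN$ is what rules out, e.g., a variable $x$ of function type masquerading as a value of the relevant shape—any name in $\Gamma$ of type $U_1 \to_\Multm U_2$ or $U_1 \times_\Multm U_2$ or $\unitk$ is impossible, so names can only contribute channel endpoints $\Chanc$ (with $T$ a session type) or references $\Storea$ (with $T = \DYN$). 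I would first record this observation explicitly as a preliminary, together with inversion for the subtyping relation (which is syntax-directed, as the paper notes): from $R \Sub T$ we can read off the shape of $T$ from the shape of $R$ and vice versa, and in particular $\DYN \Sub T$ forces $T = \DYN$, $\unitk \Sub T$ forces $T = \unitk$, a function subtype forces a function supertype, etc.

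The core of the argument handles \rn{T-Sub} by a secondary induction (or by first normalizing the derivation so that at most one \rn{T-Sub} sits at the root, using transitivity of $\Sub$, which is stated in the \GV{} section). For each clause I would then argue as follows. For $T = \DYN$ (clause 1): the only rules producing a value at type $\DYN$ are the reference rule (giving $v = \Storea$) and the cast rule with target $\DYN$, which by the value grammar forces $v = w : \GroundT \Cast{p} \DYN$ with $\un(\GroundT)$; \rn{T-Sub} adds nothing since $\DYN \Sub T$ implies $T = \DYN$. For $T = \unitk$ (clause 3): the unit-introduction rule gives $v = ()$; the cast rule would require the cast's target to be $\unitk$, but by the value grammar a cast-value has target $\DYN$, $\DC$, a function type, or a session type $R \neq \DC$—none of which is $\unitk$—so this case is vacuous, and \rn{T-Sub} with $U' \Sub \unitk$ forces $U' = \unitk$, reducing to the previous subcase. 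For $T$ a function type (clause 4): either \rn{T-Abs} gives $v = \lambda_\Multn x.e$ (with $\Multn = \Multm$ directly, or $\Multn \Multsub \Multm$ after a \rn{T-Sub} step using contravariance/covariance and multiplicity covariance of the function subtyping rule), or the cast rule gives a function-to-function cast value, with the subtyping bounds $\Multn_2 \Multsub \Multm$, $U_1 \Sub U_1'$, $U_2' \Sub U_2$ again extracted from inverting the function subtyping rule through any trailing \rn{T-Sub}. Clause 5 (pairs) is entirely analogous but simpler: only \rn{T-PairCons} can introduce a pair value, and a cast value at a pair type is excluded by the value grammar (cast-values never have a $\times$ target), so $v = (w_1, w_2)_\Multn$ with $\Multn \Multsub \Multm$ from inverting pair subtyping.

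Clause 2 is the most delicate and I expect it to be the main obstacle. When $T = S$ is a session type, a value of that type can be: a channel endpoint $\Chanc$ (from the name rule, permitted because $\Gamma$ may contain session types); a cast value $w : \GroundS \Cast{p} \DC$ when $S = \DC$ specifically (this is the dedicated value form for the dynamic session type); or a cast value $w : R_1 \Cast{p} R_2$ with $R_1, R_2 \neq \DC$, where \rn{T-Sub} may then relax $R_2$ to $S$, giving $R_2 \Sub S$. The subtlety is disentangling the two cast-value forms $v : \GroundS \Cast{p} \DC$ versus $v : R_1 \Cast{p} R_2$ ($R_1, R_2 \ne \DC$) and tracking exactly when $S$ must equal $\DC$: a cast with target $\DC$ types at $\DC$, so any trailing \rn{T-Sub} must use $\DC \Sub S$, and inverting the \GGV{} subtyping rules (which only add reflexivity $\DC \Sub \DC$ over \GV) forces $S = \DC$; conversely a cast with target $R_2 \neq \DC$ types at $R_2$, and $R_2 \Sub S$ then forces $S \neq \DC$ as well, so the two sub-cases are genuinely disjoint. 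I would also need to double-check that no value of session type arises from the reference rule (it does not—references have type $\DYN$, not a session type) and that the ground-session-type values indeed only pair with $S = \DC$. The bookkeeping here—coordinating the value grammar side-conditions ($S \ne \DC$, $R \ne \DC$, $\un(\GroundT)$) with the possible trailing subsumption step—is the part that requires care; everything else is routine inversion.
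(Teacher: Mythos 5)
Your proposal is correct and follows essentially the same route as the paper, whose entire proof is ``by induction on the derivation of $\Gamma \vdash v : T$''; your write-up simply makes explicit the case analysis on the last rule, the handling of \rn{T-Sub} via inversion of the (syntax-directed) subtyping relation, and the use of the value grammar's side-conditions, all of which is what that one-line proof implicitly relies on.
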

\begin{proof}
  By induction on the derivation on $\Gamma \vdash v : T$.
\end{proof}

\begin{theorem}[Progress for expressions]
  Suppose that $\Gamma \vdash e:T$ and that $\Gamma$ only
  contains channel endpoints and references. Then exactly one of the following cases holds.
\begin{enumerate}
  \item $e$ is a value,
  \item $e \reduces f$ (as an expression),
  \item $e = \context{f}$ and $f$ is one of the \GV{} operations: $\fork f'$,
    $\new$, $\send v \Chanc$, $\recv \Chanc$, $\select l \Chanc$,
    $\casek\, \Chanc\, \ofk \{\casebranch{l_i}{x_i}{ e_i}\}$,  
    $\close \Chanc$, or $\wait \Chanc$,
  \item $e = \context{f}$ and $f$ is a \GGV{} operation:
    \begin{itemize}
    \item $w \colon \GroundT \Cast{p} \DYN$, with $\lin(\GroundT)$,
    \item 
      $a \colon \DYN \Cast{p} \GroundU$,
    \item 
      $(v \colon \GroundT \Cast{p} \DYN) \colon \DYN \Cast{q}
      \GroundU$, with $\un (\GroundT)$,
    \item 
      $(v \colon \GroundS \Cast{p} \DC) \colon \DC \Cast{q} \GroundR$, or
    \item 
      $c \colon \DC \Cast{p} \GroundS$.
  \end{itemize}
\end{enumerate}
\end{theorem}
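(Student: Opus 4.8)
The plan is to establish the four-way dichotomy by induction on the derivation of $\Gamma \vdash e : T$. The subsumption rule \textsc{T-Sub} leaves $e$ unchanged, so there the induction hypothesis yields the conclusion immediately; every other rule is syntax-directed, so we proceed by case analysis on the top constructor of $e$, invoking the canonical forms lemma (Lemma~\ref{lem:canonical-forms}) and the ground-types lemma (Lemma~\ref{lem:gtypes}) to pin down the shape of any subexpression that has already become a value. The hypothesis that $\Gamma$ contains only channel endpoints and references is exactly what licenses Lemma~\ref{lem:canonical-forms}, since endpoints carry session types and references carry $\DYN$. Throughout, a subexpression in evaluation position that is already in case~2, 3, or~4 puts $e$ in the same case, by plugging its decomposition into the surrounding evaluation context; this handles all the ``descend into a subexpression'' steps uniformly.

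The routine cases: a name $z\in\Gamma$ is an endpoint or a reference, hence a value; $()$ and $\lambda_\Multm x.e$ are values; a pair $(e_1,e_2)_\Multm$ recurses into $e_1$ then $e_2$ along the contexts $(E,e_2)_\Multm$ and $(v_1,E)_\Multm$, reaching case~1 exactly when both are values; $\fork{e'}$ and $\new$ are in case~3 with $E=[\,]$. For the other \GV{} elimination forms---application, $\letk$ on pairs, $\sendk$, $\recvk$, $\selectk$, $\casek$, $\closek$, $\waitk$---we recurse into each subexpression in evaluation position; when the principal one is a value, inversion of typing gives it a function, product, or session type, and Lemma~\ref{lem:canonical-forms} then says it is either a bare introduction form (a $\lambda$, a pair, or an endpoint $\Chanc$), so that $e$ is the corresponding \GV{} operation (case~3, or case~2 via the pair $\beta$-rule for $\letk$), or it is a cast value, in which case the matching cast-decomposition rule fires: the wrap rule for application, the communication congruence rules for $\sendk$/$\recvk$/$\selectk$/$\casek$, and $v:\Endpl\Cast{p}\Endpl\reduces v$ (resp.\ $\Endqu$) for $\closek$ (resp.\ $\waitk$). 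Here we use inversion of $\Sub$ and $\lesssim$: a (consistent) subtype of a send/receive/select/case/$\Endpl$/$\Endqu$ type has the same head, so both sides of the cast have the shape the reduction rule demands.

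The substantive case is a cast $e = e_0:T_1\Cast{p}U$, where $T_1\lesssim U$ by the cast typing rule. If $e_0$ is not a value we descend along $E:T_1\Cast{p}U$; otherwise $e_0$ is a value $v$ of type $T_1$ and we analyse $(T_1,U)$ using Lemma~\ref{lem:canonical-forms} and inversion of $\lesssim$. If $T_1=U$ is $\DYN$, $\DC$, or $\unitk$, the cast is an expression redex reducing to $v$ (case~2). If $T_1=\DYN$, then $v$ is $w:\GroundT\Cast{q}\DYN$ with $\un(\GroundT)$ or a reference $\Storea$: for $U$ ground we are in case~4 (the forms $\Storea:\DYN\Cast{q}\GroundU$ and $(w:\GroundT\Cast{q}\DYN):\DYN\Cast{p}\GroundU$), and for $U$ neither $\DYN$ nor ground the factoring rule fires (case~2), the ground type named by Lemma~\ref{lem:gtypes}(1) being $\DC$ whenever $U$ is a session type. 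If $T_1$ and $U$ are both function types, or both session types with neither equal to $\DC$, then $e$ is a value (case~1); if $T_1$ is a product type, the pair-cast rule fires (case~2). Finally, if the target is $\DYN$ (or $\DC$ with a session-type source, including $T_1=\DC$), then $e$ is a value $v:\GroundT\Cast{p}\DYN$ when $T_1$ is already an unrestricted ground type, a case~4 reference-allocating redex when $T_1$ is a linear ground type, a case~4 collapse/collide or bare-endpoint form when the source is already a ground (session) type and the target a ground type, and a factoring redex (case~2) otherwise; Lemma~\ref{lem:gtypes} supplies the required uniqueness of ground forms in each subcase. (The case~4 forms are redexes only at the process level---creating a reference, colliding, or communicating---which is why they are not subsumed under case~2.)

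The step I expect to be the main obstacle is precisely this cast analysis, where one must simultaneously track two orthogonal classifications of the source type---whether it is already in ground form (so no further factoring is needed) and whether it is linear or unrestricted (so a completed cast into $\DYN$ is a value rather than a reference-allocating redex)---and verify in every subcase that the side conditions on the reduction rules ($T\ne\DYN$, $T\ne\GroundT$, $T\sim\GroundT$ on factoring; $\un(\GroundT)$ on the value form; $\GroundT\Sub\GroundU$ versus $\GroundT\notSub\GroundU$ on collapse/collide) are met or refuted. A minor wrinkle is the product cast: Lemma~\ref{lem:canonical-forms} only gives $(w_1,w_2)_\Multn$ with $\Multn\Multsub\Multm$ for a source type $T\tcpair_\Multm U$, and one must check that the pair-cast rule still applies, tracking the multiplicity annotations (or noting that the strict inequality does not arise in a reachable term). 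Mutual exclusivity---the ``exactly one''---is then routine: a value is neither an expression redex nor of the form $\context{f}$ for $f$ a \GV{} or \GGV{} operation; the \GV{} and \GGV{} operations have disjoint syntactic shapes; and the side condition $\lin(\GroundT)$ on the case~4 form $w:\GroundT\Cast{p}\DYN$ is exactly what keeps it disjoint from the value of the same shape.
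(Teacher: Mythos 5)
Your proposal follows essentially the same route as the paper, whose entire proof is ``by induction on expressions, using Canonical forms (Lemma~\ref{lem:canonical-forms})'': you do that induction, discharge subsumption, and use Lemmas~\ref{lem:canonical-forms} and~\ref{lem:gtypes} to classify redexes, which is exactly the intended argument, only spelled out in much more detail. The argument is correct; the only blemishes are cosmetic---e.g.\ the value subcase $v \colon \GroundS \Cast{p} \DC$ (ground-session source cast to $\DC$) is not named explicitly in your final enumeration, and $\beta$-reduction of an applied $\lambda$ belongs to case~2 though your parenthetical only mentions the $\letk$ rule---and the pair-multiplicity wrinkle you flag is a real infelicity of the stated reduction rule that the paper's one-line proof also glosses over.
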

\begin{proof}
  By induction on expressions, using Canonical forms
  (Lemma~\ref{lem:canonical-forms}).
\end{proof}

The notion of \emph{run-time errors} helps us state our type safety
result.  The \emph{subject} of an expression~$e$, denoted by
$\subj(e)$, is~$c$ when $e$ falls into one of the following cases and
undefined in all other cases.
\begin{equation*}
  \send f\Chanc \qquad
  \recv \Chanc \qquad
  \select l\Chanc \qquad
  \case \Chanc \{\casebranch{l_i}{x_i}{f_i}\}_{i\in I}\qquad
  \close \Chanc \qquad
  \wait \Chanc
\end{equation*}

Two expressions~$e$ and $f$ \emph{agree} on a channel with ends in set
$\{c,d\}$ where $c\ne d$, denoted $\agree^{\{c,d\}}\{e,f\}$, a
relation on two two-element sets, in the following cases.
\begin{enumerate}
\item
  $\agree^{\{c,d\}}\{\send vc, \recv d\}$;
\item
  $\agree^{\{c,d\}}\{\select{l_j}{c}, \case d \{\casebranch{l_i}{x_i}{f_i}\}_{i\in I}\}$ and $j\in I$;
\item
  $\agree^{\{c,d\}}\{\close c, \wait d\}$.
\end{enumerate}

A process is an \emph{error} if it is structurally congruent to some
process that contains a subprocess of one of the following forms.
\begin{enumerate}
\item $\proc{\context{ve}}$ and $v$ is not an abstraction;
\item $\proc{\context{\letin{a,b}{v}{e}}}$ and $v$ is not a pair;
\item $\proc{\context e} \PAR \proc{\context[F] f}$ and $\subj(e) =
  \subj(f)$;
\item $(\nu c,d)(\proc{\context e} \PAR \proc{\context[F] f})$
  and $\subj(e)=c$ and $\subj(f)=d$ and not $\agree^{\{c,d\}}\{e,f\}$.
\end{enumerate}
The first two cases are typical of functional languages.
The third case ensures no two threads hold references to the
same channel endpoint.
The fourth case ensures channel endpoints agree at all times: if one
process is ready to send then the other is ready to receive,
and similarly for select and case, close and wait.

For processes, rather than a progress result, we present a type safety
result as our type system does not rule out deadlocks, which are
formed by a series of processes each waiting for the next in a
circular arrangement; these are exactly the deadlocked processes of
\GV.  Our result holds both for \GV\ and \GGV\ alike.  The condition
on $\Gamma$ in the statement is to exclude processes getting stuck due
to a free variable in an application ($xe$) or a pair destruction
($\letin{a,b}{x}{e}$).

\begin{theorem}[Absence of run-time errors]
  Let $\Gamma \vdash P$ where $\Gamma$ does not contain function or
  pair types, and let $P \reduces^* Q$. Then $Q$ is not an error.
\end{theorem}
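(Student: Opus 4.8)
The plan is to reduce the statement to a static fact and then carry out a four-way case analysis. By Preservation for expressions and processes (Theorems~\ref{thm:sr-exp} and~\ref{thm:sr-process}), $\Gamma \vdash Q$ holds with the \emph{same} environment $\Gamma$, so it suffices to prove that no process typed under a $\Gamma$ containing neither function nor pair types can be an error. Being an error is a property of the $\equiv$-equivalence class, and $\equiv$ preserves typing in both directions (Lemma~\ref{lem:preservation-equiv}), so I may assume $Q$ literally contains a subprocess of one of the four listed forms and aim for a contradiction. In every case the first step is to apply Sub-derivation introduction (Lemma~\ref{lem:derivation-intro}) to extract, from the derivation of $\Gamma \vdash Q$, a sub-derivation typing the offending subterm/subprocess; its environment arises from $\Gamma$ by environment splitting, hence still contains no function or pair types. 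This last point is exactly where the hypothesis on $\Gamma$ is spent: it licenses the appeals to Canonical Forms below and, as the accompanying remark notes, it excludes a stuck application $x\,e$ or pair destruction $\letin{a,b}{x}{e}$ whose head is a free variable of function or pair type.

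For the two functional forms, $\proc{\context{v\,e}}$ with $v$ not an abstraction and $\proc{\context{\letin{a,b}{v}{e}}}$ with $v$ not a pair, I would invert the \GGVi{} application (resp.\ pair-destruction) typing rule to conclude that $v$ has a function type $T_1 \to_\Multm T_2$ (resp.\ a pair type $T_1 \times_\Multm T_2$), and then apply Canonical Forms (Lemma~\ref{lem:canonical-forms}): a value of function type is a $\lambda$-abstraction or a cast between function types, and a value of pair type is a pair of values $(w_1,w_2)_\Multn$. The abstraction/pair case directly contradicts the assumed shape of $v$; in the remaining case $v$ is a cast between function types, and applying it to an evaluated argument triggers the wrap reduction, so the configuration is not the stuck one the error form is meant to capture and inversion discards it.

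For the two communication forms the argument uses linearity together with the $\nu$-rule, and it is simplified by the observation that $\subj(e)=c$ forces $e$ to be literally one of the six communication redexes acting on the \emph{bare} endpoint $c$ --- no casts may wrap the endpoint (and likewise for $f$ and $d$). In the third form, $\proc{\context{e}} \PAR \proc{\context[F]{f}}$ with $\subj(e)=\subj(f)=c$, the endpoint $c$ would then occur free in both parallel components; but inversion gives $c$ a session type, which is linear, and environment splitting never shares a linear name between the two sides of a parallel composition --- a contradiction. In the fourth form, $(\nu c,d)(\proc{\context{e}} \PAR \proc{\context[F]{f}})$ with $\subj(e)=c$, $\subj(f)=d$ and $\neg\agree^{\{c,d\}}\{e,f\}$, the $(\nu c,d)$ typing rule gives $c\colon S$ and $d\colon \dual{S}$ for some $S$; inverting the typing of $e$ reveals the head constructor of $S$ (send, receive, selection with a chosen label in the branch set, branching, close, or wait), and since $\dual{S}$ has the complementary head constructor, inverting the typing of $f$ forces $f$ to be the matching operation on $d$ with a compatible label set. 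Hence $\agree^{\{c,d\}}\{e,f\}$ holds, contradicting the hypothesis.

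I expect the communication cases --- particularly the fourth --- to be the main obstacle, not because any single step is deep but because they require the right supporting inversion lemmas: that a session type which is a subtype of a send/receive/choice/close/wait type is itself of that shape, that duality exchanges exactly the complementary constructors, and that the label-set side-conditions line up across the two endpoints (using width subtyping for $\oplus$ and $\with$). The functional cases and the third communication case are routine once Canonical Forms and the standard properties of environment splitting are available. With all four cases discharged, the well-typed $Q$ cannot be an error, which is the claim.
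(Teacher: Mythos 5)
Your proposal is correct and follows essentially the same route as the paper: the paper argues by induction on the length of $P \reduces^* Q$ (the inductive step being exactly the preservation theorems you invoke directly), and its base case shows, as you do, that no well-typed process can be an error, using sub-derivation introduction and inversion of typing, the absence of function/pair types in $\Gamma$ for the two functional clauses, linearity of the subject $c$ for the third clause, and duality of $c,d$ forcing $\agree^{\{c,d\}}\{e,f\}$ for the fourth. Your extra care about the function-cast value in clause~1 and the label-set bookkeeping via width subtyping in clause~4 merely makes explicit details that the paper's proof sketch leaves implicit.
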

\begin{proof} 
  By induction on the length of reduction steps $P \reduces^* Q$.
  For the base case, where $P = Q$, we show $P$ is not an error
  by showing all error processes cannot be well typed.
  
  All cases use Lemma~\ref{lem:derivation-intro} and inversion of the
  typing relation.
  The cases for application and $\letk$ follow from the fact that
  $\Gamma$ does not contain function or pair types.
  The third case follows from the fact that~$c$, being the subject of
  expressions, is of a linear type, hence cannot occur in two distinct
  processes.
  The fourth case follows from the fact that typability implies that
  $c$ and $d$ are of dual types, which in turn implies
  $\agree^{\{c,d\}}(e,f)$.
\end{proof}

\subsection{Blame Safety}
\label{sec:blame-safety}

\begin{figure}[t]
Positive and negative subtyping \hfill \fbox{$T \Sub^+ U \quad T \Sub^- U$}
  \begin{gather*}
    T \Sub^+ \DYN
    \gap
    S \Sub^+ \DC
    \gap
    \DYN \Sub^- T 
    \gap
    \DC \Sub^- S
    \gap
    \nextline
    \unit \Sub^\pm \unit
    \gap
    \frac{
      T' \Sub^\mp T \quad
      U \Sub^\pm U' \quad
      m \Multsub n
    }{
      T \to_m U \Sub^\pm T' \to_n U'
    }
    \gap
    \frac{
      T \Sub^\pm T' \quad
      U \Sub^\pm U' \quad
      m \Multsub n
    }{
      T \times_m U \Sub^\pm T' \times_n U'
    }
    \nextline
    \frac{
      T' \Sub^\mp T \quad
      S \Sub^\pm S'
    }{
      \pl{T}S \Sub^\pm \pl{T'}S' 
    }
    \gap 
    \frac{
      T \Sub^\pm T' \quad
      S \Sub^\pm S'
    }{
      \qu{T}S \Sub^\pm \qu{T'}S' 
    }
    \nextline
    \frac{
       J \subseteq I \quad
       (S_j \Sub^\pm R_j)_{j \in J}
    }{
      \oplus\br{l_i\colon S_i}_{i\in I} \Sub^\pm
      \oplus\br{l_j\colon R_j}_{j\in J}
    }
    \gap
    \frac{
      I \subseteq J \quad
      (S_i \Sub^\pm R_i)_{i \in I}
    }{
      \with\{l_i\colon S_i\}_{i\in I} \Sub^\pm
      \with\{l_j\colon R_j\}_{i\in J}
    }
    \nextline
    \Endpl \Sub^\pm \Endpl
    \gap
    \Endqu \Sub^\pm \Endqu
  \end{gather*}
  \spacer

Naive subtyping \hfill \fbox{$T \Subn U$}
  \begin{gather*}
    T \Subn \DYN
    \gap
    S \Subn \DC
    \nextline
    \unit \Subn \unit
    \gap
    \frac{
      T \Subn T' \quad
      U \Subn U'
    }{
      T \to_m U \Subn T' \to_m U'
    }
    \gap
    \frac{
      T \Subn T' \quad
      U \Subn U'
    }{
      T \times_m U \Sub T' \times_m U'
    }
    \nextline
    \frac{
      T \Subn T' \quad
      S \Subn S'
    }{
      \pl{T}S \Subn \pl{T'}S' 
    }
    \gap 
    \frac{
      T \Subn T' \quad
      S \Subn S'
    }{
      \qu{T}S \Subn \qu{T'}S' 
    }
    \nextline
    \frac{
      (S_i \Subn R_i)_{i \in I}
    }{
      \oplus\br{l_i\colon S_i}_{i\in I} \Subn
      \oplus\br{l_i\colon R_i}_{i\in I}
    }
    \gap
    \frac{
      (S_i \Subn R_i)_{i \in I}
    }{
      \with\br{l_i\colon S_i}_{i\in I} \Subn
      \with\br{l_i\colon R_i}_{i\in I}
    }
    \nextline
    \Endpl \Subn \Endpl
    \gap
    \Endqu \Subn \Endqu
  \end{gather*}
  \spacer

Blame safety \hfill \fbox{$e \SafeFor p$}
  \begin{gather*}
    \frac
    { e \SafeFor p \quad T \Sub^+ U }
    { e \colon T \Cast{p} U \SafeFor p }
  \qquad
    \frac
    { e \SafeFor \dual{p} \quad T \Sub^- U }
    { e \colon T \Cast{p} U \SafeFor \dual{p} }
  \qquad
    \frac
    { e \SafeFor p \quad  q \neq p \quad q \neq \dual{p}}
    { e \colon T \Cast{q} U \SafeFor p}
  \\
    \frac
    { q \neq p \quad q' \neq p}
    { \blameboth{q}{q'}{X} \SafeFor p}
  \quad
    \frac
    { q \neq p }
    { \blamegc{q}{X} \SafeFor p}
  \end{gather*}

  \hcaption{Subtyping and blame safety.}
  \label{fig:subtyping}
\end{figure}

Following Wadler and Findler \shortcite{Wadler-Findler-2009} we introduce three new subtyping
relations: $\Sub^+$, $\Sub^-$, and $\Subn$, called positive, negative,
and naive subtyping---also known as precision---respectively, in Figure~\ref{fig:subtyping}, in addition to the ordinary
subtyping $\Sub$ defined in Figure~\ref{fig:ggv-types}.

A cast from $T$ to $U$ with label $p$ may either return a value
or may raise blame labeled $p$ (called \emph{positive} blame) or
$\dual{p}$ (called \emph{negative} blame).
The original subtyping relation $T \Sub U$ of {\GGVi} characterises when a cast from $T$ to $U$
\emph{never} yields blame; relations $T \Sub^+ U$ and $T \Sub^- U$
characterise when a cast from $T$ to $U$ cannot yield \emph{positive}
or \emph{negative} blame, respectively; and relation $T \Subn U$
characterises when type $T$ is more \emph{precise} (in the sense of being less dynamic) than type $U$.  All
four relations are reflexive and transitive, and subtyping, positive
subtyping, and naive subtyping are antisymmetric.

Wadler and Findler \shortcite{Wadler-Findler-2009} have an additional rule that makes any subtype of a ground type a
subtype of $\DYN$, i.e., ${T \Sub \DYN}$ if ${T \Sub
  \GroundT}$. This rule is not sound in \GGV\ because our collide rule blames both casts:  
\[
\proc{\contextE{(v \colon \GroundT \Cast{p} \DYN) \colon \DYN \Cast{q} \GroundU}} \reduces
       \blameboth{\dual{p}}{q}{(\flv{E} \cup \flv{v})} 
       \quad \text{if $\GroundT \not\Sub \GroundU$}
\]

The four subtyping relations are closely related.  In previous work
\citep{Wadler-Findler-2009,Siek-et-al-2015-coercion} one has that
proper subtyping decomposes into positive and negative subtyping,
which---after reversing the order on negative subtyping---recompose
into naive subtyping.  Here we have three-quarters of the previous result.
\begin{theorem}[3/4 Tangram] \label{thm:tangram} 
  \begin{enumerate}
  \item $T \Sub U$\/ implies $T \Sub^+ U$\/ and $T \Sub^- U$.
  \item $S \Sub R$\/ implies $S \Sub^+ R$\/ and $S \Sub^- R$
  \item $T \Subn U$\/ if and only if $T \Sub^+ U$\/ and $U \Sub^- T$.
  \item $S \Subn R$\/ if and only if $S \Sub^+ R$\/ and $R \Sub^- S$
  \end{enumerate}
\end{theorem}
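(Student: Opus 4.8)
The statement splits into \emph{forward} implications (parts (1) and (2), and the left-to-right direction of parts (3) and (4)) and \emph{backward} implications (the right-to-left direction of parts (3) and (4)). Since the grammars of types and session types are mutually recursive, so are all four subtyping relations, and each group of implications will be established by a single induction ranging over types and session types simultaneously; in total the plan uses three such inductions.

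For parts (1) and (2) I would induct on the derivation of $T \Sub U$ (and, simultaneously, of $S \Sub R$). The base cases $\unit \Sub \unit$, $\DYN \Sub \DYN$, $\DC \Sub \DC$, $\Endpl \Sub \Endpl$, $\Endqu \Sub \Endqu$ are discharged by inspecting the rules for $\Sub^+$ and $\Sub^-$. In every structural case, the induction hypothesis applied to a component relation $A \Sub B$ already yields \emph{both} $A \Sub^+ B$ and $A \Sub^- B$, so whichever polarity is required in a covariant or contravariant position is available, and the multiplicity side conditions $m \Sub n$ transfer unchanged. This part is routine.

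For the left-to-right direction of parts (3) and (4) I would induct on the derivation of $T \Subn U$ (resp.\ $S \Subn R$). The key point is that naive subtyping is covariant in every component and preserves multiplicities and label sets, so at each structural rule the induction hypothesis supplies precisely the premises needed to build both $T \Sub^+ U$ and $U \Sub^- T$: from $T \Subn T'$ we obtain $T \Sub^+ T'$ for the covariant positions of $\Sub^+$ and $T' \Sub^- T$ for the (order-reversed) contravariant positions, and likewise for session-type components. The bases $T \Subn \DYN$, $S \Subn \DC$ and $\unit \Subn \unit$ match $T \Sub^+ \DYN$ / $\DYN \Sub^- T$, $S \Sub^+ \DC$ / $\DC \Sub^- S$ and $\unit \Sub^+ \unit$ / $\unit \Sub^- \unit$ respectively.

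The right-to-left direction of parts (3) and (4) is where the real work is, and I expect it to be the main obstacle. I would prove it by induction on the combined size of $T$ and $U$, doing a case analysis on the last rule of the derivation of $T \Sub^+ U$. If that rule is $T \Sub^+ \DYN$, $S \Sub^+ \DC$, $\unit \Sub^+ \unit$, $\Endpl \Sub^+ \Endpl$ or $\Endqu \Sub^+ \Endqu$, then $T \Subn U$ follows immediately from the corresponding $\Subn$ rule (for the first two, without even using the second hypothesis). Otherwise $T$ and $U$ have a common top constructor; I then invert the hypothesis $U \Sub^- T$, which --- since $U$ is a concrete constructor and $\Sub^-$ has no rule relating a concrete type on the left to an arbitrary type (its only ``dynamic'' rules are $\DYN \Sub^- T$ and $\DC \Sub^- S$) --- must use the matching structural rule, and so delivers the ``other halves'' of the component relations. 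For $\oplus$- and $\&$-types, the two label-set inclusions obtained from $\Sub^+$ and from $\Sub^-$ force the two index sets to coincide; for arrows and products, combining $m \Sub n$ with $n \Sub m$ and antisymmetry of the multiplicity order forces $m = n$. These equalities are exactly the extra rigidity that the $\Subn$ rules demand over $\Sub^+$ and $\Sub^-$. The induction hypothesis then yields $A \Subn B$ for each component, and the relevant $\Subn$ rule concludes $T \Subn U$. The delicate points are checking that the case analysis is exhaustive and that the shapes of the two derivations genuinely align, and keeping the variance bookkeeping straight when reading premises off the order-reversed negative-subtyping hypothesis.
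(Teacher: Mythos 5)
Your proof is correct and takes essentially the same approach as the paper: the paper's entire proof is ``by induction on types,'' and your three inductions (on the $\Sub$ and $\Subn$ derivations for the forward directions, and on combined type size with inversion of $\Sub^-$ for the converse) are just a detailed unfolding of that structural induction. The points you flag as delicate---inversion forcing the structural $\Sub^-$ rule, antisymmetry of the multiplicity order, and the two label-set inclusions collapsing to equality---are exactly the bookkeeping the paper's one-line proof leaves implicit, and they all go through as you describe.
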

\begin{proof}
By induction on types.
\end{proof}
Here the first and second items are an implication, rather than an equivalence
as in the third and fourth items and previous work.  In order to get an
equivalence, we would need to alter subtyping such that $T \Sub \DYN$ for all $T$
and $S \Sub \DC$ for all $S$, which would interfere with our Canonical Forms lemma (Lemma~\ref{lem:canonical-forms}).
However, implication in all four items is sufficient to ensure the
most important result, Corollary~\ref{cor:wtpcbb} below.

The definitions of negative subtyping and naive subtyping have been
changed since the conference version of the paper.  Now, negative
subtyping supports width subtyping and naive does not.  This change is
motivated by the type system for the external language, in particular
the join operation.  (See the discussion on the join in
Section~\ref{sec:GGVe}.)

The following technical result is used in the proof of
Theorem~\ref{thm:preservation-safe-terms}.
\begin{lemma}\label{lem:ground-positive}
  \begin{enumerate}
  \item If $T \neq \DYN$ and $T \sim \GroundT$, then
    $T \Sub^+ \GroundT$.
  \item If $S \neq \DC$ and $S \sim \GroundS$, then
    $S \Sub^+ \GroundS$.
  \end{enumerate}
\end{lemma}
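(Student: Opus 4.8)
The plan is to prove both statements by case analysis on the type in question, exploiting the fact (Lemma~\ref{lem:gtypes}, items~1 and~2) that for each $T \ne \DYN$ there is a unique ground type $\GroundT$ with $\GroundT \sim T$, and similarly for session types. Since the ground type is uniquely determined by the top-level constructor of $T$, the hypotheses $T \ne \DYN$ and $T \sim \GroundT$ pin down the shape of $\GroundT$ relative to $T$: if $T$ is a function type $T_1 \to_\Multm T_2$, then $\GroundT = \DYN \to_\Multm \DYN$; if $T$ is a product type $T_1 \times_\Multm T_2$, then $\GroundT = \DYN \times_\Multm \DYN$; if $T = \unit$ then $\GroundT = \unit$; and if $T = S$ is a session type then $\GroundT = \DC$. (The case $T = \DYN$ is excluded, and $T = \DC$ falls under the session-type case with $\GroundT = \DC$.)

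First I would handle statement~(1). In the cases $T = \unit$ and $T = S$ a session type, we have $\GroundT = \unit$ and $\GroundT = \DC$ respectively, and $\unit \Sub^+ \unit$ is an axiom while $S \Sub^+ \DC$ is an axiom of positive subtyping (see Figure~\ref{fig:subtyping}); so these are immediate. For $T = T_1 \to_\Multm T_2$ with $\GroundT = \DYN \to_\Multm \DYN$, I would apply the positive-subtyping rule for function types: this requires $\DYN \Sub^- T_1$ (contravariant domain) and $T_2 \Sub^+ \DYN$ (covariant range) together with $\Multm \Multsub \Multm$. The first is the axiom $\DYN \Sub^- T$ instantiated at $T_1$, the second is the axiom $T \Sub^+ \DYN$ instantiated at $T_2$, and multiplicity ordering is reflexive. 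The product case $T = T_1 \times_\Multm T_2$ is analogous but covariant in both components, needing $T_1 \Sub^+ \DYN$ and $T_2 \Sub^+ \DYN$, both axioms. Statement~(2) is then essentially a sub-case of this analysis restricted to session types: $S \ne \DC$ and $S \sim \GroundS$ force $\GroundS$ to have the same top constructor as $S$ with all carried (session) types replaced by $\DC$ and all carried value types replaced by $\DYN$ (e.g.\ $S = \pl{T}S'$ gives $\GroundS = \pl{\DYN}\DC$, and $S = \oplus\br{l_i \colon S_i}_{i\in I}$ gives $\GroundS = \oplus\br{l_i \colon \DC}_{i\in I}$); in each case the corresponding positive-subtyping rule reduces the goal to the axioms $T' \Sub^- \DYN$ / $T' \Sub^+ \DYN$ on value components and $S_i \Sub^+ \DC$ on session components, noting that for the send case the domain position is contravariant, so one needs $\DYN \Sub^- T$, which is again an axiom. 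The $\Endpl$ and $\Endqu$ cases are axioms directly.

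This proof is almost entirely routine: the only thing requiring a moment's attention is keeping the variances straight so that one reaches for $T \Sub^+ \DYN$ in covariant positions and $\DYN \Sub^- T$ in contravariant positions (domain of $\to$, value-sent position of $!$), and confirming that the unique ground type produced by Lemma~\ref{lem:gtypes} really does have exactly the shape needed to invoke these axioms at the leaves. The main (minor) obstacle is simply being exhaustive over the type constructors, including the width-subtyping structure of $\oplus$ and $\with$ where the index set $I$ coincides between $S$ and $\GroundS$, so the width side-conditions $J \subseteq I$ / $I \subseteq J$ hold trivially as equalities.
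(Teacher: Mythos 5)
Your proposal is correct and takes essentially the same route as the paper: case analysis on the type, using Lemma~\ref{lem:gtypes} to pin down the unique ground type, then closing each case with the appropriate positive-subtyping rule and the axioms $T \Sub^+ \DYN$, $S \Sub^+ \DC$ in covariant positions and $\DYN \Sub^- T$ in contravariant ones (the paper spells out only the function case and states the rest is similar, so your version is just more exhaustive). The only blemish is the stray phrase ``$T' \Sub^- \DYN$'', which should read $\DYN \Sub^- T'$, as you in fact state correctly in the surrounding sentences.
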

\begin{proof}
  (1) A case analysis on $T$. Lemma~\ref{lem:gtypes} tells us that
  $\GroundT$ is unique. We show the case for functions. Let $T$ be the
  type $U \rightarrow_\Multm V$; we know that $\GroundT$ is
  $\DYN \rightarrow_\Multm \DYN$, that $\DYN \Sub^- U$, and
  $V \Sub^+ \DYN$. Conclude with the positive subtyping rule for
  functions.
  (2) Similar.
\end{proof}

We say that a process $P$ is \emph{safe}\/ for blame label $p$, if all
occurrences of casts involving $p$ or $\dual p$ correspond to
subsumptions in the (positive or negative) blame subtyping
relation. Figure~\ref{fig:subtyping} defines judgments
$e \SafeFor p$ and $P \SafeFor p$, extended homomorphically to all
other forms of expressions and processes. The \SafeForP{} predicate on
well-typed programs is preserved by reduction.

\begin{theorem}[Preservation of safe terms]
\label{thm:preservation-safe-terms}
  If $\Gamma \vdash P$ with $P\SafeFor p$ and $P \reduces Q$,
  then $Q\SafeFor p$.
\end{theorem}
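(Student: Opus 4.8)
The plan is to prove Theorem~\ref{thm:preservation-safe-terms} by rule induction on the reduction $P \reduces Q$, in lockstep with the proof of preservation for processes (Theorem~\ref{thm:sr-process}), but propagating the predicate $e \SafeFor p$ (and its homomorphic extension to processes) in place of the typing judgment. The structural cases — reduction under $\PAR$, under a channel binder $(\nu c,d)$, and under a reference binder $(\nu \Storea)$ — are immediate from the induction hypothesis, and the case closed under structural congruence uses that $\SafeFor p$ is invariant under $\equiv$ (the analogue of Lemma~\ref{lem:preservation-equiv}, trivial here since $\SafeFor p$ is defined componentwise and $\equiv$ only rearranges and rescopes subprocesses). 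The case for closure under evaluation contexts, $\proc{\contextE{e}} \reduces \proc{\contextE{f}}$ with $e \reduces f$, is reduced, via two routine auxiliary facts — (i) $\contextE{e} \SafeFor p$ iff $E \SafeFor p$ and $e \SafeFor p$ (induction on $E$), and (ii) substituting a value that is $\SafeFor p$ into an expression that is $\SafeFor p$ yields one that is $\SafeFor p$ — to checking that every \emph{expression} reduction rule of Figure~\ref{fig:ggv-reductions-expressions} preserves $\SafeFor p$.

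For the expression rules, $\beta$-reduction and pair destruction are handled by the substitution fact. The trivial collapses $v \colon \DYN \Cast{p'}\DYN \reduces v$, $v \colon \DC \Cast{p'}\DC \reduces v$, $v \colon \unitk \Cast{p'}\unitk \reduces v$ follow by inversion: each of the three safety rules for $e \colon T \Cast{q} U$ carries a premise that $e$ be safe for whichever label it concludes about, so dropping a cast is harmless. The load-bearing cases are the wrap/decomposition rules for functions, products, send, receive, select, case, close, and wait. For the function wrap $(v \colon T\to_m U \Cast{p'} T'\to_n U')\,w \reduces (v\,(w \colon T'\Cast{\dual{p'}}T)) \colon U\Cast{p'}U'$ I invert the hypothesis that $v \colon T\to_m U \Cast{p'} T'\to_n U' \SafeFor p$: either $p \notin \{p',\dual{p'}\}$, so all the new casts are again justified by the third safety rule; or $p = p'$ and $T\to_m U \Sub^+ T'\to_n U'$, whence the $\Sub^+$ rule for arrows gives $T' \Sub^- T$ and $U \Sub^+ U'$, exactly justifying $w \colon T'\Cast{\dual{p'}}T \SafeFor \dual p$ and the residual $\cdot \colon U\Cast{p'}U' \SafeFor p$; or $p = \dual{p'}$, dually. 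The other wraps are identical modulo the shape of the structural $\Sub^{\pm}$ rule. Finally the factoring rules $v \colon T \Cast{p'} \DYN \reduces (v \colon T \Cast{p'}\GroundT)\colon\GroundT\Cast{p'}\DYN$ and its three companions: when $p = p'$ they cannot fire on a $\SafeFor p$ term (rule~(1) would need $\DYN \Sub^+ T$ and $\GroundT \Sub^+ \DYN$, forcing $T = \DYN$ or $\GroundT = \DYN$ against the side conditions), so only $p = \dual{p'}$ and $p \notin \{p',\dual{p'}\}$ require work; the former uses Lemma~\ref{lem:ground-positive} ($T \Sub^+ \GroundT$, resp.\ $S \Sub^+ \GroundS$) together with its negative counterpart $\GroundT \Sub^- T$ (resp.\ $\GroundS \Sub^- S$), which is obtained either by the symmetric case analysis on $T$ or from Lemma~\ref{lem:ground-positive} and the Tangram theorem (Theorem~\ref{thm:tangram}) via $T \Subn \GroundT$.

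For the process rules, reference introduction, first access, and the collapse rules on types, on session types, and on communicating endpoints, $(\nu c,d)(\proc{\contextE{c \colon \DC\Cast{p'}\GroundS}}\PAR\proc{\contextF{d\colon\DC\Cast{q'}\GroundR}}) \reduces (\nu c,d)(\proc{\contextE{c}}\PAR\proc{\contextF{d}})$, only move, copy, or delete casts and cell processes and are handled by the same inversions, reading $\SafeFor p$ on a cell $\Storea\mapsto w\colon\GroundT\Cast{p'}\DYN$ as ``$w \SafeFor p$ and $p\neq\dual{p'}$'' and on $\Storea\mapsto\lockedbl{p'}$ as ``$p\neq\dual{p'}$'' (the extension treats a stored or locked cast \emph{to} $\DYN$ with label $p'$ exactly as rules~(1) and~(3) treat $\cdot\colon\GroundT\Cast{p'}\DYN$, and this obligation is exactly what makes reference introduction and garbage collection safety-preserving). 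The interesting rules are the blame-raising ones: the locked-cell error yielding $\blameboth{\dual{p'}}{q'}{X}$, garbage collection yielding $\blamegc{\dual{p'}}{X}$, the two collide rules (on types and on session types) yielding $\blameboth{\dual{p'}}{q'}{X}$, and the communication collide yielding $\blameboth{p'}{q'}{X}$. In each I must show both emitted labels differ from $p$, and this is where the asymmetry of $\Sub^+$ versus $\Sub^-$ is decisive: a label $p'$ on a cast \emph{into} $\DYN$ or $\DC$ can make that cast $\SafeFor p$ only via rule~(1) with $p=p'$ or via rule~(3) (rule~(2) would require $\GroundT \Sub^-\DYN$ or $\GroundS\Sub^-\DC$, impossible for non-dynamic grounds), so $P \SafeFor p$ forces $p\neq\dual{p'}$; symmetrically a label $q'$ on a cast \emph{out of} $\DYN$ or $\DC$ forces $p\neq q'$; and in the communication-collide rule both $c\colon\DC\Cast{p'}\GroundS$ and $d\colon\DC\Cast{q'}\GroundR$ are casts out of $\DC$, giving $p\neq p'$ and $p\neq q'$ directly, which matches the labels appearing positively in the blame term. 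The $X$-components are irrelevant to safety throughout.

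The main obstacle is the bookkeeping of the last two paragraphs rather than any single difficult lemma: for every cast occurring in every left-hand side one must correctly determine whether its blame label can be accounted for positively, negatively, or only trivially under $\SafeFor p$, and then check that the corresponding decomposition of $\Sub^+$/$\Sub^-$ — and, for the factoring rules, Lemma~\ref{lem:ground-positive} together with its negative dual — delivers precisely the safety obligations demanded by the right-hand side. The crux is the uniform observation that casts toward the dynamic types are only ever positive-safe and casts away from them only ever negative-safe, so that every blame term emitted by reduction carries labels that a $\SafeFor p$ configuration could never have rendered positively accountable.
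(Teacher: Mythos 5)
Your overall strategy is the paper's: check every reduction whose contractum involves casts, invert the $\SafeForP$ derivation, and discharge the new casts using the structure of $\Sub^{+}$/$\Sub^{-}$ together with Lemma~\ref{lem:ground-positive}; indeed you are more careful than the paper's sketch in the places it waves off (the reading of cells $\Storea \mapsto w : \GroundT \Cast{p'} \DYN$ and $\Storea \mapsto \lockedbl{p'}$, and the blame-emitting process rules, where your observation that casts \emph{into} $\DYN,\DC$ can only be positively accounted for and casts \emph{out of} them only negatively is exactly what makes the emitted labels differ from $p$). The wrap cases, the collapse/collide cases, and the communication-collide case are all handled correctly.

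There is, however, a concrete error in your treatment of the four factoring rules. You claim that when $p$ coincides with the cast's label $p'$ these rules ``cannot fire on a $\SafeForP$ term.'' That is true only for the two rules that cast \emph{out of} the dynamic types ($v \colon \DYN \Cast{p'} T$ and $v \colon \DC \Cast{p'} S$), where positive safety would require $\DYN \Sub^{+} T$ resp.\ $\DC \Sub^{+} S$, impossible under the side conditions. For the two rules that cast \emph{into} the dynamic types ($v \colon T \Cast{p'} \DYN$ and $v \colon S \Cast{p'} \DC$), positive safety is automatic since $T \Sub^{+} \DYN$ and $S \Sub^{+} \DC$ are axioms, so the case $p = p'$ does arise—and it is precisely the case that needs Lemma~\ref{lem:ground-positive} ($T \Sub^{+} \GroundT$, $S \Sub^{+} \GroundS$) to certify the inner cast of the contractum; conversely, for those two rules the case $p = \dual{p'}$ is the vacuous one (it would need $T \Sub^{-} \DYN$). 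You have assigned the lemma to the wrong subcase and dismissed the subcase where it is actually used; your own later inversion argument for the collide rules (rule~(1) with $p = p'$ \emph{is} possible for a ground-to-$\DYN$ cast) contradicts the dismissal. The repair is immediate with material you already have: for casts into $\DYN,\DC$ the nontrivial case is $p = p'$, discharged by Lemma~\ref{lem:ground-positive}; for casts out of $\DYN,\DC$ the nontrivial case is $p = \dual{p'}$, discharged by the negative counterpart $\GroundT \Sub^{-} T$ (resp.\ $\GroundS \Sub^{-} S$), which, as you note, follows from $T \Subn \GroundT$ and Theorem~\ref{thm:tangram}.
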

\begin{proof}
  It is sufficient to examine all reductions whose contractum involves
  coercions. 
  We start with the reductions in
  Figure~\ref{fig:ggv-reductions-expressions}. The four rules starting
  from the one with reductum $v \colon T \Cast{p} \DYN$ follow from
  Lemma~\ref{lem:ground-positive}.
  Then, the standard function cast is analogous to previous work
  \cite{Wadler-Findler-2009}, and the case for pairs is similar.
  The casts for session types (send, receive, select, case, close, and
  wait) are new; we concentrate on send.
  \begin{equation*}
      \send{v}{(w \colon {!T.S} \Cast p {!T'.S'})} \reduces
      (\send{(v \colon T' \Cast{\dual p} T)}{w}) \colon S \Cast p S'
    \end{equation*}
  By assumption $(w \colon \,!T.S \Cast p \,!T'.S') \SafeFor p$.
  Inversion of the $\keyword{safe}\ \keyword{for}$ relation yields
  $T'\Sub^\mp T$ and $S \Sub^\pm S'$.  Hence
  $(v \colon T' \Cast{\dual p} T) \SafeFor p$ and
  $(\dots) \colon S \Cast p S' \SafeFor p$.
  Finally, all rules in Figure~\ref{fig:ggv-reductions-processes}
  preserve casts.
\end{proof}


A process \emph{$P$ blames label $p$} if $P \equiv \Pi(Q \PAR R)$
where $Q$ is  $\blameboth{p}{q}{X}$, $\blameboth{q}{p}{X}$,
or $\blamegc{p}{X}$, for some $q$ and $X$, and
prefix $\Pi$ of bindings
for channel endpoints and references.

\begin{theorem}[Progress of safe terms]
  If $\Gamma \vdash P$ and $P \SafeFor p$, then
  $P \nreduce Q$ where $Q$ blames~$p$.
\end{theorem}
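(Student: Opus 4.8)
The plan is to obtain this as a short corollary of preservation of safe terms (Theorem~\ref{thm:preservation-safe-terms}), in the style of ``well-typed programs can't be blamed'' \cite{Wadler-Findler-2009}. The one ingredient not yet in place is a routine lemma stating that $\SafeForP$ is preserved by structural congruence: if $P \equiv P'$ then $P \SafeFor p$ iff $P' \SafeFor p$. Since the judgment $P \SafeFor p$ is defined homomorphically over the process constructors, and the rules generating $\equiv$ only reorder, re-scope, or add/remove a trivial $\proc{()}$ subprocess --- operations that leave untouched the casts and blame terms occurring inside --- this follows by an easy induction on the derivation of $P \equiv P'$, exactly as for Lemma~\ref{lem:preservation-equiv}.

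With that in hand, suppose for contradiction that $P \reduces Q$ with $Q$ blaming $p$. By Theorem~\ref{thm:preservation-safe-terms}, $Q \SafeFor p$. By the definition of blaming, $Q \equiv \Pi(R_0 \PAR R)$ where $\Pi$ is a prefix of bindings for channel endpoints and references and $R_0$ is one of $\blameboth{p}{q}{X}$, $\blameboth{q}{p}{X}$, or $\blamegc{p}{X}$. The congruence lemma gives $\Pi(R_0 \PAR R) \SafeFor p$, and because $\SafeForP$ is homomorphic we may invert through the bindings of $\Pi$ and through the parallel composition to conclude $R_0 \SafeFor p$. But the only rules concluding $\blameboth{q}{q'}{X} \SafeFor p$ or $\blamegc{q}{X} \SafeFor p$ carry side conditions $q \neq p$ (and $q' \neq p$), so none of the three possible shapes of $R_0$ is safe for $p$. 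This contradiction shows no such $Q$ exists.

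I would also remark that the full statement --- if $\Gamma \vdash P$, $P \SafeFor p$, and $P \reduces^* Q$ then $Q$ does not blame $p$ --- follows by iterating this argument, using Theorem~\ref{thm:sr-process} and Theorem~\ref{thm:preservation-safe-terms} to carry well-typedness and safety for $p$ along the reduction sequence, and the same congruence-and-inversion step, now applied to $P$ itself, to handle the zero-step case. The proof has no real obstacle: the work is all in Theorem~\ref{thm:preservation-safe-terms}; the only mildly delicate point is the inversion that peels the bindings $\Pi$ and the parallel context off $Q$ to expose $R_0$, which is immediate from the homomorphic definition of $\SafeForP$.
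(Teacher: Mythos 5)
Your proof is correct, but it takes a genuinely different route from the paper. The paper proves this theorem directly, by examining every reduction rule whose contractum contains blame and showing that, in a process safe for $p$, the safety premises on the casts in the redex would force impossible judgments (e.g.\ for the collide rule with redex $(v \colon \GroundT \Cast{p} \DYN) \colon \DYN \Cast{q} \GroundU$, safety would require $\GroundT \Sub^{-} \DYN$ or $\DYN \Sub^{+} \GroundU$, neither of which can hold since $\GroundT, \GroundU \neq \DYN$); hence no reduct can blame $p$. You instead obtain the result as a corollary of Theorem~\ref{thm:preservation-safe-terms}, together with a (correct and genuinely routine) lemma that $\SafeForP$ is invariant under $\equiv$, homomorphic inversion through the binder prefix and parallel context, and the observation that the blame rules of the $\SafeForP$ judgment forbid a blame term mentioning $p$ from being safe for $p$. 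Your derivation is shorter and more modular, and it cleanly extends to the multi-step statement needed for Corollary~\ref{cor:wtpcbb}, as you note. What it costs is transparency about where the real work happens: for your corollary to be non-vacuous, Theorem~\ref{thm:preservation-safe-terms} must hold in full strength for the blame-producing rules of Figure~\ref{fig:ggv-reductions-processes}, and justifying those cases requires exactly the label/subtyping analysis ($\GroundT \not\Sub^{-} \DYN$, $\DYN \not\Sub^{+} \GroundU$, and their session analogues) that the paper chooses to spell out in the progress proof while treating only tersely in the preservation proof. So your argument is a legitimate formal consequence of the stated preservation theorem, whereas the paper's version keeps the decisive analysis explicit at the point where blame could arise.
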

\begin{proof}
  We analyse all reduction rules whose contractum includes blame.
  From Figure~\ref{fig:ggv-reductions-expressions} take the rule with
  reductum
  $(v \colon \GroundT \Cast{p} \DYN) \colon \DYN \Cast{q} \GroundU$.
  It may blame $\dual{p}$ and $q$, 
  if $\GroundT \notSub \GroundU$.
  However, if it is safe for $\dual{p}$ then $\GroundT \Sub^{-} \DYN$,
  which cannot hold (because only $\DYN \Sub^{-} \DYN$ and $\GroundT$ cannot
  be $\DYN$), and similar reasoning applies for $q$ and $\GroundU$.
  The remaining rules are similar.
\end{proof}

We are finally in a position to state the main result of this
section.

\begin{corollary}[Well-typed programs can’t be blamed]
\label{cor:wtpcbb}
Let $P$ be a well-typed process with a subterm of the form $e\colon T
\Cast p U$ containing the only occurrence of~$p$ and~$\dual p$ in $P$. Then:
\begin{itemize}
\item If $T \Sub^+U$ then $P\nreduce^* Q$ where $Q$ blames $p$.
\item If $T \Sub^-U$ then $P\nreduce^* Q$ where $Q$ blames $\dual{p}$.
\item If $T \Sub U$ then $P\nreduce^* Q$ where $Q$ blames $p$ or $\dual{p}$.
\end{itemize}
\end{corollary}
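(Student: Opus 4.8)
The plan is to derive the corollary from the two invariance results for the \SafeForP{} predicate together with the 3/4 Tangram theorem, so that all the real content stays in Theorems~\ref{thm:preservation-safe-terms} and~\ref{thm:tangram}. The pivotal observation is that a process that blames $p$ is never safe for $p$: if $Q$ blames $p$ then $Q \equiv \Pi(Q' \PAR R)$ where $Q'$ is $\blameboth{p}{q}{X}$, $\blameboth{q}{p}{X}$, or $\blamegc{p}{X}$, and none of these is derivable under $\SafeFor p$, since every blame-safety rule for blame processes in Figure~\ref{fig:subtyping} requires the labels it mentions to differ from $p$. Because $\SafeFor p$ is defined homomorphically over the process and expression constructors and is preserved by structural congruence (a routine induction in the style of Lemma~\ref{lem:preservation-equiv}), $Q \SafeFor p$ and ``$Q$ blames $p$'' are incompatible. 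Dually for $\dual p$.

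First I would treat the case $T \Sub^+ U$. I claim $P \SafeFor p$. The distinguished subterm $e \colon T \Cast{p} U$ is safe for $p$ by the first cast rule of Figure~\ref{fig:subtyping}: its premise $T \Sub^+ U$ is the hypothesis, and $e \SafeFor p$ holds because $p$ and $\dual p$ do not occur in $e$ (a trivial induction on $e$: every cast inside $e$ carries a label $q \notin \{p,\dual p\}$, and likewise every blame process inside $e$). Every other cast in $P$ carries a label $q$ with $q \neq p$ and $q \neq \dual p$, hence is safe for $p$ by the third cast rule, and every blame process in $P$ is safe for $p$ for the same reason. Pushing these facts through the homomorphic clauses gives $P \SafeFor p$. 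Now I argue by induction on the length of $P \reduces^* Q$. In the base case $P = Q$, and $P$ does not blame $p$ by the observation above. For the step $P \reduces P' \reduces^* Q$, Theorem~\ref{thm:sr-process} gives $\Gamma \vdash P'$ and Theorem~\ref{thm:preservation-safe-terms} gives $P' \SafeFor p$, so the induction hypothesis applied to $P' \reduces^* Q$ shows $Q$ does not blame $p$. (Equivalently, one may invoke the Progress-of-safe-terms theorem for the one-step non-blame and iterate it with preservation.) This establishes the first bullet.

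The case $T \Sub^- U$ is symmetric: the distinguished cast is safe for $\dual p$ by the second cast rule, whose premise is $T \Sub^- U$ and whose conclusion blames $\dual p$; every other cast and blame process in $P$ mentions only labels distinct from $\dual p$ and from $\dual{\dual p} = p$, so $P \SafeFor \dual p$, and the same induction, now using Theorem~\ref{thm:preservation-safe-terms} for $\dual p$, shows no reachable $Q$ blames $\dual p$. Finally, for $T \Sub U$, items (1) and (2) of Theorem~\ref{thm:tangram} give $T \Sub^+ U$ and $T \Sub^- U$ (for ordinary types and for session types, respectively), so both preceding arguments apply and no reachable $Q$ blames $p$ or $\dual p$.

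I do not expect a genuine obstacle: the mathematical weight is carried entirely by Theorems~\ref{thm:preservation-safe-terms}, \ref{thm:tangram} and the Progress-of-safe-terms result. The only points needing care are bookkeeping ones: checking that the ``only occurrence of $p$ and $\dual p$'' hypothesis really forces $P \SafeFor p$ (resp. $\SafeFor \dual p$) through the homomorphic definition, and confirming that $\SafeFor p$ is stable under the structural congruence used in the definition of ``$P$ blames $p$'', so that the base case of the induction actually closes.
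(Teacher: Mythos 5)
Your argument is correct and follows exactly the route the paper intends: the corollary is meant to fall out of Theorems~\ref{thm:preservation-safe-terms} and the Progress-of-safe-terms result (establishing $P \SafeFor p$ resp.\ $\SafeFor \dual p$ from the ``only occurrence'' hypothesis and the cast rules of Figure~\ref{fig:subtyping}), with the third bullet obtained from the first two via Theorem~\ref{thm:tangram}. The bookkeeping points you flag (homomorphic closure and stability of $\SafeForP$ under $\equiv$) are indeed the only details the paper leaves implicit, and they go through as you describe.
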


For example, the redex
$
  (v \colon \GroundT \Cast{p} \DYN) \colon \DYN \Cast{q} \GroundU
$
may fail and blame $\dual{p}$ and $q$ if $\GroundT \notSub \GroundU$.
And indeed we have that $\GroundT \notSub^{-} \DYN$ and $\DYN \notSub^{+} \GroundU$,
so it is not safe for $\dual{p}$ or $q$.  However, $\GroundT \Sub^{+} \DYN$
and $\DYN \Sub^{-} \GroundU$, and the redex will not blame $p$ or $\dual{q}$.

Wadler and Findler \shortcite{Wadler-Findler-2009} explain how casting
between terms related by naive subtyping always places the blame (if
any) on the less-precisely-typed term or context, as appropriate.

\subsection{Properties of \GGVe{}}

Now we turn our attention to \GGVe{} and prove that cast insertion succeeds for well typed
\GGVe{} expressions and preserves typing and
that the \GGVe{} typing conservatively extends the GV typing.
As we need to relate the judgments of different systems, let $\vdashG$ denote the \GGVe{} typing,
$\vdashC$ denote the \GGVi{} typing, and
$\vdashS$ denote the \GV{} typing.

Proposition~\ref{prop:con-sub} goes back to an observation by Siek and Taha \shortcite{DBLP:conf/ecoop/SiekT07}.
\begin{proposition}[Consistent Subtyping] \label{prop:con-sub}
  \begin{enumerate}
  \item $\ottnt{T_{{\mathrm{1}}}}  \lesssim  \ottnt{T_{{\mathrm{2}}}}$ if and only if
    $\ottnt{T_{{\mathrm{1}}}}  \sim  \ottnt{T'_{{\mathrm{1}}}}$ and $\ottnt{T'_{{\mathrm{1}}}}  \Sub  \ottnt{T_{{\mathrm{2}}}}$ for some $\ottnt{T'_{{\mathrm{1}}}}$.
  \item $\ottnt{T_{{\mathrm{1}}}}  \lesssim  \ottnt{T_{{\mathrm{2}}}}$ if and only if
    $\ottnt{T_{{\mathrm{1}}}}  \Sub  \ottnt{T'_{{\mathrm{2}}}}$ and $\ottnt{T'_{{\mathrm{2}}}}  \sim  \ottnt{T_{{\mathrm{2}}}}$ for some $\ottnt{T'_{{\mathrm{2}}}}$.
  \end{enumerate}
\end{proposition}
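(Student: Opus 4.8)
The plan is to prove Proposition~\ref{prop:con-sub} by structural induction, treating the two statements together since they are mutually symmetric under duality of the underlying relations. I will focus on statement~(1); statement~(2) follows by an entirely analogous argument (or can be derived from~(1) using the symmetry of $\sim$ and the fact that $\lesssim$ decomposes on either side).

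For the ``if'' direction of~(1), I would show that whenever $\ottnt{T_{{\mathrm{1}}}} \sim \ottnt{T'_{{\mathrm{1}}}}$ and $\ottnt{T'_{{\mathrm{1}}}} \Sub \ottnt{T_{{\mathrm{2}}}}$, we can conclude $\ottnt{T_{{\mathrm{1}}}} \lesssim \ottnt{T_{{\mathrm{2}}}}$. The key observation is that consistent subtyping $\lesssim$ is already ``at least as permissive'' as the composition of consistency with subtyping: by inspecting the rules defining $\lesssim$ in Figure~\ref{fig:ggv-types} against those for $\sim$ and $\Sub$ (Figure~\ref{fig:gv-types}, extended in Figure~\ref{fig:ggv-types}), one sees that every congruence rule of $\lesssim$ is compatible with composing a consistency step on the left with a subtyping step. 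I would make this precise by induction on the derivation of $\ottnt{T'_{{\mathrm{1}}}} \Sub \ottnt{T_{{\mathrm{2}}}}$, using the structure of $\ottnt{T_{{\mathrm{1}}}} \sim \ottnt{T'_{{\mathrm{1}}}}$: at each type constructor the components of $\ottnt{T_{{\mathrm{1}}}}$, $\ottnt{T'_{{\mathrm{1}}}}$, $\ottnt{T_{{\mathrm{2}}}}$ line up (with the usual variance flips for function domains and send-payload types, which interchange the roles of the two halves), so the induction hypothesis applies; the base cases where $\ottnt{T_{{\mathrm{1}}}}$ or $\ottnt{T'_{{\mathrm{1}}}}$ is $\DYN$ or $\DC$ are handled directly by the rules $\DYN \lesssim T$, $T \lesssim \DYN$, $\DC \lesssim S$, $S \lesssim \DC$.

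For the ``only if'' direction I would induct on the derivation of $\ottnt{T_{{\mathrm{1}}}} \lesssim \ottnt{T_{{\mathrm{2}}}}$ and, in each case, construct the witness $\ottnt{T'_{{\mathrm{1}}}}$. For the congruence rules (function, product, send, receive, the two choice rules) the induction hypotheses supply witnesses for the components, and I assemble $\ottnt{T'_{{\mathrm{1}}}}$ by applying the same constructor; the variance-flipped premises (e.g.\ $T' \lesssim T$ in the function rule) require pairing up the ``inner'' induction hypothesis of statement~(1) with that of statement~(2), which is precisely why the two statements are proved simultaneously. For the rule $\ottnt{T_{{\mathrm{1}}}} \lesssim \DYN$ I take $\ottnt{T'_{{\mathrm{1}}}} = \ottnt{T_{{\mathrm{1}}}}$, using reflexivity of $\sim$ and the reflexive rule $\DYN \Sub \DYN$ after noting $\ottnt{T_{{\mathrm{1}}}} \Sub \ottnt{T_{{\mathrm{1}}}} \Sub \DYN$ is not available directly---so instead I take $\ottnt{T'_{{\mathrm{1}}}} = \DYN$, using $\ottnt{T_{{\mathrm{1}}}} \sim \DYN$ and $\DYN \Sub \DYN$. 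Dually, for $\DYN \lesssim \ottnt{T_{{\mathrm{2}}}}$ (with $\ottnt{T_{{\mathrm{1}}}} = \DYN$), I take $\ottnt{T'_{{\mathrm{1}}}} = \GroundT$, the unique ground type with $\GroundT \sim \ottnt{T_{{\mathrm{2}}}}$ provided by Lemma~\ref{lem:gtypes}, noting $\DYN \sim \GroundT$ and---here I would need $\GroundT \Sub \ottnt{T_{{\mathrm{2}}}}$, which does \emph{not} hold in general; the cleaner choice is $\ottnt{T'_{{\mathrm{1}}}} = \ottnt{T_{{\mathrm{2}}}}$ itself, using $\DYN \sim \ottnt{T_{{\mathrm{2}}}}$ and reflexivity $\ottnt{T_{{\mathrm{2}}}} \Sub \ottnt{T_{{\mathrm{2}}}}$. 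The same reflexive trick handles $\DC \lesssim \ottnt{T_{{\mathrm{2}}}}$ and $\ottnt{T_{{\mathrm{1}}}} \lesssim \DC$.

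The main obstacle I anticipate is the width-subtyping interaction in the choice rules: in $\oplus\br{l_i : S_i}_{i \in I} \lesssim \oplus\br{l_j : S'_j}_{j \in J}$ the label set shrinks ($J \subseteq I$) while in $\with$ it grows, and I must check that the witness constructed over the common labels still satisfies both the consistency constraint (which ranges over the label set of $\ottnt{T_{{\mathrm{1}}}}$) and the subtyping constraint (which ranges over that of $\ottnt{T_{{\mathrm{2}}}}$). The natural choice of $\ottnt{T'_{{\mathrm{1}}}}$ is to keep $\ottnt{T_{{\mathrm{1}}}}$'s label set and replace each residual $S_i$ for $i \in J$ by the component witness, leaving $S_i$ for $i \in I \setminus J$ unchanged; I would verify that $\oplus\br{l_i:S_i}_{i\in I} \sim \ottnt{T'_{{\mathrm{1}}}}$ holds (componentwise consistency with $S_i$) and $\ottnt{T'_{{\mathrm{1}}}} \Sub \oplus\br{l_j:S'_j}_{j\in J}$ holds (drop down to $J$, then componentwise subtyping). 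The $\with$ case is dual, using statement~(2)'s hypotheses for the components. Beyond this bookkeeping the argument is routine; no deep difficulty is expected, as this is the standard Siek--Taha decomposition adapted to the session-typed setting.
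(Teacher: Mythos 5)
Your proposal is correct and takes essentially the same route as the paper: the only-if direction by induction on the derivation of $T_1 \lesssim T_2$, constructing the witness type and proving the two statements simultaneously so that the contravariant (function-domain and send-payload) premises can draw on the other statement's induction hypothesis, and the if direction by induction on the subtyping derivation with case analysis on the shapes of the types involved — precisely the paper's stated strategy. One minor slip worth fixing: in the $\with$ case of statement (1) the branch types are covariant in $\lesssim$, so the component witnesses come from statement (1)'s own induction hypothesis rather than statement (2)'s; the construction you give for $\oplus$ (keep $T_1$'s label set, replace the residuals on the common labels, and use reflexivity of $\sim$ and $\Sub$ on the rest) carries over unchanged.
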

\begin{proof}
  The left-to-right direction is proved by induction on $\ottnt{T_{{\mathrm{1}}}}  \lesssim  \ottnt{T_{{\mathrm{2}}}}$ and
  the right-to-left is by induction on subtyping with case analysis on $\ottnt{T_{{\mathrm{1}}}}$, $\ottnt{T'_{{\mathrm{2}}}}$, and $\ottnt{T_{{\mathrm{2}}}}$.
\end{proof}

The next lemma clarifies the relation between subtyping, positive and negative subtyping, and consistent subtyping.
\begin{lemma}[Subtyping Hierarchy]\label{lemma:subtyping-hierarchy}
  \begin{enumerate}
  \item   ${} \Sub {} \subseteq {} \Sub^{+}  {} \subseteq {}  \lesssim {} $.
  \item   ${} \Sub {} \subseteq {}    \Sub^{-}     {} \subseteq {}  \lesssim {} $.
  \end{enumerate}
\end{lemma}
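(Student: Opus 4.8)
The plan is to factor each chain into its two constituent inclusions and dispatch them separately. The first inclusions, ${\Sub} \subseteq {\Sub^{+}}$ and ${\Sub} \subseteq {\Sub^{-}}$, are already available: items~1 and~2 of Theorem~\ref{thm:tangram} state exactly that $T \Sub U$ implies $T \Sub^{+} U$ and $T \Sub^{-} U$ (and the same for session types), so nothing new is needed for those halves.

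The remaining work is to show ${\Sub^{+}} \subseteq {\lesssim}$ and ${\Sub^{-}} \subseteq {\lesssim}$. I would prove these two facts \emph{simultaneously} by rule induction on the derivations of $T \Sub^{+} U$ and $T \Sub^{-} U$. The induction must be mutual because the $\Sub^{\pm}$ rules flip polarity in contravariant positions (the $\Sub^{\mp}$ premise in the function and send rules), so the positive case invokes the negative induction hypothesis and vice versa. This is harmless because the consistent subtyping relation $\lesssim$ of Figure~\ref{fig:ggv-types} has precisely the same contravariance in those positions, so the hypotheses line up rule by rule. Concretely: from $T \to_\Multm U \Sub^{+} T' \to_\Multn U'$ we have $T' \Sub^{-} T$, $U \Sub^{+} U'$, and $\Multm \Multsub \Multn$; the induction hypotheses give $T' \lesssim T$ and $U \lesssim U'$, and the $\lesssim$-rule for function types yields $T \to_\Multm U \lesssim T' \to_\Multn U'$. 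The send rule is treated identically, and the product, receive, select, and case rules are covariant and hence even more direct (the side conditions $J \subseteq I$ resp.\ $I \subseteq J$ carry over unchanged). The base cases are immediate: $\unit \Sub^{\pm} \unit$, $\Endpl \Sub^{\pm} \Endpl$, and $\Endqu \Sub^{\pm} \Endqu$ map onto the corresponding $\lesssim$-axioms, while the four dedicated axioms $T \Sub^{+} \DYN$, $S \Sub^{+} \DC$, $\DYN \Sub^{-} T$, and $\DC \Sub^{-} S$ map respectively onto the $\lesssim$-rules $T \lesssim \DYN$, $S \lesssim \DC$, $\DYN \lesssim T$, and $\DC \lesssim S$. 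Splicing these derivations together gives the claimed inclusions, and composing with the tangram theorem gives the full chains.

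There is no genuine obstacle here; the exercise is purely one of bookkeeping. The one point requiring care is to verify, case by case, that every $\Sub^{\pm}$ rule has a structurally matching $\lesssim$ rule (obtained by replacing each $\Sub^{\pm}$ premise with a $\lesssim$ premise), and to track the polarity flips correctly in the simultaneous induction. In particular one should note that the width rules for $\oplus$ and $\with$ present in the negative relation $\Sub^{-}$ also appear in $\lesssim$, so the fact that negative subtyping supports width subtyping (unlike naive subtyping) creates no mismatch.
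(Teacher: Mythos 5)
Your proposal is correct and follows essentially the same route as the paper: the first inclusions are discharged by Theorem~\ref{thm:tangram}, and the second by induction on the $\Sub^{+}$ and $\Sub^{-}$ derivations, with your explicit note that the induction must be simultaneous (because of the polarity flips at contravariant positions) being a sound refinement of the paper's terser "by induction on $T_1 \Sub^{+} T_2$ and $T_1 \Sub^{-} T_2$, respectively."
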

\begin{proof}
  ${} \Sub {} \subseteq {} \Sub^{+} $ and
  ${} \Sub {} \subseteq {}    \Sub^{-}    $ follow from Theorem~\ref{thm:tangram}.
  $ \Sub^{+} {} \subseteq {}  \lesssim {} $ and
  $    \Sub^{-}    {} \subseteq {}  \lesssim {} $ are by induction on
  $ \ottnt{T_{{\mathrm{1}}}}   \Sub^{+}   \ottnt{T_{{\mathrm{2}}}} $ and $ \ottnt{T_{{\mathrm{1}}}}   \Sub^{-}   \ottnt{T_{{\mathrm{2}}}} $, respectively.
\end{proof}

\begin{lemma}[Upper bound and lower bound]\label{lemma:bounds}
  \begin{enumerate}
  \item If $\ottnt{T_{{\mathrm{1}}}}  \vee  \ottnt{T_{{\mathrm{2}}}}  \ottsym{=}  \ottnt{U}$, then    
    $ \ottnt{T_{{\mathrm{1}}}}   \Sub^{-}   \ottnt{U} $ and $ \ottnt{T_{{\mathrm{2}}}}   \Sub^{-}   \ottnt{U} $.
  \item If $\ottnt{T_{{\mathrm{1}}}}  \wedge  \ottnt{T_{{\mathrm{2}}}}  \ottsym{=}  \ottnt{U}$, then    
    $ \ottnt{U}   \Sub^{+}   \ottnt{T_{{\mathrm{1}}}} $ and $ \ottnt{U}   \Sub^{+}   \ottnt{T_{{\mathrm{2}}}} $.
  \end{enumerate}
\end{lemma}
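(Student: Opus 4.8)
The plan is to prove both parts simultaneously by induction on the combined structure of $T_1$ and $T_2$ (equivalently, on $\cardinality{T_1}+\cardinality{T_2}$). A simultaneous induction is forced by the shape of Figure~\ref{fig:join-meet}: the join of two function or send types is built from the \emph{meet} of the contravariant components and the \emph{join} of the covariant ones, and dually for meet, so the induction hypothesis must carry (1) and (2) together for all strictly smaller pairs of types.

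For the base cases ($\unit\vee\unit$, $\Endpl\vee\Endpl$, $\Endqu\vee\Endqu$ and their meet counterparts) the result equals one of the arguments and the claim is just reflexivity of $\Sub^{-}$ (resp.\ $\Sub^{+}$). For the clauses handling $\DYN$ and $\DC$ — $\DYN\vee T=T$, $T\vee\DYN=T$, $\DC\vee S=S$, $S\vee\DC=S$, and the meet analogues — one of the two required facts is an axiom of negative/positive subtyping ($\DYN\Sub^{-}T$, $T\Sub^{+}\DYN$, $\DC\Sub^{-}S$, $S\Sub^{+}\DC$ from Figure~\ref{fig:subtyping}) and the other is reflexivity.

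For the structural cases I would simply check that the side conditions of the relevant $\Sub^{\pm}$ rule are discharged by the induction hypothesis. For functions, $(T\to_{\Multm}U)\vee(T'\to_{\Multn}U')=(T\wedge T')\to_{\Multm\vee\Multn}(U\vee U')$; to get $T\to_{\Multm}U\Sub^{-}(T\wedge T')\to_{\Multm\vee\Multn}(U\vee U')$ the negative function rule asks for $(T\wedge T')\Sub^{+}T$ (part (2) of the IH, the domain being contravariant), $U\Sub^{-}(U\vee U')$ (part (1)), and $\Multm\Multsub\Multm\vee\Multn$, which is immediate from the multiplicity join table. The symmetric obligation for $T'\to_{\Multn}U'$, and the product and send/receive cases, follow the same pattern (send is contravariant only in the transmitted type, which is exactly where the meet appears). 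The meet cases are dual: they invoke part (1) of the IH in the contravariant slots and part (2) in the covariant slots.

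The cases needing slightly more care are internal and external choice, because the result's label set is an \emph{intersection} for $\oplus$ but a \emph{union} for $\with$, and these must be matched against the $J\subseteq I$ / $I\subseteq J$ side conditions of the $\oplus$/$\with$ rules. For $\oplus\br{l_i\colon S_i}_{i\in I}\vee\oplus\br{l_j\colon R_j}_{j\in J}$ the result has label set $K\subseteq I\cap J$ (those $k$ with $S_k\vee R_k$ defined), so $K\subseteq I$ holds and the negative $\oplus$ rule only needs $S_k\Sub^{-}S_k\vee R_k$ for $k\in K$, which is the IH; likewise $K\subseteq J$ and $R_k\Sub^{-}S_k\vee R_k$ for the second operand. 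For $\with$ the result has label set $I\cup J$, so $I\subseteq I\cup J$ holds, and the negative $\with$ rule needs, for each $i\in I$, $S_i\Sub^{-}(\text{residual in }U)$: this is $S_i\Sub^{-}S_i$ by reflexivity when $i\in I\setminus J$ and $S_i\Sub^{-}S_i\vee R_i$ by the IH when $i\in I\cap J$ (symmetrically for the second operand), and the dual obligations for meet use $\Sub^{+}$. The well-formedness side condition that join/meet is undefined when the result would be $\oplus\br{}$ or $\with\br{}$ is exactly what guarantees $U$ is a legal type whenever $T_1\vee T_2=U$ holds, so nothing extra is required. The only real difficulty is bookkeeping: keeping straight which of (1) and (2) is invoked in each contravariant versus covariant position, and lining up the intersection/union behaviour of choice types with the inclusion side conditions of the $\oplus$ and $\with$ rules.
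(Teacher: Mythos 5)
Your proposal is correct and follows essentially the same route as the paper, which proves both items by a simultaneous induction on the derivations of $\ottnt{T_{{\mathrm{1}}}} \vee \ottnt{T_{{\mathrm{2}}}} = \ottnt{U}$ and $\ottnt{T_{{\mathrm{1}}}} \wedge \ottnt{T_{{\mathrm{2}}}} = \ottnt{U}$ (equivalent to your measure on the pair of types, since join and meet are defined by mutual structural recursion). Your case analysis—axioms plus reflexivity for the $\DYN$/$\DC$ clauses, swapping between parts (1) and (2) in contravariant positions, and matching the intersection/union of labels against the $J \subseteq I$ / $I \subseteq J$ side conditions—is exactly the bookkeeping the paper leaves implicit.
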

\begin{proof}
  By simultaneous induction on $\ottnt{T_{{\mathrm{1}}}}  \vee  \ottnt{T_{{\mathrm{2}}}}  \ottsym{=}  \ottnt{U}$ (for the first item) and
  $\ottnt{T_{{\mathrm{1}}}}  \wedge  \ottnt{T_{{\mathrm{2}}}}  \ottsym{=}  \ottnt{U}$ (for the second item).
\end{proof}

\begin{lemma}[Least upper bound and greatest lower bound]\label{lemma:polarised-subtyping-lub-glb}
  \begin{enumerate}
  \item If $ \ottnt{T_{{\mathrm{1}}}}   \Sub^{-}   \ottnt{U} $ and $ \ottnt{T_{{\mathrm{2}}}}   \Sub^{-}   \ottnt{U} $, then
    there exists some $U'$ such that $\ottnt{T_{{\mathrm{1}}}}  \vee  \ottnt{T_{{\mathrm{2}}}}  \ottsym{=}  \ottnt{U'}$ and $ \ottnt{U'}   \Sub^{-}   \ottnt{U} $.
  \item If $ \ottnt{U}   \Sub^{+}   \ottnt{T_{{\mathrm{1}}}} $ and $ \ottnt{U}   \Sub^{+}   \ottnt{T_{{\mathrm{2}}}} $, then
    there exists some $U'$ such that $\ottnt{T_{{\mathrm{1}}}}  \wedge  \ottnt{T_{{\mathrm{2}}}}  \ottsym{=}  \ottnt{U'}$ and $ \ottnt{U}   \Sub^{+}   \ottnt{U'} $.
  \end{enumerate}
\end{lemma}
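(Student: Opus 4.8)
The plan is to prove the two statements \emph{simultaneously} by induction, because the contravariant slots of function and send types flip the polarity of the subtyping relation: a join in a contravariant position is witnessed by a meet, and a meet in a contravariant position by a join, so neither item can be carried alone. I would induct on the size of the common bound — the type $U$ that occurs in both hypotheses — observing that every recursive appeal replaces $U$ by one of its proper sub-components, so the measure strictly decreases even when item~(1) invokes item~(2) or conversely. Throughout, the payoff combines with Lemma~\ref{lemma:bounds}: once we know $T_1\vee T_2$ exists and is $\Sub^-U$ for every upper bound $U$, it is genuinely the least upper bound for $\Sub^-$, and dually for $\wedge$ and $\Sub^+$.

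For item~(1), suppose $T_1 \Sub^- U$ and $T_2 \Sub^- U$. First I would dispose of the degenerate cases, in the order $T_1=\DYN$, $T_2=\DYN$, $T_1=\DC$, $T_2=\DC$. If $T_1=\DYN$ then $T_1\vee T_2 = T_2$ by definition of join, and $U':=T_2$ works since $T_2\Sub^- U$ is the second hypothesis; if $T_1=\DC$ (and $T_2\neq\DYN$) then $\DC\Sub^- U$ forces $U$, hence $T_2$, to be a session type, so $\DC\vee T_2 = T_2$ is again defined and $\Sub^- U$; the $T_2$ cases are symmetric. (This also absorbs the sub-cases $U=\DYN$ and $U=\DC$, since $T_i\Sub^-\DYN$ forces $T_i=\DYN$ and $T_i\Sub^-\DC$ forces $T_i=\DC$.) Otherwise neither $T_1$ nor $T_2$ is dynamic, and, $\Sub^-$ being syntax-directed, inversion of $T_1\Sub^- U$ and $T_2\Sub^- U$ shows that $T_1$, $T_2$ and $U$ all carry the same top-level constructor; hence $(T_1,T_2)$ falls into exactly one defined clause of $\vee$. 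I then recurse into the components: in covariant positions (ranges, residuals of communication, the components of a product, the carried type of a receive) I apply item~(1) to the $\Sub^-$ premises delivered by inversion; in contravariant positions (domains, the carried type of a send) I apply item~(2) to the $\Sub^+$ premises; for the multiplicity annotation I use that $m_1\vee m_2$ is the least upper bound of $\Multsub$ on $\{\un,\lin\}$. The induction hypotheses say each component join/meet is defined and is bounded in the appropriate direction by the matching component of $U$, which is precisely the list of premises the $\Sub^-$ formation rule for the shared constructor requires in order to conclude $U'=T_1\vee T_2\Sub^- U$. Item~(2) is the mirror image: dispatch $T_i\in\{\DYN,\DC\}$ via $T_1\wedge T_2 = T_2$ (or $T_1$) and the hypothesis, then peel off the common top constructor — forced this time by inversion of $U\Sub^+ T_i$ — and recurse with item~(1) in contravariant slots, item~(2) in covariant slots, and the fact that $m_1\wedge m_2$ is the greatest lower bound.

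The step I expect to be the main obstacle is the treatment of the choice types $\oplus\br{\dots}$ and $\with\br{\dots}$. Their join and meet are asymmetric — the $\oplus$-join intersects label sets while the $\oplus$-meet unions them, and $\with$ is the other way around — and, worse, certain clauses silently drop a label whenever the componentwise operation on it is undefined. So for those cases I must argue separately that (a) the resulting label set is non-empty, hence the operation is defined and yields a well-formed choice type rather than the ill-formed $\oplus\br{}$ or $\with\br{}$: this holds because the (non-empty) label set of the common bound $U$ is, by the width conditions of $\Sub^-$, contained in the set of labels on which the componentwise join/meet succeeds — and that inclusion itself comes from the induction hypotheses; and (b) the width inclusions ($J\subseteq I$ for $\oplus$, $I\subseteq J$ for $\with$) propagate through so that the resulting choice type still stands in the required relation to $U$. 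This is exactly the place where the revision flagged in the text — negative subtyping carrying width subtyping, unlike naive subtyping — matters: the inversion of $\Sub^-$ on a choice type must supply the width premise, not merely the pointwise premises that $\Subn$ would give.
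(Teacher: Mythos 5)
Your proposal is correct and matches the paper's proof, which is exactly a simultaneous induction establishing the two items together (the paper phrases it as induction on the derivations of $T_1 \Sub^- U$ and $U \Sub^+ T_1$, which amounts to the same structural recursion as your induction on the common bound $U$, with the same polarity flip in contravariant positions). Your filled-in details—dispatching the $\DYN$/$\DC$ cases via $\DYN \vee T = T$ and $\DC \vee S = S$, using the multiplicity lattice, and the careful bookkeeping for $\oplus$ and $\with$ (definedness of the componentwise joins on common labels via the induction hypotheses, non-emptiness of the resulting label set, and propagation of the width inclusions)—are precisely what the paper's one-line proof leaves implicit.
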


\begin{proof}
  The two items are simultaneously proved by induction on $ \ottnt{T_{{\mathrm{1}}}}   \Sub^{-}   \ottnt{U} $ and
  $ \ottnt{U}   \Sub^{+}   \ottnt{T_{{\mathrm{1}}}} $.
\end{proof}

Theorem~\ref{thm:CI-and-typing} states that
cast insertion succeeds for well typed external language
and preserves typing. A few lemmas are required in preparation. 

\begin{lemma} \label{lem:join-upper-bound}
  If $\ottnt{T_{{\mathrm{1}}}}  \vee  \ottnt{T_{{\mathrm{2}}}}  \ottsym{=}  \ottnt{U}$, then $\ottnt{T_{{\mathrm{1}}}}  \lesssim  \ottnt{U}$ and $\ottnt{T_{{\mathrm{2}}}}  \lesssim  \ottnt{U}$.
\end{lemma}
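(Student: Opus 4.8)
The plan is to reduce the statement directly to two results already established just above: Lemma~\ref{lemma:bounds} (Upper bound and lower bound) and Lemma~\ref{lemma:subtyping-hierarchy} (Subtyping Hierarchy). The key observation is that consistent subtyping $\lesssim$ is a superset of negative subtyping $\Sub^-$, and the join $T_1 \vee T_2$ is an upper bound with respect to $\Sub^-$ (rather than with respect to the ordinary $\Sub$, which it need not be, given the clauses for $\DYN$ and $\DC$).

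Concretely, suppose $T_1 \vee T_2 = U$. First I would invoke Lemma~\ref{lemma:bounds}(1) to obtain $T_1 \Sub^- U$ and $T_2 \Sub^- U$. Then I would invoke the inclusion $\Sub^- {} \subseteq {} \lesssim$ from Lemma~\ref{lemma:subtyping-hierarchy}(2) to conclude $T_1 \lesssim U$ and $T_2 \lesssim U$, which is exactly the claim. No induction is needed here, since the work has been front-loaded into Lemmas~\ref{lemma:bounds} and~\ref{lemma:subtyping-hierarchy}; this lemma is just the composition of the two.

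There is no real obstacle: the content is entirely in Lemma~\ref{lemma:bounds} (which is proved by simultaneous induction on the derivations of $\vee$ and $\wedge$, handling the covariant, contravariant, and choice cases, plus the $\DYN$/$\DC$ clauses where $\DYN \vee T = T$ and indeed $T \Sub^- \DYN$ fails but $\DYN \Sub^- T$\ldots wait, more precisely in the clause $\DYN \vee T = T$ one needs $\DYN \Sub^- T$ and $T \Sub^- T$, both of which hold) and in Lemma~\ref{lemma:subtyping-hierarchy} (whose nontrivial half, $\Sub^- {} \subseteq {} \lesssim$, is a routine induction on $\Sub^-$ derivations). If anything merits a second look, it is only making sure the statement of Lemma~\ref{lemma:bounds} has been phrased for $\vee$ with the \emph{same} orientation ($T_i \Sub^- U$, not $U \Sub^- T_i$) that the present lemma needs; given the way $\vee$ is designed to be a least upper bound for $\Sub^-$, this matches, so the two-line argument goes through.

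\begin{proof}
  By Lemma~\ref{lemma:bounds}, $\ottnt{T_{{\mathrm{1}}}} \Sub^{-} \ottnt{U}$ and $\ottnt{T_{{\mathrm{2}}}} \Sub^{-} \ottnt{U}$.
  By Lemma~\ref{lemma:subtyping-hierarchy}, ${} \Sub^{-} {} \subseteq {} \lesssim {}$, so $\ottnt{T_{{\mathrm{1}}}} \lesssim \ottnt{U}$ and $\ottnt{T_{{\mathrm{2}}}} \lesssim \ottnt{U}$.
\end{proof}
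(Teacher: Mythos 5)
Your proof is correct and follows essentially the same route as the paper, which also derives the result immediately by composing Lemma~\ref{lemma:bounds}~(1) with the inclusion of $\Sub^{-}$ in $\lesssim$ from Lemma~\ref{lemma:subtyping-hierarchy}. (You in fact cite the right item, \ref{lemma:subtyping-hierarchy}~(2), for the $\Sub^{-}\subseteq{\lesssim}$ inclusion, whereas the paper's citation of item~(1) there appears to be a slip.)
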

\begin{proof}
  Immediate from Lemmas~\ref{lemma:subtyping-hierarchy}~(1) and~\ref{lemma:bounds}~(1).
\end{proof}

\begin{lemma} \label{lem:matching-con}
  If $ \matching{  \ottnt{T}  }{  \ottnt{U}  } $, then $\ottnt{T}  \lesssim  \ottnt{U}$. 
\end{lemma}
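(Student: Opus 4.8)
The plan is a straightforward case analysis on the clause of Figure~\ref{fig:matching} that justifies $\matching{T}{U}$. The clauses fall into three groups, and in each group one rule of consistent subtyping — or simply its reflexivity, established in Section~\ref{sec:GGV-types} — closes the goal. I will handle the groups in turn.

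First are the clauses in which $U$ is syntactically $T$ itself: $T \to_m U \matchingsym T \to_m U$, $T \times_m U \matchingsym T \times_m U$, $\pl{T}S \matchingsym \pl{T}S$, and $\qu{T}S \matchingsym \qu{T}S$. Here $T \lesssim U$ is immediate by reflexivity of $\lesssim$. Second are the clauses whose left-hand side is $\DYN$ or $\DC$, namely $\DYN \matchingsym \DYN \to_\lin \DYN$, $\DYN \matchingsym \DYN \times_\lin \DYN$, $\DYN \matchingsym \pl{\DYN}\DC$, $\DYN \matchingsym \qu{\DYN}\DC$, $\DYN \matchingsym \oplus\br{l_j\colon \DC}$, $\DYN \matchingsym \with\br{l_j\colon \DC}_{j\in J}$, together with the four analogues obtained by replacing $\DYN$ with $\DC$. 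When the left-hand side is $\DYN$, we get $\DYN \lesssim U$ directly from the rule $\DYN \lesssim T$; when it is $\DC$, the matched $U$ is always a session type, so $\DC \lesssim U$ follows from the rule $\DC \lesssim S$.

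Third are the two clauses that perform a genuine width adjustment on a choice type. For $\oplus\br{l_i\colon S_i}_{i\in I} \matchingsym \oplus\br{l_j\colon S_j}$ with $j \in I$, I apply the consistent-subtyping rule for $\oplus$ with the singleton target index set $\{j\} \subseteq I$, discharging the single common-label premise by reflexivity $S_j \lesssim S_j$. For $\with\br{l_i\colon S_i}_{i\in I} \matchingsym \with\br{l_i\colon S_i}_{i\in I} \cup \br{l_j\colon S_j}_{j\in J}$ with $I \cap J = \emptyset$, the target index set is $I \cup J \supseteq I$, so I apply the consistent-subtyping rule for $\with$ with $I \subseteq I \cup J$, again using $S_i \lesssim S_i$ for each $i \in I$. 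These three groups exhaust Figure~\ref{fig:matching}.

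There is essentially no obstacle here: the argument uses nothing beyond the definitions of matching and consistent subtyping plus reflexivity of $\lesssim$. The only points requiring any care are the index-set side conditions of the two choice rules — for $\oplus$ the consistent-subtyping rule contracts the label set ($J \subseteq I$), which matches the contraction built into the $\oplus$ clause of matching, whereas for $\with$ the rule enlarges the label set ($I \subseteq J$), which matches the enlargement built into the $\with$ clause — and in both cases the required inclusion holds by construction of the matching clause itself.
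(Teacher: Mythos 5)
Your proposal is correct and follows exactly the paper's approach: the paper proves this lemma by case analysis on $\matching{T}{U}$, which is precisely your three-group case split, discharged by reflexivity of $\lesssim$, the rules $\DYN \lesssim T$ and $\DC \lesssim S$, and the width rules for $\oplus$ and $\with$ with the index-set inclusions supplied by the matching clauses themselves.
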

\begin{proof}
  By case analysis on $ \matching{  \ottnt{T}  }{  \ottnt{U}  } $.
\end{proof}

\begin{theorem}[Cast insertion succeeds and preserves typing] \label{thm:CI-and-typing}
  If $\Gamma  \vdashG  \mathbb{e}  \ottsym{:}  \ottnt{T}$, then
  there exists some $\ottnt{f}$ such that
  $\Gamma  \vdash  \mathbb{e}  \ciarrow  \ottnt{f}  \ottsym{:}  \ottnt{T}$ and $\Gamma  \vdashC  \ottnt{f}  \ottsym{:}  \ottnt{T}$.
\end{theorem}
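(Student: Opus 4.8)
The plan is to prove both conclusions simultaneously by induction on the derivation of $\Gamma \vdashG \mathbb{e} : T$. Since the \GGVe{} type system of Figure~\ref{fig:surface-typing} is syntax-directed, exactly one typing rule applies to $\mathbb{e}$, and for that rule there is a companion cast-insertion rule in Figure~\ref{fig:CI-expr} with the same typing premises on the sub-expressions, the same side conditions (matching, consistency, join), and the same conclusion type $T$. So I first apply the inductive hypothesis to each typing premise $\Gamma_i \vdashG \mathbb{e}_i : T_i$, obtaining translations $f_i$ with $\Gamma_i \vdash \mathbb{e}_i \ciarrow f_i : T_i$ \emph{and} $\Gamma_i \vdashC f_i : T_i$; then I instantiate the companion cast-insertion rule with these (choosing fresh positive blame labels), which immediately yields $f$ together with a derivation of $\Gamma \vdash \mathbb{e} \ciarrow f : T$, the environment splittings being shared verbatim with the external derivation. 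What remains in each case is to check $\Gamma \vdashC f : T$.

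The single piece of glue used throughout is the following fact about the ``cast-only-if-needed'' notation: if $\Gamma \vdashC g : T'$ and $T' \lesssim U$, then $\Gamma \vdashC (g : T' \Cast{p}_? U) : U$. Indeed, if $T' \Sub U$ the coercion is just $g$ and subsumption in \GGVi{} applies; otherwise it is the genuine cast $g : T' \Cast{p} U$, which is well typed by the \GGVi{} cast rule precisely because $T' \lesssim U$. Hence for each case it suffices to discharge, for every inserted cast, a consistent-subtyping obligation between its source and target type, and then to observe that the coerced operands have exactly the shapes demanded by the corresponding \GV{}/\GGVi{} elimination rule, whose conclusion is $T$. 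For the introduction rules (\textsc{CI-Name}, \textsc{CI-Unit}, \textsc{CI-Abs}, \textsc{CI-PairCons}, \textsc{CI-New}) no casts are inserted and the \GGVi{} typing rule coincides with the \GV{} one, so these are immediate from the inductive hypotheses. For the elimination rules the obligations are of three kinds, all already available: (i) a cast of the principal sub-expression against a type extracted by matching, e.g.\ $e : T_1 \Cast{p}_? T_{11} \rightarrow_m T_{12}$ in \textsc{CI-App}, $f : T_2 \Cast{p}_? {\pl{T_3}S}$ in \textsc{CI-Send}, or the analogous casts in \textsc{CI-Receive}, \textsc{CI-Select}, \textsc{CI-Case}, discharged by Lemma~\ref{lem:matching-con}; (ii) a cast of an operand against a type appearing in a premise of the \GGVe{} rule, such as $f : T_2 \Cast{p}_? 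T_{11}$ in \textsc{CI-App} (premise $T_2 \lesssim T_{11}$), $e : T_1 \Cast{p}_? T_3$ in \textsc{CI-Send} (premise $T_1 \lesssim T_3$), or the casts against $\unit$, $\Endpl$, $\Endqu$ in \textsc{CI-Fork}, \textsc{CI-Close}, \textsc{CI-Wait}, where the required consistency is a premise; and (iii) the branch casts $f_j : U_j \Cast{p}_? U$ in \textsc{CI-Case}, for which $U_j \lesssim U$ follows from Lemma~\ref{lem:join-upper-bound}.

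The case I expect to require the most care is \textsc{CI-Case}: one must check that after coercing the scrutinee through matching its type is literally the external-choice type $\with\{l_j \colon R_j\}_{j\in J}$, so that the \GGVi{} case rule applies with label set $J$; that every translated branch $\Delta, x_j \colon R_j \vdashC f_j : U_j$ is then coerced to the common type $U$ using $U_j \lesssim U$, which holds because $U$ is the join $\bigvee_{j\in J} U_j$ and Lemma~\ref{lem:join-upper-bound} applies (iterated over the finite branch set); and that this join is defined, which is a premise of \textsc{TA-Case}. Beyond this bookkeeping, every case is a mechanical rule-by-rule check against Lemmas~\ref{lem:matching-con} and~\ref{lem:join-upper-bound} and the helper fact about $\Cast{p}_?$, so I do not anticipate a genuine obstacle.
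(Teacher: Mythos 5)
Your proposal is correct and follows essentially the same route as the paper: rule induction on $\Gamma \vdashG \mathbb{e} : T$, applying the inductive hypothesis to the premises, discharging each inserted cast's consistent-subtyping obligation via Lemma~\ref{lem:matching-con} for matching, the rule's own $\lesssim$ premises for operands, and Lemma~\ref{lem:join-upper-bound} for the branch casts in the $\casek$ case. Your explicit ``glue'' fact about $\Cast{p}_?$ (genuine cast typed by the cast rule versus identity typed by subsumption) is exactly the assumption the paper makes when it remarks that the cast rule may be replaced by subsumption when the optimised cast collapses.
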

\begin{proof}
  By rules induction on the derivation of $\Gamma  \vdashG  \mathbb{e}  \ottsym{:}  \ottnt{T}$.
  We show main cases below.
  \begin{description}
  \item[Case] application rule: We are given
    \begin{align*}
      &\Gamma  \ottsym{=}   \Gamma_{{\mathrm{1}}}  \circ  \Gamma_{{\mathrm{2}}}  \gap
      \mathbb{e}  \ottsym{=}  \mathbb{e}_{{\mathrm{1}}} \, \mathbb{e}_{{\mathrm{2}}} \gap
      \ottnt{T}  \ottsym{=}  \ottnt{T_{{\mathrm{12}}}} \\
      &\Gamma_{{\mathrm{1}}}  \vdash  \mathbb{e}_{{\mathrm{1}}}  \ottsym{:}  \ottnt{T_{{\mathrm{1}}}} \gap
      \Gamma_{{\mathrm{2}}}  \vdash  \mathbb{e}_{{\mathrm{2}}}  \ottsym{:}  \ottnt{T_{{\mathrm{2}}}} \gap
       \matching{  \ottnt{T_{{\mathrm{1}}}}  }{   \ottnt{T_{{\mathrm{11}}}}   \rightarrow _{ \ottnt{m} }  \ottnt{T_{{\mathrm{12}}}}   }  \gap
      \ottnt{T_{{\mathrm{2}}}}  \lesssim  \ottnt{T_{{\mathrm{11}}}}.
    \end{align*}
    %
    By $\Gamma_{{\mathrm{1}}}  \vdash  \mathbb{e}_{{\mathrm{1}}}  \ottsym{:}  \ottnt{T_{{\mathrm{1}}}}$ and the IH,
    $\Gamma_{{\mathrm{1}}}  \vdash  \mathbb{e}_{{\mathrm{1}}}  \ciarrow  \ottnt{f_{{\mathrm{1}}}}  \ottsym{:}  \ottnt{T_{{\mathrm{1}}}}$ and $\Gamma_{{\mathrm{1}}}  \vdash  \ottnt{f_{{\mathrm{1}}}}  \ottsym{:}  \ottnt{T_{{\mathrm{1}}}}$
    for some $\ottnt{f_{{\mathrm{1}}}}$.
    By $\Gamma_{{\mathrm{2}}}  \vdash  \mathbb{e}_{{\mathrm{2}}}  \ottsym{:}  \ottnt{T_{{\mathrm{2}}}}$ and the IH,
    $\Gamma_{{\mathrm{2}}}  \vdash  \mathbb{e}_{{\mathrm{2}}}  \ciarrow  \ottnt{f_{{\mathrm{2}}}}  \ottsym{:}  \ottnt{T_{{\mathrm{2}}}}$ and $\Gamma_{{\mathrm{2}}}  \vdash  \ottnt{f_{{\mathrm{2}}}}  \ottsym{:}  \ottnt{T_{{\mathrm{2}}}}$
    for some $\ottnt{f_{{\mathrm{2}}}}$.
    Let \[ \ottnt{f}  \ottsym{=}  \ottsym{(}   \ottnt{f_{{\mathrm{1}}}}  :  \ottnt{T_{{\mathrm{1}}}}  \Cast{ \ottnt{p} }_?   \ottnt{T_{{\mathrm{11}}}}   \rightarrow _{ \ottnt{m} }  \ottnt{T_{{\mathrm{12}}}}    \ottsym{)} \, \ottsym{(}   \ottnt{f_{{\mathrm{2}}}}  :  \ottnt{T_{{\mathrm{2}}}}  \Cast{ \ottnt{p} }_?  \ottnt{T_{{\mathrm{11}}}}   \ottsym{)}. \]
    By the application rule,      
    $\Gamma  \vdash  \mathbb{e}  \ciarrow  \ottnt{f}  \ottsym{:}  \ottnt{T}$.

    Let us assume $ \ottnt{f_{{\mathrm{1}}}}  :  \ottnt{T_{{\mathrm{1}}}}  \Cast{ \ottnt{p} }_?   \ottnt{T_{{\mathrm{11}}}}   \rightarrow _{ \ottnt{m} }  \ottnt{T_{{\mathrm{12}}}}  $
    equals $ \ottnt{f_{{\mathrm{1}}}}  :  \ottnt{T_{{\mathrm{1}}}}  \Cast{ \ottnt{p} }   \ottnt{T_{{\mathrm{11}}}}   \rightarrow _{ \ottnt{m} }  \ottnt{T_{{\mathrm{12}}}}  $.
    (If $ \ottnt{f_{{\mathrm{1}}}}  :  \ottnt{T_{{\mathrm{1}}}}  \Cast{ \ottnt{p} }_?   \ottnt{T_{{\mathrm{11}}}}   \rightarrow _{ \ottnt{m} }  \ottnt{T_{{\mathrm{12}}}}  $ equals $\ottnt{f_{{\mathrm{1}}}}$,
    we could replace the cast rule with the subsumption rule
    in what follows.)
    We also take similar assumptions in other cases.

    By $ \matching{  \ottnt{T_{{\mathrm{1}}}}  }{   \ottnt{T_{{\mathrm{11}}}}   \rightarrow _{ \ottnt{m} }  \ottnt{T_{{\mathrm{12}}}}   } $ and Lemma~\ref{lem:matching-con},
    $\ottnt{T_{{\mathrm{1}}}}  \lesssim   \ottnt{T_{{\mathrm{11}}}}   \rightarrow _{ \ottnt{m} }  \ottnt{T_{{\mathrm{12}}}} $.
    By $\Gamma_{{\mathrm{1}}}  \vdash  \ottnt{f_{{\mathrm{1}}}}  \ottsym{:}  \ottnt{T_{{\mathrm{1}}}}$ and the cast rule, 
    \[ \Gamma_{{\mathrm{1}}}  \vdash  \ottsym{(}   \ottnt{f_{{\mathrm{1}}}}  :  \ottnt{T_{{\mathrm{1}}}}  \Cast{ \ottnt{p} }   \ottnt{T_{{\mathrm{11}}}}   \rightarrow _{ \ottnt{m} }  \ottnt{T_{{\mathrm{12}}}}    \ottsym{)}  \ottsym{:}   \ottnt{T_{{\mathrm{11}}}}   \rightarrow _{ \ottnt{m} }  \ottnt{T_{{\mathrm{12}}}} . \]
    By $\Gamma_{{\mathrm{2}}}  \vdash  \ottnt{f_{{\mathrm{2}}}}  \ottsym{:}  \ottnt{T_{{\mathrm{2}}}}$ and $\ottnt{T_{{\mathrm{2}}}}  \lesssim  \ottnt{T_{{\mathrm{11}}}}$ and the cast rule, 
    \[ \Gamma_{{\mathrm{2}}}  \vdash  \ottsym{(}   \ottnt{f_{{\mathrm{2}}}}  :  \ottnt{T_{{\mathrm{2}}}}  \Cast{ \ottnt{p} }  \ottnt{T_{{\mathrm{11}}}}   \ottsym{)}  \ottsym{:}  \ottnt{T_{{\mathrm{11}}}}. \]
    Thus, by the application rule, 
    $\Gamma  \vdash  \ottnt{f}  \ottsym{:}  \ottnt{T}$.

  \item[Case] $\casek$ rule: We are given
    \begin{align*}
      & \Gamma  \ottsym{=}   \Gamma'  \circ  \Delta  \gap
      \mathbb{e}  \ottsym{=}   \case{ \mathbb{e}' }{  \br{  \ottmv{l_{\ottmv{j}}} :  \ottmv{x_{\ottmv{j}}} .\,  \mathbb{e}_{\ottmv{j}}   }_{  \ottmv{j}  \in  \ottnt{J}  }  }  \gap
      \ottnt{T}  \ottsym{=}  \ottnt{U} \\
      & \Gamma'  \vdash  \mathbb{e}'  \ottsym{:}  \ottnt{T'} \gap
       \matching{  \ottnt{T'}  }{   \&    \br{  \ottmv{l_{\ottmv{j}}}  :  \ottnt{R_{\ottmv{j}}}  }_{  \ottmv{j}  \in  \ottnt{J}  }    } 
       ( \Delta  \ottsym{,}  \ottmv{x_{\ottmv{j}}}  \ottsym{:}  \ottnt{R_{\ottmv{j}}}  \vdash  \mathbb{e}_{\ottmv{j}}  \ottsym{:}  \ottnt{U_{\ottmv{j}}} )_{  \ottmv{j}  \in  \ottnt{J}  }  \gap
      U = \bigjoin \set{U_j}_{j\in J}.
    \end{align*}
    By $\Gamma'  \vdash  \mathbb{e}'  \ottsym{:}  \ottnt{T'}$ and the IH,
    $\Gamma'  \vdash  \mathbb{e}'  \ciarrow  \ottnt{f}  \ottsym{:}  \ottnt{T'}$ and
    $\Gamma'  \vdash  \ottnt{f'}  \ottsym{:}  \ottnt{T'}$ for some $\ottnt{f'}$.
    We take some $j \in J$.
    By $\Delta  \ottsym{,}  \ottmv{x_{\ottmv{j}}}  \ottsym{:}  \ottnt{R_{\ottmv{j}}}  \vdash  \mathbb{e}_{\ottmv{j}}  \ottsym{:}  \ottnt{U_{\ottmv{j}}}$ and the IH, we have
    $\Delta  \ottsym{,}  \ottmv{x_{\ottmv{j}}}  \ottsym{:}  \ottnt{R_{\ottmv{j}}}  \vdash  \mathbb{e}_{\ottmv{j}}  \ciarrow  \ottnt{f_{\ottmv{j}}}  \ottsym{:}  \ottnt{U_{\ottmv{j}}}$ and
    $\Delta  \ottsym{,}  \ottmv{x_{\ottmv{j}}}  \ottsym{:}  \ottnt{R_{\ottmv{j}}}  \vdash  \ottnt{f_{\ottmv{j}}}  \ottsym{:}  \ottnt{U_{\ottmv{j}}}$ for some $\ottnt{f_{\ottmv{j}}}$.
    Let \[ \ottnt{f}  \ottsym{=}   \case{ \ottsym{(}   \ottnt{f'}  :  \ottnt{T}  \Cast{ \ottnt{p} }_?   \&    \br{  \ottmv{l_{\ottmv{j}}}  :  \ottnt{R_{\ottmv{j}}}  }_{  \ottmv{j}  \in  \ottnt{J}  }     \ottsym{)} }{  \br{  \ottmv{l_{\ottmv{j}}} :  \ottmv{x_{\ottmv{j}}} .\,   \ottnt{f_{\ottmv{j}}}  :  \ottnt{U_{\ottmv{j}}}  \Cast{ \ottnt{p} }_?  \ottnt{U}    }_{  \ottmv{j}  \in  \ottnt{J}  }  } . \]
    By the $\casek$ rule, $\Gamma  \vdash  \mathbb{e}  \ciarrow  \ottnt{f}  \ottsym{:}  \ottnt{T}$.

    Next, by $ \matching{  \ottnt{T'}  }{   \&    \br{  \ottmv{l_{\ottmv{j}}}  :  \ottnt{R_{\ottmv{j}}}  }_{  \ottmv{j}  \in  \ottnt{J}  }    } $ and Lemma~\ref{lem:matching-con},
    $\ottnt{T'}  \lesssim   \&    \br{  \ottmv{l_{\ottmv{j}}}  :  \ottnt{R_{\ottmv{j}}}  }_{  \ottmv{j}  \in  \ottnt{J}  }  $.
    By $\Gamma'  \vdash  \ottnt{f'}  \ottsym{:}  \ottnt{T'}$ and the cast rule, 
    \[ \Gamma'  \vdash  \ottsym{(}   \ottnt{f'}  :  \ottnt{T'}  \Cast{ \ottnt{p} }   \&    \br{  \ottmv{l_{\ottmv{j}}}  :  \ottnt{R_{\ottmv{j}}}  }_{  \ottmv{j}  \in  \ottnt{J}  }     \ottsym{)}  \ottsym{:}   \&    \br{  \ottmv{l_{\ottmv{j}}}  :  \ottnt{R_{\ottmv{j}}}  }_{  \ottmv{j}  \in  \ottnt{J}  }  . \]
    We take some $j \in J$.
    By $U = \bigjoin \set{U_j}_{j\in J}$ and Lemma~\ref{lem:join-upper-bound},
    $\ottnt{U_{\ottmv{j}}}  \lesssim  \ottnt{U}$.
    By $\Delta  \ottsym{,}  \ottmv{x_{\ottmv{j}}}  \ottsym{:}  \ottnt{R_{\ottmv{j}}}  \vdash  \ottnt{f_{\ottmv{j}}}  \ottsym{:}  \ottnt{U_{\ottmv{j}}}$ and the cast rule, 
    \[ \Delta  \ottsym{,}  \ottmv{x_{\ottmv{j}}}  \ottsym{:}  \ottnt{R_{\ottmv{j}}}  \vdash  \ottsym{(}   \ottnt{f_{\ottmv{j}}}  :  \ottnt{U_{\ottmv{j}}}  \Cast{ \ottnt{p} }  \ottnt{U}   \ottsym{)}  \ottsym{:}  \ottnt{U}. \]
    Thus, by the $\casek$ rule, $\Gamma  \vdash  \ottnt{f}  \ottsym{:}  \ottnt{T}$.
  \end{description}
\end{proof}

We say that a type, a type environment, or an expression
is \emph{static} in the following sense.
\begin{itemize}
\item A type $T$ is \emph{static} if $T$ does not contain
  any dynamic types: i.e., $\DYN$ or $\DC$.
\item A type environment $\Gamma$ is \emph{static} if
$\Gamma$ contains only static types.
\item An expression $\mathbb{e}$ of \GGVe{} is \emph{static} if all
types declared in $\mathbb{e}$ are static.
\end{itemize}

\begin{lemma} \label{lem:static-con-matching}
  \begin{enumerate}
  \item Suppose $T, U$ are static.
    If $\ottnt{T}  \lesssim  \ottnt{U}$, then $\ottnt{T}  \Sub  \ottnt{U}$.

  \item Suppose $ \matching{  \ottnt{T}  }{  \ottnt{U}  } $ and $T \ne \DYN, \DC$.
    \begin{enumerate}
    \item If $U$ is neither $\&$-type nor $\oplus$-type, then $T = U$.
    \item If $U$ is either $\&$-type or $\oplus$-type, then $\ottnt{T}  \Sub  \ottnt{U}$.

    \item If $T$ is static and $U$ is not $\&$-type, then $U$ is static.
    \end{enumerate}

  \item Suppose $T_1, T_2$ are static.
    If $\ottnt{T_{{\mathrm{1}}}}  \vee  \ottnt{T_{{\mathrm{2}}}}  \ottsym{=}  \ottnt{U}$, then
    \begin{enumerate}
    \item $\ottnt{U}$ is static,
    \item $\ottnt{T_{{\mathrm{1}}}}  \Sub  \ottnt{U}$ and $\ottnt{T_{{\mathrm{2}}}}  \Sub  \ottnt{U}$,
    \item $\ottnt{U}  \Sub  \ottnt{U'}$ for any static $\ottnt{U'}$ such that $\ottnt{T_{{\mathrm{1}}}}  \Sub  \ottnt{U'}$ and $\ottnt{T_{{\mathrm{2}}}}  \Sub  \ottnt{U'}$.
    \end{enumerate}
  \item Suppose $T_1, T_2, U'$ are static.
    If $\ottnt{T_{{\mathrm{1}}}}  \Sub  \ottnt{U'}$ and $\ottnt{T_{{\mathrm{2}}}}  \Sub  \ottnt{U'}$, then
    there exists some static $\ottnt{U}$ such that
    $\ottnt{U}  \ottsym{=}  \ottnt{T_{{\mathrm{1}}}}  \vee  \ottnt{T_{{\mathrm{2}}}}$.
  \end{enumerate}
\end{lemma}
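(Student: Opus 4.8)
The plan is to prove the four items in order, after extracting one uniform observation: \emph{on static types the relations $\Sub$, $\Sub^{+}$, $\Sub^{-}$ and $\lesssim$ all coincide.} Item~(1) is precisely the inclusion ${\lesssim} \subseteq {\Sub}$ on static types, and combined with Lemma~\ref{lemma:subtyping-hierarchy} (which gives ${\Sub} \subseteq {\Sub^{+}} \subseteq {\lesssim}$ and ${\Sub} \subseteq {\Sub^{-}} \subseteq {\lesssim}$) it yields the coincidence for free. So I would first prove~(1) by induction on the derivation of $\ottnt{T}  \lesssim  \ottnt{U}$. Since $T$ and $U$ are static, the last rule cannot be any of the four rules $\DYN \lesssim T$, $T \lesssim \DYN$, $\DC \lesssim S$, $S \lesssim \DC$, each of which forces one side of the conclusion to be $\DYN$ or $\DC$; hence it is one of the structural consistent-subtyping rules, each of which is the image of the corresponding subtyping rule under replacing $\lesssim$ by $\Sub$. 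The induction hypotheses turn the $\lesssim$-premises (all on proper, hence static, subterms, including the contravariant ones such as $T' \lesssim T$ in the domain of an arrow or a send) into $\Sub$-premises, and re-applying the matching subtyping rule gives $T \Sub U$. The only mildly fiddly cases are the width rules for $\oplus$ and $\&$, where one observes that the side conditions $J \subseteq I$ (resp.\ $I \subseteq J$) on the consistent-subtyping rule are literally the side conditions of the subtyping rule.

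Item~(2) is a finite case analysis on the matching rules of Figure~\ref{fig:matching} with $T \notin \{\DYN, \DC\}$. For the four ``reflexive'' clauses (arrow, product, send, receive) the matched type $U$ literally equals $T$ and is not a choice type, which establishes (2a) and, when $T$ is static, also (2c). For the $\oplus$-clause the matched type is $\oplus\br{l_j : S_j}$ with $j \in I$ and $S_j$ a residual drawn from $T$; here $\ottnt{T}  \Sub  \ottnt{U}$ follows from the $\oplus$ subtyping rule with $\{j\} \subseteq I$ and reflexivity, proving (2b), and $S_j$ being a subterm of $T$ gives (2c). For the $\&$-clause the matched type is itself a $\&$-type, so only (2b) is relevant and it follows from the $\&$ subtyping rule with $I \subseteq I \cup J$ and reflexivity.

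For item~(3) I would induct on the derivation of $\ottnt{T_{{\mathrm{1}}}}  \vee  \ottnt{T_{{\mathrm{2}}}}  \ottsym{=}  \ottnt{U}$, \emph{simultaneously} with the companion statement for meet (if $T_1,T_2$ are static and $\ottnt{T_{{\mathrm{1}}}}  \wedge  \ottnt{T_{{\mathrm{2}}}}  \ottsym{=}  \ottnt{U}$ then $U$ is static), because the join clauses for arrows and sends put a meet in the contravariant slot. As in~(1), staticity of $T_1,T_2$ rules out the four $\DYN$/$\DC$ clauses of Figure~\ref{fig:join-meet}, so $U$ is built structurally from joins and meets of (static) subterms of $T_1,T_2$; the induction gives (3a), once one also checks no clause degenerates to an empty $\oplus\br{}$ or $\&\br{}$ for static inputs (the common-label joins all exist). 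For (3b), Lemma~\ref{lemma:bounds}(1) gives $ \ottnt{T_{{\mathrm{1}}}}   \Sub^{-}   \ottnt{U} $ and $ \ottnt{T_{{\mathrm{2}}}}   \Sub^{-}   \ottnt{U} $, and the coincidence of $\Sub^{-}$ with $\Sub$ on static types (available since $U$ is static by (3a)) upgrades these to $\ottnt{T_{{\mathrm{1}}}}  \Sub  \ottnt{U}$ and $\ottnt{T_{{\mathrm{2}}}}  \Sub  \ottnt{U}$. For (3c), given static $U'$ with $\ottnt{T_{{\mathrm{1}}}}  \Sub  \ottnt{U'}$ and $\ottnt{T_{{\mathrm{2}}}}  \Sub  \ottnt{U'}$, Lemma~\ref{lemma:subtyping-hierarchy} gives $ \ottnt{T_{{\mathrm{1}}}}   \Sub^{-}   \ottnt{U'} $ and $ \ottnt{T_{{\mathrm{2}}}}   \Sub^{-}   \ottnt{U'} $; Lemma~\ref{lemma:polarised-subtyping-lub-glb}(1) then produces some $U''$ with $\ottnt{T_{{\mathrm{1}}}}  \vee  \ottnt{T_{{\mathrm{2}}}}  \ottsym{=}  \ottnt{U''}$ and $ \ottnt{U''}   \Sub^{-}   \ottnt{U'} $; since $\vee$ is a partial function, $U'' = U$, hence $ \ottnt{U}   \Sub^{-}   \ottnt{U'} $, and the coincidence gives $\ottnt{U}  \Sub  \ottnt{U'}$. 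Item~(4) is then immediate: Lemma~\ref{lemma:subtyping-hierarchy} turns $\ottnt{T_{{\mathrm{1}}}}  \Sub  \ottnt{U'}$ and $\ottnt{T_{{\mathrm{2}}}}  \Sub  \ottnt{U'}$ into $ \ottnt{T_{{\mathrm{1}}}}   \Sub^{-}   \ottnt{U'} $ and $ \ottnt{T_{{\mathrm{2}}}}   \Sub^{-}   \ottnt{U'} $, Lemma~\ref{lemma:polarised-subtyping-lub-glb}(1) yields $U$ with $\ottnt{T_{{\mathrm{1}}}}  \vee  \ottnt{T_{{\mathrm{2}}}}  \ottsym{=}  \ottnt{U}$, and (3a) makes $U$ static.

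The main obstacle is item~(3a): it is what forces the simultaneous induction with the meet statement and a careful verification that every structural clause of join and meet preserves staticity of the components and produces a well-formed type. Everything else is routine bookkeeping once the slogan ``all the subtyping notions agree on static types'' (items~(1) plus Lemma~\ref{lemma:subtyping-hierarchy}) is in place.
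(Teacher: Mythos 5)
Your proof is correct and follows essentially the same route as the paper's: induction on $\lesssim$ for item (1), case analysis on matching for item (2), and then items (3) and (4) via Lemmas~\ref{lemma:bounds} and~\ref{lemma:polarised-subtyping-lub-glb} together with the fact that $\Sub$ and $\Sub^{-}$ coincide on static types --- the only difference being that you obtain this coincidence from item (1) plus Lemma~\ref{lemma:subtyping-hierarchy}, whereas the paper proves it by a separate induction on $\Sub^{-}$; your variant is a harmless, slightly more economical shortcut, and your explicit simultaneous induction with meet for (3a) just spells out what the paper leaves implicit. One small blemish: the parenthetical claim in your treatment of (3a) that ``the common-label joins all exist'' for static inputs is both unnecessary (definedness of $T_1 \vee T_2$ is a hypothesis of (3a)) and false in general (e.g.\ $\Endpl \vee \Endqu$ is undefined, so such a label would be dropped), but since you never rely on it the argument stands.
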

\begin{proof}
  The first item is by induction on $\ottnt{T}  \lesssim  \ottnt{U}$.
  The second item is by case analysis on $ \matching{  \ottnt{T}  }{  \ottnt{U}  } $.
  Here, we can prove
  \begin{center}
    if $T, U$ are static and $ \ottnt{T}   \Sub^{-}   \ottnt{U} $, then $\ottnt{T}  \Sub  \ottnt{U}$
  \end{center}
  by induction on $ \ottnt{T}   \Sub^{-}   \ottnt{U} $.
  By Lemma~\ref{lemma:subtyping-hierarchy}, ${} \Sub {}\subseteq{}     \Sub^{-}    $.
  Thus, we have
  \begin{center}
    if $T, U$ are static, then
    $\ottnt{T}  \Sub  \ottnt{U}$ if and only if $ \ottnt{T}   \Sub^{-}   \ottnt{U} $.
  \end{center}
  With this fact,
  the third and fourth item can be proved
  by Lemmas~\ref{lemma:bounds} and~\ref{lemma:polarised-subtyping-lub-glb} respectively.
\end{proof}

We define the type erasure $ \Erase{ \mathbb{e} } $, which is obtained by removing type annotations
from an expression $\mathbb{e}$ of \GGVe. The main cases of its definition
are as follows.
\begin{align*}
   \Erase{  \newk\, \ottnt{S}  }  &= \new \\
   \Erase{  \lambda  _ \ottnt{m}   \ottmv{x} {:} \ottnt{T} .\,  \mathbb{e}  }  &=  \lambda  _ \ottnt{m}   \ottmv{x} .\,   \Erase{ \mathbb{e} }  \ . 
\end{align*}
(It is extended homomorphically for all other forms of expressions.)

Theorem~\ref{thm:conservativity-typing} states that
the \GGVe{} typing is a conservative extension of the \GV{} typing.
We have to take care of the difference between
the declarative type system of \GV{} and
the algorithmic type system of \GGVe.

\begin{theorem}[Typing Conservation over \GV] \label{thm:conservativity-typing}
  Suppose that $\Gamma$ is static.
  \begin{enumerate}
  \item If $\mathbb{e}$ is static and type environments that appear in the
    derivation of $\Gamma  \vdashG  \mathbb{e}  \ottsym{:}  \ottnt{T}$ are all static, then
    $T$ is static and $\Gamma\vdashS  \Erase{ \mathbb{e} }  : T$.
  \item If $f$ is an expression of \GV{} and $\Gamma\vdashS f : T$, then
    $T$ is static and there exist static $\mathbb{e}$ and static $\ottnt{T'}$ such that
    $ \Erase{ \mathbb{e} }  = f$ and
    $\Gamma  \vdashG  \mathbb{e}  \ottsym{:}  \ottnt{T'}$ and $\ottnt{T'}  \Sub  \ottnt{T}$.
  \end{enumerate}
\end{theorem}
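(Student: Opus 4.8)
The plan is to establish the two items separately, each by induction on a typing derivation, using Lemma~\ref{lem:static-con-matching} as the main workhorse: it is exactly what collapses the gradual apparatus ($\lesssim$, $\matchingsym$, $\vee$) back to ordinary subtyping $\Sub$ once all types involved are static, and it also records which of these operations preserve staticity. Inversion of $\Sub$ and of $\matchingsym$ is syntax-directed and will be used without comment, as elsewhere in the paper. Since the \GV{} grammar has neither $\DYN$ nor $\DC$, every \GV{} type is automatically static, so the ``$T$ is static'' clause carries real content only in item~(1).

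For item~(1) I would induct on $\Gamma \vdashG \mathbb{e} : T$. Each \GGVe{} elimination rule folds subsumption in, so it is re-expanded into the corresponding \GV{} rule preceded by the requisite uses of \rn{T-Sub}. In the application and send cases, $\matching{T_1}{\cdots}$ with $T_1$ static forces $T_1$ to equal the matched arrow/send type by Lemma~\ref{lem:static-con-matching}(2), and the side condition $\ottnt{T_{{\mathrm{2}}}} \lesssim \ottnt{T_{{\mathrm{11}}}}$ becomes $\ottnt{T_{{\mathrm{2}}}} \Sub \ottnt{T_{{\mathrm{11}}}}$ by Lemma~\ref{lem:static-con-matching}(1), after which \rn{T-Sub} and the induction hypotheses feed the \GV{} rule. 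Select, receive, and the unit/fork/new/close/wait rules are analogous or simpler. The $\casek$ case is the one that exercises the lemma fully: matching returns a possibly-larger $\&$-type whose extra residuals are static because they occur in the branch environments $\Delta,x_j\colon R_j$, which are static by hypothesis, so $T' \Sub \&\br{l_j\colon R_j}_{j\in J}$ with the right-hand side static; the branch types $U_j$ from the induction hypothesis satisfy $U_j \Sub U = \bigjoin\set{U_j}_{j\in J}$ with $U$ static by Lemma~\ref{lem:static-con-matching}(3); widening the scrutinee and each branch by \rn{T-Sub} then lets the \GV{} $\casek$ rule conclude $\Gamma \vdashS \Erase{\mathbb{e}} : U$. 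Staticity of $T$ drops out of the induction.

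For item~(2) I would induct on $\Gamma \vdashS f : T$ and construct $\mathbb{e}$ by reinstating the annotations \GGVe{} demands (on $\lambda$ and on $\newk$), so that $\Erase{\mathbb{e}} = f$ by construction. A trailing \rn{T-Sub} in the \GV{} derivation is absorbed at no cost, since the induction hypothesis already yields $T' \Sub T_0 \Sub T$ and $\Sub$ is transitive. At each elimination rule the induction hypothesis produces a \GGVe{} type $T'$ that is a subtype of the \GV{} type appearing in the premise; inversion of $\Sub$ shows $T'$ has the correct outer constructor, so $\matchingsym$ applies ($\matchingsym$ is reflexive on arrows, products and send/receive, and for select and case the needed label is present because $\Sub$ on $\oplus$ only adds labels and on $\&$ only removes them), and the residual $\lesssim$-obligation holds because $T'\Sub\cdots$ and ${}\Sub{}\subseteq{}\lesssim{}$ by Lemma~\ref{lemma:subtyping-hierarchy}; covariance and contravariance of $\Sub$ then give that the \GGVe{} result type is again a subtype of the \GV{} result type.

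I expect the $\casek$ case of item~(2) to be the main obstacle, for two reasons. First, \GGVe{} reports the join $\bigjoin\set{U_j}$ of the branch types whereas \GV{} demands a common type $T$; since the induction hypothesis only gives $U_j \Sub T$, I need Lemma~\ref{lem:static-con-matching}(4) to know the join exists and is static, and Lemma~\ref{lem:static-con-matching}(3) to know it lies below $T$. Second, inverting $T' \Sub \&\br{l_i\colon S_i}_{i\in I}$ for the scrutinee yields a $\&$-type whose label set $K$ may be a \emph{proper} subset of $I$ and whose residuals $S'_k$ may be \emph{strict} subtypes of $S_k$; to typecheck the $k$-th branch of the $\casek$ I am building under $x_k\colon S'_k$ rather than $x_k\colon S_k$, I would first narrow the \GV{} branch derivation along $S'_k \Sub S_k$---narrowing being a standard consequence of the presence of \rn{T-Sub} in \GV---and only then invoke the induction hypothesis; item~(1) does not face this because there subsumption is directly available. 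The rest---environment splitting, weakening, and choosing the residuals $S_i$ for the labels $i \in I\setminus K$ that matching adds---is routine bookkeeping.
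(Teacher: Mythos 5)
Your proposal is correct and its overall architecture coincides with the paper's: both items proceed by induction on the respective typing derivation, with Lemma~\ref{lem:static-con-matching} doing the work of collapsing $\lesssim$, matching, and $\vee$ back to $\Sub$ on static types, Lemma~\ref{lemma:subtyping-hierarchy} and transitivity closing the remaining side conditions, and GV's subsumption rule re-expanding the subsumption that \GGVe{} merges into its elimination rules (your use of the ``all environments static'' hypothesis to make the extra $\&$-residuals static in item (1) is exactly the paper's). The one genuine divergence is the $\casek$ case of item (2). The paper applies the induction hypothesis to the branches at the original GV types $R_j$ and then invokes the \GGVe{} $\casek$ rule citing $T' \lesssim \&\br{l_j\colon R_j}_{j\in J}$; strictly speaking this does not discharge that rule's matching premise when $T'$ is a \emph{proper} subtype of $\&\br{l_j\colon R_j}_{j\in J}$, since matching on a $\&$-type fixes the residuals of the labels it already carries. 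Your route---invert $T' \Sub \&\br{l_j\colon R_j}_{j\in J}$, exploit matching's freedom to add the missing labels with residual $R_j$, narrow the GV branch derivations along the smaller residuals, and only then apply the induction hypothesis---discharges the matching premise exactly, at the price of a GV narrowing lemma that the paper never states (it is standard given subsumption and weakening in GV, and unproblematic here since the narrowed types are session types, hence linear, so context splitting is unaffected). So your proof is, if anything, more careful than the paper's at this point; do record narrowing as an explicit auxiliary lemma, and note that the same narrowing step is needed in the pair-destruction case of item (2), which the paper does not display but which your general elimination-rule argument should mention alongside $\casek$.
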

\begin{proof}
  The first item is by induction on $\Gamma  \vdashG  \mathbb{e}  \ottsym{:}  \ottnt{T}$
  with case analysis on the rule applied last.
  We show the main cases below.
  \begin{description}
  \item[Case] application rule: We are given
    \begin{align*}
      &\Gamma  \ottsym{=}   \Gamma_{{\mathrm{1}}}  \circ  \Gamma_{{\mathrm{2}}}  \gap
      \mathbb{e}  \ottsym{=}  \mathbb{e}_{{\mathrm{1}}} \, \mathbb{e}_{{\mathrm{2}}} \gap
      \ottnt{T}  \ottsym{=}  \ottnt{T_{{\mathrm{12}}}} \\
      &\Gamma_{{\mathrm{1}}}  \vdash  \mathbb{e}_{{\mathrm{1}}}  \ottsym{:}  \ottnt{T_{{\mathrm{1}}}} \gap
      \Gamma_{{\mathrm{2}}}  \vdash  \mathbb{e}_{{\mathrm{2}}}  \ottsym{:}  \ottnt{T_{{\mathrm{2}}}} \gap
       \matching{  \ottnt{T_{{\mathrm{1}}}}  }{   \ottnt{T_{{\mathrm{11}}}}   \rightarrow _{ \ottnt{m} }  \ottnt{T_{{\mathrm{12}}}}   }  \gap
      \ottnt{T_{{\mathrm{2}}}}  \lesssim  \ottnt{T_{{\mathrm{11}}}}.
    \end{align*}
    Since $\Gamma$, $\mathbb{e}$ are static,
    $\Gamma_{{\mathrm{1}}}$, $\Gamma_{{\mathrm{2}}}$, $\mathbb{e}_{{\mathrm{1}}}$, $\mathbb{e}_{{\mathrm{2}}}$
    are also static. 
    By $\Gamma_{{\mathrm{1}}}  \vdash  \mathbb{e}_{{\mathrm{1}}}  \ottsym{:}  \ottnt{T_{{\mathrm{1}}}}$ and the IH,
    $\ottnt{T_{{\mathrm{1}}}}$ is static and $\Gamma_{{\mathrm{1}}}  \vdash   \Erase{ \mathbb{e}_{{\mathrm{1}}} }   \ottsym{:}  \ottnt{T_{{\mathrm{1}}}}$.
    By $ \matching{  \ottnt{T_{{\mathrm{1}}}}  }{   \ottnt{T_{{\mathrm{11}}}}   \rightarrow _{ \ottnt{m} }  \ottnt{T_{{\mathrm{12}}}}   } $ and Lemma~\ref{lem:static-con-matching}~(2),
    $\ottnt{T_{{\mathrm{11}}}}$, $\ottnt{T_{{\mathrm{12}}}}$ are static and
    $\ottnt{T_{{\mathrm{1}}}}  \ottsym{=}   \ottnt{T_{{\mathrm{11}}}}   \rightarrow _{ \ottnt{m} }  \ottnt{T_{{\mathrm{12}}}} $. 
    By $\Gamma_{{\mathrm{2}}}  \vdash  \mathbb{e}_{{\mathrm{2}}}  \ottsym{:}  \ottnt{T_{{\mathrm{2}}}}$ and the IH,
    $\ottnt{T_{{\mathrm{2}}}}$ is static and $\Gamma_{{\mathrm{2}}}  \vdash   \Erase{ \mathbb{e}_{{\mathrm{2}}} }   \ottsym{:}  \ottnt{T_{{\mathrm{2}}}}$.
    By $\ottnt{T_{{\mathrm{2}}}}  \lesssim  \ottnt{T_{{\mathrm{11}}}}$ and Lemma~\ref{lem:static-con-matching}~(1),
    $\ottnt{T_{{\mathrm{2}}}}  \Sub  \ottnt{T_{{\mathrm{11}}}}$.
    By the subsumption rule, $\Gamma_{{\mathrm{2}}}  \vdash   \Erase{ \mathbb{e}_{{\mathrm{2}}} }   \ottsym{:}  \ottnt{T_{{\mathrm{11}}}}$.
    Thus, by
    \[
      \Gamma_{{\mathrm{1}}}  \vdash   \Erase{ \mathbb{e}_{{\mathrm{1}}} }   \ottsym{:}   \ottnt{T_{{\mathrm{11}}}}   \rightarrow _{ \ottnt{m} }  \ottnt{T_{{\mathrm{12}}}}  \gap
      \Gamma_{{\mathrm{2}}}  \vdash   \Erase{ \mathbb{e}_{{\mathrm{2}}} }   \ottsym{:}  \ottnt{T_{{\mathrm{11}}}} \gap
       \Erase{ \mathbb{e}_{{\mathrm{1}}} \, \mathbb{e}_{{\mathrm{2}}} }   \ottsym{=}   \Erase{ \mathbb{e}_{{\mathrm{1}}} }  \,  \Erase{ \mathbb{e}_{{\mathrm{2}}} } 
    \]
    and the application rule, 
    we have $ \Gamma_{{\mathrm{1}}}  \circ  \Gamma_{{\mathrm{2}}}   \vdash   \Erase{ \mathbb{e}_{{\mathrm{1}}} \, \mathbb{e}_{{\mathrm{2}}} }   \ottsym{:}  \ottnt{T_{{\mathrm{12}}}}$.

  \item[Case] $\casek$ rule: We are given
    \begin{align*}
      &\Gamma  \ottsym{=}   \Gamma'  \circ  \Delta  \gap
      \mathbb{e}  \ottsym{=}   \case{ \mathbb{e}' }{  \br{  \ottmv{l_{\ottmv{j}}} :  \ottmv{x_{\ottmv{j}}} .\,  \mathbb{e}_{\ottmv{j}}   }_{  \ottmv{j}  \in  \ottnt{J}  }  }  \gap
      \ottnt{T}  \ottsym{=}  \ottnt{U}\\
      &\Gamma'  \vdash  \mathbb{e}'  \ottsym{:}  \ottnt{T'} \gap
       \matching{  \ottnt{T'}  }{   \&    \br{  \ottmv{l_{\ottmv{j}}}  :  \ottnt{R_{\ottmv{j}}}  }_{  \ottmv{j}  \in  \ottnt{J}  }    }  \gap
       ( \Delta  \ottsym{,}  \ottmv{x_{\ottmv{j}}}  \ottsym{:}  \ottnt{R_{\ottmv{j}}}  \vdash  \mathbb{e}_{\ottmv{j}}  \ottsym{:}  \ottnt{U_{\ottmv{j}}} )_{  \ottmv{j}  \in  \ottnt{J}  }  \gap
      U = \bigjoin \set{U_j}_{j\in J}.
    \end{align*}
    Since $\Gamma$, $\mathbb{e}$ are static,
    $\Gamma'$, $\Delta$, $\mathbb{e}'$, $\mathbb{e}_{\ottmv{j}}$
    are also static. 
    Since any type environment $\Delta  \ottsym{,}  \ottmv{x_{\ottmv{j}}}  \ottsym{:}  \ottnt{R_{\ottmv{j}}}$ is static,
    any $R_j$ is static.
    By $\Gamma'  \vdash  \mathbb{e}'  \ottsym{:}  \ottnt{T'}$ and the IH,
    $\ottnt{T'}$ is static and $\Gamma'  \vdash   \Erase{ \mathbb{e}' }   \ottsym{:}  \ottnt{T'}$.
    By $ \matching{  \ottnt{T'}  }{   \&    \br{  \ottmv{l_{\ottmv{j}}}  :  \ottnt{R_{\ottmv{j}}}  }_{  \ottmv{j}  \in  \ottnt{J}  }    } $ and Lemma~\ref{lem:static-con-matching}~(2),
    $\ottnt{T'}  \Sub   \&    \br{  \ottmv{l_{\ottmv{j}}}  :  \ottnt{R_{\ottmv{j}}}  }_{  \ottmv{j}  \in  \ottnt{J}  }  $.
    By the subsumption rule,
    \[ \Gamma'  \vdash   \Erase{ \mathbb{e}' }   \ottsym{:}   \&    \br{  \ottmv{l_{\ottmv{j}}}  :  \ottnt{R_{\ottmv{j}}}  }_{  \ottmv{j}  \in  \ottnt{J}  }  . \]
    We take some $j \in J$.
    Since $\Delta  \ottsym{,}  \ottmv{x_{\ottmv{j}}}  \ottsym{:}  \ottnt{R_{\ottmv{j}}}$ is static, 
    by $\Delta  \ottsym{,}  \ottmv{x_{\ottmv{j}}}  \ottsym{:}  \ottnt{R_{\ottmv{j}}}  \vdash  \mathbb{e}_{\ottmv{j}}  \ottsym{:}  \ottnt{U_{\ottmv{j}}}$ and the IH,
    $\ottnt{U_{\ottmv{j}}}$ is static and $\Delta  \ottsym{,}  \ottmv{x_{\ottmv{j}}}  \ottsym{:}  \ottnt{R_{\ottmv{j}}}  \vdash   \Erase{ \mathbb{e}_{\ottmv{j}} }   \ottsym{:}  \ottnt{U_{\ottmv{j}}}$.
    By $U = \bigjoin \set{U_j}_{j\in J}$ and Lemma~\ref{lem:static-con-matching}~(3),
    $\ottnt{U_{\ottmv{j}}}  \Sub  \ottnt{U}$ and $\ottnt{U}$ is static.
    By the subsumption rule,
    \[ \Delta  \ottsym{,}  \ottmv{x_{\ottmv{j}}}  \ottsym{:}  \ottnt{R_{\ottmv{j}}}  \vdash   \Erase{ \mathbb{e}_{\ottmv{j}} }   \ottsym{:}  \ottnt{U}. \]
    %
    Thus, by
    \begin{align*}
      &\Gamma'  \vdash   \Erase{ \mathbb{e}' }   \ottsym{:}   \&    \br{  \ottmv{l_{\ottmv{j}}}  :  \ottnt{R_{\ottmv{j}}}  }_{  \ottmv{j}  \in  \ottnt{J}  }   \gap
       ( \Delta  \ottsym{,}  \ottmv{x_{\ottmv{j}}}  \ottsym{:}  \ottnt{R_{\ottmv{j}}}  \vdash   \Erase{ \mathbb{e}_{\ottmv{j}} }   \ottsym{:}  \ottnt{U} )_{  \ottmv{j}  \in  \ottnt{J}  }  \\
      & \Erase{  \case{ \mathbb{e}' }{  \br{  \ottmv{l_{\ottmv{j}}} :  \ottmv{x_{\ottmv{j}}} .\,  \mathbb{e}_{\ottmv{j}}   }_{  \ottmv{j}  \in  \ottnt{J}  }  }  }   \ottsym{=}   \case{  \Erase{ \mathbb{e}' }  }{  \br{  \ottmv{l_{\ottmv{j}}} :  \ottmv{x_{\ottmv{j}}} .\,   \Erase{ \mathbb{e}_{\ottmv{j}} }    }_{  \ottmv{j}  \in  \ottnt{J}  }  } 
    \end{align*}
    the $\casek$ rule,            
    we have $ \Gamma'  \circ  \Delta   \vdash   \Erase{  \case{ \mathbb{e}' }{  \br{  \ottmv{l_{\ottmv{j}}} :  \ottmv{x_{\ottmv{j}}} .\,  \mathbb{e}_{\ottmv{j}}   }_{  \ottmv{j}  \in  \ottnt{J}  }  }  }   \ottsym{:}  \ottnt{U}$.
  \end{description}

  The second item is by induction on $\Gamma \vdashS f : T$
  with case analysis on the rule applied last.
  We show the main cases below.
  \begin{description}
  \item[Case] application rule: We are given
    \begin{align*}
      &\Gamma  \ottsym{=}   \Gamma_{{\mathrm{1}}}  \circ  \Gamma_{{\mathrm{2}}}  \gap
      \ottnt{f}  \ottsym{=}  \ottnt{f_{{\mathrm{1}}}} \, \ottnt{f_{{\mathrm{2}}}} \gap
      \ottnt{T}  \ottsym{=}  \ottnt{T_{{\mathrm{12}}}} \gap
      \Gamma_{{\mathrm{1}}}  \vdash  \ottnt{f_{{\mathrm{1}}}}  \ottsym{:}   \ottnt{T_{{\mathrm{11}}}}   \rightarrow _{ \ottnt{m} }  \ottnt{T_{{\mathrm{12}}}}  \gap
      \Gamma_{{\mathrm{2}}}  \vdash  \ottnt{f_{{\mathrm{2}}}}  \ottsym{:}  \ottnt{T_{{\mathrm{11}}}}.
    \end{align*}
    Since $\Gamma$ is static,
    $\Gamma_{{\mathrm{1}}}$, $\Gamma_{{\mathrm{2}}}$ are also static.
    By $\Gamma_{{\mathrm{1}}}  \vdash  \ottnt{f_{{\mathrm{1}}}}  \ottsym{:}   \ottnt{T_{{\mathrm{11}}}}   \rightarrow _{ \ottnt{m} }  \ottnt{T_{{\mathrm{12}}}} $ and the IH,
    $\ottnt{T_{{\mathrm{11}}}}$, $\ottnt{T_{{\mathrm{12}}}}$ are static and
    there exist static $\mathbb{e}_{{\mathrm{1}}}$, $\ottnt{U_{{\mathrm{1}}}}$ such that
    \[
      \Gamma_{{\mathrm{1}}}  \vdash  \mathbb{e}_{{\mathrm{1}}}  \ottsym{:}  \ottnt{U_{{\mathrm{1}}}} \gap
      \ottnt{U_{{\mathrm{1}}}}  \Sub   \ottnt{T_{{\mathrm{11}}}}   \rightarrow _{ \ottnt{m} }  \ottnt{T_{{\mathrm{12}}}}  \gap
       \Erase{ \mathbb{e}_{{\mathrm{1}}} }   \ottsym{=}  \ottnt{f_{{\mathrm{1}}}}.
    \]
    %
    By inversion of $ \Sub $, 
    we have $\ottnt{U_{{\mathrm{1}}}}  \ottsym{=}   \ottnt{U_{{\mathrm{11}}}}   \rightarrow _{ \ottnt{n} }  \ottnt{U_{{\mathrm{12}}}} $ and $\ottnt{n}  \Sub  \ottnt{m}$ and
    $\ottnt{T_{{\mathrm{11}}}}  \Sub  \ottnt{U_{{\mathrm{11}}}}$ and $\ottnt{U_{{\mathrm{12}}}}  \Sub  \ottnt{T_{{\mathrm{12}}}}$
    for some $\ottnt{U_{{\mathrm{11}}}}$, $\ottnt{U_{{\mathrm{12}}}}$, $\ottnt{n}$.
    Since $\ottnt{U_{{\mathrm{1}}}}$ is static, $\ottnt{U_{{\mathrm{11}}}}$, $\ottnt{U_{{\mathrm{12}}}}$ are also static.
    By $\Gamma_{{\mathrm{2}}}  \vdash  \ottnt{f_{{\mathrm{2}}}}  \ottsym{:}  \ottnt{T_{{\mathrm{11}}}}$ and the IH,
    $\ottnt{T_{{\mathrm{11}}}}$ is static and     
    there exist static $\mathbb{e}_{{\mathrm{2}}}$, $\ottnt{U_{{\mathrm{2}}}}$ such that
    \[
      \Gamma_{{\mathrm{2}}}  \vdash  \mathbb{e}_{{\mathrm{2}}}  \ottsym{:}  \ottnt{U_{{\mathrm{2}}}} \gap
      \ottnt{U_{{\mathrm{2}}}}  \Sub  \ottnt{T_{{\mathrm{11}}}} \gap
       \Erase{ \mathbb{e}_{{\mathrm{2}}} }   \ottsym{=}  \ottnt{f_{{\mathrm{2}}}}.
    \]
    By $\ottnt{U_{{\mathrm{2}}}}  \Sub  \ottnt{T_{{\mathrm{11}}}}$ and $\ottnt{T_{{\mathrm{11}}}}  \Sub  \ottnt{U_{{\mathrm{11}}}}$ and transitivity,
    $\ottnt{U_{{\mathrm{2}}}}  \Sub  \ottnt{U_{{\mathrm{11}}}}$.
    By Lemma~\ref{lemma:subtyping-hierarchy}, $\ottnt{U_{{\mathrm{2}}}}  \lesssim  \ottnt{U_{{\mathrm{11}}}}$.
    From Figure~\ref{fig:matching},
    we have $ \matching{   \ottnt{U_{{\mathrm{11}}}}   \rightarrow _{ \ottnt{n} }  \ottnt{U_{{\mathrm{12}}}}   }{   \ottnt{U_{{\mathrm{11}}}}   \rightarrow _{ \ottnt{n} }  \ottnt{U_{{\mathrm{12}}}}   } $.
    Thus, by
    \[
      \Gamma_{{\mathrm{1}}}  \vdash  \mathbb{e}_{{\mathrm{1}}}  \ottsym{:}   \ottnt{U_{{\mathrm{11}}}}   \rightarrow _{ \ottnt{n} }  \ottnt{U_{{\mathrm{12}}}}  \gap
      \Gamma_{{\mathrm{2}}}  \vdash  \mathbb{e}_{{\mathrm{2}}}  \ottsym{:}  \ottnt{U_{{\mathrm{2}}}} \gap
       \matching{   \ottnt{U_{{\mathrm{11}}}}   \rightarrow _{ \ottnt{n} }  \ottnt{U_{{\mathrm{12}}}}   }{   \ottnt{U_{{\mathrm{11}}}}   \rightarrow _{ \ottnt{n} }  \ottnt{U_{{\mathrm{12}}}}   }  \gap
      \ottnt{U_{{\mathrm{2}}}}  \lesssim  \ottnt{U_{{\mathrm{11}}}}
    \]
    and the application rule, 
    $ \Gamma_{{\mathrm{1}}}  \circ  \Gamma_{{\mathrm{2}}}   \vdash  \mathbb{e}_{{\mathrm{1}}} \, \mathbb{e}_{{\mathrm{2}}}  \ottsym{:}  \ottnt{U_{{\mathrm{12}}}}$.
    Additionally,
    \[
       \Erase{ \mathbb{e}_{{\mathrm{1}}} \, \mathbb{e}_{{\mathrm{2}}} }   \ottsym{=}   \Erase{ \mathbb{e}_{{\mathrm{1}}} }  \,  \Erase{ \mathbb{e}_{{\mathrm{2}}} }  = \ottnt{f_{{\mathrm{1}}}} \, \ottnt{f_{{\mathrm{2}}}}  \ottsym{=}  \ottnt{f} \gap
      \ottnt{U_{{\mathrm{12}}}}  \Sub  \ottnt{T_{{\mathrm{12}}}} = T.
    \]

  \item[Case] $\casek$ rule: We are given
    \begin{align*}
      &\Gamma  \ottsym{=}   \Gamma'  \circ  \Delta  \gap
      \ottnt{f}  \ottsym{=}   \case{ \ottnt{f'} }{  \br{  \ottmv{l_{\ottmv{j}}} :  \ottmv{x_{\ottmv{j}}} .\,  \ottnt{f_{\ottmv{j}}}   }_{  \ottmv{j}  \in  \ottnt{J}  }  }  \\
      &\Gamma'  \vdash  \ottnt{f'}  \ottsym{:}   \&    \br{  \ottmv{l_{\ottmv{j}}}  :  \ottnt{R_{\ottmv{j}}}  }_{  \ottmv{j}  \in  \ottnt{J}  }   \gap
       ( \Delta  \ottsym{,}  \ottmv{x_{\ottmv{j}}}  \ottsym{:}  \ottnt{R_{\ottmv{j}}}  \vdash  \ottnt{f_{\ottmv{j}}}  \ottsym{:}  \ottnt{T} )_{  \ottmv{j}  \in  \ottnt{J}  } .
    \end{align*}
    Since $\Gamma$ is static,
    $\Gamma'$, $\Delta$ are also static.
    By $\Gamma'  \vdash  \ottnt{f'}  \ottsym{:}   \&    \br{  \ottmv{l_{\ottmv{j}}}  :  \ottnt{R_{\ottmv{j}}}  }_{  \ottmv{j}  \in  \ottnt{J}  }  $ and the IH,
    all $\ottnt{R_{\ottmv{j}}}$ are static and
    there exist static $\mathbb{e}'$, $\ottnt{T'}$ such that
    \[
       \Erase{ \mathbb{e}' }   \ottsym{=}  \ottnt{f'} \gap
      \Gamma'  \vdash  \mathbb{e}'  \ottsym{:}  \ottnt{T'} \gap
      \ottnt{T'}  \Sub   \&    \br{  \ottmv{l_{\ottmv{j}}}  :  \ottnt{R_{\ottmv{j}}}  }_{  \ottmv{j}  \in  \ottnt{J}  }  .
    \]
    %
    By $\ottnt{T'}  \Sub   \&    \br{  \ottmv{l_{\ottmv{j}}}  :  \ottnt{R_{\ottmv{j}}}  }_{  \ottmv{j}  \in  \ottnt{J}  }  $ and Lemma~\ref{lemma:subtyping-hierarchy},
    $\ottnt{T'}  \lesssim   \&    \br{  \ottmv{l_{\ottmv{j}}}  :  \ottnt{R_{\ottmv{j}}}  }_{  \ottmv{j}  \in  \ottnt{J}  }  $.
    We take some $j \in J$.
    By $\Delta  \ottsym{,}  \ottmv{x_{\ottmv{j}}}  \ottsym{:}  \ottnt{R_{\ottmv{j}}}  \vdash  \ottnt{f_{\ottmv{j}}}  \ottsym{:}  \ottnt{T}$ and the IH,
    $\ottnt{T}$ is static and
    there exist static $\mathbb{e}_{\ottmv{j}}$, $\ottnt{T_{\ottmv{j}}}$ such that
    \[
       \Erase{ \mathbb{e}_{\ottmv{j}} }   \ottsym{=}  \ottnt{f_{\ottmv{j}}} \gap
      \Delta  \ottsym{,}  \ottmv{x_{\ottmv{j}}}  \ottsym{:}  \ottnt{R_{\ottmv{j}}}  \vdash  \mathbb{e}_{\ottmv{j}}  \ottsym{:}  \ottnt{T_{\ottmv{j}}} \gap
      \ottnt{T_{\ottmv{j}}}  \Sub  \ottnt{T}.
    \]
    So, $ ( \ottnt{T_{\ottmv{j}}}  \Sub  \ottnt{T} )_{  \ottmv{j}  \in  \ottnt{J}  } $. 
    By Lemma~\ref{lem:static-con-matching}~(4),
    there exist some static $U$ such that $U = \bigjoin\set{T_j}_{j\in J}$.
    By Lemma~\ref{lem:static-con-matching}~(3), $\ottnt{U}  \Sub  \ottnt{T}$.
    Thus, by
    \[
      \Gamma'  \vdash  \mathbb{e}'  \ottsym{:}  \ottnt{T'} \gap
       ( \Delta  \ottsym{,}  \ottmv{x_{\ottmv{j}}}  \ottsym{:}  \ottnt{R_{\ottmv{j}}}  \vdash  \mathbb{e}_{\ottmv{j}}  \ottsym{:}  \ottnt{T_{\ottmv{j}}} )_{  \ottmv{j}  \in  \ottnt{J}  }  \gap
      \ottnt{T'}  \lesssim   \&    \br{  \ottmv{l_{\ottmv{j}}}  :  \ottnt{R_{\ottmv{j}}}  }_{  \ottmv{j}  \in  \ottnt{J}  }   \gap
      U = \bigjoin\set{T_j}_{j\in J}
    \]
    and the $\casek$ rule,   
    $ \Gamma'  \circ  \Delta   \vdash   \case{ \mathbb{e}' }{  \br{  \ottmv{l_{\ottmv{j}}} :  \ottmv{x_{\ottmv{j}}} .\,  \mathbb{e}_{\ottmv{j}}   }_{  \ottmv{j}  \in  \ottnt{J}  }  }   \ottsym{:}  \ottnt{U}$.
    Additionally,
    \[
       \Erase{  \case{ \mathbb{e}' }{  \br{  \ottmv{l_{\ottmv{j}}} :  \ottmv{x_{\ottmv{j}}} .\,  \mathbb{e}_{\ottmv{j}}   }_{  \ottmv{j}  \in  \ottnt{J}  }  }  }  =
       \case{  \Erase{ \mathbb{e}' }  }{  \br{  \ottmv{l_{\ottmv{j}}} :  \ottmv{x_{\ottmv{j}}} .\,   \Erase{ \mathbb{e}_{\ottmv{j}} }    }_{  \ottmv{j}  \in  \ottnt{J}  }  }  =
       \case{ \ottnt{f'} }{  \br{  \ottmv{l_{\ottmv{j}}} :  \ottmv{x_{\ottmv{j}}} .\,  \ottnt{f_{\ottmv{j}}}   }_{  \ottmv{j}  \in  \ottnt{J}  }  }  = f.
    \]
    We already have $\ottnt{U}  \Sub  \ottnt{T}$.

  \item[Case] subsumption rule: We are given
    $\Gamma  \vdash  \ottnt{f}  \ottsym{:}  \ottnt{U}$ and $\ottnt{U}  \Sub  \ottnt{T}$.
    By the IH, $U$ is static and
    there exist static $\mathbb{e}$ and $\ottnt{U'}$ such that
    \[
      \Gamma  \vdash  \mathbb{e}  \ottsym{:}  \ottnt{U'} \gap
      \ottnt{U'}  \Sub  \ottnt{U} \gap
       \Erase{ \mathbb{e} }   \ottsym{=}  \ottnt{f}.
    \]
    By $\ottnt{U'}  \Sub  \ottnt{U}$ and $\ottnt{U}  \Sub  \ottnt{T}$ and transitivity, $\ottnt{U'}  \Sub  \ottnt{T}$.
  \end{description}
\end{proof}

Proposition~\ref{prop:CI-static} states that
the cast-insertion translation does not insert casts
for static expressions, which can be seen as expressions of \GV{}
if type annotations are removed.
The proof is similar to
that of Theorem~\ref{thm:conservativity-typing}~(1).
\begin{proposition} \label{prop:CI-static}
  Suppose that $\Gamma$ and $\mathbb{e}$ are both static.
  If $\Gamma  \vdash  \mathbb{e}  \ciarrow  \ottnt{f}  \ottsym{:}  \ottnt{T}$,
  then $T$ is static and $ \Erase{ \mathbb{e} }  = \ottnt{f}$.
\end{proposition}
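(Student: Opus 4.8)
The plan is to argue by rule induction on the derivation of $\Gamma \vdash \mathbb{e} \ciarrow f : T$, with a case analysis on the last rule, closely mirroring the proof of Theorem~\ref{thm:conservativity-typing}~(1). The single unifying observation is that once every type in play is static, each side condition that the cast-insertion rules exploit degenerates into ordinary subtyping, so the optimised form $g : T' \Cast{p}_? U$ always collapses to $g$ and no cast is emitted. Concretely: a consistent-subtyping side condition $\ottnt{T_{{\mathrm{2}}}}  \lesssim  \ottnt{T_{{\mathrm{11}}}}$ between static types is in fact $\ottnt{T_{{\mathrm{2}}}}  \Sub  \ottnt{T_{{\mathrm{11}}}}$ by Lemma~\ref{lem:static-con-matching}~(1); and a matching side condition $ \matching{  \ottnt{T}  }{  \ottnt{U}  } $ with $T$ static (hence $T \neq \DYN,\DC$) yields, by Lemma~\ref{lem:static-con-matching}~(2), either $T = U$ (for the function, product, send, receive shapes) or $\ottnt{T}  \Sub  \ottnt{U}$ (for the $\oplus$ and $\&$ shapes), so in every instance $\ottnt{T}  \Sub  \ottnt{U}$. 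Thus in each of the application, pair-destruction, send, receive, select, case, fork, close, and wait rules every $\Cast{p}_?$ collapses, leaving $f$ equal to the outer syntax of $\mathbb{e}$ with the recursively translated subterms plugged in.

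To run the induction we must simultaneously maintain that the result type $T$ is static, since otherwise the degeneracy argument does not apply at the enclosing rule. For the introduction rules (name, unit, abstraction, new) this is immediate from the hypotheses: the domain annotation in the abstraction rule and the endpoint type in the new rule are annotations occurring inside the static $\mathbb{e}$. For the elimination rules it follows from the previous paragraph together with Lemma~\ref{lem:static-con-matching}~(2)(c): from the static matched type $T$ we extract static substructure ($\ottnt{T_{{\mathrm{11}}}},\ottnt{T_{{\mathrm{12}}}}$, or $\ottnt{T_{{\mathrm{1}}}},\ottnt{T_{{\mathrm{2}}}}$, or $\ottnt{T_{{\mathrm{3}}}},\ottnt{S}$, etc.), which feeds the inductive hypotheses on the subexpressions at static types and static environments. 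Since each inductive hypothesis then delivers a static subterm type and a cast-free translation, and since no new cast is inserted, we conclude $T$ is static and $ \Erase{ \mathbb{e} }  = \ottnt{f}$, the latter by unfolding the definition of $\Erase{-}$ (which is precisely "erase the annotations") and substituting the inductive equalities $ \Erase{ \mathbb{e}_{\ottmv{i}} }  = \ottnt{f_{\ottmv{i}}}$.

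The delicate case, and the one I expect to require the most care, is the $\casek$ rule. Here the matched type is $ \&    \br{  \ottmv{l_{\ottmv{j}}}  :  \ottnt{R_{\ottmv{j}}}  }_{  \ottmv{j}  \in  \ottnt{J}  } $, where $J$ is the label set of the $\casek$-expression and may strictly extend the label set of the scrutinee's type; by Lemma~\ref{lem:static-con-matching}~(2)(b) we still obtain $\ottnt{T'}  \Sub   \&    \br{  \ottmv{l_{\ottmv{j}}}  :  \ottnt{R_{\ottmv{j}}}  }_{  \ottmv{j}  \in  \ottnt{J}  } $ from $T'$ static, so the cast on the scrutinee is elided, but one must still argue that the branch environments $\Delta, \ottmv{x_{\ottmv{j}}}  \ottsym{:}  \ottnt{R_{\ottmv{j}}}$ occurring in the derivation are static — exactly the extra hypothesis invoked in Theorem~\ref{thm:conservativity-typing}~(1). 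For labels carried by $T'$ this is immediate since $T'$ is static; for the remaining labels one uses that $\ottmv{x_{\ottmv{j}}}$ has a session type and must be consumed by the static $\mathbb{e}_{\ottmv{j}}$. Granted that, the inductive hypotheses give static branch types $\ottnt{U_{\ottmv{j}}}$ with cast-free translations, and Lemma~\ref{lem:static-con-matching}~(3) then yields that $U = \bigjoin\set{U_j}_{j\in J}$ is static with $\ottnt{U_{\ottmv{j}}}  \Sub  \ottnt{U}$, so the per-branch casts $ \ottnt{f_{\ottmv{j}}}  :  \ottnt{U_{\ottmv{j}}}  \Cast{ \ottnt{p} }_?  \ottnt{U} $ are all elided as well. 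The contravariant positions (function domains in the application rule, the payload type of send types) are handled symmetrically, appealing to the $\Sub^{-}$/meet half of Lemma~\ref{lem:static-con-matching} just as in the conservativity proof.
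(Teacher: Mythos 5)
Your strategy coincides with the paper's: rule induction on $\Gamma \vdash \mathbb{e} \ciarrow f : T$, using Lemma~\ref{lem:static-con-matching} to turn every $\matchingsym$ and $\lesssim$ side condition into plain subtyping so that each $\Cast{p}_?$ collapses; the paper, too, only works out the application case. The genuine gap is precisely at the step you flag and then wave through: in the $\casek$ case you claim the branch environments $\Delta, x_j \colon R_j$ for labels \emph{not} covered by the scrutinee's type are static ``because $x_j$ has a session type and must be consumed by the static $\mathbb{e}_j$''. That inference is not valid. The matching clause for $\with$ puts no constraint at all on the residuals of the extra labels, so a derivation may choose $R_j = \DC$, and a static branch body consumes $x_j$ perfectly well at that type. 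Concretely, take $\Gamma = z \colon \with\br{l_1 \colon \Endpl}$ and $\mathbb{e} = \case{z}{\br{l_1 \colon x_1.\,\close{x_1},\ l_2 \colon x_2.\,\close{x_2}}}$: matching can yield $\with\br{l_1 \colon \Endpl,\, l_2 \colon \DC}$, the second branch typechecks by \textsc{CI-Close} since $\DC \sim \Endpl$, but the optimised cast $x_2 \colon \DC \Cast{p}_? \Endpl$ does not collapse because $\DC \notSub \Endpl$, so a cast is inserted and $\Erase{\mathbb{e}} = f$ fails for that derivation (the scrutinee cast still collapses, since $\with\br{l_1 \colon \Endpl} \Sub \with\br{l_1 \colon \Endpl,\, l_2 \colon \DC}$, and the result type $\unit$ is static). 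So the induction hypothesis is simply unavailable for those branches, and no appeal to linear consumption can recover it.

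What is actually needed is the side condition that Theorem~\ref{thm:conservativity-typing}~(1) makes explicit---that \emph{all} type environments occurring in the derivation are static---or, equivalently, a restriction to derivations in which the extra-label residuals are chosen static; note that you also cannot appeal to a ``canonical'' algorithmic choice, since \textsc{MatchingCase} in Appendix~\ref{sec:typech-algor-extern} fills missing labels with $\DC$. The paper's own proof is silent on the $\casek$ case (it displays only application and says the argument is ``similar'' to the conservativity theorem, whose statement does carry that extra hypothesis), so you have correctly located the thin spot; but as written your justification for it is not a proof, and the $\casek$ case does not go through from the stated hypotheses alone.
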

\begin{proof}
  By induction on $\Gamma  \vdash  \mathbb{e}  \ciarrow  \ottnt{f}  \ottsym{:}  \ottnt{T}$.
  with case analysis on the rule applied last.
  We show one of the main cases below.
  \begin{description}
  \item[Case] application rule: We are given
    \begin{align*}
      &\Gamma  \ottsym{=}   \Gamma_{{\mathrm{1}}}  \circ  \Gamma_{{\mathrm{2}}}  \gap
      \mathbb{e}  \ottsym{=}  \mathbb{e}_{{\mathrm{1}}} \, \mathbb{e}_{{\mathrm{2}}} \gap
      \ottnt{f}  \ottsym{=}  \ottsym{(}   \ottnt{f_{{\mathrm{1}}}}  :  \ottnt{T_{{\mathrm{1}}}}  \Cast{ \ottnt{p} }_?   \ottnt{T_{{\mathrm{11}}}}   \rightarrow _{ \ottnt{m} }  \ottnt{T_{{\mathrm{12}}}}    \ottsym{)} \, \ottsym{(}   \ottnt{f_{{\mathrm{2}}}}  :  \ottnt{T_{{\mathrm{2}}}}  \Cast{ \ottnt{p} }_?  \ottnt{T_{{\mathrm{11}}}}   \ottsym{)} \gap
      \ottnt{T}  \ottsym{=}  \ottnt{T_{{\mathrm{12}}}} \\
      &\Gamma_{{\mathrm{1}}}  \vdash  \mathbb{e}_{{\mathrm{1}}}  \ciarrow  \ottnt{f_{{\mathrm{1}}}}  \ottsym{:}  \ottnt{T_{{\mathrm{1}}}} \gap
      \Gamma_{{\mathrm{2}}}  \vdash  \mathbb{e}_{{\mathrm{2}}}  \ciarrow  \ottnt{f_{{\mathrm{2}}}}  \ottsym{:}  \ottnt{T_{{\mathrm{2}}}} \gap
       \matching{  \ottnt{T_{{\mathrm{1}}}}  }{   \ottnt{T_{{\mathrm{11}}}}   \rightarrow _{ \ottnt{m} }  \ottnt{T_{{\mathrm{12}}}}   }  \gap
      \ottnt{T_{{\mathrm{2}}}}  \lesssim  \ottnt{T_{{\mathrm{11}}}}.
    \end{align*}
    Since $\Gamma$, $\mathbb{e}$ are static,
    $\Gamma_{{\mathrm{1}}}$, $\Gamma_{{\mathrm{2}}}$, $\mathbb{e}_{{\mathrm{1}}}$, $\mathbb{e}_{{\mathrm{2}}}$
    are also static. 
    By $\Gamma_{{\mathrm{1}}}  \vdash  \mathbb{e}_{{\mathrm{1}}}  \ciarrow  \ottnt{f_{{\mathrm{1}}}}  \ottsym{:}  \ottnt{T_{{\mathrm{1}}}}$ and the IH,
    $\ottnt{T_{{\mathrm{1}}}}$ is static and $ \Erase{ \mathbb{e}_{{\mathrm{1}}} }   \ottsym{=}  \ottnt{f_{{\mathrm{1}}}}$.
    By $ \matching{  \ottnt{T_{{\mathrm{1}}}}  }{   \ottnt{T_{{\mathrm{11}}}}   \rightarrow _{ \ottnt{m} }  \ottnt{T_{{\mathrm{12}}}}   } $ and Lemma~\ref{lem:static-con-matching}~(2),
    $\ottnt{T_{{\mathrm{11}}}}$, $\ottnt{T_{{\mathrm{12}}}}$ are static and
    $\ottnt{T_{{\mathrm{1}}}}  \ottsym{=}   \ottnt{T_{{\mathrm{11}}}}   \rightarrow _{ \ottnt{m} }  \ottnt{T_{{\mathrm{12}}}} $. 
    So, \[  \ottnt{f_{{\mathrm{1}}}}  :  \ottnt{T_{{\mathrm{1}}}}  \Cast{ \ottnt{p} }_?   \ottnt{T_{{\mathrm{11}}}}   \rightarrow _{ \ottnt{m} }  \ottnt{T_{{\mathrm{12}}}}   = \ottnt{f_{{\mathrm{1}}}}  \ottsym{=}   \Erase{ \mathbb{e}_{{\mathrm{1}}} } . \]
    By $\Gamma_{{\mathrm{2}}}  \vdash  \mathbb{e}_{{\mathrm{2}}}  \ciarrow  \ottnt{f_{{\mathrm{2}}}}  \ottsym{:}  \ottnt{T_{{\mathrm{2}}}}$ and the IH,
    $\ottnt{T_{{\mathrm{2}}}}$ is static and $ \Erase{ \mathbb{e}_{{\mathrm{2}}} }   \ottsym{=}  \ottnt{f_{{\mathrm{2}}}}$.
    By $\ottnt{T_{{\mathrm{2}}}}  \lesssim  \ottnt{T_{{\mathrm{11}}}}$ and Lemma~\ref{lem:static-con-matching}~(1),
    $\ottnt{T_{{\mathrm{2}}}}  \Sub  \ottnt{T_{{\mathrm{11}}}}$.
    So,
    \[  \ottnt{f_{{\mathrm{2}}}}  :  \ottnt{T_{{\mathrm{2}}}}  \Cast{ \ottnt{p} }_?  \ottnt{T_{{\mathrm{11}}}}  = \ottnt{f_{{\mathrm{2}}}}  \ottsym{=}   \Erase{ \mathbb{e}_{{\mathrm{2}}} } . \]
    Thus,
    $f =  \Erase{ \mathbb{e}_{{\mathrm{1}}} }  \,  \Erase{ \mathbb{e}_{{\mathrm{2}}} }   \ottsym{=}   \Erase{ \mathbb{e}_{{\mathrm{1}}} \, \mathbb{e}_{{\mathrm{2}}} } $.
  \end{description}
\end{proof}

\subsection{(Failure of) The Gradual Guarantee}

In a gradually typed language, changing type annotations in a program
should not change the static or dynamic behavior---except for run-time
errors caused by casts.  Such an expectation is formalised by Siek et
al.~\cite{Siek-et-al-2015-criteria} as the gradual guarantee property.
It usually consists of two statements concerning the static and
dynamic aspects of programs.  The static counterpart of the gradual
guarantee (simply called the static gradual guarantee) states that
less precise type annotations make the type of an expression less
precise, whereas the dynamic gradual guarantee states that making type
annotations less precise does not change the final outcome of a
program.

We will show that, unfortunately, \GGVe{} satisfies neither the static
nor dynamic gradual guarantee by constructing counterexamples.  We
analyse the problem and argue that it is not easy to recover without
losing other good properties.

First, to capture the notion of programs with more precise type
annotations formally, the precision over types is extended to type
environments and expressions.  The relation $ \Gamma_{{\mathrm{1}}}   \sqsubseteq   \Gamma_{{\mathrm{2}}} $ is the
least relation that satisfies $  \cdot    \sqsubseteq    \cdot  $ and
$ \Gamma_{{\mathrm{1}}}  \ottsym{,}  \ottmv{x}  \ottsym{:}  \ottnt{T_{{\mathrm{1}}}}   \sqsubseteq   \Gamma_{{\mathrm{2}}}  \ottsym{,}  \ottmv{x}  \ottsym{:}  \ottnt{T_{{\mathrm{2}}}} $ if $ \Gamma_{{\mathrm{1}}}   \sqsubseteq   \Gamma_{{\mathrm{2}}} $ and
$\ottnt{T_{{\mathrm{1}}}} \, \sqsubseteq \, \ottnt{T_{{\mathrm{2}}}}$ and the relation $ \mathbb{e}_{{\mathrm{1}}}   \sqsubseteq   \mathbb{e}_{{\mathrm{2}}} $ is the least
precongruence that is closed under the following rules:
\begin{center}
\infrule{
  \ottnt{T_{{\mathrm{1}}}} \, \sqsubseteq \, \ottnt{T_{{\mathrm{2}}}} \andalso
   \mathbb{e}_{{\mathrm{1}}}   \sqsubseteq   \mathbb{e}_{{\mathrm{2}}} 
}{
    \lambda  _ \ottnt{m}   \ottmv{x} {:} \ottnt{T_{{\mathrm{1}}}} .\,  \mathbb{e}_{{\mathrm{1}}}    \sqsubseteq    \lambda  _ \ottnt{m}   \ottmv{x} {:} \ottnt{T_{{\mathrm{2}}}} .\,  \mathbb{e}_{{\mathrm{2}}}  
}
\qquad
\infrule{
  \ottnt{S_{{\mathrm{1}}}} \, \sqsubseteq \, \ottnt{S_{{\mathrm{2}}}}
}{
    \newk\, \ottnt{S_{{\mathrm{1}}}}    \sqsubseteq    \newk\, \ottnt{S_{{\mathrm{2}}}}  
}
\end{center}

Using the precision, the static gradual guarantee can be
stated as follows.

\begin{quote}
  If $ \Gamma_{{\mathrm{1}}}   \sqsubseteq   \Gamma_{{\mathrm{2}}} $, $ \mathbb{e}_{{\mathrm{1}}}   \sqsubseteq   \mathbb{e}_{{\mathrm{2}}} $, and $\Gamma_{{\mathrm{1}}}  \vdash  \mathbb{e}_{{\mathrm{1}}}  \ottsym{:}  \ottnt{T_{{\mathrm{1}}}}$, then
  $\Gamma_{{\mathrm{2}}}  \vdash  \mathbb{e}_{{\mathrm{2}}}  \ottsym{:}  \ottnt{T_{{\mathrm{2}}}}$ and $\ottnt{T_{{\mathrm{1}}}} \, \sqsubseteq \, \ottnt{T_{{\mathrm{2}}}}$ for some $\ottnt{T_{{\mathrm{2}}}}$.
\end{quote}

However, it does not hold:

\begin{theorem}[Failure of the Static Gradual Guarantee]
  There exist $\Gamma_{{\mathrm{1}}}$, $\Gamma_{{\mathrm{2}}}$, $\mathbb{e}_{{\mathrm{1}}}$, $\mathbb{e}_{{\mathrm{2}}}$, and $\ottnt{T_{{\mathrm{1}}}}$
  such that $ \Gamma_{{\mathrm{1}}}   \sqsubseteq   \Gamma_{{\mathrm{2}}} $, $ \mathbb{e}_{{\mathrm{1}}}   \sqsubseteq   \mathbb{e}_{{\mathrm{2}}} $, $\Gamma_{{\mathrm{1}}}  \vdashG  \mathbb{e}_{{\mathrm{1}}}  \ottsym{:}  \ottnt{T_{{\mathrm{1}}}}$
  and, for any $\ottnt{T_{{\mathrm{2}}}}$ such that $\Gamma_{{\mathrm{2}}}  \vdashG  \mathbb{e}_{{\mathrm{2}}}  \ottsym{:}  \ottnt{T_{{\mathrm{2}}}}$,
   $\ottnt{T_{{\mathrm{1}}}} \not \Subn \ottnt{T_{{\mathrm{2}}}}$.
\end{theorem}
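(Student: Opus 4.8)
The plan is to refute the static gradual guarantee with an explicit counterexample. The feature I exploit is that the type join $T \vee U$, which \GGVe{} uses in the $\casek$ rule (\textsc{TA-Case}) to assemble the type of a case expression from its branch types, is \emph{not} monotone with respect to type precision. Two clauses of the join conspire: the dynamic session type is absorbed, $\DC \vee S = S$ (and likewise $\DYN \vee T = T$), whereas the join of two internal-choice types \emph{intersects} their label sets; meanwhile naive subtyping on $\oplus$-types requires \emph{identical} label sets. Consequently, if making one branch of a case less precise collapses that branch's type all the way to $\DC$, the join silently drops it, and the case type can then keep labels that the more precise version had pruned, producing two precision-incomparable types.

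Concretely, I take $\mathbb{e}_1 = \mathbb{e}_2 = \case{c}{\br{l_1 \colon x_1.\, x_1,\ l_2 \colon x_2.\, x_2}}$, a case on a channel variable $c$ whose two branches merely return the depleted endpoint, so that $\mathbb{e}_1 \sqsubseteq \mathbb{e}_2$ holds at once (precision on expressions is a reflexive precongruence, so equal terms are related). The two environments differ only at $c$:
\begin{align*}
  \Gamma_1 &= c \colon \with\br{l_1 \colon \oplus\br{l_a \colon \Endpl},\ l_2 \colon \oplus\br{l_a \colon \Endpl, l_b \colon \Endpl}},\\
  \Gamma_2 &= c \colon \with\br{l_1 \colon \DC,\ l_2 \colon \oplus\br{l_a \colon \Endpl, l_b \colon \Endpl}}.
\end{align*}
Since $\oplus\br{l_a \colon \Endpl} \Subn \DC$ and precision is reflexive, $\Gamma_1 \sqsubseteq \Gamma_2$.

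The verification then has four short steps. First, under $\Gamma_1$ the matching of $c$ is forced to be the identity (the case offers exactly the labels $\br{l_1,l_2}$ present in $c$'s type, so the $\with$-matching rule admits no extra branches); branch $l_j$ receives the residual by \textsc{TA-Name}; and the case type is $T_1 = \oplus\br{l_a \colon \Endpl} \vee \oplus\br{l_a \colon \Endpl, l_b \colon \Endpl} = \oplus\br{l_a \colon \Endpl}$, because the $\oplus$-join keeps only the common label $l_a$. Second, under $\Gamma_2$ the matching is again the identity, branch $l_1$ gets $\DC$ and branch $l_2$ gets $\oplus\br{l_a \colon \Endpl, l_b \colon \Endpl}$, so (uniquely) $T_2 = \DC \vee \oplus\br{l_a \colon \Endpl, l_b \colon \Endpl} = \oplus\br{l_a \colon \Endpl, l_b \colon \Endpl}$. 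Third, $T_2$ is the \emph{only} type derivable for $\mathbb{e}_2$ under $\Gamma_2$, because \GGVe{} is syntax-directed and, as noted, the $\with$-matching premise here has exactly one solution. Fourth, $T_1 \not\Subn T_2$: the only naive-subtyping rules that could relate two $\oplus$-types demand equal label sets or that the right-hand side be $\DC$, and $\br{l_a}\neq\br{l_a,l_b}$ while $T_2\neq\DC$.

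The delicate points are (i) establishing that $T_2$ is genuinely unique rather than an unlucky choice among several --- this rests on the remark that the case's label set coincides with $c$'s, so \textsc{TA-Case}'s matching premise is deterministic --- and (ii) picking the right \emph{degree} of imprecision. The naive attempt of retyping $c$ as $\DC$ wholesale does \emph{not} yield a counterexample: then both branches collapse, the case gets type $\DC \vee \DC = \DC$, and $T_1 \Subn \DC$ holds. The guarantee breaks only when imprecision is introduced at a \emph{single} branch so that exactly one summand of the join is absorbed. I expect this second point --- realising that the counterexample must be "locally" imprecise --- to be the only real obstacle; everything else is a routine unfolding of the \GGVe{} rules, the matching relation, and the join. (A parallel example built around the meet of two $\oplus$-types appearing in the sent-value position of a send type exhibits the same pathology via the meet operation.)
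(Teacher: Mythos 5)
Your proposal is correct and takes essentially the same route as the paper: both refute the static guarantee with a two-branch $\casek$ whose join absorbs the fully dynamic type, so that making a single branch less precise changes the case's type to one that is naive-incomparable with the original (your uniqueness-of-$T_2$ argument via syntax-directedness and forced matching is sound). The only difference is the concrete witness—the paper uses $\DYN$-absorption together with the multiplicity on arrow types ($\unitk \to_\lin \unitk$ vs.\ $\unitk \to_\un \unitk$), whereas you use $\DC$-absorption together with label width on $\oplus$-types—which is an instantiation of the same underlying non-monotonicity of $\vee$ with respect to $\Subn$ that the paper itself identifies.
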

\begin{proof}
Let
    \begin{align*}
      \Gamma_{{\mathrm{1}}} &=  x:\ottnt{T_{{\mathrm{1}}}}, y:\ottnt{T_{{\mathrm{2}}}}, z:  \&    \br{  \ottmv{l_{{\mathrm{1}}}} :  \Endpl  ,  \ottmv{l_{{\mathrm{2}}}} :  \Endpl   }   \\
      \Gamma_{{\mathrm{2}}} &=  x: \DYN , y:\ottnt{T_{{\mathrm{2}}}}, z:  \&    \br{  \ottmv{l_{{\mathrm{1}}}} :  \Endpl  ,  \ottmv{l_{{\mathrm{2}}}} :  \Endpl   }   \\
      \mathbb{e}_{{\mathrm{1}}} &= \mathbb{e}_{{\mathrm{2}}} =  \case{ \ottnt{z} }{  \br{  \ottmv{l_{{\mathrm{1}}}} :  \ottmv{x_{{\mathrm{1}}}} .\,    \close{ \ottmv{x_{{\mathrm{1}}}} }   ;  \ottmv{x}  ,\,
               \ottmv{l_{{\mathrm{2}}}} :  \ottmv{x_{{\mathrm{2}}}} .\,    \close{ \ottmv{x_{{\mathrm{2}}}} }   ;  \ottmv{y}   }  }  \\
      \ottnt{T_{{\mathrm{1}}}} &=   \unit    \rightarrow _{  \lin  }   \unit   \\
      \ottnt{T_{{\mathrm{2}}}} &=   \unit    \rightarrow _{  \un  }   \unit  
    \end{align*}
(where $ \mathbb{e}_{{\mathrm{1}}}  ;  \mathbb{e}_{{\mathrm{2}}} $ stands for usual sequential composition).
Then, $ \Gamma_{{\mathrm{1}}}   \sqsubseteq   \Gamma_{{\mathrm{2}}} $, $ \mathbb{e}_{{\mathrm{1}}}   \sqsubseteq   \mathbb{e}_{{\mathrm{2}}} $, $\Gamma_{{\mathrm{1}}}  \vdashG  \mathbb{e}_{{\mathrm{1}}}  \ottsym{:}  \ottnt{T_{{\mathrm{1}}}}$, and
$\Gamma_{{\mathrm{2}}}  \vdashG  \mathbb{e}_{{\mathrm{2}}}  \ottsym{:}  \ottnt{T_{{\mathrm{2}}}}$; but $\ottnt{T_{{\mathrm{1}}}} \not \Subn \ottnt{T_{{\mathrm{2}}}}$.
\end{proof}

\newcommand{\AGTjoin}{\mathrel{\widetilde {\ddot{\vee}}}}

Although we do not state the dynamic gradual guarantee formally, we
expect at least that, if two programs $\mathbb{e}_{{\mathrm{1}}}$ and $\mathbb{e}_{{\mathrm{2}}}$ satisfy
$ \mathbb{e}_{{\mathrm{1}}}   \sqsubseteq   \mathbb{e}_{{\mathrm{2}}} $ and the execution of $\mathbb{e}_{{\mathrm{1}}}$ (after cast insertion)
terminates normally (at $\proc{()}$), then $\mathbb{e}_{{\mathrm{2}}}$ also terminates
normally.  Unfortunately, it would not be very difficult to see such
an expectation fail.  Let's consider
$$\mathbb{e}'_{{\mathrm{1}}} = \ottsym{(}   \lambda  _  \lin    \ottmv{x} {:} \ottnt{T_{{\mathrm{1}}}} .\,  \mathbb{e}_{{\mathrm{1}}}   \ottsym{)} \, \ottsym{(}   \lambda  _  \lin    \ottmv{x_{{\mathrm{1}}}} {:}  \unit  .\,  \ottmv{x_{{\mathrm{1}}}}   \ottsym{)}$$ and a more imprecise
expression $$\mathbb{e}'_{{\mathrm{2}}} = \ottsym{(}   \lambda  _  \lin    \ottmv{x} {:}  \DYN  .\,  \mathbb{e}_{{\mathrm{2}}}   \ottsym{)} \, \ottsym{(}   \lambda  _  \lin    \ottmv{x_{{\mathrm{1}}}} {:}  \unit  .\,  \ottmv{x_{{\mathrm{1}}}}   \ottsym{)}.$$ The
former will return $ \lambda  _  \lin    \ottmv{x_{{\mathrm{1}}}} {:}  \unit  .\,  \ottmv{x_{{\mathrm{1}}}} $ if $\ottmv{l_{{\mathrm{1}}}}$ is selected by
another process.  However, $\mathbb{e}'_{{\mathrm{2}}}$ shows different behavior: the
cast-inserting translation puts a cast from $ \DYN $ to
$\ottnt{T_{{\mathrm{2}}}} =   \unit    \rightarrow _{  \un  }   \unit  $ on $\ottmv{x}$ in the first branch of
$\casek{}$ in $\mathbb{e}_{{\mathrm{2}}}$ but $\ottmv{x}$ will be bound to (a reference to) a
linear function and, if $\ottmv{l_{{\mathrm{1}}}}$ is selected, the cast will fail and
raise blame.

The problem seems to stem from the fact that $ \vee $ has subtle
interaction with $\Subn$.  For typing $\casek$-expressions, we would
naturally require precision to be preserved by the join operation,
i.e., if $\ottnt{T_{{\mathrm{1}}}} \, \sqsubseteq \, \ottnt{T'_{{\mathrm{1}}}}$, then $\ottnt{T_{{\mathrm{1}}}}  \vee  \ottnt{T_{{\mathrm{2}}}} \, \sqsubseteq \, \ottnt{T'_{{\mathrm{1}}}}  \vee  \ottnt{T_{{\mathrm{2}}}}$.
However, the current definition of $ \vee $ breaks this property as
the counterexample to the static gradual guarantee above shows.  Also,
as we can see from the counterexample to the dynamic gradual guarantee,
join with a more precise type can yield a supertype---that is,
$\ottnt{T_{{\mathrm{1}}}} \, \sqsubseteq \,  \DYN $ and $ \DYN   \vee  \ottnt{T_{{\mathrm{2}}}}  \Sub  \ottnt{T_{{\mathrm{1}}}}  \vee  \ottnt{T_{{\mathrm{2}}}}$ hold.

One possible work\-around is to adapt the ``lifted join'' operation
$\AGTjoin$ of the GTFL${}_\lesssim$ language~\cite{Garcia-et-al-2016}
to \GGV.  Like $ \vee $,
$  \unit    \rightarrow _{  \lin  }   \unit   \AGTjoin   \unit    \rightarrow _{  \un  }   \unit   =   \unit    \rightarrow _{  \lin  }   \unit  $ but, unlike $ \vee $,
$ \DYN  \AGTjoin   \unit    \rightarrow _{  \un  }   \unit   =  \DYN $.  Thus, the lifted
join would perhaps recover the gradual guarantee.  However, it seems
that the lifted join is the least upper bound operation for no known
ordering between types and we would lose the minimal type property of
\GGVe{} if we used $\AGTjoin$.  Also, the lifted join has the
following property: $\ottnt{T} \AGTjoin \DYN$ is $\ottnt{T}$ only if $\ottnt{T}$
does not have nontrivial supertypes; otherwise $\ottnt{T} \AGTjoin \DYN$
is $\DYN$.  For example, $ \unit  \AGTjoin \DYN =  \unit $ and
$  \unit    \rightarrow _{  \lin  }   \unit   \AGTjoin \DYN =   \unit    \rightarrow _{  \lin  }   \unit  $ but
$  \unit    \rightarrow _{  \un  }   \unit   \AGTjoin \DYN = \DYN$ (because
$  \unit    \rightarrow _{  \un  }   \unit   \Sub   \unit    \rightarrow _{  \lin  }   \unit  $).  It means that the
standard narrowing property---if $\Gamma  \ottsym{,}  \ottmv{x}  \ottsym{:}  \ottnt{T_{{\mathrm{1}}}}  \vdashG  \mathbb{e}  \ottsym{:}  \ottnt{T}$ and
$\ottnt{T_{{\mathrm{2}}}}  \Sub  \ottnt{T_{{\mathrm{1}}}}$, then $\Gamma_{{\mathrm{1}}}  \ottsym{,}  \ottmv{x}  \ottsym{:}  \ottnt{T_{{\mathrm{2}}}}  \vdashG  \mathbb{e}  \ottsym{:}  \ottnt{T}$---does not hold.  We
leave more detailed analysis of the problem and possible remedy for
future work.

\section{Related Work}
\label{sec:related}

\subsection{Gradual Typing}
\label{sec:related-gradual-typing}

Findler and Felleisen \shortcite{Findler-Felleisen-2002} introduced two seminal ideas:
higher-order \emph{contracts} that dynamically monitor conformance to
a type discipline, and \emph{blame} to indicate whether it is the
library or the client which is at fault if the contract is violated.
Siek and Taha \shortcite{Siek-Taha-2006,DBLP:conf/ecoop/SiekT07} introduced gradual types to integrate
untyped and typed code, while Flanagan \shortcite{Flanagan-2006} introduced
hybrid types to integrate simple types with refinement types.  Both
used target languages with explicit casts and similar translations
from source to target; both exploit contracts, but neither allocates
blame.  Motivated by similarities between gradual and hybrid types,
Wadler and Findler \shortcite{Wadler-Findler-2009} introduced blame calculus, which unifies
the two by encompassing untyped, simply-typed, and refinement-typed
code. As the name indicates, it also restores blame, which enables a
proof of \emph{blame safety}: blame for type errors always lays with
less-precisely typed code---``well-typed programs can't be blamed''.

While the first investigations of gradual typing were based on 
simply-typed calculi, subsequent work has explored gradual
typing for a range of typing features. Polymorphism
\cite{DBLP:conf/popl/AhmedFSW11,DBLP:journals/pacmpl/IgarashiSI17,DBLP:journals/pacmpl/ToroLT19} has
proved to be quite tricky, with one important question about
the Jack-of-All-Trades Principle \cite{DBLP:conf/popl/AhmedFSW11} still open.
A gradual treatment of record types may be found in the paper on
Abstract Gradual Typing (AGT)~\cite{Garcia-et-al-2016}. Variant types have
proved elusive, but union types have been considered
\cite{DBLP:conf/birthday/SiekT16} along with intersection types and polymorphism as
part of a set-theoretical reevaluation of gradual principles
\cite{DBLP:journals/pacmpl/CastagnaLPS19}.

Moving towards session types, systems with gradual typestate have been
considered \cite{Wolf-et-al-2011,DBLP:journals/toplas/GarciaTWA14},
they extend an object-oriented 
language with typestate by a dynamic type and define a suitable
translation to an internal language with casts. The additional
complication is to track the current typestate at run time. 
Thiemann \shortcite{DBLP:conf/tgc/Thiemann14} describes a system with gradual types and
session types, but in it only types (and not session types) can be
gradual.

Effect systems have been gradualized based on ideas from abstract
interpretation  by Banados Schwerter and others
\shortcite{Banados-et-al-2014}. While the former work only presented a
gradualization of effects themselves a subsequent extension adds a
full treatment of types \cite{DBLP:journals/jfp/SchwerterGT16}.
Related ideas are explored by Thiemann and Fennell
\cite{DBLP:conf/esop/ThiemannF14} who present an approach to
gradualize annotated type systems, like units and security
labels. Following an earlier approach for gradual security typing for
simply-typed lambda calculus \cite{DisneyFlanagan2011}, Thiemann and
Fennell \shortcite{DBLP:conf/csfw/FennellT13} developed gradual
security for an ML core language with references and subsequently for
a Java core language LJGS with polymorphic security labels \cite{DBLP:conf/ecoop/FennellT16}. Toro and
others \shortcite{DBLP:journals/toplas/ToroGT18} developed a gradual
calculus with slightly different features from first principles using
the AGT \cite{Garcia-et-al-2016} approach. In each
of these approaches, special measures have to be taken to ensure the
key property of non-interference.
Gradual type systems related to session types also include
the run-time enforcement of affine typing of  Tov and Pucella \shortcite{Tov-Pucella-2010}.

As noted in the introduction, gradual typing may be 
important as a bridge to type systems that go beyond what is currently
available, including dependent, effect, and session types.
There is a range of gradual type systems for dependent types.
Ou and others \shortcite{Ou-et-al-2004} bridge the gap between
simply-typed lambda calculus and a calculus with indexed types.
In Flanagan's hybrid typing \cite{Flanagan-2006} subtyping judgments are
either proved or disproved statically by SMT theorem proving or
residualized as run-time checks.
Greenberg and others \shortcite{Greenberg-et-al-2010} consider
different styles of contracts in simply-typed and dependently-typed settings. 
Lehmann and Tanter
\shortcite{DBLP:conf/popl/LehmannT17} present an approach that uses
the AGT methodology to obtain a gradual system that mediates between
simple types and dependent refinement types. This work has been
augmented with type inference by Vazou and others
\shortcite{DBLP:journals/pacmpl/VazouTH18} and it has been extend to verification
\cite{DBLP:conf/vmcai/BaderAT18} where specifications may contains
unknown subformulas. Jafery and Dunfield
\cite{DBLP:conf/popl/JaferyD17} consider gradualized refinements for
sum types with the goal to control errors in pattern matching.

Gradual ownership types \cite{DBLP:conf/esop/SergeyC12} is a
gradualization of the Owners as Dominators principle of ownership. Its
theory is built with similar principles as other gradual languages,
but its flavor is different as ownership is not a semantic property,
but a structure imposed by the programmer.

Siek and others \shortcite{Siek-et-al-2015-criteria} review desirable properties of
gradually-typed languages, while Wadler \shortcite{Wadler-2015} discusses
history of the blame calculus and why blame is important.
These papers provide overviews of the field, each with many further
citations. Many of the above-cited works strive to fulfill the
properties of Siek and others, not all of them are successful, but
further discussion of the properties exceeds the scope of this survey
of related work.

TypeScript~TPD \citep{Williams-et-al-2017} applies contracts to
monitor the gradual typing of TypeScript, and evaluates the successes
and shortcomings of contracts in this context.

\subsection{Session Types}
\label{sec:related-session-types}

Session types were introduced by Honda, Vasconcelos, and Kubo
\shortcite{Honda-1993,Honda-et-al-1998}. The original system addressed
binary sessions, whereby types describe the interaction between two
partners. Binary sessions were eventually extended to the more general
setting of multiparty session types \citep{Honda-et-al-2016}.
Recent years have seen the introduction of session types in
programming languages, and software development tools. We review the
most important works.

Session types inspired the design of several programming languages.
Sing\# \cite{Fahndrich-et-al-2006} constitutes one of the first
attempts to introduce session types in programming languages. An
extension of C, Sing\# was used to implement Singularity, an operating
system based on message passing.
Gay and others \shortcite{Gay-et-al-2010} propose attaching session types to class
definitions, allowing to treat channels as objects for session-based
communication in distributed systems.
SePi \citep{franco.vasconcelos:sepi} is a concurrent, message-passing
programming language based on the pi-calculus, featuring a simple form
of refinement types.
SILL \citep{Toninho-et-al-2013,Pfenning-Griffith-2015} is a
higher-order session functional programming language, featuring
process expressions as first class objects via a linear contextual
monad.
Concurrent C0 \citep{Willsey-et-al-2016} is a type-safe C-like
programming language equipped with channel communication governed by
session types.
Links \citep{Lindley-Morris-2017} is a functional programming language
designed for tierless web applications that natively
supports binary session types.

Proposals have been made to retroactively introduce session types in
mainstream programming languages.
Session Java \citep{Hu-et-al-2008} introduces API-based session
primitives in Java, while \citep{Hu-et-al-2010} presents a Java
language extension and type discipline for session-based event-driven
programming.
Featherweight Erlang \citep{Mostrous-Vasconcelos-2011} imposes a
session-based type system to discipline message passing in Erlang.
Mungo \citep{Kouzapas-et-al-2016} is a tool for checking Java
code against session types, presented in the form of typestates.
Embedding of session types have
been proposed for Haskell
\citep{Orchard-Yoshida-2016,Pucella-Tov-2008,Sackman-Eisenbach-2008,Polakow-2015,Lindley-Morris-2016},
OCaml \citep{Padovani-2017}, Scala \citep{Scalas-Yoshida-2016}, and
Rust \citep{Jespersen-et-al-2015}.
Most of these embeddings delegate linearity checks to the run-time system.

Session types can be used in the software development process under
different forms, including languages to describe protocols,
specialised libraries to invoke session-based communication
primitives, provision for run-time monitoring against session types,
and extended type checkers.
Scribble~\citep{scribble} is a language-agnostic protocol description
formalism used in many different tools.
Multiparty Session C \citep{Ng-et-al-2012} uses Scribble, a
compiler plug-in, and a C library to validate against
session types.
Hu and Yoshida \shortcite{Hu-Yoshida-2016} generate protocol-specific Java APIs from
multiparty session types described in Scribble.
SPY \citep{Neykova-et-al-2013} generates run-time monitors for endpoint
communication from Scribble protocols.
Neykova and Yoshida \shortcite{Neykova-Yoshida-2014} designed and implemented a session actor
library in Python together with a run-time verification mechanism.
Bocchi and others \shortcite{Bocchi-et-al-2013} present a theory that incorporates both
static typing and dynamic monitoring of session types.
Fowler \shortcite{Fowler-2016b} describes a framework for monitoring Erlang
applications against multiparty session types.
Neykova and Yoshida \shortcite{Neykova-Yoshida-2017} investigate failure handling for
Erlang processes in a system that dynamically monitors session
types.

\section{Conclusions}
\label{sec:conclusions}

We presented the design of \GGV,
which combines a session-typed language \GV\
along the lines of Gay and Vasconcelos \shortcite{Gay-Vasconcelos-2010}
with a blame calculus along the lines of Wadler and Findler \shortcite{Wadler-Findler-2009},
and with dynamic enforcement of linearity along the lines of Tov and Pucella
\shortcite{Tov-Pucella-2010}.
We established expected results for such
a language, including type safety and blame safety.
Although the gradual guarantee does not hold, it seems that it is not
clear how it can be recovered without losing other good properties.

Much remains to be done; we consider just one future direction here.
The embedding of linear types in the unrestricted dynamic type relies
on an indirection through a cell in the store. In our
present work, these cells are used once and then discarded. This
\emph{one-shot policy} imposes a certain usage pattern on linear
values embedded in the unityped language. In particular, the send and receive operations on a channel
need to be chained as in $(\close{(\send{v_2}{(\send{v_1}{c})})})$.
However, one could imagine a unityped language where one
may use the channel non-linearly in an imperative style as in $(\send{v_1}{c};
\send{v_2}{c};\close{c})$, mimicking the style of network programming in
conventional languages. This style can also be supported by a variant
of {\GGV} with a
\emph{multi-shot policy} that restores an updated channel to the same
cell from which it was extracted. We leave the full formalisation of
this policy to future work.

\paragraph{Acknowledgments}
We would like to thank Alceste Scalas and Nobuko Yoshida for comments
and pointing out errors in the definition of subtyping rules, Kaede
Kobayashi for pointing out subtle errors in the operational semantics,
and anonymous reviewers for constructive comments.  We are also
grateful to Hannes Saffrich for implementing a type checker for the
calculus in this paper. This work was supported in part by the JSPS
KAKENHI Grant Number JP17H01723 (Igarashi), by FCT through the LASIGE
Research Unit ref.\ UID/CEC/00408/2019 and project Confident ref.\
PTDC/EEI-CTP/4503/2014 (Vasconcelos), and by EPSRC programme grant
EP/K034413/1 (Wadler).

\bibliographystyle{jfp}
\bibliography{biblio}

\clearpage
\appendix
\section{Typechecking Algorithm for the External Language}
\label{sec:typech-algor-extern}

We give a typechecking algorithm for \GGVe{} and show that it is correct.
The typechecking algorithm is slightly involved due to linearity:
$ \textsc{\tyexp}( \Gamma ,  \mathbb{e} ) $ outputs a pair of type $T$ and a
set $X$ of variables, containing the linear variables occurring free in $\mathbb{e}$.


\begin{algorithmic}

  \Function{\tyexp}{$\Gamma, \mathbb{e}$}
    \Case{$\mathbb{e}$}
      \CaseItem{$z$}
        \State \Assert $z \in \dom(\Gamma)$
        \State $T \asgn \Gamma(z)$
        \If {$\lin(T)$} \Return $T$, $\set{z}$
        \Else \Return $T$, $\emptyset$
        \EndIf
      \EndCaseItem

      \CaseItem{$()$}
        \Return $ \unit $, $\emptyset$
      \EndCaseItem

      \CaseItem{$ \lambda  _ \ottnt{m}   \ottmv{x} {:} \ottnt{T_{{\mathrm{1}}}} .\,  \mathbb{e}_{{\mathrm{1}}} $}
        \State $T_2, Y \asgn$ \Call{\tyexp}{$\ottsym{(}  \Gamma  \ottsym{,}  \ottmv{x}  \ottsym{:}  \ottnt{T_{{\mathrm{1}}}}  \ottsym{)}, \mathbb{e}_{{\mathrm{1}}}$}

        \If {$\lin(T_1)$ and $m = \un$}
          \State \Assert $Y = \set{x}$
          \State \Return $ \ottnt{T_{{\mathrm{1}}}}   \rightarrow _{  \un  }  \ottnt{T_{{\mathrm{2}}}} $, $\emptyset$

        \ElsIf {$\lin(T_1)$ and $m = \lin$}
          \State \Assert $x \in Y$
          \State \Return $ \ottnt{T_{{\mathrm{1}}}}   \rightarrow _{  \lin  }  \ottnt{T_{{\mathrm{2}}}} $, $Y \setminus \set{x}$

        \ElsIf {$\un(T_1)$ and $m = \un$}
          \State \Assert $Y = \emptyset$
          \State \Return $ \ottnt{T_{{\mathrm{1}}}}   \rightarrow _{  \un  }  \ottnt{T_{{\mathrm{2}}}} $, $\emptyset$

        \Else \Return $ \ottnt{T_{{\mathrm{1}}}}   \rightarrow _{  \lin  }  \ottnt{T_{{\mathrm{2}}}} $, $Y$
        \EndIf
      \EndCaseItem

      \CaseItem{$\mathbb{e}_{{\mathrm{1}}} \, \mathbb{e}_{{\mathrm{2}}}$}
        \State $T_1, X \asgn$ \Call{\tyexp}{$\Gamma, \mathbb{e}_{{\mathrm{1}}}$};\,
               $T_2, Y \asgn$ \Call{\tyexp}{$\Gamma, \mathbb{e}_{{\mathrm{2}}}$}
        \State \Assert $X \cap Y = \emptyset$

        \State $ \ottnt{T_{{\mathrm{11}}}}   \rightarrow _{ \ottnt{m} }  \ottnt{T_{{\mathrm{12}}}}  \asgn$ \Call{MatchingFun}{$T_1$}
        \State \Assert $\ottnt{T_{{\mathrm{2}}}}  \lesssim  \ottnt{T_{{\mathrm{11}}}}$
        \State \Return $T_{12}$, $X \cup Y$
      \EndCaseItem

      \CaseItem{$ (  \mathbb{e}_{{\mathrm{1}}} ,\,  \mathbb{e}_{{\mathrm{2}}}  )_ \ottnt{m} $}
        \State $T_1, X \asgn$ \Call{\tyexp}{$\Gamma, \mathbb{e}_{{\mathrm{1}}}$};\,
               $T_2, Y \asgn$ \Call{\tyexp}{$\Gamma, \mathbb{e}_{{\mathrm{2}}}$}
        \State \Assert $X \cap Y = \emptyset$

        \If {$m = \un$}
          \Assert $\un(T_1)$ and $\un(T_2)$
        \EndIf
        \State \Return $ \ottnt{T_{{\mathrm{1}}}}   \times _{ \ottnt{m} }  \ottnt{T_{{\mathrm{2}}}} $, $X \cup Y$
      \EndCaseItem

      \CaseItem {$ \letin{  \ottmv{x_{{\mathrm{1}}}} ,  \ottmv{x_{{\mathrm{2}}}}  }{ \mathbb{e}_{{\mathrm{1}}} }{ \mathbb{e}_{{\mathrm{2}}} } $}
        \State $T, Y \asgn$ \Call{\tyexp}{$\Gamma, \mathbb{e}_{{\mathrm{1}}}$}
        \State $ \ottnt{T_{{\mathrm{1}}}}   \times _{ \ottnt{m} }  \ottnt{T_{{\mathrm{2}}}}  \asgn$ \Call{MatchingProd}{$T$}

        \State $U, Z \asgn$ \Call{\tyexp}{$\ottsym{(}  \Gamma  \ottsym{,}  \ottmv{x}  \ottsym{:}  \ottnt{T_{{\mathrm{1}}}}  \ottsym{,}  \ottmv{y}  \ottsym{:}  \ottnt{T_{{\mathrm{2}}}}  \ottsym{)}, \mathbb{e}_{{\mathrm{2}}}$}

        \If {$\lin(T_1)$}
          \State \Assert $x_1 \in Z$
          \State $Z \asgn Z \setminus \set{x_1}$
        \EndIf
        \If {$\lin(T_2)$}
          \State \Assert $x_2 \in Z$
          \State $Z \asgn Z \setminus \set{x_2}$
        \EndIf

        \State \Assert $Y \cap Z = \emptyset$
        \State \Return $U$, $Y \cup Z$
      \EndCaseItem

      \CaseItem {$ \fork{ \mathbb{e}_{{\mathrm{1}}} } $}
        \State $T, X \asgn$ \Call{\tyexp}{$\Gamma, \mathbb{e}_{{\mathrm{1}}}$}
        \State \Assert $\ottnt{T}  \sim   \unit $
        \State \Return $\unit$, $X$
      \EndCaseItem

      \CaseItem {$ \newk\, \ottnt{S} $}
        \Return $ \ottnt{S}   \times _{  \lin  }   \dual{  \ottnt{S}  }  $, $\emptyset$
      \EndCaseItem

      \CaseItem {$ \send{ \mathbb{e}_{{\mathrm{1}}} }{ \mathbb{e}_{{\mathrm{2}}} } $}
        \State $T_1, X \asgn$ \Call{\tyexp}{$\Gamma, \mathbb{e}_{{\mathrm{1}}}$};\,
               $T_2, Y \asgn$ \Call{\tyexp}{$\Gamma, \mathbb{e}_{{\mathrm{2}}}$}
        \State \Assert $X \cap Y = \emptyset$

        \State $ \pl{ \ottnt{T_{{\mathrm{3}}}} }  \ottnt{S}  \asgn$ \Call{MatchingSend}{$T_2$}
        \State \Assert $\ottnt{T_{{\mathrm{1}}}}  \lesssim  \ottnt{T_{{\mathrm{3}}}}$
        \State \Return $S$, $X \cup Y$
      \EndCaseItem

      \CaseItem {$ \recv{ \mathbb{e}_{{\mathrm{1}}} } $}
        \State $T_1, X \asgn$ \Call{\tyexp}{$\Gamma, \mathbb{e}_{{\mathrm{1}}}$}
        \State $ \qu{ \ottnt{T_{{\mathrm{2}}}} }  \ottnt{S}  \asgn$ \Call{MatchingReceive}{$T_1$}
        \State \Return $ \ottnt{T_{{\mathrm{2}}}}   \times _{  \lin  }  \ottnt{S} $, $X$
      \EndCaseItem

      \CaseItem {$ \select{ \ottmv{l_{\ottmv{j}}} }{ \mathbb{e}_{{\mathrm{1}}} } $}
        \State $T, X \asgn$ \Call{\tyexp}{$\Gamma, \mathbb{e}_{{\mathrm{1}}}$}
        \State $ \oplus    \br{  \ottmv{l_{\ottmv{j}}}  :  \ottnt{S_{\ottmv{j}}}  }   \asgn$ \Call{MatchingSelect}{$T, \ottmv{l_{\ottmv{j}}}$}
        \State \Return $S_j$, $X$
      \EndCaseItem

      \CaseItem {$ \case{ \mathbb{e}_{{\mathrm{0}}} }{  \br{  \ottmv{l_{{\mathrm{1}}}} :  \ottmv{x_{{\mathrm{1}}}} .\,  \mathbb{e}_{{\mathrm{1}}} , \dots,
               \ottmv{l_{\ottmv{k}}} :  \ottmv{x_{\ottmv{k}}} .\,  \mathbb{e}_{\ottmv{k}}  }  } $}
        \State $T, X \asgn$ \Call{\tyexp}{$\Gamma, \mathbb{e}_{{\mathrm{0}}}$}
        \State $ \&    \br{  \ottmv{l_{{\mathrm{1}}}} : \ottnt{R_{{\mathrm{1}}}} , \dots ,  \ottmv{l_{\ottmv{k}}} : \ottnt{R_{\ottmv{k}}}  }   \asgn$ \Call{MatchingCase}{$\ottnt{T}, \br{\ottmv{l_{{\mathrm{1}}}},\ldots, \ottmv{l_{\ottmv{k}}}}$}

        \For {$j = 1$ to $k$}
          \State $U_j, Y_j \asgn$ \Call{\tyexp}{$\ottsym{(}  \Gamma  \ottsym{,}  \ottmv{x_{\ottmv{j}}}  \ottsym{:}  \ottnt{R_{\ottmv{j}}}  \ottsym{)}, \mathbb{e}_{\ottmv{j}}$}
          \State \Assert $x_j \in Y_j$
          \State $Y_j \asgn Y_j \setminus \set{x_j}$
        \EndFor

        \State \Assert $Y_1 = \dots = Y_k (\mathrel{=:} Y)$
        \State $U \asgn U_1 \join \dots \join U_k$

        \State \Assert $X \cap Y = \emptyset$
        \State \Return $U$, $X \cup Y$
      \EndCaseItem

      \CaseItem {$ \close{ \mathbb{e}_{{\mathrm{1}}} } $}
        \State $T, X \asgn$ \Call{\tyexp}{$\Gamma, \mathbb{e}_{{\mathrm{1}}}$}
        \State \Assert $\ottnt{T}  \sim   \Endpl $
        \State \Return $\unit$, $X$
      \EndCaseItem

      \CaseItem {$ \wait{ \mathbb{e}_{{\mathrm{1}}} } $}
        \State $T, X \asgn$ \Call{\tyexp}{$\Gamma, \mathbb{e}_{{\mathrm{1}}}$}
        \State \Assert $\ottnt{T}  \sim   \Endqu $
        \State \Return $\unit$, $X$
      \EndCaseItem

    \EndCase
  \EndFunction
\end{algorithmic}


\begin{algorithmic}
  \Function{MatchingFun}{$T$}
    \Case{$T$}
      \CaseItem {$ \ottnt{T_{{\mathrm{1}}}}   \rightarrow _{ \ottnt{m} }  \ottnt{T_{{\mathrm{2}}}} $}
        \Return $ \ottnt{T_{{\mathrm{1}}}}   \rightarrow _{ \ottnt{m} }  \ottnt{T_{{\mathrm{2}}}} $
      \EndCaseItem
      \CaseItem {$\DYN$}
        \Return $  \DYN    \rightarrow _{  \lin  }   \DYN  $
      \EndCaseItem
      \CaseItem {$\WC$} \Error
      \EndCaseItem
    \EndCase
  \EndFunction

  \Function{MatchingProd}{$T$}
    \Case{$T$}
      \CaseItem {$ \ottnt{T_{{\mathrm{1}}}}   \times _{ \ottnt{m} }  \ottnt{T_{{\mathrm{2}}}} $}
        \Return $ \ottnt{T_{{\mathrm{1}}}}   \times _{ \ottnt{m} }  \ottnt{T_{{\mathrm{2}}}} $
      \EndCaseItem
      \CaseItem {$\DYN$}
        \Return $  \DYN    \times _{  \lin  }   \DYN  $
      \EndCaseItem
      \CaseItem {$\WC$} \Error
      \EndCaseItem
    \EndCase
  \EndFunction

  \Function{MatchingSend}{$T$}
    \Case{$T$}
      \CaseItem {$ \pl{ \ottnt{T'} }  \ottnt{S} $}
        \Return $ \pl{ \ottnt{T'} }  \ottnt{S} $
      \EndCaseItem
      \CaseItem {$\DC \mid \DYN$}
        \Return $ \pl{  \DYN  }   \DC  $
      \EndCaseItem
      \CaseItem {$\WC$} \Error
      \EndCaseItem
    \EndCase
  \EndFunction

  \Function{MatchingReceive}{$T$}
    \Case{$T$}
      \CaseItem {$ \pl{ \ottnt{T'} }  \ottnt{S} $}
        \Return $ \pl{ \ottnt{T'} }  \ottnt{S} $
      \EndCaseItem
      \CaseItem {$\DC \mid \DYN$}
        \Return $ \pl{  \DYN  }   \DC  $
      \EndCaseItem
      \CaseItem {$\WC$} \Error
      \EndCaseItem
    \EndCase
  \EndFunction

  \Function{MatchingSelect}{$T, \ottmv{l_{\ottmv{j}}}$}
    \Case{$T$}
      \CaseItem {$ \oplus    \br{  \ottmv{l_{\ottmv{i}}}  :  \ottnt{S_{\ottmv{i}}}  }_{  \ottmv{i}  \in  \ottnt{I}  }  $}
        \State\Assert $j \in I$
        \State\Return $ \oplus    \br{  \ottmv{l_{\ottmv{j}}}  :  \ottnt{S_{\ottmv{j}}}  }  $
      \EndCaseItem
      \CaseItem {$\DC \mid \DYN$}
        \Return $ \oplus    \br{  \ottmv{l_{\ottmv{j}}}  :   \DC   }  $
      \EndCaseItem
      \CaseItem {$\WC$} \Error
      \EndCaseItem
    \EndCase
  \EndFunction

  \Function{MatchingCase}{$T, \set{\ottmv{l_{\ottmv{j}}}}_{j\in J}$}
    \Case{$T$}
      \CaseItem {$ \&    \br{  \ottmv{l_{\ottmv{i}}}  :  \ottnt{S_{\ottmv{i}}}  }_{  \ottmv{i}  \in  \ottnt{I}  }  $}
        \If {$I \subseteq J$}
          \Return $\&  \br{  \ottmv{l_{\ottmv{i}}}  :  \ottnt{S_{\ottmv{i}}}  }_{  \ottmv{i}  \in  \ottnt{I}  }  \cup  \br{  \ottmv{l_{\ottmv{j}}}  :   \DC   }_{  \ottmv{j}  \in   \ottnt{J}  \setminus  \ottnt{I}   } $
        \Else\Error
        \EndIf
      \EndCaseItem
      \CaseItem {$\DC \mid \DYN$}
        \Return $ \&    \br{  \ottmv{l_{\ottmv{j}}}  :   \DC   }_{  \ottmv{j}  \in  \ottnt{J}  }  $
      \EndCaseItem
      \CaseItem {$\WC$} \Error
      \EndCaseItem
    \EndCase
  \EndFunction
\end{algorithmic}

Theorem~\ref{thm:tycheck-soundness} states soundness of the typechecking algorithm.
A few lemmas are required in preparation.
Let $ \remove( \Gamma ,  \ottnt{X} ) $ denote
the operation that removes variables $\ottnt{X}$
from a type environment $\Gamma$.

\begin{lemma} \label{lem:tycheck-remove-unused} 
  Suppose $y\colon U \in \Gamma$ with $\lin(U)$.
  \begin{center}
    If $ \textsc{\tyexp}( \Gamma ,  \mathbb{e} ) =  \ottnt{T} ,  \ottnt{X} $ and $y \notin X$, then
    $ \textsc{\tyexp}(  \remove( \Gamma ,  \ottsym{\{}  \ottmv{y}  \ottsym{\}} )  ,  \mathbb{e} ) =  \ottnt{T} ,  \ottnt{X} $.
  \end{center}
\end{lemma}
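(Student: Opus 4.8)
The plan is to prove the statement by structural induction on the expression $\mathbb{e}$, following the recursive structure of $\textsc{\tyexp}$; equivalently one may induct on the ``derivation'' that $\textsc{\tyexp}(\Gamma,\mathbb{e})$ returns $T,X$. The induction hypothesis is exactly the statement of this lemma applied to proper subexpressions (with possibly extended environments). Throughout I rely on Barendregt's variable convention, so that $y$ is distinct from every variable bound in $\mathbb{e}$; this is what makes removal of $y$ commute with the environment extensions performed by the algorithm, e.g.\ $\remove(\ottsym{(}\Gamma\ottsym{,}\ottmv{x}\ottsym{:}\ottnt{T_{{\mathrm{1}}}}\ottsym{)},\set{y}) = \ottsym{(}\remove(\Gamma,\set{y})\ottsym{,}\ottmv{x}\ottsym{:}\ottnt{T_{{\mathrm{1}}}}\ottsym{)}$, and similarly for the two- and one-variable extensions used by $\letk$ and $\casek$. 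I also use the trivial observation that the auxiliary matching routines ($\textsc{MatchingFun}$, $\textsc{MatchingProd}$, $\textsc{MatchingSend}$, $\textsc{MatchingReceive}$, $\textsc{MatchingSelect}$, $\textsc{MatchingCase}$) take only a type as input, hence are unaffected by changing $\Gamma$.

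\textbf{Base cases.} These are $\mathbb{e}=z$, $\mathbb{e}=\ottsym{()}$, and $\mathbb{e}=\newk\,S$. For $\mathbb{e}=z$ the key point is $z\ne y$: if $z=y$ then, since $\lin(U)$, the algorithm takes the $\lin$ branch and returns $X=\set{z}=\set{y}$, contradicting $y\notin X$; and in the $\un$ branch $\Gamma(z)$ is unrestricted, so again $z\ne y$. Hence $z\in\dom(\remove(\Gamma,\set{y}))$ with $\remove(\Gamma,\set{y})(z)=\Gamma(z)$, and both runs return the same pair. The cases $\ottsym{()}$ and $\newk\,S$ return immediately without consulting $\Gamma$, so there is nothing to show.

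\textbf{Inductive cases.} The uniform pattern is: (i) from $y\notin X$ together with the elementary set-algebra relating $X$ to the sets $Y$, $Z$, $Y_j$, $X'$, $Y'$ returned by the recursive calls (in each branch $X$ is $Y$, or $Y\setminus\set{x}$, or $X'\cup Y'$, or $X'\cup\bigl(\bigcup_j (Y_j\setminus\set{x_j})\bigr)$, etc.) and from $y$ being distinct from the bound variables $x,x_1,x_2,x_j$, deduce that $y$ lies outside every such set; (ii) apply the induction hypothesis to each recursive call, after rewriting the (extended) environment via the commutation fact above; (iii) observe that the remaining computation --- the assertions, the calls to the matching routines, the joins $\bigjoin$, the disjointness checks, and the final set unions and removals --- is literally identical in the two runs, since it depends only on data already shown to coincide. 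This covers $\lambda_m x{:}T_1.\mathbb{e}_1$ (all four sub-branches, where the relation between $X$ and the recursively returned $Y$ gives $y\notin Y$), application, pair construction, and $\sendk$ (two recursive calls on the same $\Gamma$, with $y$ outside $X'\cup Y'$), $\letk$ (one recursive call on $\Gamma$ and one on $\Gamma,x_1{:}T_1,x_2{:}T_2$), $\casek$ (one call on $\Gamma$ and one per branch on $\Gamma,x_j{:}R_j$, using the assertion that all $Y_j\setminus\set{x_j}$ agree), and the unary operations $\forkk$, $\recvk$, $\selectk$, $\closek$, $\waitk$, whose returned set is exactly that of their single recursive call.

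\textbf{Main obstacle.} I expect the only real subtlety to be the bookkeeping in the binder cases: making precise that $y\notin X$ forces $y$ to be absent from the linear-variable set of each subexpression, and that this, combined with freshness of $y$ for the bound variables, licenses applying the induction hypothesis to the environment-extended recursive calls. A clean way to package step (i) is to first record, as an auxiliary remark, that in every branch of $\textsc{\tyexp}$ each set returned by a recursive call is, up to removal of freshly bound variables, a subset of the set $X$ ultimately returned; then $y\notin X$ and freshness immediately yield $y$ outside all of them. Everything else is a mechanical replay of the algorithm on the two environments.
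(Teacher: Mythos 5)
Your proof is correct and follows essentially the same route as the paper: structural induction on $\mathbb{e}$, propagating $y\notin X$ (together with freshness of $y$ with respect to bound variables) to the sets returned by the recursive calls, applying the induction hypothesis there, and noting the rest of the algorithm (matching routines, assertions, joins) is independent of whether $y$ is removed from $\Gamma$. The paper's proof only spells out the application case, which matches your treatment of it; your extra care with the binder cases is just added detail, not a different method.
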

\begin{proof}
  By induction on $\mathbb{e}$.
  We show one important case below.
  \begin{description}
  \item[Case] $\mathbb{e}  \ottsym{=}  \mathbb{e}_{{\mathrm{1}}} \, \mathbb{e}_{{\mathrm{2}}}$: We are given
    \[
      \Gamma(y) = U \gap
      \lin(U) \gap
       \textsc{\tyexp}( \Gamma ,  \mathbb{e}_{{\mathrm{1}}} \, \mathbb{e}_{{\mathrm{2}}} ) =  \ottnt{T} ,  \ottnt{X}  \gap
      y \notin X.
    \]
    %
    By the definition of the algorithm,
    $ \textsc{\tyexp}( \Gamma ,  \mathbb{e}_{\ottmv{i}} ) =  \ottnt{T_{\ottmv{i}}} ,  \ottnt{X_{\ottmv{i}}} $ for $i = 1, 2$ and
    \[
       \matching{  \ottnt{T_{{\mathrm{1}}}}  }{   \ottnt{T_{{\mathrm{11}}}}   \rightarrow _{ \ottnt{m} }  \ottnt{T_{{\mathrm{12}}}}   }  \gap
      \ottnt{T_{{\mathrm{2}}}}  \lesssim  \ottnt{T_{{\mathrm{11}}}} \gap
      \ottnt{T}  \ottsym{=}  \ottnt{T_{{\mathrm{12}}}} \gap
      X = X_1 \uplus X_2.
    \]
    Since $y \notin X$, we have $y \notin X_1$ and $y \notin X_2$.
    By $ \textsc{\tyexp}( \Gamma ,  \mathbb{e}_{\ottmv{i}} ) =  \ottnt{T_{\ottmv{i}}} ,  \ottnt{X_{\ottmv{i}}} $ and $y \notin X_i$ and the IH
    for $i = 1, 2$, we have
    \[
       \textsc{\tyexp}(  \remove( \Gamma ,  \ottsym{\{}  \ottmv{y}  \ottsym{\}} )  ,  \mathbb{e}_{{\mathrm{1}}} ) =  \ottnt{T_{{\mathrm{1}}}} ,  \ottnt{X_{{\mathrm{1}}}}  \gap
       \textsc{\tyexp}(  \remove( \Gamma ,  \ottsym{\{}  \ottmv{y}  \ottsym{\}} )  ,  \mathbb{e}_{{\mathrm{2}}} ) =  \ottnt{T_{{\mathrm{2}}}} ,  \ottnt{X_{{\mathrm{2}}}} .
    \]
    Thus, by the definition of the algorithm,
    $ \textsc{\tyexp}(  \remove( \Gamma ,  \ottsym{\{}  \ottmv{y}  \ottsym{\}} )  ,  \mathbb{e}_{{\mathrm{1}}} \, \mathbb{e}_{{\mathrm{2}}} ) =  \ottnt{T} ,  \ottnt{X} $.
  \end{description}
\end{proof}

\begin{lemma} \label{lem:flv-split}
  Suppose $ \flv{ \Gamma }  = X_1 \uplus X_2$.
  If $\Gamma_{{\mathrm{1}}}  \ottsym{=}   \remove( \Gamma ,  \ottnt{X_{{\mathrm{2}}}} ) $ and $\Gamma_{{\mathrm{2}}}  \ottsym{=}   \remove( \Gamma ,  \ottnt{X_{{\mathrm{1}}}} ) $, then
  $\Gamma  \ottsym{=}   \Gamma_{{\mathrm{1}}}  \circ  \Gamma_{{\mathrm{2}}} $ and
  $ \flv{ \Gamma_{{\mathrm{1}}} }   \ottsym{=}  \ottnt{X_{{\mathrm{1}}}}$ and $ \flv{ \Gamma_{{\mathrm{2}}} }   \ottsym{=}  \ottnt{X_{{\mathrm{2}}}}$.
\end{lemma}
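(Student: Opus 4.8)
The plan is to prove the statement by induction on the structure of the type environment $\Gamma$, following the grammar $\Gamma ::= \cdot \mid \Gamma, z\colon T$ and matching the three clauses that define environment splitting $\Gamma = \Gamma_1 \circ \Gamma_2$ in Figure~\ref{fig:gv-typing}. Before the induction I would record the evident computation rules for $\remove$: $\remove(\cdot, X) = \cdot$; if $z \notin X$ then $\remove(\Gamma, z\colon T, X) = (\remove(\Gamma, X)), z\colon T$; and if $z \in X$ (and, by the variable convention, $z$ does not occur in $\Gamma$) then $\remove(\Gamma, z\colon T, X) = \remove(\Gamma, X) = \remove(\Gamma, X \setminus\set{z})$. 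These, together with the fact that $\flv{\Gamma, z\colon T} = \flv{\Gamma}$ when $\un(T)$ and $\flv{\Gamma, z\colon T} = \flv{\Gamma} \cup \set{z}$ when $\lin(T)$, are all that the argument needs.

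For the base case $\Gamma = \cdot$: $\flv{\cdot} = \emptyset$ forces $X_1 = X_2 = \emptyset$, so $\Gamma_1 = \Gamma_2 = \cdot$, and $\cdot = \cdot \circ \cdot$ holds by the first splitting rule; the $\flv{}$ claims are immediate. For the inductive step $\Gamma = \Gamma', z\colon T$ I would case-split on the multiplicity of $T$. If $\un(T)$, then $z \notin \flv{\Gamma}$ and $\flv{\Gamma'} = X_1 \uplus X_2$, so the induction hypothesis applies to $\Gamma'$ with the same partition, giving $\Gamma' = \Gamma_1' \circ \Gamma_2'$ with $\Gamma_1' = \remove(\Gamma', X_2)$, $\Gamma_2' = \remove(\Gamma', X_1)$ and $\flv{\Gamma_i'} = X_i$; since $z \notin X_1 \cup X_2$ we get $\Gamma_1 = \Gamma_1', z\colon T$ and $\Gamma_2 = \Gamma_2', z\colon T$, the unrestricted splitting rule yields $\Gamma = \Gamma_1 \circ \Gamma_2$, and $\flv{\Gamma_i} = \flv{\Gamma_i'} = X_i$ because $z$ is unrestricted. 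If $\lin(T)$, then $z \in \flv{\Gamma} = X_1 \uplus X_2$, so $z$ lies in exactly one of $X_1, X_2$; taking $z \in X_1$ (the case $z\in X_2$ is symmetric), set $X_1' = X_1 \setminus \set{z}$, so $\flv{\Gamma'} = X_1' \uplus X_2$ and the induction hypothesis applies with partition $(X_1', X_2)$. Using the variable convention, $z$ does not occur in $\Gamma'$, hence $\remove(\Gamma', X_1) = \remove(\Gamma', X_1')$; combined with $z \notin X_2$ this gives $\Gamma_1 = \Gamma_1', z\colon T$ and $\Gamma_2 = \Gamma_2'$ where $\Gamma_1' = \remove(\Gamma', X_2)$ and $\Gamma_2' = \remove(\Gamma', X_1')$. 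The ``left'' linear splitting rule ($\Gamma', z\colon T = (\Gamma_1', z\colon T) \circ \Gamma_2'$) then gives $\Gamma = \Gamma_1 \circ \Gamma_2$, and $\flv{\Gamma_1} = \flv{\Gamma_1'} \cup \set{z} = X_1' \cup \set{z} = X_1$ while $\flv{\Gamma_2} = \flv{\Gamma_2'} = X_2$.

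The only delicate point is the bookkeeping: keeping straight which of $X_1, X_2$ each linear variable belongs to, invoking the variable convention so that a freshly-exposed $z$ is known to be absent from $\Gamma'$ (which is what makes $\remove$ commute with context extension and lets one rewrite $\remove(\Gamma', X_1)$ as $\remove(\Gamma', X_1')$), and using the disjointness in $X_1 \uplus X_2$ to select the correct one of the two linear splitting rules. No genuine difficulty is anticipated beyond this; essentially all the content is the one-to-one correspondence between the three cases of the inductive split and the three rules defining $\circ$.
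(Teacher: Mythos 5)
Your proof is correct and takes essentially the same route as the paper, whose proof of this lemma is simply ``by induction on $\Gamma$''; your write-up fills in exactly the case analysis (empty environment, unrestricted extension, linear extension matched to the corresponding splitting rule) that the paper leaves implicit.
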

\begin{proof}
  By induction on $\Gamma$.
\end{proof}

\begin{theorem}[Soundness of the typechecking algorithm] \label{thm:tycheck-soundness}
  If $ \textsc{\tyexp}( \Gamma ,  \mathbb{e} ) =  \ottnt{T} ,  \ottnt{X} $ and $ \flv{ \Gamma }   \ottsym{=}  \ottnt{X}$, then
  $\Gamma  \vdashG  \mathbb{e}  \ottsym{:}  \ottnt{T}$.
\end{theorem}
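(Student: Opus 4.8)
The plan is to proceed by induction on the expression $\mathbb{e}$ — equivalently, on the structure of the recursive calls made by $\textsc{\tyexp}$ — with a case analysis following the clauses of the algorithm. Since the algorithmic rules of Figure~\ref{fig:surface-typing} are obtained from the declarative ones by folding subsumption into the elimination rules and by using matching/consistent subtyping wherever a coercion is needed, each clause of $\textsc{\tyexp}$ lines up with exactly one typing rule, so each inductive case essentially rebuilds that derivation. The first thing to record is that every auxiliary function — \textsc{MatchingFun}, \textsc{MatchingProd}, \textsc{MatchingSend}, \textsc{MatchingReceive}, \textsc{MatchingSelect}, \textsc{MatchingCase} — when it returns (rather than erroring) returns some $U$ with $\matching{T}{U}$; this is immediate by comparing their clauses with Figure~\ref{fig:matching}. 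For the leaf cases ($z$, $()$, $\newk\,S$) the hypothesis $\flv{\Gamma} = X$ together with the shape of the $X$ produced by the algorithm forces $\un(\Gamma)$ (resp.\ $\un$ of the part of $\Gamma$ not naming $z$), which is precisely the side condition of \textsc{T-Name}/\textsc{T-Unit}/\textsc{T-New}.

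The crux is the treatment of linearity in the clauses with two or more subexpressions (application, pair construction, pair destruction, $\send{}{}$, $\case{}{}$). There $\textsc{\tyexp}$ recurses on each subexpression $\mathbb{e}_i$ with the \emph{same} environment $\Gamma$, returning disjoint linear-variable sets $X_i$ (disjointness is asserted). To invoke the induction hypothesis I need, for each $\mathbb{e}_i$, an environment $\Gamma_i$ with $\flv{\Gamma_i} = X_i$ and $\textsc{\tyexp}(\Gamma_i, \mathbb{e}_i) = T_i, X_i$. This is exactly what the two preparatory lemmas provide: Lemma~\ref{lem:tycheck-remove-unused} lets me discard from $\Gamma$ any linear variable not in $X_i$ without changing the algorithm's output, and Lemma~\ref{lem:flv-split} then exhibits the split $\Gamma = \Gamma_1 \circ \Gamma_2$ with $\flv{\Gamma_i} = X_i$ (where $X = X_1 \uplus X_2$ and the unrestricted part of $\Gamma$ is shared, as $\circ$ demands). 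Combining the induction hypotheses $\Gamma_i \vdashG \mathbb{e}_i : T_i$ with the matching fact and the asserted consistency (e.g.\ $T_2 \lesssim T_{11}$ in the application clause) reconstructs the conclusion via \textsc{TA-App}/\textsc{T-PairCons}/\textsc{T-PairDest}/\textsc{TA-Send}/\textsc{TA-Case}. For $\casek$ one additionally uses that all residual sets $Y_j$ are equal (asserted), so the single environment $\Delta$ of \textsc{TA-Case} is well defined, and that $U = U_1 \vee \cdots \vee U_k$ is literally $\bigjoin\set{U_j}_{j\in J}$.

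The remaining subtlety is binders. In the $\lambda_m x{:}T_1.\,\mathbb{e}_1$ clause, $\textsc{\tyexp}$ recurses on $\Gamma, x{:}T_1$ and returns $Y$; when $\lin(T_1)$ the algorithm asserts $x \in Y$ (and $Y = \set{x}$ if also $m = \un$) and returns $Y\setminus\set{x}$ (resp.\ $\emptyset$). One checks $\flv{\Gamma, x{:}T_1} = Y$ from $\flv{\Gamma} = Y\setminus\set{x}$, applies the induction hypothesis to get $\Gamma, x{:}T_1 \vdashG \mathbb{e}_1 : T_2$, and observes that the four sub-cases enforce exactly $\super{m}{\Gamma}$ as required by \textsc{T-Abs}: when $m = \un$ the returned set is $\emptyset$, i.e.\ $\un(\Gamma)$; when $m = \lin$ there is no constraint. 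The $\letin{x_1,x_2}{\cdot}{\cdot}$ and $\case{}{}$-branch binders are handled the same way, removing $x_1,x_2$ (resp.\ $x_j$) from the returned set when they are linear and asserting their presence, so that the $\Delta$-environment of \textsc{T-PairDest}/\textsc{TA-Case} has the right $\flv{}$.

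The hard part will be precisely this bookkeeping: translating $\textsc{\tyexp}$'s single-environment, set-tracked discipline into the declarative system's multiple split environments — in particular making sure the $\uplus$ of returned sets corresponds to a genuine $\circ$-split and that unrestricted variables are handled uniformly across both branches. Lemmas~\ref{lem:tycheck-remove-unused} and~\ref{lem:flv-split} are tailored for exactly this, so once they are in hand each case is a routine (if tedious) reconstruction of the matching typing rule, with no genuinely new idea required.
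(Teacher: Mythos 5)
Your proposal is correct and follows essentially the same route as the paper's proof: induction on $\mathbb{e}$, using Lemma~\ref{lem:tycheck-remove-unused} to discard unused linear variables and Lemma~\ref{lem:flv-split} to turn the disjoint sets $X_1 \uplus X_2$ into a genuine split $\Gamma = \Gamma_1 \circ \Gamma_2$ with $\flv{\Gamma_i} = X_i$, then reconstructing each typing rule from the algorithm's assertions (matching, consistent subtyping, join). The paper's proof shows only the abstraction and application cases, and your handling of those cases (including deriving $\super{\un}{\Gamma}$ from $\flv{\Gamma} = \emptyset$) matches it.
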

\begin{proof}
  By induction on $\mathbb{e}$.
  We show main cases below.
  \begin{description}
  \item[Case] $\mathbb{e}  \ottsym{=}   \lambda  _ \ottnt{m}   \ottmv{x} {:} \ottnt{T_{{\mathrm{1}}}} .\,  \mathbb{e}_{{\mathrm{1}}} $: We are given
    \[
       \textsc{\tyexp}( \Gamma ,   \lambda  _ \ottnt{m}   \ottmv{x} {:} \ottnt{T_{{\mathrm{1}}}} .\,  \mathbb{e}_{{\mathrm{1}}}  ) =  \ottnt{T} ,  \ottnt{X}  \gap
       \flv{ \Gamma }   \ottsym{=}  \ottnt{X}.
    \]
    We consider only when $m = \un$ and $ \lin   \ottsym{(}  \ottnt{T_{{\mathrm{1}}}}  \ottsym{)}$.
    By the definition of the algorithm,
    \[
       \textsc{\tyexp}( \ottsym{(}  \Gamma  \ottsym{,}  \ottmv{x}  \ottsym{:}  \ottnt{T_{{\mathrm{1}}}}  \ottsym{)} ,  \mathbb{e}_{{\mathrm{1}}} ) =  \ottnt{T_{{\mathrm{2}}}} ,  \ottsym{\{}  \ottmv{x}  \ottsym{\}}  \gap
      X = \emptyset.
    \]
    So, $ \flv{ \Gamma }   \ottsym{=}  \ottnt{X} = \emptyset$.
    By $ \lin   \ottsym{(}  \ottnt{T_{{\mathrm{1}}}}  \ottsym{)}$, we have $ \flv{ \Gamma  \ottsym{,}  \ottmv{x}  \ottsym{:}  \ottnt{T_{{\mathrm{1}}}} }   \ottsym{=}  \ottsym{\{}  \ottmv{x}  \ottsym{\}}$.
    Thus, by the IH, we have $\Gamma  \ottsym{,}  \ottmv{x}  \ottsym{:}  \ottnt{T_{{\mathrm{1}}}}  \vdash  \mathbb{e}_{{\mathrm{1}}}  \ottsym{:}  \ottnt{T_{{\mathrm{2}}}}$.
    %
    Here, by $ \flv{ \Gamma }  = \emptyset$, we have $ \un   \ottsym{(}  \Gamma  \ottsym{)}$.
    So, $ \super{   \un   }{  \Gamma  } $.
    We finish by
    \[
      \Gamma  \ottsym{,}  \ottmv{x}  \ottsym{:}  \ottnt{T_{{\mathrm{1}}}}  \vdash  \mathbb{e}_{{\mathrm{1}}}  \ottsym{:}  \ottnt{T_{{\mathrm{2}}}} \gap
       \super{   \un   }{  \Gamma  } 
    \]
    and the abstraction rule. 

  \item[Case] $\mathbb{e}  \ottsym{=}  \mathbb{e}_{{\mathrm{1}}} \, \mathbb{e}_{{\mathrm{2}}}$: We are given
    \[
       \textsc{\tyexp}( \Gamma ,  \mathbb{e}_{{\mathrm{1}}} \, \mathbb{e}_{{\mathrm{2}}} ) =  \ottnt{T} ,  \ottnt{X}  \gap
       \flv{ \Gamma }   \ottsym{=}  \ottnt{X}.
    \]
    By the definition of the algorithm,
    $ \textsc{\tyexp}( \Gamma ,  \mathbb{e}_{\ottmv{i}} ) =  \ottnt{T_{\ottmv{i}}} ,  \ottnt{X_{\ottmv{i}}} $ for $i = 1,2$ and
    \[
       \matching{  \ottnt{T_{{\mathrm{1}}}}  }{   \ottnt{T_{{\mathrm{11}}}}   \rightarrow _{ \ottnt{m} }  \ottnt{T_{{\mathrm{12}}}}   }  \gap
      \ottnt{T_{{\mathrm{2}}}}  \lesssim  \ottnt{T_{{\mathrm{11}}}} \gap
      \ottnt{T}  \ottsym{=}  \ottnt{T_{{\mathrm{12}}}} \gap
      X = X_1 \uplus X_2.
    \]
    Let $\Gamma_{{\mathrm{1}}}  \ottsym{=}   \remove( \Gamma ,  \ottnt{X_{{\mathrm{2}}}} ) $ and $\Gamma_{{\mathrm{2}}}  \ottsym{=}   \remove( \Gamma ,  \ottnt{X_{{\mathrm{1}}}} ) $.
    By Lemma~\ref{lem:flv-split},
    $\Gamma  \ottsym{=}   \Gamma_{{\mathrm{1}}}  \circ  \Gamma_{{\mathrm{2}}} $ and
    $ \flv{ \Gamma_{{\mathrm{1}}} }   \ottsym{=}  \ottnt{X_{{\mathrm{1}}}}$ and $ \flv{ \Gamma_{{\mathrm{2}}} }   \ottsym{=}  \ottnt{X_{{\mathrm{2}}}}$.
    By $ \textsc{\tyexp}( \Gamma ,  \mathbb{e}_{\ottmv{i}} ) =  \ottnt{T_{\ottmv{i}}} ,  \ottnt{X_{\ottmv{i}}} $ and Lemma~\ref{lem:tycheck-remove-unused}, we have $ \textsc{\tyexp}( \Gamma_{\ottmv{i}} ,  \mathbb{e}_{\ottmv{i}} ) =  \ottnt{T_{\ottmv{i}}} ,  \ottnt{X_{\ottmv{i}}} $  for $i = 1,2$.
    By $ \flv{ \Gamma_{\ottmv{i}} }   \ottsym{=}  \ottnt{X_{\ottmv{i}}}$ and $ \textsc{\tyexp}( \Gamma_{\ottmv{i}} ,  \mathbb{e}_{\ottmv{i}} ) =  \ottnt{T_{\ottmv{i}}} ,  \ottnt{X_{\ottmv{i}}} $ and the IH, we have $\Gamma_{\ottmv{i}}  \vdash  \mathbb{e}_{\ottmv{i}}  \ottsym{:}  \ottnt{T_{\ottmv{i}}}$ for $i = 1,2$.
    We finish by
    \[
      \Gamma_{{\mathrm{1}}}  \vdash  \mathbb{e}_{{\mathrm{1}}}  \ottsym{:}  \ottnt{T_{{\mathrm{1}}}} \gap
      \Gamma_{{\mathrm{2}}}  \vdash  \mathbb{e}_{{\mathrm{2}}}  \ottsym{:}  \ottnt{T_{{\mathrm{2}}}} \gap
       \matching{  \ottnt{T_{{\mathrm{1}}}}  }{   \ottnt{T_{{\mathrm{11}}}}   \rightarrow _{ \ottnt{m} }  \ottnt{T_{{\mathrm{12}}}}   }  \gap
      \ottnt{T_{{\mathrm{2}}}}  \lesssim  \ottnt{T_{{\mathrm{11}}}}
    \]
    and the application rule. 
  \end{description}
\end{proof}

\begin{sloppypar}
We will also show the converse of the theorem above. Completeness
states that $ \textsc{\tyexp}( \Gamma ,  \mathbb{e} ) $ computes a minimal type with respect to
negative subtyping.
\end{sloppypar}

\begin{theorem}[Completeness of the typechecking algorithm] \label{thm:tycheck-completeness}
  If $\Gamma  \vdashG  \mathbb{e}  \ottsym{:}  \ottnt{T}$, then
  $ \textsc{\tyexp}( \Gamma ,  \mathbb{e} ) =  \ottnt{T'} ,  \ottnt{X} $ and $ \ottnt{T'}   \Sub^{-}   \ottnt{T} $ and $ \flv{ \Gamma }   \ottsym{=}  \ottnt{X}$ for some $\ottnt{T'}$.
\end{theorem}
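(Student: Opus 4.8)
The plan is to argue by induction on the derivation of $\Gamma \vdashG \mathbb{e} : \ottnt{T}$, with case analysis on the last rule; since the \GGVe{} type system is syntax-directed (subsumption is absorbed into application, sending, $\selectk$ and $\casek$), this is in effect structural induction on $\mathbb{e}$ and mirrors the structure of the soundness proof (Theorem~\ref{thm:tycheck-soundness}) and of the conservativity proof (Theorem~\ref{thm:conservativity-typing}). Before starting I would prove three auxiliary facts. First, a \emph{weakening lemma for the algorithm}, the converse of Lemma~\ref{lem:tycheck-remove-unused}: if $\textsc{\tyexp}(\Gamma, \mathbb{e}) = \ottnt{T}, \ottnt{X}$ and $y \notin \ottnt{X}$, then $\textsc{\tyexp}(\Gamma, y{:}\ottnt{U}, \mathbb{e}) = \ottnt{T}, \ottnt{X}$ for any $\ottnt{U}$; together with Lemma~\ref{lem:flv-split} this reconciles the declarative rules, which split $\Gamma = \Gamma_1 \circ \Gamma_2$ and check each subexpression in its own half, with the algorithm, which runs on all of $\Gamma$ and tracks used linear variables via the sets $\ottnt{X}$. (I also need the \GGVe{} analogue of Lemma~\ref{lem:flv}, namely $\Gamma \vdashG \mathbb{e}:\ottnt{T}$ implies $\flv{\Gamma}$ equals the set of free linear variables of $\mathbb{e}$, which both discharges the ``$\flv{\Gamma} = \ottnt{X}$'' conjunct and ensures the variables introduced by weakening are genuinely unused.) Second, a \emph{compatibility lemma for the matching subroutines}: if $\ottnt{T'} \Sub^- \ottnt{T}$ and $\matching{\ottnt{T}}{\ottnt{U}}$ with some extracted substructure, then the corresponding $\textsc{Matching}$-subroutine on $\ottnt{T'}$ succeeds and returns substructure that is componentwise $\Sub^-$-below (in covariant positions) that of $\ottnt{U}$ — in particular, for $\selectk$ and $\casek$ the residual $\DC$ that the algorithm supplies for labels not mentioned by the scrutinee's type is $\Sub^-$-least among all residuals admissible in the declarative rule. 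Third, from Lemmas~\ref{lemma:bounds}(1) and~\ref{lemma:polarised-subtyping-lub-glb}(1), $\vee$ is monotone and a least upper bound for $\Sub^-$: if $\ottnt{U^*_j} \Sub^- \ottnt{U_j}$ for all $j$ and $\bigvee_j \ottnt{U_j}$ is defined, then $\bigvee_j \ottnt{U^*_j}$ is defined and $\bigvee_j \ottnt{U^*_j} \Sub^- \bigvee_j \ottnt{U_j}$.

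With these in hand the routine cases proceed mechanically. For \textsc{T-Name}, \textsc{T-Unit}, \textsc{T-New} the algorithm returns exactly the declared type and $\Sub^-$ is reflexive. For \textsc{T-Abs} the recursive call is on $\Gamma, x{:}\ottnt{T_1}$, the induction hypothesis gives a body type $\Sub^-$-below the declarative one, and covariance of $\Sub^-$ on the codomain of a function (domain and multiplicity unchanged) finishes. For the elimination rules \textsc{TA-App}, \textsc{TA-Send}, \textsc{T-Receive}, \textsc{T-Select}, \textsc{T-PairCons}, \textsc{T-PairDest}, \textsc{T-Fork}, \textsc{T-Close}, \textsc{T-Wait}: I push each premise to the full environment $\Gamma$ using the weakening lemma and Lemma~\ref{lem:flv-split}, apply the induction hypothesis to get algorithm types $\Sub^-$-below the declarative ones, use the matching-compatibility lemma to see that the relevant $\textsc{Matching}$ call still succeeds, and discharge the side conditions ($\ottnt{T_2} \lesssim \ottnt{T_{11}}$ for application, $\ottnt{T_1}\lesssim \ottnt{T_3}$ for send, the $\un(\cdot)$ condition for unrestricted pairs, the $\sim\unit,\ \sim\Endpl,\ \sim\Endqu$ conditions) using $\Sub^- \subseteq \lesssim$ (Lemma~\ref{lemma:subtyping-hierarchy}) together with small ``absorption'' facts stating that $\lesssim$ is preserved when composed with $\Sub^-$ or $\Sub^+$ on the appropriate side. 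The result type is then read off and is $\Sub^-$-below the declarative result by covariance.

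The interesting case is \textsc{TA-Case}. Inverting the declarative derivation gives $\Gamma = \Gamma' \circ \Delta$, $\Gamma' \vdashG \mathbb{e}_0 : \ottnt{T_0}$, $\matching{\ottnt{T_0}}{\&\br{l_j{:}\ottnt{R_j}}_{j\in J}}$, $(\Delta, x_j{:}\ottnt{R_j} \vdashG \mathbb{e}_j : \ottnt{U_j})_{j\in J}$, and $\ottnt{U} = \bigvee_j \ottnt{U_j}$. The induction hypothesis on the scrutinee gives $\ottnt{T^*_0} \Sub^- \ottnt{T_0}$, and matching-compatibility gives $\textsc{MatchingCase}(\ottnt{T^*_0}, \br{l_j}_{j\in J}) = \&\br{l_j{:}\ottnt{R^*_j}}_{j\in J}$ with $\ottnt{R^*_j} \Sub^- \ottnt{R_j}$ for every $j$. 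I must now typecheck each branch $\mathbb{e}_j$ under $\Delta, x_j{:}\ottnt{R^*_j}$, whereas the declarative derivation used $\Delta, x_j{:}\ottnt{R_j}$; this requires a \emph{narrowing lemma for negative subtyping} — if $\Delta, x{:}\ottnt{R} \vdashG \mathbb{e} : \ottnt{V}$ and $\ottnt{R^*} \Sub^- \ottnt{R}$, then $\Delta, x{:}\ottnt{R^*} \vdashG \mathbb{e} : \ottnt{V^*}$ for some $\ottnt{V^*} \Sub^- \ottnt{V}$ — which I would prove by a separate induction (or fold into a strengthened statement of the theorem quantifying over $\Sub^-$-smaller environments in covariant positions). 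Applying the induction hypothesis to the narrowed branch derivations, weakening to $\Gamma$, and the least-upper-bound fact for $\vee$ then yields that the join of the algorithm's branch types is defined and $\Sub^- \bigvee_j \ottnt{U_j} = \ottnt{U}$; the linear-variable bookkeeping ($Y_1 = \cdots = Y_k$, disjointness with the scrutinee's set) is exactly as in the soundness proof.

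The main obstacle is this narrowing step. The paper already notes (in the discussion of the lifted join) that ordinary narrowing with respect to $\Sub$ fails, so one must verify that the $\Sub^-$-variant genuinely holds — which it should, since $\Sub^- \subseteq \lesssim$ and each type constructor is monotone in $\Sub^-$ in the polarity relevant to a variable's occurrences — while keeping strict track of polarities so that the consistent-subtyping side conditions of the elimination rules remain satisfied after the declarative types are replaced by the (in negative positions more dynamic) types the algorithm computes. Once the narrowing lemma and the absorption facts for $\lesssim$ are established, everything else is the same bookkeeping already carried out for soundness and conservativity.
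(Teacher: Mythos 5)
Your proposal is correct and its technical core coincides with the paper's: the same three ingredients carry the argument, namely compatibility of the $\textsc{Matching}$ subroutines with $\Sub^{-}$ (Lemma~\ref{lem:poscon-matching}), the absorption facts composing $\Sub^{-}$ and $\Sub^{+}$ with $\lesssim$ (Lemma~\ref{lem:trans-polcon}), and the fact that $\vee$ is a least upper bound for $\Sub^{-}$ (Lemmas~\ref{lemma:bounds} and~\ref{lemma:polarised-subtyping-lub-glb}). The difference is organizational. The paper does not prove a separate declarative narrowing lemma nor an algorithm-weakening lemma; instead it proves one strengthened statement (Lemma~\ref{lem:stronger-tycheck-completeness}): completeness for any environment $\Gamma'$ related to the declarative $\Gamma$ by a pointwise $\Sub^{-}$ relation that also permits extra variables in $\Gamma'$, with the output set $X$ still equal to $\flv{\Gamma}$. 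That single generalization simultaneously handles your two bridging steps: the extra-variable clause replaces your converse of Lemma~\ref{lem:tycheck-remove-unused} (reconciling environment splitting with the algorithm's use of the whole context via Lemma~\ref{lem:flv-split}), and the pointwise $\Sub^{-}$ clause replaces your narrowing lemma in the $\casek$ branch, where the inductive call is simply made at environment $(\Gamma', x_j{:}R'_j)$ with $R'_j \Sub^{-} R_j$. The theorem then follows by reflexivity. Your primary route (separate narrowing plus weakening) also works, but note two points: the narrowing lemma cannot be obtained from completeness together with soundness without circularity, so it needs its own induction, which — since narrowing the scrutinee's type changes the branch types $R_j$ — must itself be stated for arbitrary $\Sub^{-}$-smaller bindings and argued by structural induction on the expression rather than on the given derivation; at that point it is essentially Lemma~\ref{lem:stronger-tycheck-completeness} restated declaratively, so the parenthetical alternative you mention (folding the generalization into the theorem statement) is both the cleaner option and exactly what the paper does. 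Also, state the weakening lemma with the hypothesis that the added variable is outside $\dom(\Gamma)$ (hence not free in $\mathbb{e}$, since the run succeeded), rather than merely $y\notin X$, which only excludes linear occurrences.
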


To prove this theorem, we need a stronger statement, namely Lemma~\ref{lem:stronger-tycheck-completeness}.
We define environment positive consistent subtyping, written $ \Gamma'   \Sub^{-}   \Gamma $, as
$\dom(\Gamma) \subseteq \dom(\Gamma')$ and $\Gamma(\ottmv{x})     \Sub^{-}     \Gamma'(\ottmv{x})$, for any
$\ottmv{x} \in \dom(\Gamma)$.
  Then,
the theorem follows from the fact that $    \Sub^{-}    $ on type environments
is reflexive.
We start with a few lemmas about $    \Sub^{-}    $.

\begin{lemma}\label{lem:poscon-matching}
  \begin{enumerate}
  \item If $ \ottnt{T'_{{\mathrm{1}}}}   \Sub^{-}   \ottnt{T_{{\mathrm{1}}}} $ and
    $ \matching{  \ottnt{T_{{\mathrm{1}}}}  }{   \ottnt{T_{{\mathrm{11}}}}   \rightarrow _{ \ottnt{m} }  \ottnt{T_{{\mathrm{12}}}}   } $, then there exist some
    $\ottnt{T'_{{\mathrm{11}}}}$, $\ottnt{T'_{{\mathrm{12}}}}$, and $n$ such that
    $\Call{MatchingFun}{T_1'} =  \ottnt{T'_{{\mathrm{11}}}}   \rightarrow _{ \ottnt{n} }  \ottnt{T'_{{\mathrm{12}}}} $ and
    $  \ottnt{T'_{{\mathrm{11}}}}   \rightarrow _{ \ottnt{n} }  \ottnt{T'_{{\mathrm{12}}}}    \Sub^{-}    \ottnt{T_{{\mathrm{11}}}}   \rightarrow _{ \ottnt{m} }  \ottnt{T_{{\mathrm{12}}}}  $.
  \item If $ \ottnt{T'_{{\mathrm{1}}}}   \Sub^{-}   \ottnt{T_{{\mathrm{1}}}} $ and
    $ \matching{  \ottnt{T_{{\mathrm{1}}}}  }{   \ottnt{T_{{\mathrm{11}}}}   \times _{ \ottnt{m} }  \ottnt{T_{{\mathrm{12}}}}   } $, then there exist some
    $\ottnt{T'_{{\mathrm{11}}}}$, $\ottnt{T'_{{\mathrm{12}}}}$, and $n$ such that
    $\Call{MatchingProd}{T_1'} =  \ottnt{T'_{{\mathrm{11}}}}   \times _{ \ottnt{n} }  \ottnt{T'_{{\mathrm{12}}}} $ and
    $  \ottnt{T'_{{\mathrm{11}}}}   \times _{ \ottnt{n} }  \ottnt{T'_{{\mathrm{12}}}}    \Sub^{-}    \ottnt{T_{{\mathrm{11}}}}   \times _{ \ottnt{m} }  \ottnt{T_{{\mathrm{12}}}}  $.
  \item If $ \ottnt{T'_{{\mathrm{1}}}}   \Sub^{-}   \ottnt{T_{{\mathrm{1}}}} $ and
    $ \matching{  \ottnt{T_{{\mathrm{1}}}}  }{   \pl{ \ottnt{T_{{\mathrm{11}}}} }  \ottnt{S_{{\mathrm{12}}}}   } $, then there exist some
    $\ottnt{T'_{{\mathrm{11}}}}$ and $\ottnt{S'_{{\mathrm{12}}}}$ such that \\
    $\Call{MatchingSend}{T_1'} =  \pl{ \ottnt{T'_{{\mathrm{11}}}} }  \ottnt{S'_{{\mathrm{12}}}} $ and
    $  \pl{ \ottnt{T'_{{\mathrm{11}}}} }  \ottnt{S'_{{\mathrm{12}}}}    \Sub^{-}    \pl{ \ottnt{T_{{\mathrm{11}}}} }  \ottnt{S_{{\mathrm{12}}}}  $.
  \item If $ \ottnt{T'_{{\mathrm{1}}}}   \Sub^{-}   \ottnt{T_{{\mathrm{1}}}} $ and
    $ \matching{  \ottnt{T_{{\mathrm{1}}}}  }{   \qu{ \ottnt{T_{{\mathrm{11}}}} }  \ottnt{S_{{\mathrm{12}}}}   } $, then there exist some
    $\ottnt{T'_{{\mathrm{11}}}}$ and $\ottnt{S'_{{\mathrm{12}}}}$ such that \\
    $\Call{MatchingReceive}{T_1'} =  \qu{ \ottnt{T'_{{\mathrm{11}}}} }  \ottnt{S'_{{\mathrm{12}}}} $ and
    $  \qu{ \ottnt{T'_{{\mathrm{11}}}} }  \ottnt{S'_{{\mathrm{12}}}}    \Sub^{-}    \qu{ \ottnt{T_{{\mathrm{11}}}} }  \ottnt{S_{{\mathrm{12}}}}  $.
  \item If $ \ottnt{T'_{{\mathrm{1}}}}   \Sub^{-}   \ottnt{T_{{\mathrm{1}}}} $ and
    $ \matching{  \ottnt{T_{{\mathrm{1}}}}  }{   \oplus    \br{  \ottmv{l_{\ottmv{j}}}  :  \ottnt{S_{\ottmv{j}}}  }    } $, then there exist some
    $\ottnt{S'_{\ottmv{j}}}$ such that \\
    $\Call{MatchingSelect}{T_1', l_j} =  \oplus    \br{  \ottmv{l_{\ottmv{j}}}  :  \ottnt{S'_{\ottmv{j}}}  }  $ and
    $  \oplus    \br{  \ottmv{l_{\ottmv{j}}}  :  \ottnt{S'_{\ottmv{j}}}  }     \Sub^{-}    \oplus    \br{  \ottmv{l_{\ottmv{j}}}  :  \ottnt{S_{\ottmv{j}}}  }   $.
  \item If $ \ottnt{T'_{{\mathrm{1}}}}   \Sub^{-}   \ottnt{T_{{\mathrm{1}}}} $ and
    $ \matching{  \ottnt{T_{{\mathrm{1}}}}  }{   \&    \br{  \ottmv{l_{{\mathrm{1}}}} : \ottnt{S_{{\mathrm{1}}}} , \dots ,  \ottmv{l_{\ottmv{k}}} : \ottnt{S_{\ottmv{k}}}  }    } $, then there exist some
    $\ottnt{S'_{{\mathrm{1}}}}, \ldots, \ottnt{S'_{\ottmv{k}}}$ such that
    $\Call{MatchingCase}{T_1', \br{l_1,\ldots, l_k}} =  \&    \br{  \ottmv{l_{{\mathrm{1}}}} : \ottnt{S'_{{\mathrm{1}}}} , \dots ,  \ottmv{l_{\ottmv{k}}} : \ottnt{S'_{\ottmv{k}}}  }  $ and
    $  \&    \br{  \ottmv{l_{{\mathrm{1}}}} : \ottnt{S'_{{\mathrm{1}}}} , \dots ,  \ottmv{l_{\ottmv{k}}} : \ottnt{S'_{\ottmv{k}}}  }     \Sub^{-}    \&    \br{  \ottmv{l_{{\mathrm{1}}}} : \ottnt{S_{{\mathrm{1}}}} , \dots ,  \ottmv{l_{\ottmv{k}}} : \ottnt{S_{\ottmv{k}}}  }   $.
  \end{enumerate}
\end{lemma}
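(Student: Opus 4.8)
The plan is to prove the seven items by a single uniform schema, item by item, combining (i) a case split on how the matching hypothesis $\matching{T_1}{\cdots}$ is derived and (ii) inversion on the hypothesis $T_1' \Sub^- T_1$. Before starting I would record the shape of these two derivations. For matching: Figure~\ref{fig:matching} shows that, for each of the six structural forms, $\matching{T_1}{\cdots}$ holds in exactly two ways --- either $T_1$ already has that top constructor (and matching is essentially the identity, except that in the $\&$ case it may have extended the label set with $\DC$ residuals), or $T_1 \in \{\DYN,\DC\}$, in which case the matched structure is the canonical ``ground'' shape $\DYN\to_\lin\DYN$, $\DYN\times_\lin\DYN$, $\pl{\DYN}\DC$, $\qu{\DYN}\DC$, $\oplus\br{l_j\colon\DC}$, or $\&\br{l_j\colon\DC}_{j\in J}$. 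For $\Sub^-$: since no structural rule of $\Sub^-$ places $\DYN$ or $\DC$ on the right, $T_1' \Sub^- \DYN$ forces $T_1' = \DYN$ and $T_1' \Sub^- \DC$ forces $T_1' \in \{\DYN,\DC\}$; and when $T_1$ has a proper top constructor, inversion of $T_1' \Sub^- T_1$ yields that $T_1'$ is either $\DYN$ (or also $\DC$ for a session form), via the axioms $\DYN\Sub^- T$ and $\DC\Sub^- S$, or has the same top constructor with subcomponents related to those of $T_1$ by $\Sub^+$/$\Sub^-$ with the appropriate polarities and, for functions and products, by $\Multsub$ on multiplicities.

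With these in hand, each item splits into three small sub-cases. (a) If $T_1\in\{\DYN,\DC\}$: then $T_1'\in\{\DYN,\DC\}$ as well, and $\textsc{MatchingX}(T_1')$ returns the very same canonical ground shape returned for $T_1$, so the conclusion holds by reflexivity of $\Sub^-$. (b) If $T_1$ has a proper top constructor and so does $T_1'$: $\textsc{MatchingX}(T_1')$ merely re-exposes $T_1'$ (for $\oplus$/$\&$ possibly restricted or extended to the requested label set), and the required $\Sub^-$ relation is exactly what inversion of $T_1' \Sub^- T_1$ gave, modulo transporting the component relations through the identity-like matching and, in the $\&$ case, checking that the $\DC$-filled extra branches sit below the arbitrary residuals allowed by $\matching{T_1}{\cdots}$ --- which holds because $\DC\Sub^- S$. (c) If $T_1$ has a proper top constructor but $T_1'\in\{\DYN,\DC\}$: $\textsc{MatchingX}(T_1')$ returns the canonical ground shape, and one must verify directly that this shape is $\Sub^- T_1$; for the send, receive, select and case forms this follows immediately from $T\Sub^+\DYN$, $\DYN\Sub^- T$ and $\DC\Sub^- S$ applied componentwise with the correct polarities (recalling that the value-sent position of $\pl{\cdot}{}$ is contravariant).

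The step I expect to need the most care --- and to be the real obstacle --- is sub-case (c) for functions and products, where the canonical ground shapes $\DYN\to_\lin\DYN$ and $\DYN\times_\lin\DYN$ carry a fixed multiplicity $\lin$ while the matched type $T_{11}\to_m T_{12}$ (resp.\ $T_{11}\times_m T_{12}$) may have $m=\un$; since $\Sub^-$ is covariant in multiplicity, $\DYN\to_\lin\DYN\Sub^- T_{11}\to_\un T_{12}$ is not immediate, so this sub-case has to be argued carefully, using jointly how $T_1'\Sub^- T_1$ and $\matching{T_1}{\cdots}$ constrain $m$ and $T_1'$. All remaining bookkeeping --- inversion on $\Sub^-$, transitivity, and the elementary facts about the six $\textsc{Matching}\cdots$ functions --- is routine and mirrors the pattern already used for Lemmas~\ref{lem:matching-con} and~\ref{lem:static-con-matching}.
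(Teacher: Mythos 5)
Your overall plan---case analysis on the matching hypothesis combined with inversion of $\Sub^{-}$---is exactly the route the paper takes (its proof is the one-line ``by case analysis on $\matching{T}{U}$''), and your sub-cases (a), (b), and the session-type instances of (c) (items 3--6) do go through as you sketch, using $T \Sub^{+} \DYN$, $\DYN \Sub^{-} T$, and $\DC \Sub^{-} S$ componentwise, together with the width conditions of the $\oplus$/$\&$ rules.

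The gap is that you never discharge the sub-case you yourself single out as the real obstacle, and the ``careful argument'' you defer to does not exist: nothing in the hypotheses constrains $m$ or $T_1'$ there. Take $T_1' = \DYN$ and $T_1 = T_{11} \to_{\un} T_{12}$. Both hypotheses hold, since $\DYN \Sub^{-} T$ is an axiom for every $T$ and a proper arrow matches itself, so $m$ may perfectly well be $\un$. Then $\textsc{MatchingFun}(\DYN) = \DYN \to_{\lin} \DYN$, and the stated conclusion would require $\DYN \to_{\lin} \DYN \Sub^{-} T_{11} \to_{\un} T_{12}$; the only applicable rule is the structural arrow rule, whose multiplicity premise $\lin \Multsub \un$ fails, and the axiom $\DYN \Sub^{-} T$ is unavailable because the left-hand side is no longer literally $\DYN$. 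The same problem arises for unrestricted products in item 2. So with the definitions as given, the full conclusion (including the implicit multiplicity comparison) is simply not derivable in that corner, and no joint constraint from $T_1' \Sub^{-} T_1$ and matching will rescue it. What does hold there---and what the only use of this lemma, in the application case of Lemma~\ref{lem:stronger-tycheck-completeness}, actually extracts by inversion---are the componentwise facts $T_{11} \Sub^{+} \DYN$ and $\DYN \Sub^{-} T_{12}$. To close your proof you would have to either weaken items 1 and 2 so that they assert only those component relations (leaving the multiplicities unrelated), or show that the surrounding completeness argument never instantiates the lemma with $T_1' = \DYN$ against an $\un$-annotated arrow or pair; your proposal does neither, so as written it cannot be completed.
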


\begin{proof}
  By case analysis on $ \matching{  \ottnt{T}  }{  \ottnt{U}  } $.
\end{proof}

\begin{lemma}\label{lem:poscon-un}
  If $ \ottnt{T}   \Sub^{-}   \ottnt{U} $ and $ \un   \ottsym{(}  \ottnt{U}  \ottsym{)}$, then $ \un   \ottsym{(}  \ottnt{T}  \ottsym{)}$.
\end{lemma}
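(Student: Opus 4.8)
The plan is to prove Lemma~\ref{lem:poscon-un} by induction on the derivation of $ \ottnt{T}   \Sub^{-}   \ottnt{U} $, with case analysis on the last rule applied. Recall that $\un(U)$ holds only when $U$ is $\unit$, an unrestricted function type $U_1 \to_\un U_2$, an unrestricted product type $U_1 \times_\un U_2$, or the dynamic type $\DYN$ (which the paper declares to be $\un$). So I will go through the negative subtyping rules and check, in each case, whether the conclusion can have such a $U$ as its right-hand side, and verify that the left-hand side $T$ is then also unrestricted.

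First I would handle the rules whose conclusion forces the right-hand side to be a non-unrestricted type: the rules $S \Sub^- \DC$, $\DC \Sub^- S$, the send and receive rules (whose right-hand sides are session types $\pl{T'}S'$ and $\qu{T'}S'$), the choice rules ($\oplus$ and $\with$), and the $\Endpl \Sub^- \Endpl$ / $\Endqu \Sub^- \Endqu$ rules. In all of these $U$ is a session type (or $\DC$), hence $\lin(U)$, so the hypothesis $\un(U)$ is vacuously contradicted and the case is immediate. The rule $\DYN \Sub^- T$ has right-hand side an arbitrary $T$, but its left-hand side is $\DYN$, which is $\un$, so the conclusion holds directly regardless of whether $\un(T)$. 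The rule $\unit \Sub^\pm \unit$ gives $T = U = \unit$, and $\un(\unit)$ holds. The interesting cases are the function rule and the product rule. For the function rule, the conclusion is $T_1 \to_m U_1 \Sub^- T_1' \to_n U_1'$ with $m \Multsub n$; if $\un(T_1' \to_n U_1')$ then $n = \un$, and since $m \Multsub n = \un$ forces $m = \un$ (by the definition of $\Multsub$, only $\un \Multsub \un$), we get $\un(T_1 \to_m U_1)$. The product rule is entirely analogous: $\un$ of the larger product forces $n = \un$, and $m \Multsub \un$ forces $m = \un$, so $\un$ of the smaller product follows.

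The only point that needs a little care is the rule $T \Sub^+ \DYN$ appearing in the \emph{positive} hierarchy; but here we are dealing with $\Sub^-$, so the relevant ``$\DYN$ on the right'' rule is only $\DYN \Sub^- T$, handled above, and there is no $\Sub^-$ rule at all producing an arbitrary $T$ on the left with $\DYN$ on the right. I do not expect any genuine obstacle: the proof is a short, rule-by-rule case analysis, and the main ingredient is simply the observation that $m \Multsub \un$ implies $m = \un$, which is immediate from the three-clause definition of the multiplicity ordering in Figure~\ref{fig:gv-types}. Thus the whole argument reduces to the remark ``case analysis on $ \ottnt{T}   \Sub^{-}   \ottnt{U} $, using that $\un(U)$ restricts $U$ to $\unit$, $\DYN$, or an $\un$-decorated arrow/product, and in the last two cases $m \Multsub \un$ forces the left multiplicity to be $\un$ as well.''
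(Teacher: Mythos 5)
Your proposal is correct and matches the paper's proof, which is exactly a case analysis on the last rule of $T \Sub^- U$; you have simply spelled out the cases (vacuous session-type cases, $\DYN \Sub^- T$, $\unitk$, and the arrow/product cases where $m \Multsub \un$ forces $m = \un$) that the paper leaves implicit.
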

\begin{proof}
  By case analysis on $ \ottnt{T}   \Sub^{-}   \ottnt{U} $.
\end{proof}

\begin{lemma}\label{lem:trans-polcon}
  \begin{enumerate}
  \item If $ \ottnt{T_{{\mathrm{1}}}}   \Sub^{-}   \ottnt{T_{{\mathrm{2}}}} $ and $\ottnt{T_{{\mathrm{2}}}}  \lesssim  \ottnt{T_{{\mathrm{3}}}}$, then $\ottnt{T_{{\mathrm{1}}}}  \lesssim  \ottnt{T_{{\mathrm{3}}}}$.
  \item If $\ottnt{T_{{\mathrm{1}}}}  \lesssim  \ottnt{T_{{\mathrm{2}}}}$ and $ \ottnt{T_{{\mathrm{2}}}}   \Sub^{+}   \ottnt{T_{{\mathrm{3}}}} $, then $\ottnt{T_{{\mathrm{1}}}}  \lesssim  \ottnt{T_{{\mathrm{3}}}}$.
  \end{enumerate}
\end{lemma}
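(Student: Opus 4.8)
The plan is to prove the two items simultaneously, by induction on the structure of the types involved (equivalently, on the size of the polarised-subtyping derivation in each item, with the two items forming a single well-founded mutual induction). A shortcut through the subtyping hierarchy (Lemma~\ref{lemma:subtyping-hierarchy}) is not available: since $\lesssim$ is not transitive, one cannot simply observe $\Sub^- \subseteq \lesssim$ and chain; the content of the lemma is exactly that prepending a $\Sub^-$ step to a $\lesssim$ (resp.\ appending a $\Sub^+$ step) is harmless, and this must be verified by descending through the type constructors. For item~(1) I would case-split on the last rule of $T_1 \Sub^- T_2$; for item~(2), on the last rule of $T_2 \Sub^+ T_3$. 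Crucially, in the contravariant positions item~(1) appeals to item~(2) and vice versa, and each such appeal is on strictly smaller types (the domain of a function, the carried type of a send), so the mutual recursion terminates.

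For item~(1): the axiom cases are immediate. If $T_1 = \DYN$ (rule $\DYN \Sub^- T_2$) then $T_1 \lesssim T_3$ holds by $\DYN \lesssim T_3$; if $T_1 = T_2 \in \{\unit, \Endpl, \Endqu\}$ the goal is just the second hypothesis; and if $T_1 = \DC$ (rule $\DC \Sub^- S$ with $T_2 = S$ a session type), then a short inspection of the consistent-subtyping rules shows that $S \lesssim T_3$ with $S$ a session type forces $T_3$ to be a session type or $\DYN$, so $\DC \lesssim T_3$. In each structural case — function, product, send, receive, select, case — I would invert $T_2 \lesssim T_3$: either $T_3 = \DYN$ and the goal is trivial; or $T_3 = \DC$ (only possible when $T_2$ is a session type), and then $T_1$ is also a session type, so $T_1 \lesssim \DC$; or $T_2 \lesssim T_3$ was derived by the matching structural rule, in which case I rebuild $T_1 \lesssim T_3$ componentwise, using the item~(1) induction hypothesis on covariant components, the item~(2) induction hypothesis on contravariant components, transitivity of the multiplicity ordering $\Multsub$ for the annotations, and transitivity of $\subseteq$ for the width conditions on $\oplus$ and $\with$ (noting that the per-label component subtypings are needed only on the index intersections where both hypotheses actually supply them). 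Item~(2) is entirely symmetric: induct on $T_2 \Sub^+ T_3$; the axiom $T_2 \Sub^+ \DYN$ gives $T_3 = \DYN$ and a trivial goal; $S \Sub^+ \DC$ gives $T_3 = \DC$, and again $T_1 \lesssim S$ forces $T_1$ to be a session type or $\DYN$, hence $T_1 \lesssim \DC$; the structural cases mirror those of item~(1), now invoking item~(1) at the contravariant positions.

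The main obstacle is bookkeeping rather than insight: I must organise the mutual induction so that it is visibly well founded — e.g.\ in the function case of item~(1) the domain sub-goal is handed to item~(2) with the $\Sub^+$ hypothesis $T_{21} \Sub^+ T_{11}$, a proper sub-derivation of the given $T_1 \Sub^- T_2$, and dually for item~(2) — which a size measure on the types (or on the polarised derivations) makes precise. The only other delicate point is the recurring side-fact about $\DC$: that $X \lesssim S$ and $S \lesssim X$ with $S$ a session type each constrain $X$ to lie among the session types together with $\DYN$. I would state and discharge this as a one-line remark (a case analysis on the consistent-subtyping rules, using that the left-hand side of $X \lesssim S$ can only be matched by a session-type rule, the $\DC \lesssim S$ rule, or $\DYN \lesssim S$) before entering the main induction, so the $\DC$ subcases above go through cleanly.
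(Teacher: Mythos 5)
Your proposal is correct and takes essentially the same route as the paper, whose entire proof is a simultaneous induction over the two items; the only (immaterial) difference is where you place the induction measure — you descend on the polarised derivations $\Sub^{-}$/$\Sub^{+}$ (or type size) and invert $\lesssim$, while the paper inducts on $\lesssim$ and inverts the polarised hypothesis, but both recursions pass through the same structural cases with the same cross-appeals between items at contravariant positions. Your explicit treatment of the $\DYN$/$\DC$ axiom cases and of the width conditions for $\oplus$ and $\with$ is just the bookkeeping the paper leaves implicit.
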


\begin{proof}
  Both items are proved by simultaneous induction on $ \lesssim $.
\end{proof}

\begin{lemma}\label{lem:stronger-tycheck-completeness}
  If $\Gamma  \vdashG  \mathbb{e}  \ottsym{:}  \ottnt{T}$ and $ \Gamma'   \Sub^{-}   \Gamma $, then
  there exists $\ottnt{T'}$ such that
  $ \textsc{\tyexp}( \Gamma' ,  \mathbb{e} ) =  \ottnt{T'} ,  \ottnt{X} $ and $ \ottnt{T'}   \Sub^{-}   \ottnt{T} $ and $ \flv{ \Gamma }   \ottsym{=}  \ottnt{X}$.
\end{lemma}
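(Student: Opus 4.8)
The plan is to prove Lemma~\ref{lem:stronger-tycheck-completeness} by rule induction on the typing derivation $\Gamma \vdashG \mathbb{e} : \ottnt{T}$, carrying the environment $\Gamma' \Sub^- \Gamma$ along as a generalization. In each case I would invoke the induction hypothesis on the subderivations, using the fact that the extended environments $\Gamma', x:\ottnt{T_1}$ still stand in the $\Sub^-$ relation to $\Gamma, x:\ottnt{T_1}$ (reflexivity of $\Sub^-$ on the new binding), and then show that the algorithm's case for $\mathbb{e}$ succeeds and yields a type that is $\Sub^-$ below $\ottnt{T}$. For the variable case, I would use the inversion fact that $\Gamma = \Gamma'', z:\ottnt{S}$ with $S \Sub \ottnt{T}$ and $\un(\Gamma'')$; since $\Gamma' \Sub^- \Gamma$ gives $\Gamma'(z) \Sub^- S$, and $\Sub \subseteq \Sub^-$ by Lemma~\ref{lemma:subtyping-hierarchy} plus transitivity of $\Sub^-$, the algorithm returns $\Gamma'(z) \Sub^- \ottnt{T}$; I also need that $\un(\Gamma'')$ and $\Gamma'|_{\text{rest}} \Sub^- \Gamma''$ imply the rest of $\Gamma'$ is unrestricted, which is Lemma~\ref{lem:poscon-un}, so the linear-variable-set bookkeeping matches $\flv{\Gamma'}$. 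The $\flv{\Gamma'} = X$ part of the conclusion is essentially an invariant threaded through every case using Lemma~\ref{lem:flv-split}, much as in the soundness proof (Theorem~\ref{thm:tycheck-soundness}).

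The elimination cases (application, send, receive, select, case, pair destruction) are where the real content lies, and they all follow the same template. Take application: from the premises we have $\Gamma_1 \vdash \mathbb{e}_1 : \ottnt{T_1}$, $\Gamma_2 \vdash \mathbb{e}_2 : \ottnt{T_2}$, $\matching{\ottnt{T_1}}{\ottnt{T_{11}} \to_m \ottnt{T_{12}}}$, and $\ottnt{T_2} \lesssim \ottnt{T_{11}}$. Splitting $\Gamma' \Sub^- \Gamma = \Gamma_1 \circ \Gamma_2$ appropriately (via Lemma~\ref{lem:flv-split} after determining which linear variables each subexpression uses) gives $\Gamma_1', \Gamma_2'$ with $\Gamma_i' \Sub^- \Gamma_i$; the IH yields $\ottnt{T_i'} \Sub^- \ottnt{T_i}$ from the algorithm. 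Now Lemma~\ref{lem:poscon-matching}~(1) says \textsc{MatchingFun}$(\ottnt{T_1'})$ succeeds with $\ottnt{T_{11}'} \to_n \ottnt{T_{12}'} \Sub^- \ottnt{T_{11}} \to_m \ottnt{T_{12}}$, which by inversion of $\Sub^-$ on arrows gives $\ottnt{T_{11}} \Sub^+ \ottnt{T_{11}'}$ (contravariance!) and $\ottnt{T_{12}'} \Sub^- \ottnt{T_{12}}$. Combining $\ottnt{T_2'} \Sub^- \ottnt{T_2} \lesssim \ottnt{T_{11}}$ via Lemma~\ref{lem:trans-polcon}~(1) gives $\ottnt{T_2'} \lesssim \ottnt{T_{11}}$, and then $\ottnt{T_2'} \lesssim \ottnt{T_{11}} \Sub^+ \ottnt{T_{11}'}$ via Lemma~\ref{lem:trans-polcon}~(2) gives $\ottnt{T_2'} \lesssim \ottnt{T_{11}'}$, so the algorithm's assertion passes and it returns $\ottnt{T_{12}'} \Sub^- \ottnt{T_{12}}$. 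Send, receive, select are analogous using the other parts of Lemma~\ref{lem:poscon-matching}.

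The $\casek$ case is the most delicate and I expect it to be the main obstacle. From the premises: $\matching{\ottnt{T'}}{\&\br{l_j:\ottnt{R_j}}_{j\in J}}$, branch derivations $\Delta, x_j:\ottnt{R_j} \vdash \mathbb{e}_j : \ottnt{U_j}$, and $U = \bigvee_j \ottnt{U_j}$. Applying Lemma~\ref{lem:poscon-matching}~(6) to the refined type gives $\&\br{l_j:\ottnt{R_j'}}_{j\in J} \Sub^- \&\br{l_j:\ottnt{R_j}}_{j\in J}$, hence $\ottnt{R_j'} \Sub^- \ottnt{R_j}$ componentwise (matching here preserves the label set $J$, so there is no width-subtyping mismatch in the branch count). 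Then $\Delta', x_j:\ottnt{R_j'} \Sub^- \Delta, x_j:\ottnt{R_j}$, the IH yields algorithmic branch types $\ottnt{U_j'} \Sub^- \ottnt{U_j}$, and I need to conclude $\bigvee_j \ottnt{U_j'}$ exists and is $\Sub^-$ below $\bigvee_j \ottnt{U_j} = U$. This is exactly where Lemma~\ref{lemma:polarised-subtyping-lub-glb}~(1) is needed: since each $\ottnt{U_j} \Sub^- U$ and $\ottnt{U_j'} \Sub^- \ottnt{U_j}$ gives $\ottnt{U_j'} \Sub^- U$ by transitivity, repeated application of the least-upper-bound lemma builds $\bigvee_j \ottnt{U_j'}$ and shows it is $\Sub^-$ below $U$. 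I also must check the linear-variable assertions ($x_j \in Y_j$ and $Y_1 = \dots = Y_k$) transfer, which follows from well-typedness of each branch under a context containing $x_j : \ottnt{R_j'}$ (linear, since $R_j$ is a session type forced by matching against $\&\br{\dots}$) together with the $\flv$ invariant. Finally the $\casek$ subsumption point at the top of the original derivation, if present, is absorbed by the fact that the algorithm's result type only needs to be $\Sub^-$ below, not equal to, $\ottnt{T}$; one further application of Lemma~\ref{lemma:subtyping-hierarchy} (that $\Sub \subseteq \Sub^-$) plus transitivity handles the \rn{T-Sub} case uniformly across all rules.
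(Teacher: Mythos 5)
Your core type-theoretic content matches the paper's proof: induction with the generalized environment hypothesis, Lemma~\ref{lem:poscon-matching} to show the refined type still matches, inversion of $\Sub^-$ at arrow (and the other constructor) types, both parts of Lemma~\ref{lem:trans-polcon} to discharge the algorithm's consistent-subtyping assertions, and Lemmas~\ref{lemma:bounds} and~\ref{lemma:polarised-subtyping-lub-glb} for the join of the $\casek$ branches. However, there is a genuine gap exactly where the strengthened statement is supposed to do its work. In every multi-premise case you split $\Gamma'$ into $\Gamma_1',\Gamma_2'$ with $\Gamma_i' \Sub^- \Gamma_i$ and apply the induction hypothesis to the pieces. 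But the algorithm does not split environments: $\textsc{\tyexp}(\Gamma', \mathbb{e}_1\,\mathbb{e}_2)$ invokes $\textsc{\tyexp}(\Gamma', \mathbb{e}_1)$ and $\textsc{\tyexp}(\Gamma', \mathbb{e}_2)$ with the same full $\Gamma'$, and the $\casek$ branches are checked under $\Gamma', x_j\colon R_j'$, not under a split $\Delta'$. The IH instances you obtain therefore speak about $\textsc{\tyexp}(\Gamma_i', \mathbb{e}_i)$, which are not the calls the algorithm performs; to bridge them you would need an unstated weakening lemma for the algorithm (adding bindings unused by $\mathbb{e}_i$ leaves the result unchanged), i.e.\ a converse of Lemma~\ref{lem:tycheck-remove-unused}, which the paper neither states nor needs.

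The whole point of generalizing to $\Gamma' \Sub^- \Gamma$ with $\dom(\Gamma) \subseteq \dom(\Gamma')$ is to avoid this maneuver: from $\Gamma' \Sub^- \Gamma$ and $\Gamma = \Gamma_1 \circ \Gamma_2$ one immediately has $\Gamma' \Sub^- \Gamma_1$ and $\Gamma' \Sub^- \Gamma_2$, so the IH applies with the whole $\Gamma'$ and yields facts about exactly the recursive calls the algorithm makes, together with variable sets $X_i = \flv{\Gamma_i}$; disjointness of these sets (from well-definedness of $\Gamma_1 \circ \Gamma_2$) discharges the algorithm's assertions, and $X_1 \cup X_2 = \flv{\Gamma}$ gives the required output set. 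Relatedly, the invariant is $X = \flv{\Gamma}$, not $\flv{\Gamma'}$ as you wrote in the variable case ($\Gamma'$ may contain extra linear variables), and Lemma~\ref{lem:flv-split} is needed in the soundness direction, not here. A further, harmless, inaccuracy: \GGVe{} typing is syntax-directed (subsumption is merged into the elimination rules), so there is no subsumption case to handle and no $S \Sub T$ slack in the variable case.
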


\begin{proof}
  By induction on $\mathbb{e}$.  We show main cases below.
  \begin{description}
  \item[Case] $\mathbb{e} = \mathbb{e}_{{\mathrm{1}}} \, \mathbb{e}_{{\mathrm{2}}}$:
    By inversion of the typing relation,
    $\Gamma_{{\mathrm{1}}}  \vdashG  \mathbb{e}_{{\mathrm{1}}}  \ottsym{:}  \ottnt{T_{{\mathrm{1}}}}$ and
    $\Gamma_{{\mathrm{2}}}  \vdashG  \mathbb{e}_{{\mathrm{2}}}  \ottsym{:}  \ottnt{T_{{\mathrm{2}}}}$ and
    $\Gamma = \Gamma_{{\mathrm{1}}} \circ \Gamma_{{\mathrm{2}}}$ and
    $ \matching{  \ottnt{T_{{\mathrm{1}}}}  }{   \ottnt{T_{{\mathrm{11}}}}   \rightarrow _{ \ottnt{m} }  \ottnt{T_{{\mathrm{12}}}}   } $ and
    $\ottnt{T_{{\mathrm{2}}}}  \lesssim  \ottnt{T_{{\mathrm{11}}}}$
    for some $\Gamma_{{\mathrm{1}}}$, $\Gamma_{{\mathrm{2}}}$, $\ottnt{T_{{\mathrm{1}}}}$, $\ottnt{T_{{\mathrm{2}}}}$, $\ottnt{T_{{\mathrm{11}}}}$, $\ottnt{T_{{\mathrm{12}}}}$, and $\ottnt{m}$.
    It is easy to show $ \Gamma'   \Sub^{-}   \Gamma_{{\mathrm{1}}} $ and $ \Gamma'   \Sub^{-}   \Gamma_{{\mathrm{2}}} $.
    By the induction hypothesis, for some $\ottnt{T'_{{\mathrm{1}}}}$, $\ottnt{T'_{{\mathrm{2}}}}$, $X$, and $Y$,
    $T_1', X =$ \Call{\tyexp}{$\Gamma', \mathbb{e}_{{\mathrm{1}}}$} and $ \ottnt{T'_{{\mathrm{1}}}}   \Sub^{-}   \ottnt{T_{{\mathrm{1}}}} $ and $ \flv{ \Gamma_{{\mathrm{1}}} }   \ottsym{=}  \ottnt{X}$ and
    $T_2', Y =$ \Call{\tyexp}{$\Gamma', \mathbb{e}_{{\mathrm{2}}}$} and $ \ottnt{T'_{{\mathrm{2}}}}   \Sub^{-}   \ottnt{T_{{\mathrm{2}}}} $ and $ \flv{ \Gamma_{{\mathrm{2}}} }   \ottsym{=}  \ottnt{Y}$.
    Since $\Gamma_{{\mathrm{1}}} \circ \Gamma_{{\mathrm{2}}}$ is well defined, $X \cap Y$ must be $\emptyset$.
    By Lemma~\ref{lem:poscon-matching},
    $ \ottnt{T'_{{\mathrm{11}}}}   \rightarrow _{ \ottnt{n} }  \ottnt{T'_{{\mathrm{12}}}}  =$ \Call{MatchingFun}{$T_1'$} and
    $  \ottnt{T'_{{\mathrm{11}}}}   \rightarrow _{ \ottnt{n} }  \ottnt{T'_{{\mathrm{12}}}}    \Sub^{-}    \ottnt{T_{{\mathrm{11}}}}   \rightarrow _{ \ottnt{m} }  \ottnt{T_{{\mathrm{12}}}}  $ for some $\ottnt{T'_{{\mathrm{11}}}}$ and $\ottnt{T'_{{\mathrm{12}}}}$.
    By inversion of $    \Sub^{-}    $,
    we have $ \ottnt{T_{{\mathrm{11}}}}   \Sub^{+}   \ottnt{T'_{{\mathrm{11}}}} $ and $ \ottnt{T'_{{\mathrm{12}}}}   \Sub^{-}   \ottnt{T_{{\mathrm{12}}}} $.
    Then, $\ottnt{T'_{{\mathrm{2}}}}  \lesssim  \ottnt{T'_{{\mathrm{11}}}}$ is shown by Lemma~\ref{lem:trans-polcon}.
    It is easy to show $X \cup Y =  \flv{ \Gamma } $ because $\Gamma = \Gamma_{{\mathrm{1}}} \circ \Gamma_{{\mathrm{2}}}$.
    Finally, $ \ottnt{T'_{{\mathrm{12}}}}   \Sub^{-}   \ottnt{T_{{\mathrm{12}}}} $ finishes the case.
  \item[Case] $\mathbb{e} =  \case{ \mathbb{e}_{{\mathrm{0}}} }{  \br{  \ottmv{l_{\ottmv{j}}} :  \ottmv{x_{\ottmv{j}}} .\,  \mathbb{e}_{\ottmv{j}}   }_{  \ottmv{j}  \in  \ottnt{J}  }  } $:
    By inversion of the typing relation, we have
    $\Gamma_{{\mathrm{1}}}  \vdashG  \mathbb{e}_{{\mathrm{0}}}  \ottsym{:}  \ottnt{T_{{\mathrm{0}}}}$ and
    $ \matching{  \ottnt{T_{{\mathrm{0}}}}  }{   \&    \br{  \ottmv{l_{\ottmv{j}}}  :  \ottnt{R_{\ottmv{j}}}  }_{  \ottmv{j}  \in  \ottnt{J}  }    } $ and
    $(\Gamma_{{\mathrm{2}}}  \ottsym{,}  \ottmv{x_{\ottmv{j}}}  \ottsym{:}  \ottnt{R_{\ottmv{j}}}  \vdashG  \mathbb{e}_{\ottmv{j}}  \ottsym{:}  \ottnt{U_{\ottmv{j}}})_{\ottmv{j} \in \ottnt{J}}$ and
    $\ottnt{T} =  \vee \br{\ottnt{U_{\ottmv{j}}}}_{\ottmv{j}\in \ottnt{J}}$ and
    $\Gamma = \Gamma_{{\mathrm{1}}} \circ \Gamma_{{\mathrm{2}}}$ for some
    $\Gamma_{{\mathrm{1}}}$, $\Gamma_{{\mathrm{2}}}$, $\ottnt{R_{\ottmv{j}}}$, and $\ottnt{U_{\ottmv{j}}}$ (for $\ottmv{j} \in \ottnt{J}$).
    It is easy to show $ \Gamma'   \Sub^{-}   \Gamma_{{\mathrm{1}}} $ and $ \Gamma'   \Sub^{-}   \Gamma_{{\mathrm{2}}} $.
    By the induction hypothesis,
    $T_0', X =$ \Call{\tyexp}{$\Gamma', \mathbb{e}_{{\mathrm{0}}}$} and $ \ottnt{T'_{{\mathrm{0}}}}   \Sub^{-}   \ottnt{T_{{\mathrm{0}}}} $ and $ \flv{ \Gamma_{{\mathrm{1}}} }   \ottsym{=}  \ottnt{X}$
    for some $\ottnt{T'_{{\mathrm{0}}}}$ and $X$.
    By Lemma~\ref{lem:poscon-matching},
    $\Call{MatchingCase}{T_0', \br{l_1,\ldots, l_k}} =  \&    \br{  \ottmv{l_{{\mathrm{1}}}} : \ottnt{R'_{{\mathrm{1}}}} , \dots ,  \ottmv{l_{\ottmv{k}}} : \ottnt{R'_{\ottmv{k}}}  }  $ and
    $  \&    \br{  \ottmv{l_{\ottmv{j}}}  :  \ottnt{R'_{\ottmv{j}}}  }_{  \ottmv{j}  \in  \ottnt{J}  }     \Sub^{-}    \&    \br{  \ottmv{l_{\ottmv{j}}}  :  \ottnt{R_{\ottmv{j}}}  }_{  \ottmv{j}  \in  \ottnt{J}  }   $ for some $(\ottnt{R'_{\ottmv{j}}})_{\ottmv{j} \in \ottnt{J}}$.
    By inversion of $    \Sub^{-}    $, we have $( \ottnt{R'_{\ottmv{j}}}   \Sub^{-}   \ottnt{R_{\ottmv{j}}} )_{\ottmv{j} \in \ottnt{J}}$.
    By the induction hypothesis,  for any $\ottmv{j} \in \ottnt{J}$,
    there exist $\ottnt{U'_{\ottmv{j}}}$ and $Y_j$ such that
    $U_j', Y_j =$ \Call{\tyexp}{$(\Gamma', \ottmv{x_{\ottmv{j}}}:\ottnt{R'_{\ottmv{j}}}), \mathbb{e}_{\ottmv{j}}$} and $ \ottnt{U'_{\ottmv{j}}}   \Sub^{-}   \ottnt{U_{\ottmv{j}}} $
    and $ \flv{ \Gamma_{{\mathrm{2}}}  \ottsym{,}  \ottmv{x_{\ottmv{j}}}  \ottsym{:}  \ottnt{R_{\ottmv{j}}} }   \ottsym{=}  \ottnt{Y_{\ottmv{j}}}$.
    It is easy to show that $x_j \in Y_j$ for any $\ottmv{j} \in \ottnt{J}$
    and $Y_1 = \cdots = Y_k$ and $X \cap Y_1 = \emptyset$
    because $\ottnt{R_{\ottmv{j}}}$ is linear and $\Gamma_{{\mathrm{1}}} \circ \Gamma_{{\mathrm{2}}}$ is well defined.
    It is also easy to show $X \cup (Y_1 \setminus \br{\ottmv{x_{{\mathrm{1}}}}}) =  \flv{ \Gamma } $ because $\Gamma = \Gamma_{{\mathrm{1}}} \circ \Gamma_{{\mathrm{2}}}$.
    Finally, $ \vee \br{\ottnt{U'_{\ottmv{j}}}}_{\ottmv{j} \in \ottnt{J}}     \Sub^{-}      \vee \br{\ottnt{U_{\ottmv{j}}}}_{\ottmv{j} \in \ottnt{J}}$
    is shown by Lemma~\ref{lemma:polarised-subtyping-lub-glb}.
  \end{description}
\end{proof}

\end{document}